\newcolumntype{L}{>{\arraybackslash}m{3cm}}
\NewDocumentCommand{\evalat}{sO{\big}mm}{%
  \IfBooleanTF{#1}
   {\mleft. #3 \mright|_{#4}}
   {#3#2|_{#4}}%
} 
 \newcommand{\naranja}[1]{\color{black}#1 \color{black}}
\colorlet{texto}{black!50!gray} 
\definecolor{turqueza}{HTML}{40e0d0}%{2B233B}
\newcommand{\titulote}[1]{%
  \ifodd\value{page}%
    \protect\parbox{0.97\linewidth}{#1}\hfill%
  \else%
    \hfill\protect\parbox{0.97\linewidth}{#1}%
  \fi%
}
\numberwithin{equation}{section}
\newtheorem{thm}{Theorem}[section]
\newtheorem{cor}[thm]{Corollary}
\newtheorem{lem}[thm]{Lemma}
\newtheorem{prop}[thm]{Proposition}
\theoremstyle{remark}
\newtheorem{rem}[thm]{Remark}
\newtheorem{example}[thm]{Example}
\theoremstyle{definition}
\newtheorem{defn}[thm]{Definition}
\newcommand{\kc}{K_{\mathrm{c}}(3,3)}
\newcommand{\Aut}{\mathrm{Aut}}
\newcommand{\Autc}{\Aut_{\mathrm{c}}}
\renewcommand{\and}{\mbox{and}}
\newcommand{\T}{\mathbb{T}}
\newcommand{\Tr}{\mathrm{Tr}}
\newcommand{\mtr}[1]{\mathrm{#1}}
\newcommand{\mtf}[1]{\mathfrak{#1}}
\newcommand{\dif}[1]{\mathrm{d}#1}
\newcommand{\re}{\mathbb{R}}
\newcommand{\dervpar}[2]{\frac{\partial #1}{\partial #2}}
\renewcommand{\thefootnote}{\arabic{footnote}}
\newcommand{\si}{\sigma}
\newcommand{\G}{\mathcal{G}}
\renewcommand{\H}{\mathcal{H}}
\newcommand{\B}{\mathcal{B}}
\newcommand{\C}{\mathbb{C}}
\newcommand{\ii}{\mathrm{i}}
\newcommand{\ee}{\mathrm{e}}
\newcommand{\inv}{^{-1}} 
\newcommand{\mtc}[1]{\mathcal{#1}} 
\newcommand{\Z}{\mathbb{Z}}
\newcommand{\N}{\mathbb{N}}
\newcommand{\V}{\mtc{V}}
\newcommand{\im}{\mtr{im}}
\newcommand{\where}{\mbox{where}\,\,}
\newcommand{\mtb}[1]{\mathbb{#1}}
\newcommand{\R}{\mathcal R}
\newcommand{\Df}{\mathbb{\mathcal{D}}}
\newcommand{\Sym}{\mathfrak{S}}
\newcommand{\hp}[1]{^{(#1)}}
\renewcommand{\phi}{\varphi}
\newcommand{\fey}{\mathsf{Feyn}}
\newcommand{\vfey}{\mathsf{Feyn}^{\mathrm{v}}}
\newcommand{\feyn}{\fey_3(\phi^4)}
  \newcommand{\strictdvGrph}[1]{\Xi^{\mtr{cl}}_{#1}}
\newcommand{\Grph}[1]{\textsf{Grph}_{#1}}
 \newcommand{\dGrph}[1]{\Grph{#1}^{\amalg}}
 \newcommand{\dvGrph}[1]{\Grph{#1}^{\amalg,\mathrm{cl}}}
 \newcommand{\vGrph}[1]{\Grph{#1}^\mathrm{cl}}
\newcommand{\suml}{\sum\limits}
\newcommand{\fder}[2]{\frac{\delta #1}{\delta #2}}
\newcommand{\bJ}{\bar J}
\newcommand{\kthree}{\raisebox{-.2\height}{\includegraphics[height=2ex]{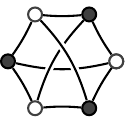}}}
\newcommand{\GDmelon}{G\hp{2}_{\raisebox{-.33\height}{\includegraphics[height=2.2ex]{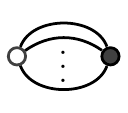}}}}
\newcommand{\Gmelon}{G\hp{2}_{\raisebox{-.33\height}{\includegraphics[height=1.8ex]{graphs/3/Item2_Melon}}}}
\newcommand{\Guno}{G\hp{4}_{\! \raisebox{-.2\height}{\includegraphics[height=1.8ex]{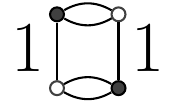}}}}
\newcommand{\Gdos}{G\hp{4}_{\! \raisebox{-.2\height}{\includegraphics[height=1.8ex]{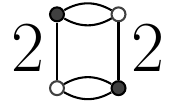}}}}
\newcommand{\Gtres}{G\hp{4}_{\! \raisebox{-.2\height}{\includegraphics[height=1.8ex]{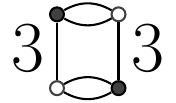}}}}
\newcommand{\vuno}{\raisebox{-.322\height}{\includegraphics[height=2.3ex]{graphs/3/Item4_V1v.pdf}}}
\newcommand{\vdos}{\raisebox{-.322\height}{\includegraphics[height=2.3ex]{graphs/3/Item4_V2v.pdf}}}
\newcommand{\vtres}{\raisebox{-.322\height}{\includegraphics[height=2.3ex]{graphs/3/Item4_V3v.pdf}}}
\newcommand{\vunito}{\raisebox{-.322\height}{\includegraphics[height=1.66ex]{graphs/3/Item4_V1v.pdf}}}
\newcommand{\Gcmm}{G\hp{4}_{|\raisebox{-.33\height}{\includegraphics[height=1.8ex]{graphs/3/Item2_Melon}}
|\raisebox{-.33\height}{\includegraphics[height=1.8ex]{graphs/3/Item2_Melon}}|}}
\newcommand{\Gsmmm}{G\hp{6}_{|\raisebox{-.33\height}{\includegraphics[height=1.8ex]{graphs/3/Item2_Melon}}
|\raisebox{-.33\height}{\includegraphics[height=1.8ex]{graphs/3/Item2_Melon}}
|\raisebox{-.33\height}{\includegraphics[height=1.8ex]{graphs/3/Item2_Melon}}|}}
\newcommand{\Gsma}{G\hp{6}_{|\raisebox{-.33\height}{\includegraphics[height=1.8ex]{graphs/3/Item2_Melon}}
|\raisebox{-.34\height}{\includegraphics[height=1.8ex]{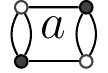}}|}}
\newcommand{\Gsmb}{G\hp{6}_{|\raisebox{-.33\height}{\includegraphics[height=1.8ex]{graphs/3/Item2_Melon}}
|\raisebox{-.34\height}{\includegraphics[height=1.8ex]{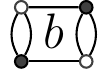}}|}}
\newcommand{\Gsmc}{G\hp{6}_{|\raisebox{-.33\height}{\includegraphics[height=1.8ex]{graphs/3/Item2_Melon}}
|\raisebox{-.34\height}{\includegraphics[height=1.8ex]{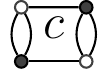}}|}}
\newcommand{\Gsmu}{G\hp{6}_{|\raisebox{-.33\height}{\includegraphics[height=1.8ex]{graphs/3/Item2_Melon}}
|\raisebox{-.34\height}{\includegraphics[height=1.8ex]{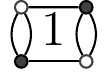}}|}}
\newcommand{\Gsmt}{G\hp{6}_{|\raisebox{-.33\height}{\includegraphics[height=1.8ex]{graphs/3/Item2_Melon}}
|\raisebox{-.34\height}{\includegraphics[height=1.8ex]{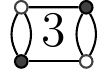}}|}}
\newcommand{\Gsmi}{G\hp{6}_{|\raisebox{-.33\height}{\includegraphics[height=1.8ex]{graphs/3/Item2_Melon}}
|\raisebox{-.34\height}{\includegraphics[height=1.8ex]{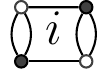}}|}}
\newcommand{\Gseu}{G\hp{6}\raisebox{-.8\height}{ \!\!\!\!\!\!\!\!\!\includegraphics[height=2.7ex]{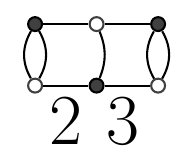}}}
\newcommand{\Gsed}{G\hp{6}\raisebox{-.8\height}{\!\! \!\!\!\!\!\!\!\includegraphics[height=2.7ex]{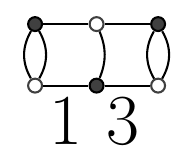}}}
\newcommand{\Gset}{G\hp{6}\raisebox{-.8\height}{ \!\!\!\!\!\!\!\!\!\includegraphics[height=2.7ex]{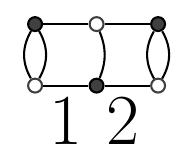}}}
\newcommand{\Gseabc}{G\hp{6}\raisebox{-.8\height}{ \!\!\!\!\!\!\!\!\!\includegraphics[height=2.7ex]{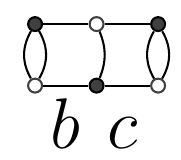}}}
\newcommand{\Gsebac}{G\hp{6}\raisebox{-.8\height}{ \!\!\!\!\!\!\!\!\!\includegraphics[height=2.7ex]{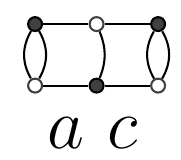}}}
\newcommand{\Gsecab}{G\hp{6}\raisebox{-.8\height}{ \!\!\!\!\!\!\!\!\!\includegraphics[height=2.7ex]{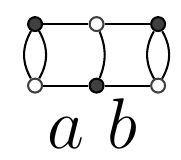}}}
\newcommand{\Gskthree}{G\hp{6}_{\raisebox{-.5\height}{\includegraphics[height=1.9ex]{graphs/3/Item6_K33.pdf}}}}
\newcommand{\Gsqa}{G\hp{6}\raisebox{-.8\height}{ \!\!\!\!\!\!\!\includegraphics[height=2.ex]{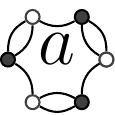}}}
\newcommand{\Gsqb}{G\hp{6}\raisebox{-.8\height}{ \!\!\!\!\!\!\!\includegraphics[height=2.ex]{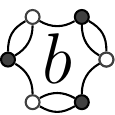}}}
\newcommand{\Gsqc}{G\hp{6}\raisebox{-.8\height}{ \!\!\!\!\!\!\!\includegraphics[height=2.ex]{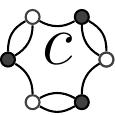}}} 
\newcommand{\Gsqu}{G\hp{6}\raisebox{-.8\height}{ \!\!\!\!\!\!\!\includegraphics[height=2.ex]{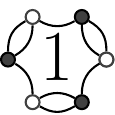}}} 
\newcommand{\Gsqt}{G\hp{6}\raisebox{-.8\height}{ \!\!\!\!\!\!\!\includegraphics[height=2.ex]{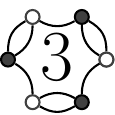}}} 
\newcommand{\Gsqd}{G\hp{6}\raisebox{-.8\height}{ \!\!\!\!\!\!\!\includegraphics[height=2.ex]{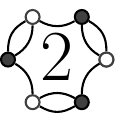}}}
\newcommand{\Gomeu}{G\hp{8}_{|\raisebox{-.33\height}{\includegraphics[height=1.8ex]{graphs/3/Item2_Melon}}|\raisebox{-.2\height}{\includegraphics[height=2.7ex]{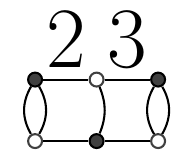}}|}}
\newcommand{\Gomea}{G\hp{8}_{|\raisebox{-.33\height}{\includegraphics[height=1.8ex]{graphs/3/Item2_Melon}}|\raisebox{-.55\height}{\includegraphics[height=3.2ex]{graphs/3/Item6_Eabcs.pdf}}|}}
\newcommand{\Gomeb}{G\hp{8}_{|\raisebox{-.33\height}{\includegraphics[height=1.8ex]{graphs/3/Item2_Melon}}|\raisebox{-.55\height}{\includegraphics[height=3.2ex]{graphs/3/Item6_Ebacs.pdf}}|}}
\newcommand{\Gomec}{G\hp{8}_{|\raisebox{-.33\height}{\includegraphics[height=1.8ex]{graphs/3/Item2_Melon}}|\raisebox{-.55\height}{\includegraphics[height=3.2ex]{graphs/3/Item6_Ecabs.pdf}}|}}
\newcommand{\Gomkthree}{G\hp{8}_{|\raisebox{-.33\height}{\includegraphics[height=1.8ex]{graphs/3/Item2_Melon}}|\raisebox{-.2\height}{\hspace{.7pt}\includegraphics[height=1.85ex]{graphs/3/Item6_K33.pdf}}|}}
\newcommand{\Gomqa}{G\hp{8}_{|\raisebox{-.23\height}{\includegraphics[height=1.8ex]{graphs/3/Item2_Melon}}|\raisebox{-.2\height}{\includegraphics[height=2.1ex]{graphs/3/Item6_Qas.pdf}}|}}
\newcommand{\Gomqb}{G\hp{8}_{|\raisebox{-.23\height}{\includegraphics[height=1.8ex]{graphs/3/Item2_Melon}}|\raisebox{-.2\height}{\includegraphics[height=2.1ex]{graphs/3/Item6_Qbs.pdf}}|} }
\newcommand{\Gomqc}{G\hp{8}_{|\raisebox{-.23\height}{\includegraphics[height=1.8ex]{graphs/3/Item2_Melon}}|\raisebox{-.2\height}{\includegraphics[height=2.1ex]{graphs/3/Item6_Qcs.pdf}}|} }
\newcommand{\Gomqi}{G\hp{8}_{|\raisebox{-.23\height}{\includegraphics[height=1.8ex]{graphs/3/Item2_Melon}}|\raisebox{-.2\height}{\includegraphics[height=2.1ex]{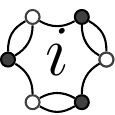}}|} }
\newcommand{\Gomqu}{G\hp{8}_{|\raisebox{-.23\height}{\includegraphics[height=1.8ex]{graphs/3/Item2_Melon}}|\raisebox{-.2\height}{\includegraphics[height=2.1ex]{graphs/3/Item6_Q1s.pdf}}|} }
\newcommand{\Gommb}{G\hp{8}_{|\raisebox{-.33\height}{\includegraphics[height=1.8ex]{graphs/3/Item2_Melon}}|
\raisebox{-.33\height}{\includegraphics[height=1.8ex]{graphs/3/Item2_Melon}}|
\raisebox{-.33\height}{\includegraphics[height=1.85ex]{graphs/3/Item4_Vb.pdf}}|}}
\newcommand{\Gomma}{G\hp{8}_{|\raisebox{-.33\height}{\includegraphics[height=1.8ex]{graphs/3/Item2_Melon}}|
\raisebox{-.33\height}{\includegraphics[height=1.8ex]{graphs/3/Item2_Melon}}|
\raisebox{-.33\height}{\includegraphics[height=1.85ex]{graphs/3/Item4_Va.pdf}}|}}
\newcommand{\Gommc}{G\hp{8}_{|\raisebox{-.33\height}{\includegraphics[height=1.8ex]{graphs/3/Item2_Melon}}|
\raisebox{-.33\height}{\includegraphics[height=1.8ex]{graphs/3/Item2_Melon}}|
\raisebox{-.33\height}{\includegraphics[height=1.85ex]{graphs/3/Item4_Vc.pdf}}|}}
\newcommand{\Gommi}{G\hp{8}_{|\raisebox{-.33\height}{\includegraphics[height=1.8ex]{graphs/3/Item2_Melon}}|
\raisebox{-.33\height}{\includegraphics[height=1.8ex]{graphs/3/Item2_Melon}}|
\raisebox{-.33\height}{\includegraphics[height=1.85ex]{graphs/3/Item4_Vi.pdf}}|}}
\newcommand{\Gommmm}{G\hp{8}_{|\raisebox{-.33\height}{\includegraphics[height=1.8ex]{graphs/3/Item2_Melon}}|
\raisebox{-.33\height}{\includegraphics[height=1.8ex]{graphs/3/Item2_Melon}}|
\raisebox{-.33\height}{\includegraphics[height=1.8ex]{graphs/3/Item2_Melon}}
|\raisebox{-.33\height}{\includegraphics[height=1.8ex]{graphs/3/Item2_Melon}}|}}
\newcommand{\Goaa}{G\hp{8}_{|
\raisebox{-.33\height}{\includegraphics[height=1.85ex]{graphs/3/Item4_Va.pdf}}|
\raisebox{-.33\height}{\includegraphics[height=1.85ex]{graphs/3/Item4_Va.pdf}}|}}
\newcommand{\Goca}{G\hp{8}_{|
\raisebox{-.33\height}{\includegraphics[height=1.85ex]{graphs/3/Item4_Vc.pdf}}|
\raisebox{-.33\height}{\includegraphics[height=1.85ex]{graphs/3/Item4_Va.pdf}}|}}
\newcommand{\Gobc}{G\hp{8}_{|
\raisebox{-.33\height}{\includegraphics[height=1.85ex]{graphs/3/Item4_Vb.pdf}}|
\raisebox{-.33\height}{\includegraphics[height=1.85ex]{graphs/3/Item4_Vc.pdf}}|}}
\newcommand{\Goia}{G\hp{8}_{|
\raisebox{-.33\height}{\includegraphics[height=1.85ex]{graphs/3/Item4_Vi.pdf}}|
\raisebox{-.33\height}{\includegraphics[height=1.85ex]{graphs/3/Item4_Va.pdf}}|}}
\newcommand{\Goii}{G\hp{8}_{|
\raisebox{-.33\height}{\includegraphics[height=1.85ex]{graphs/3/Item4_Vi.pdf}}|
\raisebox{-.33\height}{\includegraphics[height=1.85ex]{graphs/3/Item4_Vi.pdf}}|}}
\newcommand{\Gobb}{G\hp{8}_{|
\raisebox{-.33\height}{\includegraphics[height=1.85ex]{graphs/3/Item4_Vb.pdf}}|
\raisebox{-.33\height}{\includegraphics[height=1.85ex]{graphs/3/Item4_Vb.pdf}}|}}
\newcommand{\Gocc}{G\hp{8}_{|
\raisebox{-.33\height}{\includegraphics[height=1.85ex]{graphs/3/Item4_Vc.pdf}}|
\raisebox{-.33\height}{\includegraphics[height=1.85ex]{graphs/3/Item4_Vc.pdf}}|}}
\newcommand{\GoAu}{G\hp{8}\raisebox{-.9\height}{\!\!\!\!\!\!\!\!\includegraphics[height=2.7ex]{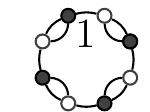}}}
\newcommand{\GoAc}{G\hp{8}\raisebox{-.9\height}{\!\!\!\!\!\!\!\!\includegraphics[height=2.7ex]{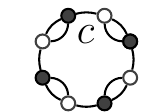}}}
\newcommand{\GoAb}{G\hp{8}\raisebox{-.9\height}{\!\!\!\!\!\!\!\!\includegraphics[height=2.7ex]{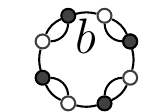}}}
\newcommand{\GoAa}{G\hp{8}\raisebox{-.9\height}{\!\!\!\!\!\!\!\!\includegraphics[height=2.7ex]{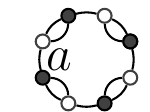}}}
\newcommand{\GoAi}{G\hp{8}\raisebox{-.9\height}{\!\!\!\!\!\!\!\!\includegraphics[height=2.7ex]{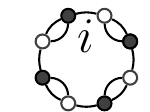}}}
\newcommand{\GoAj}{G\hp{8}_{\raisebox{-.2\height}{\includegraphics[height=3ex]{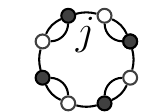}}}} 
\newcommand{\GoWa}{G\hp{8}\raisebox{-.82\height}{\!\!\!\!\!\!\!\!\includegraphics[height=2ex]{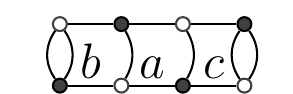}}} 
\newcommand{\GoWb}{G\hp{8}\raisebox{-.82\height}{\!\!\!\!\!\!\!\!\includegraphics[height=2ex]{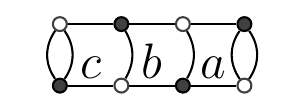}}} 
\newcommand{\GoWc}{G\hp{8}\raisebox{-.82\height}{\!\!\!\!\!\!\!\!\includegraphics[height=2ex]{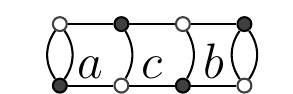}}}
\newcommand{\GoWlji}{G\hp{8}\raisebox{-.82\height}{\!\!\!\!\!\!\!\includegraphics[height=2ex]{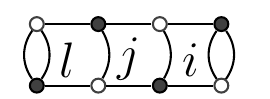}}}
\newcommand{\GoWu}{G\hp{8}\raisebox{-.82\height}{\!\!\!\!\!\!\!\!\includegraphics[height=2ex]{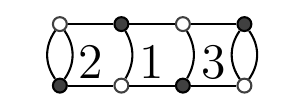}}} 
\newcommand{\GoWd}{G\hp{8}\raisebox{-.82\height}{\!\!\!\!\!\!\!\!\includegraphics[height=2ex]{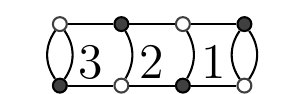}}} 
\newcommand{\GoWt}{G\hp{8}\raisebox{-.82\height}{\!\!\!\!\!\!\!\!\includegraphics[height=2ex]{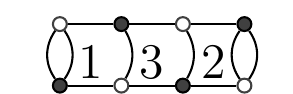}}}
\newcommand{\GoS}{G\hp{8}_{\raisebox{-.2\height}{\!\includegraphics[height=3.7ex]{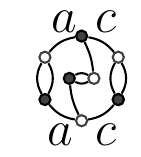}}}} 
\newcommand{\GoSs}{G\hp{8}_{\raisebox{-.2\height}{\!\includegraphics[height=3.1ex]{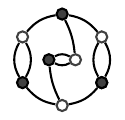}}}} 
\newcommand{\GoSN}{G\hp{8}_{\raisebox{-.2\height}{\!\includegraphics[height=3.7ex]{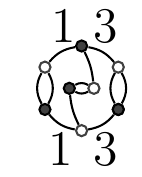}}}} 
\newcommand{\GoCubo}{G\hp{8}\raisebox{-.8\height}{\!\!\!\!\!\!\!\!\includegraphics[height=4ex]{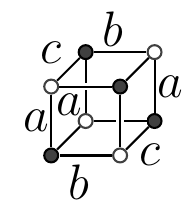}}} 
\newcommand{\GoCuboN}{G\hp{8}\raisebox{-.8\height}{\!\!\!\!\!\!\!\!\includegraphics[height=4ex]{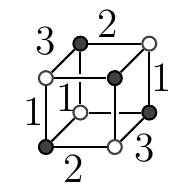}}}
\newcommand{\GoYa}{G\hp{8} \raisebox{-.82\height}{\!\!\!\!\!\!\!\!\includegraphics[height=3ex]{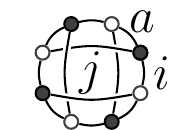}}}
\newcommand{\GoYabc}{G\hp{8} \raisebox{-.82\height}{\!\!\!\!\!\!\!\!\includegraphics[height=3ex]{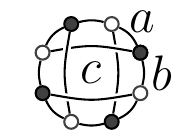}}}
\newcommand{\GoYb}{G\hp{8} \raisebox{-.82\height}{\!\!\!\!\!\!\!\!\includegraphics[height=3ex]{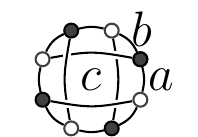}}}
\newcommand{\GoYc}{G\hp{8} \raisebox{-.82\height}{\!\!\!\!\!\!\!\!\includegraphics[height=3ex]{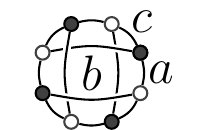}}}
\newcommand{\GoYs}{G\hp{8} \raisebox{-.82\height}{\!\!\!\!\!\!\!\!\includegraphics[height=3ex]{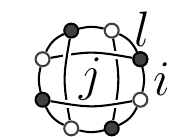}}}
\newcommand{\GoYi}{G\hp{8} \raisebox{-.82\height}{\!\!\!\!\!\!\!\!\includegraphics[height=3ex]{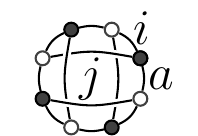}}}
\newcommand{\GoYu}{G\hp{8} \raisebox{-.82\height}{\!\!\!\!\!\!\!\!\includegraphics[height=3ex]{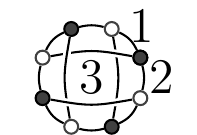}}}
\newcommand{\GoYd}{G\hp{8} \raisebox{-.82\height}{\!\!\!\!\!\!\!\!\includegraphics[height=3ex]{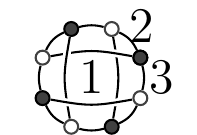}}}
\newcommand{\GoYt}{G\hp{8} \raisebox{-.82\height}{\!\!\!\!\!\!\!\!\includegraphics[height=3ex]{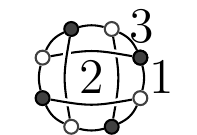}}}
\newcommand{\GoPijs}{G\hp{8} \raisebox{-.77\height}{\!\!\!\!\!\!\!\includegraphics[height=2.73ex]{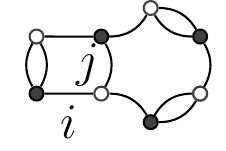}}}
\newcommand{\GoPia}{G\hp{8} \raisebox{-.77\height}{\!\!\!\!\!\!\!\includegraphics[height=2.553ex]{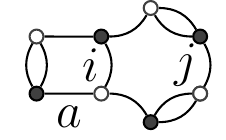}}}
\newcommand{\GoPai}{G\hp{8} \raisebox{-.77\height}{\!\!\!\!\!\!\!\includegraphics[height=2.73ex]{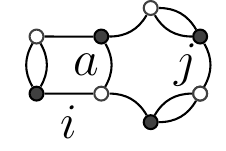}}}
\newcommand{\GoPij}{G\hp{8} \raisebox{-.77\height}{\!\!\!\!\!\!\!\includegraphics[height=2.73ex]{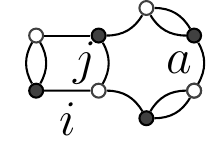}}}
\newcommand{\GoPab}{G\hp{8} \raisebox{-.77\height}{\!\!\!\!\!\!\!\includegraphics[height=2.73ex]{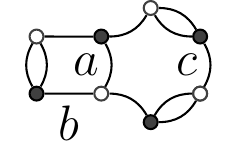}}}
\newcommand{\GoPba}{G\hp{8} \raisebox{-.77\height}{\!\!\!\!\!\!\!\includegraphics[height=2.73ex]{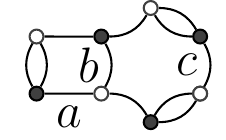}}}
\newcommand{\GoPca}{G\hp{8} \raisebox{-.77\height}{\!\!\!\!\!\!\!\includegraphics[height=2.73ex]{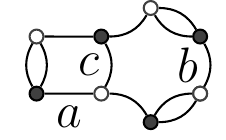}}}
\newcommand{\GoPac}{G\hp{8} \raisebox{-.77\height}{\!\!\!\!\!\!\!\includegraphics[height=2.73ex]{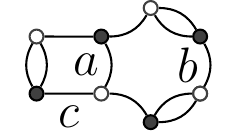}}}
\newcommand{\GoPbc}{G\hp{8} \raisebox{-.77\height}{\!\!\!\!\!\!\!\includegraphics[height=2.73ex]{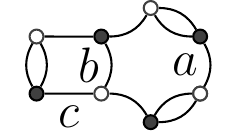}}}
\newcommand{\GoPcb}{G\hp{8} \raisebox{-.77\height}{\!\!\!\!\!\!\!\includegraphics[height=2.73ex]{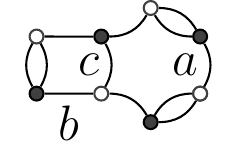}}}
\newcommand{\GoPud}{G\hp{8} \raisebox{-.77\height}{\!\!\!\!\!\!\!\includegraphics[height=2.73ex]{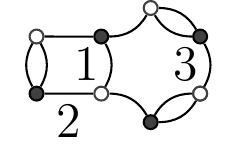}}}
\newcommand{\GoPut}{G\hp{8} \raisebox{-.77\height}{\!\!\!\!\!\!\!\includegraphics[height=2.73ex]{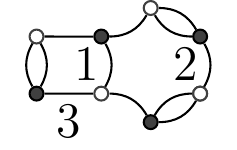}}}
\newcommand{\GoPdt}{G\hp{8} \raisebox{-.77\height}{\!\!\!\!\!\!\!\includegraphics[height=2.73ex]{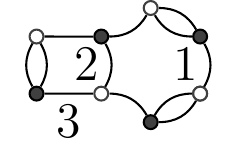}}}
\newcommand{\GoPtd}{G\hp{8} \raisebox{-.77\height}{\!\!\!\!\!\!\!\includegraphics[height=2.73ex]{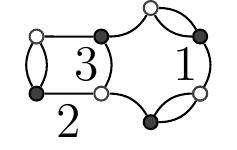}}}
\newcommand{\GoRijl}{G\hp{8}\raisebox{-.73\height}{\!\!\!\!\!\!\!\!\includegraphics[height=3.5ex]{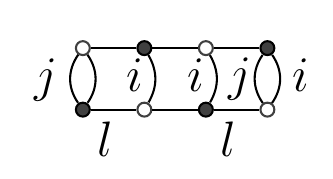}}} 
\newcommand{\GoRij}{G\hp{8}_{\!\!\!\raisebox{-.1\height}{\includegraphics[height=3.3ex]{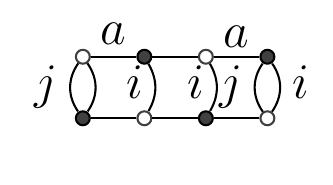}}}} 
\newcommand{\GoRia}{G\hp{8}_{\!\!\!\raisebox{-.1\height}{\includegraphics[height=1.9ex]{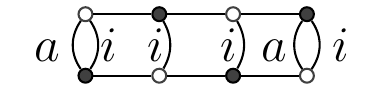}}}} 
 \newcommand{\GoRud}{G\hp{8}_{\!\!\!\raisebox{-.1\height}{\includegraphics[height=2.ex]{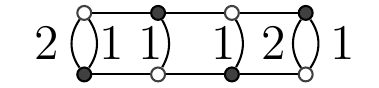}}}} 
 \newcommand{\GoRut}{G\hp{8}_{\!\!\!\raisebox{-.1\height}{\includegraphics[height=2.ex]{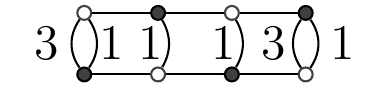}}}} 
 \newcommand{\GoRdt}{G\hp{8}_{\!\!\!\raisebox{-.1\height}{\includegraphics[height=2.ex]{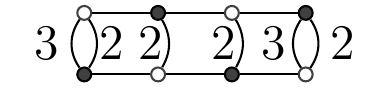}}}} 
 \newcommand{\GoRtd}{G\hp{8}_{\!\!\!\raisebox{-.1\height}{\includegraphics[height=2.ex]{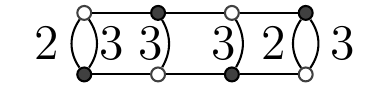}}}}
\newcommand{\GoRcb}{G\hp{8}_{\!\!\!\raisebox{-.1\height}{\includegraphics[height=2.ex]{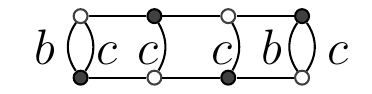}}}} 
\newcommand{\GoRbc}{G\hp{8}_{\!\!\!\raisebox{-.1\height}{\includegraphics[height=2.ex]{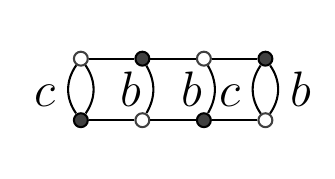}}}} 
\newcommand{\GoRca}{G\hp{8}_{\!\!\!\raisebox{-.1\height}{\includegraphics[height=2.ex]{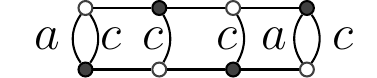}}}} 
\newcommand{\GoRba}{G\hp{8}_{\!\!\!\raisebox{-.1\height}{\includegraphics[height=2.ex]{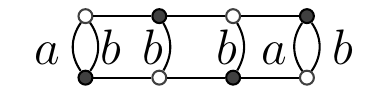}}}} 
\newcommand{\GoRac}{G\hp{8}_{\!\!\!\raisebox{-.1\height}{\includegraphics[height=1.9ex]{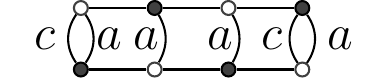}}}} 
\newcommand{\GoRab}{G\hp{8}_{\!\!\!\raisebox{-.1\height}{\includegraphics[height=1.9ex]{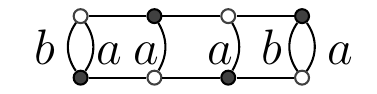}}}} 
\newcommand{\GoRai}{G\hp{8}_{\!\!\!\raisebox{-.1\height}{\includegraphics[height=2ex]{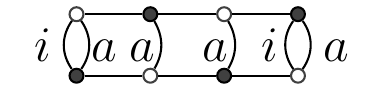}}}} 
\newcommand{\GoXis}{G\hp{8}\raisebox{-.79\height}{\!\!\!\!\!\!\!\!\!  \includegraphics[height=2.99ex]{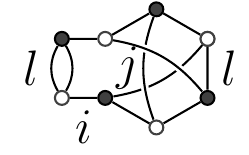}}}
\newcommand{\GoXi}{G\hp{8}\raisebox{-.79\height}{\!\!\!\!\!\!\!\!\!  \includegraphics[height=2.99ex]{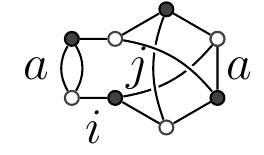}}} 
\newcommand{\GoXu}{G\hp{8}\raisebox{-.79\height}{\!\!\!\!\!\!\!\!\!  \includegraphics[height=2.99ex]{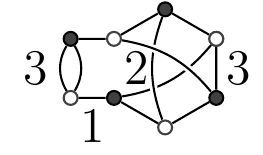}}}
\newcommand{\GoXd}{G\hp{8}\raisebox{-.79\height}{\!\!\!\!\!\!\!\!\!  \includegraphics[height=2.99ex]{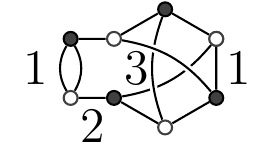}}}
\newcommand{\GoXt}{G\hp{8}\raisebox{-.79\height}{\!\!\!\!\!\!\!\!\!  \includegraphics[height=2.99ex]{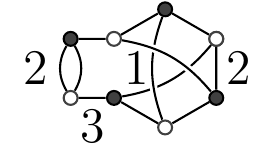}}}
\newcommand{\GoXa}{G\hp{8}\raisebox{-.79\height}{\!\!\!\!\!\!\!\!\!  \includegraphics[height=2.99ex]{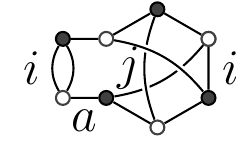}}}
\newcommand{\GoXb}{G\hp{8}\raisebox{-.79\height}{\!\!\!\!\!\!\!\!\!  \includegraphics[height=2.99ex]{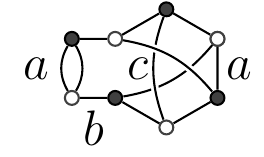}}}
\newcommand{\GoXc}{G\hp{8}\raisebox{-.79\height}{\!\!\!\!\!\!\!\!\!  \includegraphics[height=2.99ex]{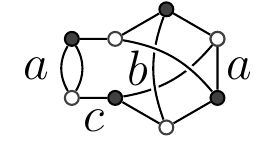}}}
\newcommand{\GoXabc}{G\hp{8}\raisebox{-.79\height}{\!\!\!\!\!\!\!\!\!  \includegraphics[height=2.99ex]{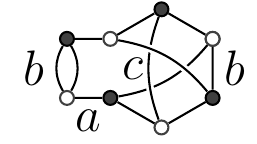}}}
\newcommand{\edgebracket}[1]{\langle \hspace{-2.19pt}\langle #1 \rangle \hspace{-2.19pt}\rangle_{s_a}}
\newcommand{\edgebrackett}[1]{\Big\langle  \hspace{-4.19pt} \Big\langle #1 \Big\rangle \hspace{-4.29pt} \Big\rangle_{s_a}}
\newcommand{\Edgebracket}[1]{\big\langle \hspace{-3.14pt} \big\langle #1 \big\rangle \hspace{-3.1pt} \big\rangle_{s_a}}
\newcommand{\GGmelon}{G\hp{2}_{\raisebox{-.33\height}{\includegraphics[height=2.3ex]{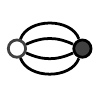}}}}
\newcommand{\GGuno}{G\hp{4}_{\! \raisebox{-.2\height}{\includegraphics[height=2.5ex]{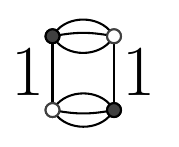}}}}
\newcommand{\GGdos}{G\hp{4}_{\! \raisebox{-.2\height}{\includegraphics[height=2.5ex]{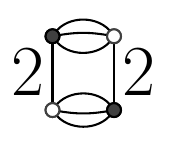}}}}
\newcommand{\GGtres}{G\hp{4}_{\! \raisebox{-.2\height}{\includegraphics[height=2.5ex]{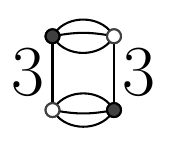}}}}
\newcommand{\GGcuatro}{G\hp{4}_{\! \raisebox{-.2\height}{\includegraphics[height=2.5ex]{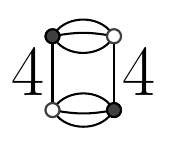}}}}
\newcommand{\Vuno}{\raisebox{-.322\height}{\includegraphics[height=2.3ex]{graphs/4/Icono4_V1v.pdf}}}
\newcommand{\GGcmm}{G\hp{4}_{|\raisebox{-.33\height}{\includegraphics[height=2.2ex]{graphs/4/Icono2_Melon}}
|\raisebox{-.33\height}{\includegraphics[height=2.2ex]{graphs/4/Icono2_Melon}}|}}
\newcommand{\GGcnud}{G\hp{4}_{ \raisebox{-.33\height}{\includegraphics[height=2.2ex]{graphs/4/Icono4_N12}} }}
\newcommand{\sumud}{\suml_{r=1,2}}
\newcommand{\sumtc}{\suml_{p=3,4}}
\newcommand{\Sint}{V}
\newcommand{\fderJ}[1]{\fder{}{J{\phantom{a}}}_{\!\!\!\!\!{#1}}}
\newcommand{\fderbJ}[1]{\fder{}{\bJ{\phantom{a}}}_{\!\!\!\!\!{#1}}}
\newcommand{\zjj}{Z[J,\bJ]}
\newcommand{\ysa}{Y\hp{a}_{s_a}[J,\bJ]}
\newcommand{\jj}{[J,\bJ]}
\newcommand{\Cc}{\mathcal{C}}
\newcommand{\Qc}{\mathcal{Q}}
\newcommand{\wh}{_{\mathrm{w}\vphantom{b}}}
\newcommand{\bl}{_{\mathrm{b}}}
\newcommand{\xb}{\mathbf{x}}
\newcommand{\yb}{\mathbf{y}}
\newcommand{\zb}{\mathbf{z}}
\newcommand{\Xb}{\mathbf{X}}
\newcommand{\Yb}{\mathbf{Y}}
\renewcommand{\sb}{\mathbf{s}}
\newcommand{\tb}{\mathbf{t}}
\def\moverlay{\mathpalette\mov@rlay}
\def\mov@rlay#1#2{\leavevmode\vtop{%
   \baselineskip\z@skip \lineskiplimit-\maxdimen
   \ialign{\hfil$\m@th#1##$\hfil\cr#2\crcr}}}
\newcommand{\charfusion}[3][\mathord]{
    #1{\ifx#1\mathop\vphantom{#2}\fi
        \mathpalette\mov@rlay{#2\cr#3}
      }
    \ifx#1\mathop\expandafter\displaylimits\fi}
\newcommand{\cupdot}{\charfusion[\mathbin]{\cup}{\cdot}}
\newcommand{\Dsa}[1]{\Delta_{s_a,#1}}
\newcommand{\Dma}[1]{\Delta_{m_a,#1}}
\newcommand{\logo}[4]{\raisebox{-.#4\height}{\includegraphics[height=#3 ex]{graphs/3/Logo#1_#2.pdf}}} 
\newcommand{\icono}[4]{\raisebox{-.#4\height}{\includegraphics[height=#3 ex]{graphs/3/Item#1_#2.pdf}}} 
\newcommand{\Logo}[4]{\raisebox{-.#4\height}{\includegraphics[height=#3 ex]{graphs/4/Logo#1_#2.pdf}}} 
\newcommand{\Icono}[4]{\raisebox{-.#4\height}{\includegraphics[height=#3 ex]{graphs/4/Icono#1_#2.pdf}}} 
\newcommand{\meloncito}{\logo{2}{Melon}{2}{24}}
\newcommand{\melon}{\logo{2}{Melon}{3.1}{3}}
\newcommand{\J}{\mathbb{J}}
\def\leqno{\tagsleft@false}
\def\reqno{\tagsleft@false}
\def\fleqn{\@fleqnfalse}
\def\cneqn{\@fleqnfalse}
\g@addto@macro{\endabstract}{\@setabstract}
\newcommand{\authorfootnotes}{\renewcommand\thefootnote{\@fnsymbol\c@footnote}}%
 \DeclareRobustCommand{\gobblefive}[5]{}
\newcommand*{\SkipTocEntry}{\addtocontents{toc}{\gobblefive}}
\begin{document}\vspace{-1cm}
\begin{center}
 \Large \textbf{
Correlation functions of $\mathrm{U}(N)$-tensor models \\ and their Schwinger-Dyson equations}

\vspace{1cm}

 \normalsize
  \authorfootnotes
  
    \footnotetext[0]{\textit{E-mails:} \texttt{romain.pascalie@u-bordeaux.fr},  \texttt{perezsan@uni-muenster.de}$^*$, \texttt{raimar@math.uni-muenster.de}} 
\textsc{Romain Pascalie}\textsuperscript{1,2},  
 \textsc{Carlos I. P\'erez-S\'anchez}\textsuperscript{1,}\footnote{Corresponding author}
  and  \textsc{Raimar Wulkenhaar}\textsuperscript{1} \par \bigskip \bigskip 
% \vspace{1cm}
 \textsuperscript{1}{Mathematisches Institut der Westf\"alischen
  Wilhelms-Universit\"at\\
Einsteinstra\ss{}e 62, D-48149 M\"unster, Germany}
\vspace{.2cm}

\textsuperscript{2}{Universit\'e Bordeaux, LaBRI,
UMR 5800, 33400 Talence, France}

% \vspace{-1.2cm}
\fontsize{11.5}{14.3}\selectfont  
% \vspace{-.41cm}

%v2. 50 pages, 7 TikZ-figures and TikZ graph theory. R. Pascalie added as co-author (see page 17). 
%v3: If generated, v3 coincides with v2 modulo the abstract (the accent in "Gur\{u}au" has been removed, 
%    to make it searchable, and "graph calculus" has been added) and metadata-title: "coloured" replaced by the modern terminology.

\end{center}
\begin{abstract}
 We analyse the correlation functions of $\mathrm{U}(N)$-tensor models (or complex tensor models), which turn out to be classified by boundary graphs, 
 and use the Ward-Takahashi identity and the graph calculus developed in [\textit{Commun. Math. Phys.} (2018) 358: 589] in order to derive the complete tower of exact, analytic Schwinger-Dyson equations for correlation functions with connected boundary graphs.  We write them explicitly for ranks $D=3$ and $D=4$. Throughout, we follow a non-perturbative approach to Tensor (Group) Field Theories.  We propose the extension of this program to the Gur\u{a}u-Witten model, a holographic tensor model based on the Sachdev-Ye-Kitaev model (SYK model).
 \end{abstract}
%  \maketitle
%  \vspace{-.21cm}
\section[sec]{\for{toc}{Introduction}\except{toc}{Introduction}}
% \section[s]{Introduction}
Inspired by matrix models, and being initially in fact very similar to
them, \textit{tensors models} are a natural way to extend the two
dimensional random geometry of matrix models \cite{dFGZ} to higher
dimensions.  As an important application, tensor models aim at
evaluating the partition function of simplicial quantum gravity
\cite{RTM_QG,Koslowski,Ambjorn,track4,GurauRyan,OritiGFT,On,su2,gurauReview} and could be
seen, under mild assumptions, as a generator of a family of
graph-encoded discretisations of the Einstein-Hilbert action, in whose
continuum limit smooth geometries are expected to emerge. \par One of
the initial differences between matrix and tensor models, which turned
out to impair the natural development of the latter, was the
impossibility to derive their $1/N$-expansion.  This problem was
solved by Gur\u au \cite{Nexpansion_coloured}, who introduced a
unitary symmetry, the `colouring' the tensors, that forbids some
unwished contributions in the perturbative expansion (see
Sec. \ref{sec:recap} here and \cite{Gurau:2009tw}). He showed that the
$1/N$-expansion of rank-$D$ tensor models is controlled by an integer
called Gur\u au's degree, associated to each Feynman graph.  This
integer happens to be related to the value of the Einstein-Hilbert
action \cite{critical} on a $D$-dimensional equilateral triangulation
associated to that graph.  The discrete spectrum of Gur\u au's degree
is then the set of values that the Regge discretisation of the general
relativity action can take.  Additional to the initial motivations of
tensor models, new applications to AdS/CFT (also admitting a
$1/N$-expansion \cite{GurauSYKinvN}) via the Sachdev-Ye--Kitaev (SYK)
models have been found in \cite{WittenSYK}; along these lines the
Gur\u au-Witten model \cite{GurauSYK} has been newly proposed.  This
sets the foundations for the so-called holographic tensors.  \par

All these new results enliven the physics of random tensors.  Yet, the
quantum theory of these objects itself deserves a more thorough
mathematical scrutiny, and, in this vein, the present paper is a study
of the correlation functions of complex tensors models (CTM), already
begun in \cite{fullward}, and of the equations they obey (see
\cite{us} as well).  These models \cite{Gurau:2009tw} are sometimes
referred to as \textit{uncoloured tensor models} \cite{uncoloring},
terminology which we avoid here.
% Although the tensor models we work with are not ``coloured'' in the historical sense\footnote{
% We rather work with models addressed in \cite{GurauVirasoro} and
% named in \cite{uncoloring} \textit{uncoloured} tensors. 
% We use the terminology  \textit{coloured} tensor models and remark
% that this should not confused with models with multiple tensor fields
% introduced in \cite{Gurau:2009tw}; in other words, in our models
% the tensor indices are coloured, not the tensors themselves. An another possible name would be 
% $\mathrm U(N)$-\textit{tensor models}, but this terminology would not 
% evoke (as `coloured' tensor models do) the independence of the $\mathrm U(N)$-action
% on the tensor indices.}, 
Rather, since the tensor fields retain some colouring in their
indices, which is a byproduct of an independent $\mathrm
U(N)$-symmetry for each tensor index, we call our models $\mathrm
U(N)$-\textit{tensor models}.  For each symmetry, the partition
function $\zjj$ of a CTM satisfies a full version \cite{fullward} of
the Ward-Takahashi identity \cite{DineWard}. It has been anticipated
\cite{fullward} that this constraint would allow to derive an equation
for each correlation function of complex tensor models, and the aim of
this paper is to obtain those for arbitrary rank.  \par These are the
analytic Schwinger-Dyson equations (SDE).  Their derivation is
independent from existent Schwinger-Dyson equations for tensor models
(e.g. those obtained by Gur\u au \cite{GurauVirasoro} or Krajewski and
Toriumi \cite{krajewskireiko}), which, crucially, differ from our SDE
in that those SDE reported before are algebraic.  That is to say, one
can see the partition function of a CTM, $Z(
\{\lambda_\alpha\}_\alpha) $, as a function of all (possible) coupling
constants $ \{\lambda_\alpha\}_\alpha$.  Whilst
\cite{GurauVirasoro,krajewskireiko} derive recursions for (numerical)
expectation values $ \log Z(
\{\lambda_\alpha\}_\alpha)/\partial\lambda_\gamma $, the framework we
offer here, on the other hand, allows to derive equations for
functional derivatives of $\log Z [J,\bar J]$ with respect to the
sources $J$ and $\bar J$, thus leading to integro-differential
Schwinger-Dyson equations in a quantum field theory context.

\par The connected correlation $2k$-point function of rank-$D$ Tensor
Field theories are usually defined by
\begin{equation}\label{eq:usualDefCorrFunc}
 G\hp{2k}(\xb_1,\dots,\xb_k; \yb_1,\ldots,\yb_k)=
\big(\prod\limits_{i=1}^k \fderJ{\xb_i}\fderbJ{\yb_i} \big) \log(\zjj) 
\bigg|_{J=\bar J=0} \, \quad (\xb_i,\yb_i\in\Z^D). \tag*{$(*)$}
\end{equation}
For CTM, this definition is redundant, when not equivocal (e.g. $G\hp 2 (\xb;\yb)$
identically vanishes outside the diagonal $\xb=\yb$).
In \cite{fullward} we proposed to split each function $G\hp{2k}$ 
in sectors $G_\B\hp{2k}$ that encompass all Feynman graphs indexed by so-called
boundary graphs $\B$ (see Sec. \ref{sec:boundarygraphsexpansion}).
Here $2k$ denotes 
the number of vertices of $\B$ and this integer coincides with the number of 
external legs of the graphs summed in $G_\B\hp{2k}$. 

% \begin{itemize} \item 
There are two reasons to classify correlation functions by boundary
graphs: First, by using these correlation functions one gains a clear
geometric interpretation in terms of bordisms.  Feynman diagrams in
CTMs are coloured graphs, and these represent graph-encoded
triangulations of PL-manifolds. The momentum flux between external
legs of an open graph $\G$ determines its so-called boundary,
$\B=\partial \G$.  Boundary graphs are important because they also
triangulate a manifold, and this manifold coincides with the boundary,
in the usual sense, of the manifold that the original graph
triangulates \cite{GurauRyan}.  Also, by fixing a boundary graph $\B$,
one can sum all connected Feynman graphs that contribute to
$G\hp{2k}_\B$, and these are interpreted as bordisms whose boundary is
triangulated by $\B$; for instance, the connected components,
$\meloncito$ and $\kthree$ of (the graph indexing) $\Gomkthree$
triangulate a sphere and a torus, respectively, and (connected
Feynman) graphs contributing to this correlation function are
triangulations of bordisms $\mathbb S^2\to \T^2$ that are compatible
with their boundary being `triangulated by' $\meloncito\sqcup
\kthree$. \par

Secondly, one must do the splitting of the correlations 
in boundary graphs, otherwise the momenta of the sources
interfere with one another. The correlation functions 
that we propose here 
need only half the arguments of the functions from definition \ref{eq:usualDefCorrFunc}. 
For $k=1,2,3,4$, the connected $2k$-correlation functions indexed by connected \textit{boundary} graphs are
\allowdisplaybreaks[0]
\begin{subequations}
 \begin{align}
 &\Gmelon,\,\\
 & \Guno ,\,\,
 \Gdos ,\,\,
 \Gtres, \\
 &
 \Gskthree,\,
 \Gsqu,\, \Gsqd ,\, \Gsqt ,\, 
 \Gseu ,\,
 \Gsed ,\,
 \Gset,\, %\Gsei
 \\
 &
\GoRijl,\, \GoWlji,\, \GoPijs ,\, \GoXis,\, \GoYs,\, \GoSs,\,\GoAi\,, \GoCubo\,,  \mbox{ and $\Sym_3$-orbits thereof}. 
 \end{align}
 \end{subequations}  \allowdisplaybreaks 
 Also, functions like $\Gsmi$ and $\Gomkthree$, indexed by
 disconnected graphs, need to be considered. None of these graphs is a
 Feynman graph: in fact we will not deal with them
 here\footnote{Except two Feynman diagram examples appearing in
   Sec. \ref{sec:recap} and in Fig. \ref{fig:8pt}.}, since we proceed
 non-perturbatively. \par
% \item 
%  \end{itemize}
 To these two reasons, we add as motivation the success that 
 this treatment gave for matrix models \cite{GW12}. There, by  
 splitting in boundary components,  
 the matricial Ward identity was exploited and combined with the Schwinger
 Dyson equations. This allowed to derive an integral equation for the  
 quartic matrix models and, in the planar 
 sector, finally solve for all correlation functions in terms of the two point function \cite{GW12} 
 via algebraic recursions. Here, we import these techniques to the CTM setting.\par
   
 In this article we derive the full tower of equations that correspond
 to connected boundary graphs.  We also obtain the 2-point and some
 higher-point Schwinger-Dyson equations (SDE) in an explicit form
 rank-$3$ and rank-$4$ theories.  Section
 \ref{sec:boundarygraphsexpansion} recalls the setting of complex
 tensor models in a condensed fashion, and the expansion of the free
 energy in boundary graphs. The Ward-Takahashi Identity (WTI)
 \cite{fullward} for complex tensors, which we recall in Section
 \ref{sec:functionals}, is a fundamental auxiliary and bases on this
 boundary graph expansion.  There, we also introduce language to deal
 with the proper derivation of the full SDE-tower in Section
 \ref{sec:SDE}. We continue with the derivation of the SDE-equations
 for quartic rank-$3$ theories (Sec. \ref{sec:rank3SDE}) and rank-$4$
 theories (Sec. \ref{sec:rankfour}; moreover, rank-$5$ are shortly
 addressed in App. \ref{sec:rankfive}). \par
 
 In order to derive the SDE for a certain $2k$-point function it is
 necessary to know, also to order $2k$ in the sources, the form of
 certain generating functional (for rank $3$, Lemma
 \ref{thm:ordersix}, with proof located in
 App. \ref{sec:proofoflemma}) which appears in the Ward-Identity.
 This requires knowledge of the free energy to order $2(k+1)$ (in the
 sources), which in turn needs information about all the graphs with
 this number of vertices and their coloured automorphism groups. Later
 on, in Section \ref{sec:rankfour} we find the SDE for rank-$4$
 theories with melonic quartic vertices. Explicitly, only the
 two-point functions and $4$-point functions are obtained, since the
 graph theory in four colours is much more complicated. Section
 \ref{sec:simple} presents a model that has simpler SDEs and looks
 solvable, since, as shown there, it posses a very similar expansion
 in boundary graphs.  It is a tensor model that can be used to study
 the random geometry of $3$-spheres.  \par A short section before the
 conclusions analyzes the boundary graphs of the Gur\u au-Witten
 (SYK-like) model and sets a plan to extend to it the CTM-methods
 developed in previous sections.
 
%  \clearpage 
We motivate a non-linear reading of this article. 
The dependence of the sections is sketched 
by means of lines in the next diagram, the dashed ones 
meaning weak dependence.
 \vspace{1cm}
% \bigskip

\[
\hspace{-3cm}
\includegraphics[width=8cm]{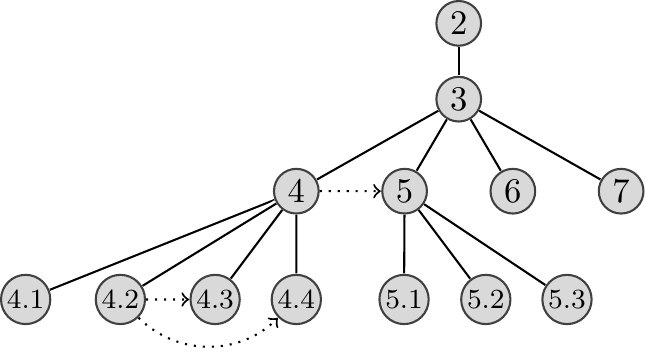}
\]

\bigskip
\SkipTocEntry\section*{Acknowledgments}  
  C.I.P.S and R.W. thank the SFB 878  (DFG Sonderforschungsbereich \textit{Groups, Geometry \& Actions}) 
  for financial support; so does R.P. for hospitality. 
  R.P. is supported by the `CNRS Infiniti ModTens grant'.
  C.I.P.S. thanks additionally the DAAD  (\textit{Deutscher Akademischer Austauschdienst})  
  for financial support in the beginning of this work, 
  and the 
  \textit{Mainz Institute for Theoretical Physics}
  (MITP), where this paper was concluded,
  for support, office hours and hospitality during
  the workshop \textit{Foundational and structural aspects of gauge theories}. 
  We thank two referees for extremely useful remarks. 
%   R.P is acknowledged for careful reading of the manuscript and for 
%   pointing out a mistake in (a previous incomplete version of) Theorem \ref{thm:SDEs} in the first preprint version.
%   C.I.P.S and R.W. think this correction is enough to add R.P. as a coauthor. 
% \clearpage
% \addtocontents{toc}{\setcounter{tocdepth}{-1}}
\newpage
\fontsize{11.5}{18}\selectfont  
 \tableofcontents
\fontsize{11.5}{14.5}\selectfont

% \addtocontents{toc}{\setcounter{tocdepth}{2}}

%   \vspace{1cm}
\clearpage
\section{Boundary graph expansions} \label{sec:boundarygraphsexpansion}

This section rapidly introduces the notation in graph theory and
recapitulates previous results that are relevant in our present study.
There are few examples in Fig. \ref{fig:graphexamples} that are
intended as support to rapidly grasp the next definitions.  Also the
rather panoramic Table \ref{table:graphs} organizes the concepts
introduced below.

\subsection{Coloured tensors and coloured graphs} \label{sec:recap}
Let $N$ be a (large) integer, thought of as an energy scale, and 
consider $D$ distinguished representations, $(\H_1,\rho_1),\ldots,(\H_D,\rho_D)$ of $\mathrm U(N)$. 
A complex tensor model is concerned with the quantum theory of tensor fields 
$\phi,\bar\phi: \H_1\otimes \H_2\otimes\ldots\otimes\H_D\to \C$
whose components transform under said $D$ representations as 
\begin{align*}
\phi_{x_1\ldots x_D}& \mapsto \phi'_{x_1\ldots x_D}= \sum_{y_a}[\rho_a(W_a)]_{x_ay_a} \phi_{x_1\ldots y_a\ldots x_D}\,, \\
\bar\phi_{x_1\ldots x_D}& \mapsto \bar\phi'_{x_1\ldots x_D}= \sum_{y_a}[\overline{\rho_a(W_a)}]_{x_ay_a} \bar\phi_{x_1\ldots y_a\ldots x_D} \,, 
\end{align*}
for all $W_a\in\mtr{U}(N)$ and being each $x_a$ and $y_a$ in suitable
index-sets $I_a\subset \Z$, for each integer (or \textit{colour})
$a=1,\ldots,D$. Usually one sets $\H_a=\C^N$ or $\H_a=\ell^2[-n,n]$
for suitable $n=n(N)$, and $\rho_a=\mathrm{id}_{\H_a}$ for each colour
$a$. However, at the same time, one insists that the representations
are distinguished, so that indices are anchored to a spot assigned by
its colour. Thus, the indices of the tensors have no symmetries
(e.g. $\phi_{ijk}=\phi_{ikj}$ is forbidden) and only indices of the
same colour can be contracted. \\

A particular tensor model is specified by two additional data: a finite subset  
of interaction vertices given by real monomials in $\phi$ and $\bar\phi$ that 
are $\mtr{U}(N)$-invariant under the chosen $D$ representations; 
 the second data is a quadratic form 
\[S_0(\phi,\bar\phi)=\Tr_2(\bar\phi,E\phi)=\sum_{\xb} \bar\phi_\xb E_\xb \phi_\xb,\,\, 
\mbox{ for certain function }E: I_1\times\ldots \times I_D \to
\re^+\,, \] determining the kinetic term $S_0$ in the classical
action.  Sums are (implicitly) over the finite lattice $
I_1\times\ldots \times I_D\subset \Z^D$.  These $I_a$ sets depend
usually on a cutoff scale related to $N$ and we will assume, also
implicitly, that throughout they are all $\Z$, keeping in mind that
one needs to regularize.

In order to characterize the interaction vertices, one uses
vertex-bipartite regularly edge-$D$-coloured graphs, or, in the
sequel, just `$D$-\textit{coloured graphs}'.  A graph $\G$ being
\textit{vertex-bipartite} means that its vertex-set $\G\hp0$ splits
into two disjoint sets $\G\hp0=\G\hp0\wh \cupdot \G\hp0\bl$.  The set
$\G\hp0\wh$ (resp. $\G\hp0\bl$) consists of \textit{white}
(resp. \textit{black}) vertices.  The set of edges, denoted by
$\G\hp1$ is split as $\G\hp1 = \cupdot_a \G\hp1_a$ into $D$ disjoint
sets $\G\hp1_a$ of $a$-coloured edges, $a=1,\ldots,D$.  Given any edge
$e$, the white and black vertices $e$ is attached at, are denoted by
$s(e) \in\G\wh\hp 0$ and $t(e)\in\G\hp 0 \bl$, respectively.  This
defines the maps $s,t:\G\hp 1 \to \G\hp0 $.  Regularity of the
colouring means that, for each $v\in\G\hp 0\wh$ and each $w\in\G\hp 0
\bl$, both preimages $s\inv (v)$ and $t\inv(w)$ consist precisely of
$D$ edges of different colours.  By regularity, the number of white
and black vertices is the same and is equal to $k(\G):=\# \, \G \hp 0
/2$.  The set of (closed) $D$-coloured graphs is denoted by $
\vGrph{D}$.  \par

The only way to obtain monomials in the fields $\phi$ and $\bar\phi$
that are also invariants, is contracting each coordinate index
$\phi_{\ldots x_c\ldots }$ by a delta $\delta_{x_c y_c}$ with the
coordinate $\bar\phi_{\ldots y_c\ldots }$ of the respective colour of
the field $ \bar\phi$. The imposed $\mtr{U}(N)$-invariance requires
then $D\cdot k(\G)$ such coloured deltas. One thus associates to each
occurrence of $\phi$ a white vertex $v$ and to each occurrence of
$\bar\phi$ a black vertex $w$. For each colour $c$, to each
$\delta_{x_c y_c}$ contracting $\phi_{\ldots x_c\ldots }$ and
$\bar\phi_{\ldots y_c\ldots }$ one draws a $c$-coloured edge which
starts at $v$, $v=s(e)$, and ends at $w$, $w=t(e)$.  Thus, any
invariant monomial $ \Tr_\B$ is fully determined by a coloured graph
$\B$, and vice versa.  For instance, the trace
$\sum_{\mathbf{a,b,c,p,q,r}} (\bar
\phi_{r_1r_2r_3}\bar\phi_{q_1q_2q_3} \bar\phi_{p_1p_2p_3}) \cdot
(\delta_{a_1p_1} \delta_{a_2r_2} \delta_{a_3q_3} \delta_{b_1q_1}
\delta_{b_2p_2} \delta_{b_3r_3} \delta_{c_1r_1} \delta_{c_2q_2}
\delta_{c_3p_3}) \cdot (\phi_{a_1a_2a_3} \phi_{b_1b_2b_3}
\phi_{c_1c_2c_3})$ is depicted in \ref{fig:k33}.
                                                
\begin{figure} 
    \centering
    \begin{subfigure}[b]{0.25\textwidth} 
    \centering
    \includegraphics[width=.86\textwidth]{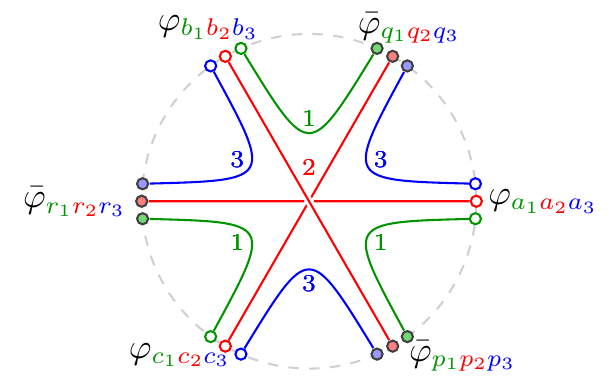}
        \includegraphics[width=.45\textwidth]{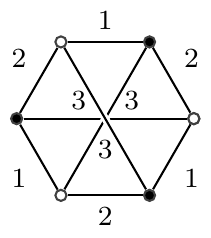}
        \caption{Example of the relation between traces and monomials. This graph
        is denoted by $K_{\mtr c}(3,3)$}
        \label{fig:k33}
    \end{subfigure}
    \qquad %add desired spacing between images, e. g. ~, \quad, \qquad, \hfill etc. 
      %(or a blank line to force the subfigure onto a new line)
    \begin{subfigure}[b]{0.3\textwidth}\centering
     \includegraphics[width=.79\textwidth]{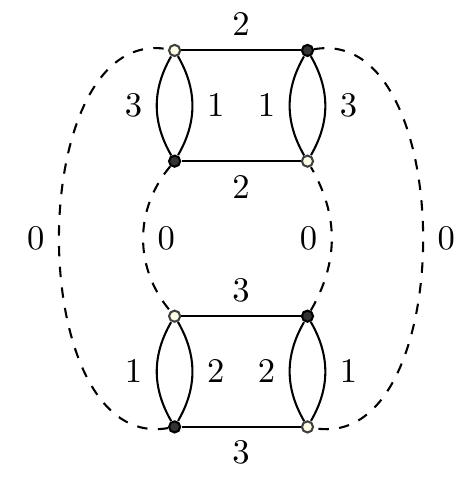}
        \caption{This graph is in $\vfey_3(\phi^4)$, i.e.
        it is a vacuum Feynman graph of the $\phi^4_3$-model, 
         $V(\phi,\bar\phi)= \lambda (\vuno + \vdos  + \vtres )$ }
        \label{fig:feynmansimple}
    \end{subfigure}
    ~ \,\,%add desired spacing between images, e. g. ~, \quad, \qquad, \hfill etc. 
    %(or a blank line to force the subfigure onto a new line)
    \begin{subfigure}[b]{0.3\textwidth}
        \includegraphics[width=\textwidth]{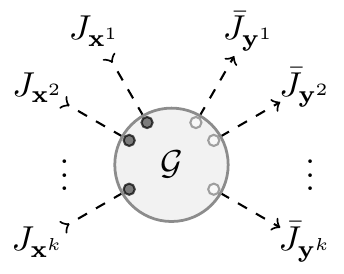}
        \caption{Anatomy of a Feynman graph and how it determines boundary graph $\B$, which 
        induces the map $\B_*:(\xb^1,\ldots,\xb^k)\mapsto (\yb^1,\ldots,\yb^k)$}
        \label{fig:combinatorics}
    \end{subfigure}

  \bigskip
  
      \begin{subfigure}[b]{0.25\textwidth}

     \centering
     \includegraphics[width=.85\textwidth]{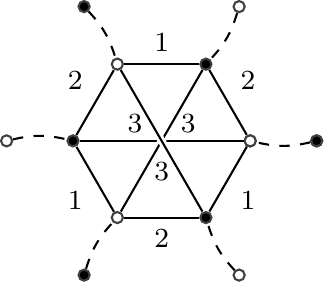}
             \vspace{.5cm}
       \caption{Open graph with boundary $\kc$ (in fact it is the 
        cone of $\kc$) but is not in $\fey_3(\phi^4)$}
        \label{fig:coned}
    \end{subfigure}
    \quad %add desired spacing between images, e. g. ~, \quad, \qquad, \hfill etc. 
      %(or a blank line to force the subfigure onto a new line)
    \begin{subfigure}[b]{0.3\textwidth}
        \includegraphics[width=.9\textwidth]{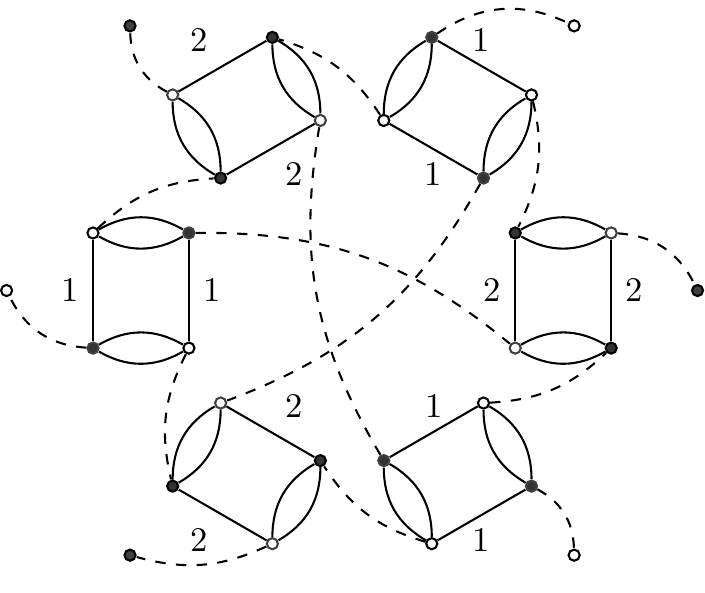}
        \caption{ This graph $\R$ is an open graph in $\fey_3(\phi^4)\subset  \Grph{3+1}\hp 6 \subset \Grph{3+1}$ 
        with $\partial\R=\kc$ (see explanation in \ref{fig:amputated})}
        \label{fig:integratedk33}
    \end{subfigure}
    ~\,\, %add desired spacing between images, e. g. ~, \quad, \qquad, \hfill etc. 
    %(or a blank line to force the subfigure onto a new line)
    \begin{subfigure}[b]{0.3\textwidth} \centering
        \includegraphics[width=.85\textwidth]{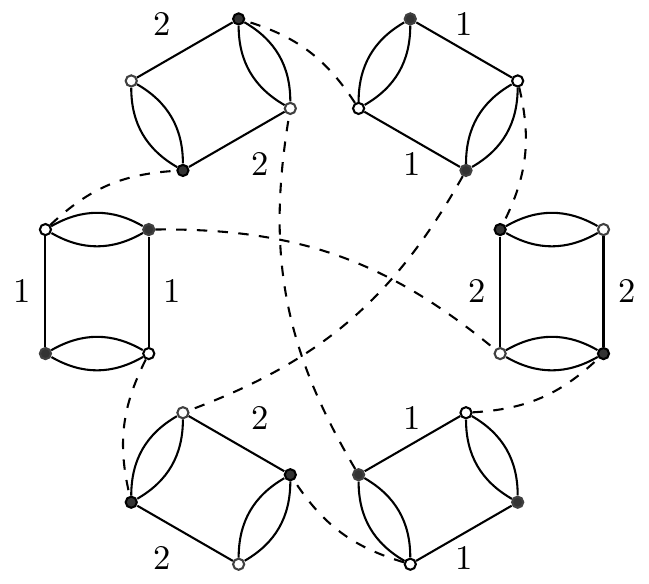}
        \caption{The amputation of $\R$. If one 
        erases the $0$-coloured (or dashed) edges, one gets connected
        components in $\{\vuno,\vdos,\vtres\}$ }
        \label{fig:amputated}
    \end{subfigure}
    \caption{Graph terminology of Sec. \protect \ref{sec:recap} and concerning examples \label{fig:graphexamples}}
\end{figure}

Any model is then given by an interaction potential $V(\phi,\bar\phi)=
\sum_{\B\in\Lambda} \lambda_\B\Tr_{\B}(\phi,\bar\phi)$, for $\Lambda$
a finite subset of $\vGrph{D}$. For a fixed model $S=S_0+V$, one can
write down the corresponding partition function:
\begin{equation} \label{eq:partfunction}
 Z[J,\bar J]=Z_0 \int\Df [\phi, \bar\phi] \,\ee^{\Tr_2{(\bar J,\phi)}+
\Tr_2{(\bar \phi , J)}-N^{D-1}S[\phi,\bar\phi]} 
%  {\int\Df [\phi, \bar\phi]\, \ee^{-N^{D-1}S[\phi,\bar\phi]}} 
\,,  \,\, 
% \mbox{ with }
\Df [\phi, \bar\phi]:=\!\!\!\prod\limits_{\,\,\,\,\mathbf {x} \in
\Z^D} \!\!\! N^{D-1}\frac{\dif\varphi_{\mathbf x} 
\dif\bar\phi_{\mathbf{x}}}{2\pi \ii}
% \ee^{-\Tr_{2}(\varphi,\bar\phi)}
\,. 
\end{equation} 
Here $S_0=\Tr_2(\bar\phi,\phi)$ is the only quadratic invariant,
namely $\melon$.  Later on, at the level of propagator, we will allow
this invariance to be broken (see
Sec. \ref{sec:boundarygraphsexpansion}). \par Using Wick's theorem one
evaluates the contributions to the generating functional. Wick's
contractions (\textit{propagators}) are assigned a new colour, $0$,
which one commonly draws as dashed line.  For (complex) matrix models
($D=2$), this $0$ colour would be the ribbon line propagator, thus,
for tensors, this colour $0$ substitutes a cumbersome notation of $D$
parallel lines.  It is easy to see that Feynman vacuum graphs of
rank-$D$ complex tensors are vertex-bipartite regularly
edge-$(D+1)$-coloured graphs, now the colours being the integers from
$0$ to $D$. Vacuum graphs can be connected or disconnected. The set of
strictly disconnected graphs is denoted by $\strictdvGrph{D+1}$ and
$\dvGrph{D+1}$ denotes the set of possibly disconnected graphs.  We
assume that any Feynman graph is connected and get rid of
\textit{Feynman} graphs in $\strictdvGrph{D+1}$ by working with the
free energy, $W\jj=\log (\zjj)$, rather than with the partition
function. \par Since we are mainly interested in the
\textit{connected} correlation functions we have to consider
\textit{open} Feynman graphs, i.e. graphs with $n$ external legs, each
of which is attached to a tensorial source, $J$ or $\bar J$, that
obeys the same transformation rules of the field $\phi$ or $\bar\phi$,
respectively. The external legs are exceptional edges of valence-1
white (for the source $J$) or black (for $\bJ$) vertices. All external
legs' edges have colour $0$.  Clearly, because of bipartiteness, this
number has to be even, $n=2k$. We denote by $\Grph{D+1}\hp{2k}$ the
set of Feynman diagrams with $2k$ external legs and further set
\[\Grph{D+1}=\cup_{k=1}^\infty \Grph{D+1}\hp{2k} \cup \vGrph{D+1} \, \]
generically for open or closed ($D+1$)-coloured graphs. \par 
Importantly, not every graph in $\Grph{D+1}$ is a Feynman graph. The set of 
Feynman graphs of a model 
$V(\phi,\bar\phi)= \sum_{\B\in\Lambda} \lambda_\B\Tr_{\B}(\phi,\bar\phi)$ is denoted by $\fey_D(V(\phi,\bar\phi))$ or $\fey_D(V)$.
This set consists of the graphs in $\Grph{D+1}$ that satisfy the following condition: 
 after amputating all external legs and removing all the $0$-coloured 
 edges, the remaining graph has connected components in 
 the set of interaction-vertices $ \Lambda \subset  \vGrph{D}$ 
 (see Figs. \ref{fig:integratedk33}, \ref{fig:amputated}).

\begin{table}\centering
 \includegraphics[width=1.03\textwidth]{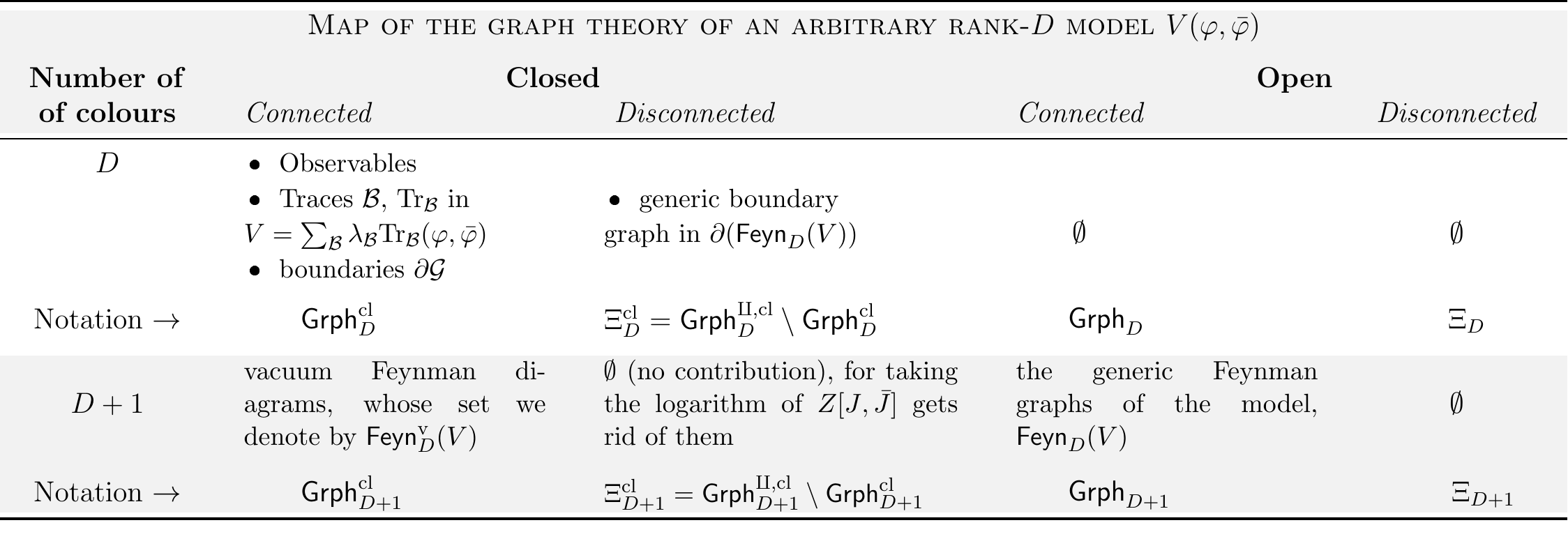}
  \caption{\label{table:graphs} Terminology of
  Sec. \ref{sec:boundarygraphsexpansion}, and why both  
  graphs in $D$ and $D+1$ number of colours appear in a rank-$D$ models,
  and which their respective roles are. Here `disconnected' strictly means 
  `not connected'. The notation $\emptyset$
  stands for `no contributions to/no role in the rank-$D$ theory'.}
\end{table}

\subsection{Boundary graphs}\label{sec:boundary_graphs_subsection}
There is a boundary map $\partial: \Grph{D+1} \to  \dvGrph{D} $, 
which for all $\G \in \Grph{D+1}$ is given by
\begin{align*}
(\partial\G)\hp 0&:= \{\mbox{external legs of } \G\}\,, \\
(\partial\G)\hp 1_a&:=\{(0a)\mbox{-bicoloured paths between external legs in }\G \} \,.
\end{align*}
The vertex set inherits the bipartiteness from $\G$, to wit a vertex
in $(\partial\G)\hp 0$ is \textit{black} if it corresponds from an
external line attached to a white vertex, and \textit{white} if it is
attached to a black vertex in $\G$.  The edge set is regularly
$D$-coloured $(\partial\G)\hp 1= \cupdot_a (\partial\G)\hp 1_a$.  \par
For a fixed model $V(\phi,\bar\phi)$, the image of the restriction
$\partial_V:=\partial|_{\fey_D(V) }$ of $\partial$ to $\fey_D(V) $ is
deemed \textit{boundary sector}, and this set is, of course, model
dependent. A graph in the boundary sector is a \textit{boundary
  graph}.  For melonic quartic theories, as a matter of fact
\cite{fullward}, this boundary map is surjective, so all (possibly
disconnected) $D$-coloured graphs are boundaries. Thus, all the
correlation functions we propose have non-trivial
contributions. Incidentally, this means that quartic coloured random
tensor models are able to ponder probabilities of triangulation of all
bordisms, provided they exist, as in dimension $d$ ($d=D-1=2,3$) as
classical objects (oriented manifolds); in presence of obstructions,
there are pseudo-manifolds yielding those bordisms.  \par

Given a closed coloured graph $\B$, $\Autc(\B)$ denotes the set of its
\textit{coloured automorphisms}.  These are graph maps $\B\to \B$ that
preserve adjacency, the bipartiteness of $\B\hp 0$ and also its
edge-colouring. Each automorphism of $\B$ arises from a lifting of an
element $\pi$ of $\mathrm{Sym} (\B\hp0 \wh) =\Sym_{k(\B)}$ to a unique
map $\hat\pi:\B\to\B$, as one can easily see, determined by the
preservation of said structure.  Figures \ref{fig:graphs3} and
\ref{fig:graphs4} show all the automorphism groups for graphs having
up to $8$ vertices in $D=3$ and up to $6$ vertices for $D=4$,
respectively.\par

\begin{figure} 
\includegraphics[width=.96\textwidth]{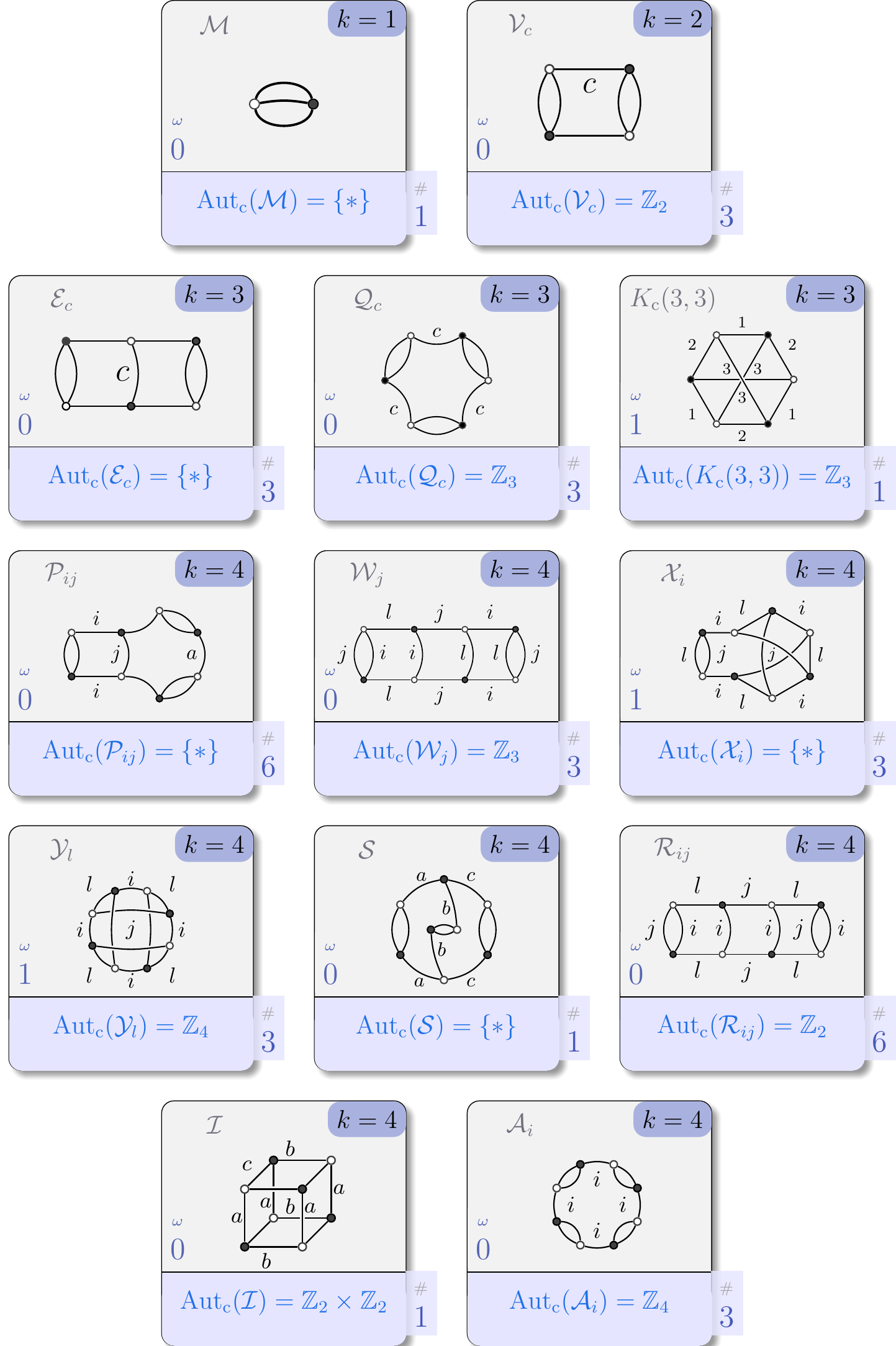}
\caption{ 
Enumeration of $3$-coloured graphs with $2,4,6$ and $8$ vertices and their 
Gur\u au-degree $\omega$ and coloured automorphism group. 
\label{fig:graphs3}}
\end{figure}
\begin{figure} 
\includegraphics[width=.84\textwidth]{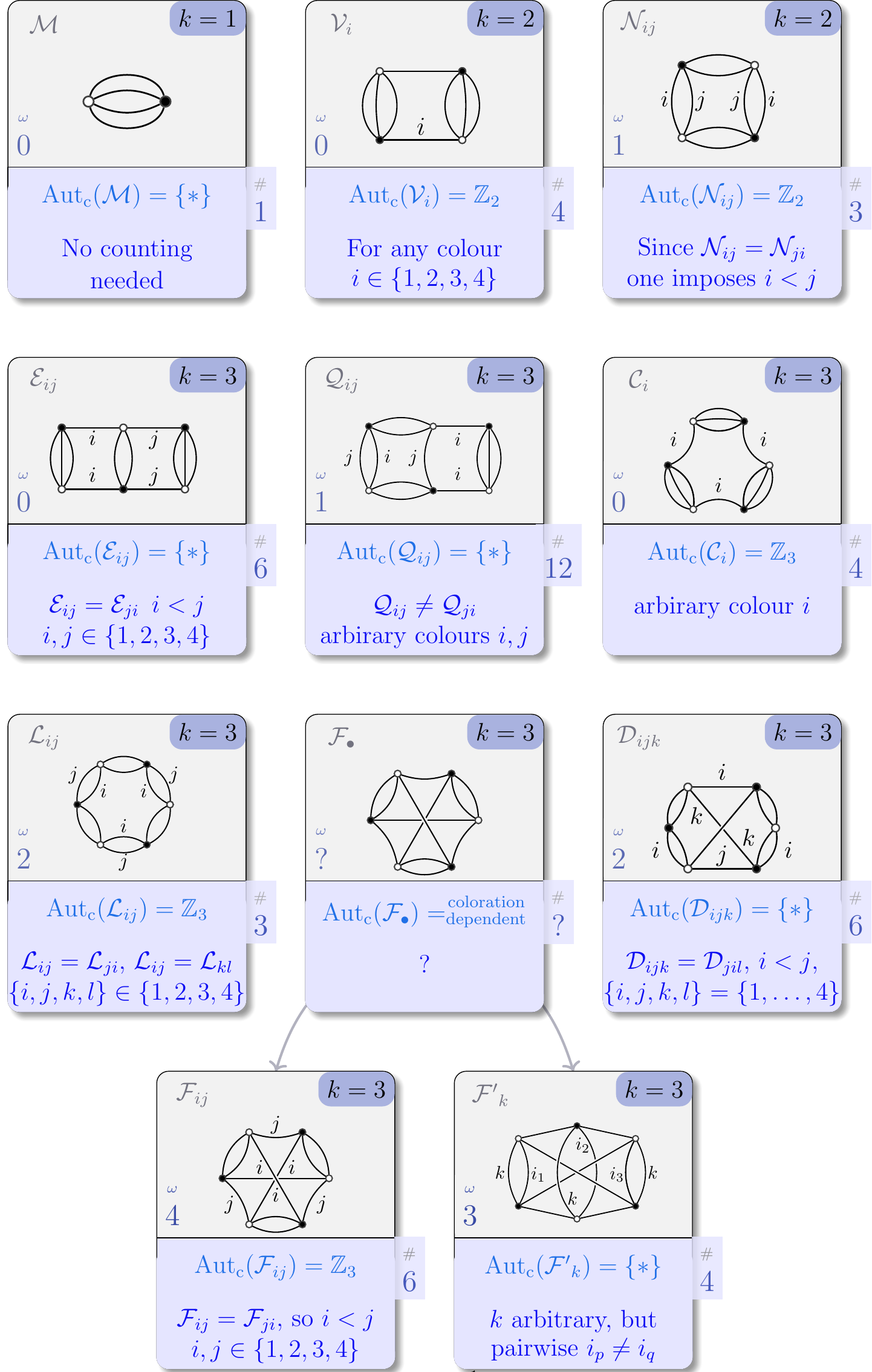}
\caption{\small This table shows the rank-$4$ graphs until 6
  vertices. As before, $\#$ is the number of graphs that are obtained
  by the action of $\Sym_4$ in the edge-colouring and $\omega$ is
  Gur\u au's degree of the graph in question.  For graphs with $k=4$
  see \protect \cite[Fig. 8]{KlebanovTarnopolsky} (there, only those marked
  with a $B$ are bipartite). The graphs displayed there are neither
  given a colouration nor classified by $\Sym_4$-orbits, though
  (Klebanov and Tarnopolsky treat them as vacuum Feynman graphs; here
  our graphs are boundaries and we need to count them and their
  $\Autc$-groups)
\label{fig:graphs4}}
\end{figure}

We shall assume that both the white vertex-set
$\B\hp0\wh=(v^1,\ldots,v^{k(\B)})$ as well as the black vertex-set
$\B\hp0\bl=(w^1,\ldots,w^{k(\B)})$ of a boundary graph $\B$ are given
an ordering.  Then $e_a^{v^\mu}$, the edge of colour $a$ attached to a
white vertex $v\in \B\hp 0\wh$, i.e. $s(e^{v^\mu}_a)=v^\mu$, is
denoted by $e^\mu_a$.  \par Let $\B$ be a boundary graph and
$k=k(\B)$. Then $\B$ induces a map\footnote{What this paper concerns,
  the use of matrices $ M_{D\times k}(\Z)$, instead of plainly $
  \Z^{kD}$, merely eases the definition of $\mathcal F_{D,k}$ below.
  No matrix multiplication is so far needed.}  $\B_*: M_{D\times k}
(\Z) \to M_{D\times k} (\Z)$ by $\Xb=(\xb^1,\ldots,\xb^{k})\mapsto
\B_*(\Xb)=(\yb^1,\ldots,\yb^{k})$, where $y^\alpha_a =x^\nu_a$ (for
$\alpha=1,\ldots,k$) if and only if there exists an $a$-coloured edge
starting at $v^\alpha$ and ending at $w^\nu$.  Regularity of the
colouring and bipartiteness of the vertex set ensure that there is
exactly one such edge, thus rendering $\B_*$ well-defined.  This map
$\B_*$ is deduced by momentum transmission inside any graph $\G$ for
with $\partial \G=\B$ by following the $a0$-coloured paths in $\G$
between its external vertices.  One further associates to $\B$ and
$\Xb$ a cycle of sources
\begin{equation} \label{eq:Jcycle}
\J(
\B) (\Xb)= J_{\xb^1}\cdots J_{\xb^ k} \bJ_{\yb^1}\cdots \bJ_{\yb^k} 
\,,  \qquad\qquad \mbox{where }\B_*(\Xb)=(\yb^1,\ldots,\yb^k)\,,
\end{equation}
which is evidently independent of the ordering given to $\B\hp0\wh$ and $\B\hp0\bl$.  
According to \cite{fullward}, the free energy $W\jj=\log (\zjj)$ can be expanded 
in these cycles indexed by all the boundary graphs of a given model:
 \begin{align} \label{expansion_Wgeneral}
 W[J,\bar J] & = \sum\limits_{l=1}^\infty
  \sum\limits_{\substack{\B \in \, \im \,\partial_V  \\ 
k(\B)=l}}  \frac{1}{|\Autc(\B)|} 
G^{(2 l)}_{\B} \star \mathbb J(\B) 
% = \sum\limits_{k=1}^\infty
%   \sum\limits_{\substack{\B \in \partial\fey_D(V(\phi,\bar\phi))  \\ 
% k=\frac12\#(\B^{(0)})}}  \frac{1}{|\Autc(\B)|}  (G\cup \mathbb{J})
% (\B)\, . 
\end{align}
where $\star$ is a pairing between a function $f:M_{D\times k(\B)} \to
\C $ and a boundary graph $\B\in \im \,\partial_V \subset \dvGrph{D}$
given by $f\star \J(\B)=\sum_{\mathbf \Xb \in M_{D\times k(\B)}(\Z)}
f(\Xb)\cdot\J(\B)(\Xb)$.  To read off the the correlation functions
$G\hp{2l}_\B$ from eq. \eqref{expansion_Wgeneral}, one takes graph
derivatives, introduced in \cite{fullward} and recapitulated in the
next section.

% \comment{Tell why the matrix notation for $\mathcal{F}_{D,k}$ has been chosen. Try to 
% let the colour action and the action of $\Sym_k$ act on it and deduce something from 
% the freeness of the latter. Tell also that the matrix structure (algebraic) is not 
% needed and one could as well consider just $\Z^{k \cdot D}$ directly. }

\subsection{Graph-generated functionals}
\label{sec:functionals} 
We also recall some results from \cite{fullward}.  Let 
\[
% \boldsymbol{\Delta}_{D,k}=
\mathcal{F}_{D,k}:=
\{(\yb^1,\ldots,\yb^k) \in M_{D\times k} (\Z) \,|\, y^\alpha_{c}\neq y^\nu_c \mbox{ for all } 
c=1,\ldots,D \mbox{ and } 
\alpha,\nu=1,\ldots, k, \alpha\neq \nu\}\,. \] 
 
Thus $\mathcal{F}_{D,k}$ is the set of matrices $M_{D\times k} (\Z) $
having all different entries on any fixed row. 
We define the graph derivative of any functional $X[J,\bJ]$ with respect to 
$\B$ at $\Xb \in \mathcal F_{D,k}$ as 
\[
\dervpar{ X[J,\bJ]}{\B(\Xb)}:=\fder{^{2k(\B)}X[J,\bJ]}{(\J(\B))(\Xb)}\bigg|_{J=0=\bJ}=
\prod\limits_{\alpha=1}^k \fder{}{J_{\xb^\alpha}} \fder{}{\bJ_{\yb^\alpha}} X\jj \bigg|_{J=0=\bJ}\,.
\]
Let $(\yb^1,\ldots,\yb^l) \in M_{D\times l}(\Z)$. 
For closed, coloured graphs $\mathcal Q,\Cc \in \vGrph{D}$ one has \cite{fullward}:
\begin{align}\label{graph_calc_disc}
\frac{\partial  \mathcal Q (\yb^1,\ldots,\yb^ l)  }
{\partial\mathcal \Cc (\xb^1,\ldots,\xb^ k)}=\left.
\begin{cases} 
\sum\limits_{\hat\sigma\in\Autc(\mathcal C)} \delta_{lk}\cdot 
\delta^{\mathbf{y}^{\sigma(1)},\ldots,\,\yb^{\sigma (k)}} 
      _{\xb^1,\xb^2, \ldots,\xb^k} & 
      \,\, \mbox{if } \,\, \Cc \cong   \mathcal{Q} \, 
 \\
 \!\! \qquad 0 & \,\, \mbox{otherwise}\, 
\end{cases} \right\}=
% \begin{cases} 
\sum\limits_{\sigma\in \Sym_k} 
\delta^{\mathbf{y}^{\sigma(1)},\ldots,\,\yb^{\sigma (k)}}
      _{\xb^1,\xb^2, \ldots,\xb^k} \delta (\Qc,\Cc)
% \end{cases}
\end{align}
where $\delta (\Qc,\Cc)=1 $ if the graphs $\Qc$ and $\Cc$ are
isomorphic, and $0$ otherwise. We consider functionals generated by a
given family of closed $D$-coloured (non-isomorphic) graphs, $\Upsilon
\subset \dvGrph{D}$.  That means that if
\begin{equation}\label{eq:expansionX}
X\jj = \suml_{\Cc \in \Upsilon}
\mathfrak{l\hspace{.5pt}}_{\Cc} 
\star
\mathbb J (\Cc),\, \qquad \qquad  \mbox{for }\mathfrak{l\hspace{.5pt}}_{\Cc}  : (\Z^D)^{\times k(\Cc)}  \to \C, \,\,\Cc\in \Upsilon\,,
\end{equation}
is known, we want to know the graph derivatives of $X\jj$ with 
respect to connected graphs. Here $k(\Cc)$ denotes  
the number $\#(\Cc\hp0\wh)$  of white (or black) vertices of $\Cc$.
\begin{prop} \label{thm:promedio_grupo}
Let $X$ be as in eq. \eqref{eq:expansionX}. Then, for all $\mtc C\in
\Upsilon \cap \vGrph{D}$, the functions
$\mathfrak{l\hspace{.4pt}}_{\Cc} $ satisfy
\[
\dervpar{X\jj}{\, \Cc(\mathbf X) }= 
\suml_{\hat\sigma \in \Autc(\Cc)} (\sigma^*\mathfrak{l\hspace{.5pt}}_{\Cc} )(\mathbf{X}),
~\where~~ (\sigma^*\mathfrak{l\hspace{.5pt}}_{\Cc} )(\xb^1,\ldots,\xb^{k(\Cc)}):= 
\mathfrak{l\hspace{.6pt}}_{\Cc}(\xb^{\sigma\inv(1)},\ldots,\xb^{\si\inv(k(\Cc))})\,,\quad
\]
for all $\Xb=(\xb^1,\ldots,\xb^{k(\Cc)}) \in \mathcal F_{D,k(\Cc)}  $. 
\end{prop}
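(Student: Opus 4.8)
The plan is to plug the expansion \eqref{eq:expansionX} of $X$ into the definition of the graph derivative and then reduce everything to the graph‑calculus identity \eqref{graph_calc_disc}. First I would fix $\Cc\in\Upsilon\cap\vGrph{D}$, set $k:=k(\Cc)$, and take $\Xb=(\xb^1,\dots,\xb^k)\in\mathcal F_{D,k}$; note that membership in $\mathcal F_{D,k}$ forces the $k$ vectors $\xb^1,\dots,\xb^k\in\Z^D$ to be pairwise distinct (two equal columns would violate distinctness in each of the $D$ rows), which is exactly what makes the source‑matching below unambiguous. Since each $\delta/\delta J$, $\delta/\delta\bJ$ and the pairing $\star$ are linear, and the coefficients $\mathfrak{l}_{\Qc}$ carry no source dependence, the graph derivative distributes over the expansion:
\[
\dervpar{X\jj}{\Cc(\Xb)}=\suml_{\Qc\in\Upsilon}\;\suml_{\Zb\in M_{D\times k(\Qc)}(\Z)}\mathfrak{l}_{\Qc}(\Zb)\cdot\dervpar{(\J(\Qc))(\Zb)}{\Cc(\Xb)}\,.
\]

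Next I would apply \eqref{graph_calc_disc} to the inner graph derivative. It vanishes unless $\Qc\cong\Cc$; since $\Cc$ is connected while $\Upsilon$ may also contain disconnected graphs, and since the members of $\Upsilon$ are pairwise non‑isomorphic with $\Cc\in\Upsilon$, the only surviving summand is $\Qc=\Cc$ (so in particular $k(\Qc)=k$), for which \eqref{graph_calc_disc} gives $\sum_{\hat\sigma\in\Autc(\Cc)}\delta^{\zb^{\sigma(1)},\dots,\zb^{\sigma(k)}}_{\xb^1,\dots,\xb^k}$. Substituting and interchanging the two now‑finite sums yields
\[
\dervpar{X\jj}{\Cc(\Xb)}=\suml_{\hat\sigma\in\Autc(\Cc)}\;\suml_{\Zb\in M_{D\times k}(\Z)}\mathfrak{l}_{\Cc}(\Zb)\,\delta^{\zb^{\sigma(1)},\dots,\zb^{\sigma(k)}}_{\xb^1,\dots,\xb^k}\,.
\]

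It then remains to do the inner sum over $\Zb$. For a fixed $\hat\sigma\in\Autc(\Cc)$ the Kronecker tensor is nonzero precisely when $\zb^{\sigma(\alpha)}=\xb^\alpha$ for all $\alpha$, i.e. $\zb^\beta=\xb^{\sigma\inv(\beta)}$ for all $\beta$; since the $\xb^\alpha$ are pairwise distinct this determines a single matrix $\Zb\in M_{D\times k}(\Z)$, so the inner sum collapses to $\mathfrak{l}_{\Cc}(\xb^{\sigma\inv(1)},\dots,\xb^{\sigma\inv(k)})=(\sigma^*\mathfrak{l}_{\Cc})(\Xb)$, using the definition of $\sigma^*$ from the statement. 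Hence $\partial X\jj/\partial\Cc(\Xb)=\sum_{\hat\sigma\in\Autc(\Cc)}(\sigma^*\mathfrak{l}_{\Cc})(\Xb)$, which is the claimed identity.

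Most of the content is thus already absorbed in \eqref{graph_calc_disc}, so the only genuinely delicate points are matters of bookkeeping in its use: that differentiating the source cycle $\J(\Cc)$ with respect to $\Cc$ produces exactly one Kronecker tensor per coloured automorphism of $\Cc$ (and not one per element of $\Sym_k$), and that the restriction $\Xb\in\mathcal F_{D,k}$ is precisely what excludes the degenerate coincidences among the arguments that would otherwise spoil this count. One should also keep in mind that \eqref{eq:expansionX} need not fix $\mathfrak{l}_{\Cc}$ uniquely; since the left‑hand side $\partial X\jj/\partial\Cc(\Xb)$ depends only on the functional $X$, the proposition simultaneously shows that the symmetrized combination $\sum_{\hat\sigma\in\Autc(\Cc)}\sigma^*\mathfrak{l}_{\Cc}$ is insensitive to that ambiguity, so the formula is well posed.
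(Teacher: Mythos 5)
Your proof is correct and follows essentially the same route as the paper's: insert the expansion \eqref{eq:expansionX}, apply the graph-calculus formula \eqref{graph_calc_disc} to kill every $\Qc\not\cong\Cc$ (using that $\Upsilon$ has pairwise non-isomorphic members), and collapse the remaining momentum sum against the Kronecker deltas to land on the $\Autc(\Cc)$-symmetrized coefficient. The only cosmetic difference is that you invoke the automorphism-group form of \eqref{graph_calc_disc} from the start, whereas the paper carries the $\Sym_{k}$-times-$\delta(\Qc,\Cc)$ form through the computation and identifies it with the automorphism sum at the end.
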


\begin{proof}
From formula \eqref{graph_calc_disc}, one has
\begin{align*}
\dervpar{X\jj}{(\Cc(\Xb))} 
& = 
\dervpar{}{\,\Cc(\Xb)}
\suml_{\mathcal Q \in \Upsilon}
\mathfrak{l\hspace{.5pt}}_{\mathcal Q} 
\star
\mathbb J (\mathcal Q) 
=
\suml_{\mathcal Q \in \Upsilon}
\suml_{\Yb\in(\Z^D)^k}
\mathfrak{l\hspace{.5pt}}_{\mathcal Q}(\Yb) 
\dervpar{\,\mathcal{Q}(\Yb)}{\,\Cc(\Xb)}
\\
& =
\suml_{\mathcal Q \in \Upsilon}\,
\suml_{\Yb\in(\Z^D)^{\times k(\mathcal{Q})}}
\mathfrak{l\hspace{.5pt}}_{\mathcal Q}(\Yb) 
\suml_{\sigma \in \Sym_{ k(\mathcal{Q})}}
\delta(\mathcal{Q},\Cc)
\prod\limits_{i=1}^{k(\mathcal{Q})} 
\delta(\yb^{\sigma(i)},\xb^{ i })
      \\
& =
\suml_{\mathcal Q \in \Upsilon}\,
\suml_{\Yb\in(\Z^D)^{\times k(\Cc)}}
\mathfrak{l\hspace{.5pt}}_{\mathcal Q}(\Yb) 
\suml_{\sigma \in \Sym_{k(\mathcal Q)}}
\delta(\mathcal{Q},\Cc)
\prod\limits_{i=1}^{k(\mathcal Q)} 
\delta(\yb^{i},\xb^{\sigma\inv(i)})
\\
& =
\suml_{\mathcal Q \in \Upsilon}\,
% \suml_{\Yb\in(\Z^D)^{\times k(\mathcal{Q})}}
% \mathfrak{f}_{\mathcal Q}(\Yb) d
\suml_{\sigma \in \Sym_{k(\mathcal Q)}}
\mathfrak{l\hspace{.5pt}}_\Cc(\xb^{\sigma\inv(1)},\ldots,\xb^{\si\inv(k)})
\delta(\mathcal{Q},\Cc)\,.
% \prod\limits_{i=1}^{k(\Cc)} 
% \delta(\yb^{i},\xb^{\sigma\inv(i)})
\end{align*}
Since $\Upsilon$ consists only of graphs that are not isomorphic,
the sum over $\Qc$ yields, because of the delta $\delta(\mathcal{Q},\Cc)$,  only one term. Hence,
the last expression is precisely the sum over automorphisms of $\Cc$.
\end{proof}
As a consequence of this, one can recover the correlation functions via 
\[
G_\B\hp{2k(\B)}(\Xb)= 
\dervpar{W\jj}{\B(\Xb)}\,.  
\]
Notice that $\Xb=(\xb^1,\ldots,\xb^ k)\in \mathcal{F}_{D,k}$ if and only if
$\B_*(\Xb) \in \mathcal{F}_{D,k}$. 
Since $W\jj$ is real-valued, one has the relation 
\begin{align}
 \overline{G_\B\hp{2k(\B)}}(\Xb)= \prod\limits_{\alpha=1}^k   \fder{}{\bJ_{\xb^\alpha}} 
 \fder{}{J_{\yb^\alpha}} \overline{W\jj} \Big|_{J=0=\bJ} = G_{\bar\B}\hp{2k(\B)} (\B_*(\Xb))\,  \qquad (\Xb\in\mathcal{F}_{D,k(\B)})\,,  
\end{align}
where $\bar\B$ is essentially the graph $\B$ after inverting vertex-colouration, $\bar \B \hp0 \wh= \B \hp0 \bl$ and 
$\bar \B \hp0 \bl= \B \hp0 \wh$, but otherwise with the same adjacency and edge-colouration.

We now explain how this graph derivatives are relevant in the
WTI. The WTI is rather a set of equations, one for each 
 colour $a=1,2,\ldots, D$, in which a new generating functional %(a ``correction'')
%  $Y\hp {a}\jj$ 
 of the form 
 \begin{equation} \label{eq:Yterm_expansionOmega}
Y\hp a_{s_a}\jj = \sum_{ \Cc \in \Omega_V} \mtf{f}_{\Cc,s_a}\hp a \star \mathbb J (\mathcal C)  \qquad\qquad (s_a\in I_a\subset \Z)
 \end{equation}
appears. Here, $\partial_V: \fey_D(V) \to \dvGrph{D}$ denotes the
boundary map in terms of which we describe the graph family $\Omega_V$
as follows: If $e^{v}_a$ is the $a$-coloured edge at the white vertex
$v\in\B\hp0 \wh$, then the graph $\B\ominus e_a^v$ denotes the graph
that is obtained by the next steps: first, remove the two
end-vertices, $v=s(e_a^v)$ and $t(e_a^v)$, of $e_a^v$; then, remove
all their common edges $I(e_a^v):={s\inv ( s(e_a^v) ) \cap
  t\inv(t(e^v_a))}$; finally, glue colourwise the broken edges,
i.e. the each broken edge of the set $s\inv(v) \setminus I(e_a^v) $
with the respective broken edge in $t\inv (t(e^v_a)) \setminus
I(e_a^v)$. Then $\Omega$ is defined by
\[
 \Omega_V :=  \{  \B\ominus e_a^v   \,|\, \B\in \mathrm{im}\, \partial_V ,  v\in \B\hp 0\wh  \}\,. 
\] 

\begin{defn} 
\begin{figure}\centering\centering
\begin{subfigure}{0.5\textwidth}
\includegraphics[width=.87\linewidth]{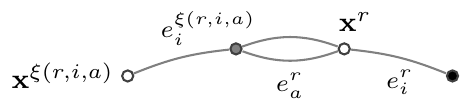}
\caption{ Locally, edge and vertex labeling 
in $\B$ before forming $\B\ominus e_a^r$} \label{fig:colapso_2a}
\end{subfigure}
\hspace*{\fill} 
\begin{subfigure}{0.34\textwidth}
\includegraphics[width=.9\linewidth]{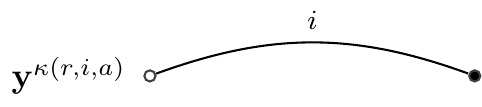} 
% \vspace{.0cm}
\caption{ Locally $\B\ominus e_a^r$} \label{fig:colapso_2b}
\end{subfigure}
\caption{ Some notation concerning the definition of 
  $\Delta_{s_a,r}^\B$.   \label{fig:colapso_2}}
\end{figure}

Let $a$ be a colour, $F:(\Z^D)^k\to \C$ a function
  and $\B\in \dvGrph{D}$. For any integer $r$, $1\leq r\leq
  k(\B)$, we define the function $\Delta_{s_a,r}^\B F:
  (\Z^D)^{k-1}\to \C$ by
\[
(\Delta_{s_a,r}^\B F)(\Yb)=
% \begin{cases}
 \sum\limits_{\substack{q_h  }}
F(\mathbf y^1,\ldots,\mathbf y ^ {r-1}, {\mathbf z}^{r}(s_a,\mathbf q,\Yb),\mathbf y^r,
\ldots,\mathbf y^{k-1} )\,,
% \end{cases}
\]
for each $\Yb=(\mathbf y^1,\ldots, \mathbf y^{k-1}) \in (\Z^D)^{k-1}$, 
where the sum is over a dummy variable $q_h$ for each 
element of the set $h\in I(e_a^r)\setminus\{a\}$. Before 
specifying ${\mathbf z}^{r}(s_a,\mathbf q,\Yb)$, we stress
that this sum can be empty, in which case 
\[(\Delta_{s_a,r}^\B F)(\Yb)=F(\mathbf y^1,\ldots,\mathbf y ^ {r-1}, {\mathbf z}^{r}(s_a,\Yb),\mathbf y^r,
\ldots,\mathbf y^ k )\,. \]
The momentum ${\mathbf z}^ r\in\Z^D$ has entries defined by:
\[
z_i^ r(s_a,\mathbf q,\Yb)=
\begin{cases}
s_a & \qquad \mbox{ if } \,\, i=a\,,\\
q_i & \qquad \mbox{ if }\,\, i\in I(e_a^r)\setminus \{a\}\,,\\
y_i^{\kappa(r,i,a)} & \qquad \mbox{ if }\,\, i \in \mbox{colours of } A_{t(e_a^r)} =
\{1,\ldots, D\}\setminus I(e_a^r) \,,
 \end{cases}
\]
where $\mathbf y^{\kappa(r,i,a)}$ ($1\leq \kappa(r,i,a)< k$) is the
white vertex $\B\ominus e^r_a$ defined by
\begin{equation}
\kappa(r,i,a)= \begin{cases}
                \xi(r,i,a)  & 
                \mbox{ if } \xi(r,i,a) < r\,, \\
                \xi(r,i,a)-1   & 
                \mbox{ if } \xi(r,i,a) > r\,.
               \end{cases} \label{def:kappa}
\end{equation}
(see also
Fig. \ref{fig:colapso_2}).
This definition depends on the labeling of the vertices. However, the
pairing $\langle\!\langle G\hp{2k}_{\B} , {\B} \rangle\!\rangle_{s_a}$
defined as follows does not, for it is a sum over graphs after 
removal of \textit{all} $a$-coloured edges:
\begin{equation}
 \langle\!\langle 
G\hp{2k}_{\B} , 
{\B} \rangle\!\rangle_{s_a} :=  
\sum\limits_{r=1}^k
\Big(\Delta_{s_a,r}^\B G\hp{2k}_{\B}\Big)\star \mathbb J
{(\B\ominus e_a^r)}\label{parejas} \,.
\end{equation}  
\end{defn}

\begin{rem} \label{rem:ordering_vertices}
Unless otherwise stated, we set the convention 
of ordering the white-vertex-set $\B\hp 0\wh$ 
in appearance from left to right.
\end{rem}

\begin{example}
Let $\{a,b,c,d\}=\{1,2,3,4\}$ and  \vspace{-.3cm} 
\[ \vspace{-.1cm}
\mathcal F_c'=
\raisebox{-.4\height}{
 \includegraphics[width=2.1cm]{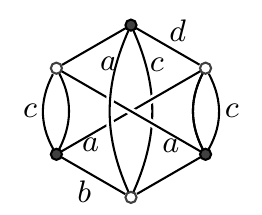}}
\]
 (see also Fig. \ref{fig:graphs3}).  For a fixed colour $a$ and
$s_a\in I_a\subset \Z$, we obtain $\langle\!\langle G\hp{2k}_{\mtc
  F'_c} , {\mtc F'_c} \rangle\!\rangle_{s_a}$.  According to remark
\ref{rem:ordering_vertices}, the first white vertex is the left upper
left white vertex, the second is the lowermost, the third is the upper
right. This orders the $a$-coloured edges
$\{e_a^1,e_a^2,e_a^3\}$. Explicitly, 
\[\mathcal F_c'=
\raisebox{-.4\height}{
 \includegraphics[width=2.1cm]{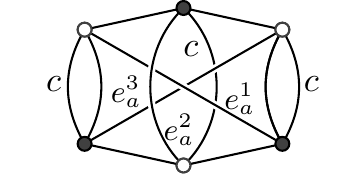}} ,
\mbox{ thus, }
\mathcal F_c'\ominus e^1_a
=\mathcal F_c'\ominus e^3_a
=
\raisebox{-.4\height}{\!\! \includegraphics[width=1.4cm]{graphs/4/Logo_N}} , 
\quad 
\mathcal F_c'\ominus e^2_a=
\raisebox{-.34\height}{\! \includegraphics[width=1.4cm]{graphs/4//LogoVa}} 
\]
hence 
 \begin{align*}
% \hspace{-1cm}
\big\langle\hspace{-.60ex} \big\langle 
G\hp 6_{\mathcal F'_c}, 
\mathcal F'_c
\big\rangle \hspace{-.60ex}\big\rangle _{s_a} =
& \,\,
\Dsa{1}G\hp 6_{\mathcal F'_c} \star \mathbb J \big(
{ \raisebox{-.4\height}{\includegraphics[width=1.1cm]{graphs/4/Logo_N}}} 
\big)
% \\
% & \,\,
+
\Dsa{2}G\hp 6_{\mathcal F'_c} \star \mathbb J \big(
{ \raisebox{-.34\height}{\includegraphics[width=1.0cm]{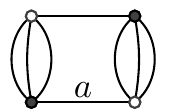}}} 
\big)
% \\
% &\,\,
+
\Dsa{3}G\hp 6_{\mathcal F'_c} \star \mathbb J \big(
{ \raisebox{-.4\height}{\includegraphics[width=1.1cm]{graphs/4/Logo_N}}} 
\big)
\end{align*}
which, in turn, equals 
\allowdisplaybreaks[0]
 \begin{align*}
 & \suml_{\yb,\mathbf z}  \big\{
  \Dsa{1} G\hp 6_{\mathcal F'_c}(\mathbf y,\mathbf z) 
  \J \big(
{ \raisebox{-.34\height}{\includegraphics[width=1.0cm]{graphs/4/Logo_N}}} \!
\big) (\yb,\mathbf z)
+ \Dsa{2} G\hp 6_{\mathcal F'_c}(\mathbf y,\mathbf z) \,\J \big(
{ \raisebox{-.28\height}{\includegraphics[width=.88cm]{graphs/4/LogoVa}}} 
\big) (\mathbf y,\mathbf z)
\\
& \qquad  + 
 \Dsa{3} G\hp 6_{\mathcal F'_c}(\mathbf y,\mathbf z)\, \J
  \big(\!
{ \raisebox{-.34\height}{\includegraphics[width=1.0cm]{graphs/4/Logo_N}}} 
\!\big) (\yb,\mathbf z) \big\}
\\
 & =  \suml_{\yb,\mathbf z}  \big\{
  G\hp 6_{\mathcal F'_c}(s_a,z_b,z_c,y_d , \yb,\mathbf z)  \bJ_{y_a z_b y_c z_d} \bJ_{z_a y_b z_c y_d} J_{\yb} J_{\mathbf z}
  +\big(\suml_{q_c} 
  G\hp 6_{\mathcal F'_c}(\mathbf y,  s_a,y_b,q_c,z_d ,\mathbf z) 
 \\ 
 & \qquad \times
 \bJ_{y_a z_bz_cz_d} \bJ_{z_a y_b y_c y_d} J_{\yb} J_{\mathbf z} \big)+
 G\hp 6_{\mathcal F'_c}(\yb,\mathbf z, s_a,z_b,y_c,y_d )
  \bJ_{y_a z_b y_c z_d} \bJ_{z_a y_b z_c y_d} J_{\yb} J_{\mathbf z}
 \big\} \,.
\end{align*}
\allowdisplaybreaks
We assume all the entries of momenta in $\Z^4$ are 
ordered by colour, e.g. $(z_1 y_4 z_3 y_2)$
really means $(z_1y_2z_3y_4)$.
\end{example}

We now recall the full Ward-Takahashi Identity, proven in \cite{fullward}.
\begin{thm}\label{thm:full_Ward}
Consider a rank-$D$ tensor model, $S=S_0+V$, with a kinetic form  $\Tr_2(\bar\phi,E\phi)$ such that the difference of propagators
$
E_{p_1\ldots  p_{a-1}m_ap_{a+1}\ldots p_D}-
E_{p_1\ldots  p_{a-1}n_ap_{a+1}\ldots p_D}
=E(m_a,n_a) % \delta_{m_an_a}  \
$  is independent of the momenta  
$\mathbf p_{\hat a}=(p_1,\ldots, \widehat p_a,\ldots ,p_D)$. Then that model has a partition function
$Z[J,\bJ]$ that satisfies
\begin{align} \label{ward2}
 &\sum\limits_{\mathbf p_{\hat a}} \fder{^2 Z[J,\bar J]}{J_{{p_1\ldots  p_{a-1}m_ap_{a+1} \ldots p_D}} 
\delta\bar J_{{p_1\ldots  p_{a-1}n_ap_{a+1}\ldots p_D}}} 
 -  \left(\delta_{m_an_a} Y\hp a_{m_a}[J,\bar J]\right) \cdot Z[J,\bar J]
\\
=& \sum\limits_{\mathbf p_{\hat a} }
\frac1{ 
E_{ {p_1\ldots  m_a \ldots p_D}}
-E_{ {p_1\ldots   n_a  \ldots p_D}}  } 
 \left( 
\bar J_{p_1\ldots  m_a \ldots p_D} 
\fder{}{\bar J}_{p_1\ldots   n_a \ldots p_D}  
-J_{p_1\ldots   n_a  \ldots p_D}\nonumber 
\fder{ }{J}_{p_1\ldots   m_a  \ldots p_D}\right)Z[J,\bar J]
\end{align}   
where \vspace{-9pt}
\begin{align} \label{eq:termY}
Y\hp a_{m_a}[J,\bar J] &:= \sum\limits_{l=1}^\infty
  \sum\limits_{\substack{\B \in \im \,\partial_V  \\ 
k(\B)=l}}  \langle\!\langle 
G\hp{2l}_{\B} , 
{\B} \rangle\!\rangle_{m_a}  \,.
% & =
% \sum\limits_{l=1}^\infty
%   \sum\limits_{\substack{\B \in \im \,\partial  \\ 
% k(\B)=l}}   
% \sum\limits_{r=1}^k
% \left(\Delta_{m_a,r}^\B G\hp{2k}_{\B}\right)\star \mathbb{J}
% {(\B\ominus e_a^r)}\,. \nonumber
\end{align} 
\end{thm}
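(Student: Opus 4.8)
The plan is to read \eqref{ward2} off a change of variables in the functional integral \eqref{eq:partfunction} that implements (a complexification of) the colour-$a$ factor of the $\mtr U(N)$-symmetry. Fix $a$ and $m_a,n_a\in I_a$ and consider the first-order substitution acting only on the colour-$a$ slot, $\phi_{\ldots x_a\ldots}\mapsto\phi_{\ldots x_a\ldots}+t\,\delta_{x_am_a}\phi_{\ldots n_a\ldots}$, together with its contragredient $\bar\phi_{\ldots x_a\ldots}\mapsto\bar\phi_{\ldots x_a\ldots}-t\,\delta_{x_an_a}\bar\phi_{\ldots m_a\ldots}$; this is the pair $\bigl(\uno+t\,\mathsf E_{m_an_a}\bigr)$, $\bigl(\uno+t\,\mathsf E_{m_an_a}\bigr)^{-\mtr T}$ for the elementary matrix $\mathsf E_{m_an_a}$ in the complexified Lie algebra. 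First I would record that (i) the measure $\Df[\phi,\bar\phi]$ is unchanged to first order in $t$ --- its holomorphic and antiholomorphic Jacobian factors are mutually inverse determinants --- and (ii) every interaction trace $\Tr_\B$, hence $V$, is exactly invariant, because this pair preserves each colour-$a$ contraction $\delta_{x_ay_a}$. Since $\zjj$ is then $t$-independent, $\tfrac{\mtr d}{\mtr dt}\big|_{t=0}$ of the integrand integrates to zero; only $S_0=\Tr_2(\bar\phi,E\phi)$ and the source term contribute to that derivative, namely $\sum_{\mathbf p_{\hat a}}\bigl(E_{\ldots m_a\ldots}-E_{\ldots n_a\ldots}\bigr)\bar\phi_{\ldots m_a\ldots}\phi_{\ldots n_a\ldots}$ and $\sum_{\mathbf p_{\hat a}}\bigl(\bJ_{\ldots m_a\ldots}\phi_{\ldots n_a\ldots}-\bar\phi_{\ldots m_a\ldots}J_{\ldots n_a\ldots}\bigr)$, respectively.

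Replacing, under the integral sign, each insertion of $\phi_{\mathbf q}$ by $\delta/\delta\bJ_{\mathbf q}$ and each $\bar\phi_{\mathbf q}$ by $\delta/\delta J_{\mathbf q}$ acting on $\zjj$, this produces the ``naive'' Ward identity
\[
\sum_{\mathbf p_{\hat a}}\bigl(E_{\ldots m_a\ldots}-E_{\ldots n_a\ldots}\bigr)\,\fder{^2\zjj}{J_{\ldots m_a\ldots}\,\delta\bJ_{\ldots n_a\ldots}}=\sum_{\mathbf p_{\hat a}}\Bigl(\bJ_{\ldots m_a\ldots}\fderbJ{\ldots n_a\ldots}-J_{\ldots n_a\ldots}\fderJ{\ldots m_a\ldots}\Bigr)\zjj ,
\]
valid for all $m_a,n_a$ (up to the overall $N^{D-1}$ factor, which I would carry along from \eqref{eq:partfunction}). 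No interaction survives: the $V$-derivative contributions reassemble into the image of $V$ under the infinitesimal colour-$a$ rotation, which vanishes by (ii). For $m_a\neq n_a$ I then use the hypothesis: as $E_{\ldots m_a\ldots}-E_{\ldots n_a\ldots}=E(m_a,n_a)$ is independent of $\mathbf p_{\hat a}$, it leaves the sum and one divides by it, obtaining exactly \eqref{ward2} with the $Y$-term absent (there $\delta_{m_an_a}=0$).

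It remains to feed in the colour-$a$ diagonal. The naive identity is empty there --- both sides vanish, the right one because $\sum_{\mathbf p_{\hat a}}\bigl(\bJ_{\ldots m_a\ldots}\tfrac{\delta}{\delta\bJ_{\ldots m_a\ldots}}-J_{\ldots m_a\ldots}\tfrac{\delta}{\delta J_{\ldots m_a\ldots}}\bigr)\zjj=0$ by the $\mtr U(1)$-invariance on the $m_a$-slice --- so the genuine content at $m_a=n_a$ is the separate identity $\sum_{\mathbf p_{\hat a}}\delta^2\zjj/\delta J_{\ldots m_a\ldots}\,\delta\bJ_{\ldots m_a\ldots}=Y\hp a_{m_a}\jj\cdot\zjj$, and establishing it is the heart of the argument. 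I would do so by writing $\zjj=\ee^{W\jj}$, inserting the boundary-graph expansion \eqref{expansion_Wgeneral} of $W\jj$, and computing the action of $\sum_{\mathbf p_{\hat a}}\delta^2/\delta J_{(\mathbf p_{\hat a},m_a)}\,\delta\bJ_{(\mathbf p_{\hat a},m_a)}$ on products of source cycles $\J(\B)$: it picks one $J$-leg and one $\bJ$-leg carrying colour-$a$ momentum $m_a$ --- from a single cycle (the connected term $\delta^2W/\delta J\,\delta\bJ$) or from two cycles (the product $\tfrac{\delta W}{\delta J}\tfrac{\delta W}{\delta\bJ}$) --- and sums over the remaining slice-momentum $\mathbf p_{\hat a}$. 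Joining the two legs glues the corresponding boundary graphs (or self-glues one) along a new $a$-coloured edge; the resulting graph is precisely some $\B\ominus e_a^r$ with $\B\in\im\,\partial_V$, the bicoloured faces that close in the gluing being exactly the dummies $q_h$, $h\in I(e_a^r)\setminus\{a\}$, summed by $\Dsa{r}$, while the weight $1/|\Autc(\B)|$ from \eqref{expansion_Wgeneral} is absorbed by the choice of legs via Proposition \ref{thm:promedio_grupo} and \eqref{graph_calc_disc}. The outcome is $\sum_l\sum_{k(\B)=l}\sum_r\bigl(\Dsa{r}G\hp{2l}_\B\bigr)\star\J(\B\ominus e_a^r)=\sum_l\sum_{k(\B)=l}\langle\!\langle G\hp{2l}_\B,\B\rangle\!\rangle_{m_a}=Y\hp a_{m_a}\jj$, multiplied by the $\ee^{W}=\zjj$ left over from the disconnected pieces; cf.\ \eqref{parejas}, \eqref{eq:termY}. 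Combined with the $m_a\neq n_a$ case this packs into \eqref{ward2}, the subtracted $\delta_{m_an_a}Y\hp a_{m_a}\jj\cdot\zjj$ on the left carrying the diagonal, and one closes with the remark that on the diagonal the right-hand side of \eqref{ward2} is understood through its (vanishing) removable singularity.

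The step I expect to be the main obstacle is this last identification $\sum_{\mathbf p_{\hat a}}\delta^2\zjj/\delta J\,\delta\bJ=Y\hp a_{m_a}\jj\cdot\zjj$. One must pin down exactly which internal faces close under the gluing --- whence the precise momentum pattern of $\Dsa{r}$, with some colours fixed to neighbouring vertices' momenta through the index $\kappa$ and only those in $I(e_a^r)\setminus\{a\}$ summed --- reconcile the contributions of connected and disconnected boundary graphs in \eqref{expansion_Wgeneral} (this is what promotes $Y\hp a\cdot W$ to $Y\hp a\cdot\zjj$), match the orbit count of Proposition \ref{thm:promedio_grupo} against the leg-choice multiplicities so that $Y\hp a$ carries no symmetry factor, and keep track throughout of the domain ($\mathcal F_{D,k}$ versus all of $M_{D\times k}(\Z)$) on which the $G\hp{2l}_\B$ are evaluated.
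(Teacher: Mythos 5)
You should first be aware that this paper does not actually prove Theorem \ref{thm:full_Ward}: it is recalled from \cite{fullward} ("We now recall the full Ward-Takahashi Identity, proven in \cite{fullward}"), so there is no in-text proof to compare against. With that caveat, your architecture is the right one: the complexified colour-$a$ transformation (invariance of the measure and of every $\Tr_{\B}$, variation only of $S_0$ and of the source terms) yields the naive identity $\sum_{\mathbf p_{\hat a}}\bigl(E_{\ldots m_a\ldots}-E_{\ldots n_a\ldots}\bigr)\,\delta^{2}Z/\delta J\,\delta\bJ=(\mbox{source terms})\cdot Z$, the hypothesis on $E$ lets you divide off the diagonal, and the genuinely new content sits at $m_a=n_a$, where one must compute $\sum_{\mathbf p_{\hat a}}\delta^{2}Z/\delta J_{(\mathbf p_{\hat a},m_a)}\,\delta\bJ_{(\mathbf p_{\hat a},m_a)}$ from the boundary-graph expansion \eqref{expansion_Wgeneral}. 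Your off-diagonal half is complete and correct.

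The gap is the diagonal identity, which you assert rather than establish, and which in the form you state it is not right. Writing $Z=\ee^{W}$ gives $\sum_{\mathbf p_{\hat a}}\delta^{2}Z/\delta J_{(\mathbf p_{\hat a},m_a)}\,\delta\bJ_{(\mathbf p_{\hat a},m_a)}=\bigl(\sum_{\mathbf p_{\hat a}}\delta^{2}W/\delta J\,\delta\bJ+\sum_{\mathbf p_{\hat a}}(\delta W/\delta J)(\delta W/\delta\bJ)\bigr)Z$, and you claim both pieces are absorbed into $Y\hp{a}_{m_a}$. But by \eqref{eq:termY} and \eqref{parejas}, $Y\hp{a}_{m_a}$ is a sum of \emph{single} correlation functions $\Delta^{\B}_{m_a,r}G\hp{2l}_{\B}$ paired with $\J(\B\ominus e_a^r)$, where $e_a^r$ is always an edge inside one connected component of $\B$; it contains no products $G_{\B_1}\cdot G_{\B_2}$. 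The two-cycle term is generically nonzero at nonzero sources --- already at order $J\bJ$ it contributes $\sum_{\mathbf p_{\hat a}}\bigl[G\hp{2}(\mathbf p_{\hat a},m_a)\bigr]^{2}J_{(\mathbf p_{\hat a},m_a)}\bJ_{(\mathbf p_{\hat a},m_a)}$, a term that appears nowhere in the order-$2$ part of Lemma \ref{thm:ordersix} --- so it cannot be part of $Y\hp{a}_{m_a}$. To close the argument you must either show that this product term coincides exactly with the diagonal value of the right-hand side of \eqref{ward2} (which contradicts your closing claim that that removable singularity vanishes, and is more in line with how the paper later extracts nonzero derivative terms from $0/0$ expressions in the proof of Theorem \ref{thm:SDEs}), or show that it cancels against the coincident-momentum contributions you allude to in your final paragraph (note that it is supported precisely on the locus excluded from $\mathcal F_{D,k}$). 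As it stands, the step you yourself flag as ``the main obstacle'' is exactly where the proof is missing, and your proposed resolution of it is internally inconsistent.
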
 
There is a subtlety regarding the ordering of the vertices. We
associate an ordering of the white vertices of a graph $\B$ in $
G_\B\hp{2k}$.  The $k$ arguments (in $\Z^D$) of this function match
this vertex-ordering.  But the edge-removal sometimes will yield a
graph which should be reoriented. To illustrate this, for $D=3$,
consider for instance the next graph $\mtc S$.  The edge contraction
yields, for any $i=1,2,3$, the following:
\begin{equation}\label{eq:orderexample}
  \mbox{ If } 
\mathcal S =\!\!\!\!\logo{8}{S}{12}{42} \mbox{ then } \mathcal S\ominus e_i^1=\!\!\!\!\logo{8}{S}{12}{42} \ominus e_i^{1}=\!\!\!\!\raisebox{-5.0ex}{\includegraphics[height=11ex]{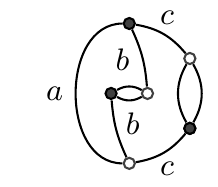} } 
\end{equation}

As a graph, $\mathcal S\ominus e_i^1$ is just $\icono{6}{Eabc}{3.9}{2}$, but when one 
considers $f \,\star\, \mathbb J( \mathcal S\ominus e_i^1)$, for some
function $f:(\Z^3)^{\times 3} \to \C$, 
the order of the vertices does matter:
\[
f\star
\J  \Big(\!\!\!\!\!\!\raisebox{-3.30ex}{\includegraphics[height=8ex]{graphs/3/Logo4S} }\!\!\Big)=((12)^*f) \star \J (\logo{6}{Eabc}{5.2}{3})
\]
In going from the graph \ref{eq:orderexample}
to $\icono{6}{Eabc}{3.9}{2}$, one permuted the first and second white vertices. 
Accordingly, one `corrects' $f$ and replaces it by $(12)^*(f)$. 
Notice that the cycle $(12)\in\Sym_3$ does not lift to a
coloured automorphism. If this was the case, we could just as well 
ignore the correction.

% $\mathfrak{a b c d e f g h i j k l m n o p q r s t u v w x y z}$\par 
% $\mathfrak{A B C D E F G H I J K L M N O P Q R S T U V W X Y Z}$
The next definition is needed in order to describe 
some terms appearing in the SDEs.
\begin{defn}
Let $\B\in \vGrph{D}$ and let $v,w$ be vertices of the same colour
(either both black $v,w\in \B\bl\hp0$ or both white
$v,w\in\B\wh\hp0$).  We define the graph $\varsigma_a(\B;v,w)$ as the
coloured graph obtained from $\B$ by swapping the $a$-coloured edges
at $v$ and $w$.  Usually, vertices in boundary graphs are indexed by
numbered momenta $v=\xb^\alpha,w=\xb^\gamma \in\Z^D$, in which case we
write $\varsigma_a(\B;\xb^\alpha,\xb^\gamma)$ or just
$\varsigma_a(\B;\alpha,\gamma)$. These graphs are, generally,
disconnected.
% That is $\varsigma_a(\B;v,w)$ has the same vertices as $\B$, with the same bipartiteness 
\end{defn}

\begin{example} For any colour $a=1,2,3$, one has 
$\varsigma_a(\kthree;u,v) = \logo{6}{Eabc}{4.5}{25}=:\mathcal E_a$ for two black (or white) vertices $u,v$ of $\kthree$. 
If $x$ and $y$ are the leftmost black vertices of $\mathcal E_a$, then
$\varsigma_b(\mathcal E_a;x,y)=\melon  \sqcup \logo{4}{Vc}{3.3}{33}$.  
\end{example}

\section{The Schwinger-Dyson equation tower in arbitrary rank}\label{sec:SDE}
% \subsection{A useful formula for the derivative of $\Sint(\phi,\bar\phi)$}\label{sec:usefulformula}
We pick the following quartic model $S=S_0+\Sint$, with interaction vertices 
$\Sint[\phi,\bar\phi]=\lambda \sum_{a=1}^D{\Tr_{\mathcal V_a}}(\phi,\bar\phi)$,
being each vertex $\V_a$ the melonic vertex of colour $a$, 
\begin{equation} \label{eq:DefVa}
\V_a=\logo{4}{VaD}{7.9}{4}
 \end{equation}
Moreover, assume that the propagator obeys that, for each colour $a$,
the following difference
\[
E(t_a,s_a):=E_{p_1\ldots t_a\ldots p_D} - E_{p_1\ldots s_a\ldots p_D}\, 
\]
does not depend on $p_i$, for each $i\neq a$. Such is the case for
Tensor Group Field theories, say with group $\mtr U(1)$, being the
origin of $E$ is the Laplacian operator on $\mtr U(1)^D$ after taking
Fourier transform, and the tensors the Fourier modes. We
call this model the $\phi_{D,\mathrm{m}}^4$-theory\footnote{For $D=3$
  all quartic invariants are melonic, so we refer to it only as
  $\phi_{3}^4$-theory.}.  Here, the subindex `$\mathrm{m}$' denotes
melonicity. For specific choices of propagators and 
renormalizability, see \cite{Geloun:2016bhh} 
and references therein (additionally \cite{wgauge,Geloun:2014ema,Carrozza:2012uv}).

\par One observes that, if $\delta(\V_a)(\mathbf
b,\mathbf c,\xb,\yb)$ is the invariant of the trace, that
is \[\Tr_{\V_a}(\phi,\bar\phi)= \lambda\sum\limits_{\mathbf b,\mathbf
  c,\xb,\yb} \bar\phi_{\mathbf b} \bar \phi_{\mathbf c} \delta(\V_a)
(\mathbf b,\mathbf c,\xb,\yb) \phi_{\mathbf y}\phi_{\mathbf x} \,,
\]
one gets for $\sb=(s_1,\ldots,s_D)\in\Z^D$, the following expression:
\begin{align} 
 \label{eq:Sint}
\left( \dervpar{\Sint(\phi,\bar\phi)}{\bar \phi_\sb} \right) _{\phi^\flat\to\, \delta/\delta{J^\sharp}}
 &=
 2\lambda \bigg\{ \sum\limits_a \Big(\sum\limits_{b_a}
 \fder{}{\bJ\phantom{a}}_{\!\!\!\!\!s_1\ldots s_{a-1}b_as_{a+1}\ldots s_D} \sum\limits_{\mathbf b_{\hat a}}
  \fderJ{b_1\ldots b_D}
  \fderbJ{b_1\ldots b_{a-1} s_a b_{a+1} \ldots b_D}
  \Big)\bigg\}\,,   
\end{align}
where $\mathbf b_{\hat a}=(b_1,\dots \widehat{b}_a,
\ldots,b_D)=(b_1\ldots,b_{a-1},b_{a+1}\ldots,b_D)\in\Z^{D-1}$ and
$\sharp$ can either act trivially on a variable or be complex
conjugation, and $\phi^\flat=\bar\phi$ or $\phi^\flat= \phi$ according
to whether $J^\sharp=\bar J$ or $J^\sharp=J$, respectively.  The term
$\evalat{\big(\partial \Sint(\phi,\bar\phi)/\partial\bar \phi_\sb}
{\phi^\flat\to\, \delta/\delta{J^\sharp}}\big) Z[J,\bJ]$ can be
computed with aid of the WTI.  We depart from the formally integrated
form of the partition function
\[
Z\jj \propto \evalat{\exp{( -\Sint(\phi,\bar\phi) )}}{\phi^\flat \to \delta/\delta J^\sharp} \exp\Big(\sum_{\mathbf q\in \Z^D} \,\bar J_{\mathbf q} E\inv _{\mathbf q}J_{\mathbf q} \Big)\,,
\]
where we will ignore a (possibly infinite) constant and write equality
and derive its logarithm: 
\begin{align} \label{eq:WnachJ_s}
\fder{ W\jj }{\bJ_\sb}
& 
= \frac{1}{\zjj}
 \exp{( -\Sint(\delta/\delta \bar J,\delta/\delta J) )}  J_\sb E\inv_\sb \mathrm{e}^{ \sum_{\mathbf q\in \Z^D} \,\bar J_{\mathbf q} E\inv _{\mathbf q}J_{\mathbf q} }\\
& = \nonumber
\frac{1}{\zjj}  
\bigg\{
\frac{1}{E_{\sb}} J_{\sb}  \exp{( -\Sint(\delta/\delta \bar J,\delta/\delta J) )}  \mathrm{e}^{ \sum_{\mathbf q\in \Z^D} \,\bar J_{\mathbf q} E\inv _{\mathbf q}J_{\mathbf q} }
\\
\nonumber 
& \quad\qquad\qquad +
\bigg(\dervpar{\Sint(\phi,\bar\phi)}{\bar\phi_\sb}\bigg)\bigg|_{\phi^\flat \to \delta/\delta J^\sharp} 
 \evalat{\exp{( -\Sint(\phi,\bar\phi) )}}{\phi^\flat \to \delta/\delta J^\sharp} \mathrm{e}^{ \sum_{\mathbf q\in \Z^D} \,\bar J_{\mathbf q} E\inv _{\mathbf q}J_{\mathbf q} }
\bigg\}\, \\
\nonumber
& 
=
 \frac{1}{E_{\sb}} 
 \Bigg\{ J_\sb - \frac{1}{\zjj} \bigg(\dervpar{\Sint(\phi,\bar\phi)}{\bar\phi_\sb}\bigg)\bigg|_{\phi^\flat \to \delta/\delta J^\sharp} 
 Z[J,\bar J] \Bigg\} \,.
\end{align}
For sake of notation, we introduce the shorthands 
$\mathbf b_{\hat a} s_a =(b_1\ldots,b_{a-1},s_a,b_{a+1}\ldots,b_D) $
and, similarly,  $
\mathbf s_{\hat a} b_a =(s_1\ldots,s_{a-1},b_a,s_{a+1}\ldots,s_D)$, for any $a=1,\ldots,D$.  
By applying the colour-$a$-WTI to the rightmost double derivative term 
appearing in \eqref{eq:Sint}, the following:
\begin{align} \label{eq:SintPostWard}
\bigg(\dervpar{\Sint(\phi,\bar\phi)}{\bar\phi_\sb}\bigg)\Bigg|_{\phi^\flat\to\, \delta/\delta{J^\sharp}} Z[J,\bJ]&=
 2\lambda 
 \sum\limits_a
 \bigg\{ 
 \sum\limits_{b_a}
 \fderbJ{\sb_{\hat a } b_a}
 \Big(
 \delta_{s_ab_a}Y_{s_a}\hp a [J,\bJ] \cdot
 \\ \nonumber
 &\qquad\quad  +
 \sum_{\mathbf b_{\hat a}}
 \frac{1}{E(b_a,s_a)}
 \big(
 \bJ_{\mathbf b}
 \fderbJ{\mathbf b_{\hat a} s_a} 
 -
 J_{\mathbf b_{\hat a} s_a}
 \fderJ{\mathbf b}
 \big)
 \Big)\zjj 
 \bigg\} 
 \\ \nonumber
 &= \nonumber
 2\lambda 
 \sum\limits_a
 \bigg\{
 \fder{Y\hp a_{s_a} [J,\bJ]}{\bJ_\sb} 
 \cdot
 Z[J,\bJ]
 +
 Y\hp a_{s_a} [J,\bJ] 
 \cdot
 \fder{Z[J,\bJ]}{\bJ_\sb}
 \\ \nonumber
 & \quad\qquad  +
 \sum_{\mathbf b}
 \frac{1}{E(b_a,s_a)}
 \fderbJ{\mathbf s_{\hat a}b_a}\!\!
 \big(
 \bJ_{\mathbf b}
 \fderbJ{\mathbf b_{\hat a} s_a} 
 -
 J_{\mathbf b_{\hat a} s_a}
 \fderJ{\mathbf b}
 \big)
 Z[J,\bJ]
 \bigg\} \\  \nonumber 
 & =
  2\lambda 
 \sum\limits_a
 \bigg\{
 \fder{Y\hp a_{s_a} [J,\bJ]}{\bJ_\sb} 
 \cdot
 Z[J,\bJ]\nonumber
 +
 Y\hp a_{s_a} [J,\bJ] 
 \cdot
 \fder{Z[J,\bJ]}{\bJ_\sb}
 \\ \nonumber
 & \qquad\quad +
 \sum\limits_{b_a}
 \frac{1}{E(b_a,s_a)}   
 \fder{Z[J,\bJ]}{\bJ_{\sb}}
 +  \sum\limits_{\mathbf b}
 \frac{\bJ_{\mathbf b}}{E(b_a,s_a)} 
 \frac{\delta^2Z [J,\bJ]}{\delta \bJ_{\sb_{\hat a}b_a} \delta \bJ_{\mathbf b_{\hat a}s_a}}
%  \end{align}  
%  \begin{align}
\\
 \nonumber
 &\qquad \quad  - 
 \sum_{\mathbf b}
 \frac{1}{E(b_a,s_a)}
 J_{\mathbf b_{\hat a} s_a}
 \fder{^2Z[J,\bJ]}{\bJ_{\mathbf s_{\hat a}b_a} \delta J_\mathbf b}
 \bigg\}  
 \\ \nonumber 
 & = 2\lambda \sum_a(A_a(\sb) -  B_a(\sb)  + C_a(\sb) + D_a(\sb) +F_a(\sb) )\,,
 \end{align}
%  \vspace{-.4cm}
with
\begin{align*}
% \mbox{with }\qquad 
A_a(\sb) & = Y\hp a_{s_a} [J,\bJ] 
 \cdot
 \fder{Z[J,\bJ]}{\bJ_\sb} \,,
 &  B_a(\sb) & = \sum_{\mathbf b}
 \frac{1}{E(b_a,s_a)}
 J_{\mathbf b_{\hat a} s_a}
 \fder{^2Z[J,\bJ]}{\bJ_{\mathbf s_{\hat a}b_a} \delta J_\mathbf b}\,,
\\
C_a(\sb) & =  \sum\limits_{b_a}
 \frac{1}{E(b_a,s_a)}   
 \fder{Z[J,\bJ]}{\bJ_{\sb}} \,, &  D_a(\sb) & =  \sum\limits_{\mathbf b}
 \frac{\bJ_{\mathbf b}}{E(b_a,s_a)} 
 \frac{\delta^2Z [J,\bJ]}{\delta \bJ_{\sb_{\hat a}b_a} \delta \bJ_{\mathbf b_{\hat a}s_a}} \,,
\\  F_a(\sb) & =     
 \fder{Y\hp a_{s_a} [J,\bJ]}{\bJ_\sb} 
 \cdot
 Z[J,\bJ]\,.
\end{align*}

% \subsection{Derivation of the SDE}
One shall be interested in derivatives of $W\jj$ of the following form:  
\begin{equation} \label{eq:redundant}
\big(\prod\limits_{i=1}^k \fderJ{\xb_i}\fderbJ{\yb_i} \big) W\jj \bigg|_{J=\bar J=0} 
\, , \quad  \Xb=(\xb^1,\ldots,\xb^k)
\in \mathcal{F}_{D,k},\,\,\B_*(\Xb)=(\yb^1,\ldots,\yb^k), 
\end{equation}
for $\B\in \vGrph{D} $, and then use formula \eqref{eq:SintPostWard}
with, say, the vertex $\sb=\yb^{1}$.  As said in the introduction,
deltas of the interaction vertices and the propagators (proportional
to deltas) inside each Feynman diagrams render the definition of the
$2k$-multi-point function based on \eqref{eq:redundant} redundant, if
one treats the $\xb$-variables and the $\yb$-variables as independent.
In fact, all the $\yb$'s can be expressed in terms of coordinates of
$\Xb=(\xb^1,\ldots,\xb^k)$ of the same colour, the combinatorics of
which uniquely determines a so-called boundary graph $\B$ with $2k$
vertices; moreover, non-vanishing terms in the formula above are
precisely a \textit{graph derivative of} $W\jj$ with respect to $\B$
at $\Xb$. \par For the time being, we pick only a \textit{connected}
boundary graph $\B$ and we want to know what the rest of the
derivatives $\delta/\delta J_{\xb^\alpha}, \delta/\delta
\bJ_{\yb^{\alpha}(\{\xb^\nu\}_\nu)}$, ($\alpha=2,\ldots, D$) do to the
expression \eqref{eq:WnachJ_s}.  By using \eqref{eq:SintPostWard} with
$\sb=\yb^1$ we analyze the five summands in the (lowermost) RHS:
\[
\mathfrak m_a{(\Xb;\sb;\B)}:= \frac{1}{Z_0} \!\!\!
\prod\limits_{\substack{\alpha >1 \\ \nu=1,\ldots,k} }\fder{}{\bJ_{\yb^\alpha}} \fder{}{J_{\xb^\nu}}   
M_a(\sb) \bigg|_{J=\bJ=0}\, \vspace{-.4cm}
\]
for
\[
(\mathfrak{m},M)\in\{ (\mathfrak{a},A),(\mathfrak{b},B), (\mathfrak{c},C), (\mathfrak{d},D),(\mathfrak{f},F)\}\,. 
\]
Actually $\sb$ is a function of $\Xb$ ---and so is any other
$\yb^\alpha$--- but the dependence of $\mathfrak m_a$ on it only shows
that $\sb$ is the variable respect to which we firstly\footnote{The
  order should play no role when one obtains closed equations for a
  single correlation function in each sector of common Gur\u
  au-degree} derived $W\jj$. Each $\mathfrak m_a$ depends on the
boundary graph $\B$ through $\{\yb^\alpha\}_{\alpha=1}^k$ given by
$\eqref{eq:Jcycle}$.  Ignoring the common $(-2\lambda/E_{\sb})$
prefactor:
\begin{itemize}
 \item $\mathfrak a_a(\Xb;\sb;\B)$ is easily seen to yield $ Y\hp{a}_{s_a}[0,0]\cdot G_{\B}\hp{2k}(\Xb)$ 
 \item also, derivatives on $C_a(\sb)$, $\mathfrak c_a(\Xb;\sb;\B)$, 
 readily give $\sum_{b_a} E(b_a,s_a)\inv G_\B\hp{2k}(\Xb) $ 
 \item the term $\mtf{f}_a(\Xb;\sb;\B)$ is, according to Proposition \ref{thm:promedio_grupo}, 
$\sum_{\hat\pi \in \Autc(\B)} \pi^*\mathfrak f _{\B} \hp{a}(\Xb)$  
  \end{itemize}
The remaining two terms, $\mtf{b}_a$ and $\mathfrak{d}_a$, need a more detailed 
inspection, though: 
\begin{align} \nonumber 
\mathcal{O}(\bar J)+ 
\prod\limits_{\substack{\alpha >1 \\ \nu=1,\ldots,k} } \fder{}{\bJ_{\yb^\alpha}} \fder{}{J_{\xb^\nu}}D_a (\sb) & =
 \prod\limits_{\substack{\alpha >1; \nu}} \fder{}{\bJ_{\yb^\alpha}}\fder{}{J_{\xb^\nu}}\bigg[
 \sum_{\mathbf b}
 \frac{1}{E(b_a,s_a)}
 \bJ_{\mathbf b}
  \frac{\delta^2Z [J,\bJ]}{\delta \bJ_{\sb_{\hat a}b_a} \delta \bJ_{\mathbf b_{\hat a}s_a}} 
 \bigg] 
\\
& = 
\suml_{\rho=2}^k
\prod\limits_{\substack{\alpha>1 (\alpha\neq \rho)\\ \nu=1,\ldots, k}} \fder{}{\bJ_{\yb^\alpha}}\fder{}{J_{\xb^\nu}}\bigg[
 \sum_{\mathbf b}
 \frac{ \delta^{\mathbf b}_{\yb^\rho} }{E(b_a,s_a)}
  \frac{\delta^2Z [J,\bJ]}{\delta \bJ_{\sb_{\hat a}b_a} \delta \bJ_{\mathbf b_{\hat a}s_a}}
 \bigg]\, \nonumber \\
 & = 
\suml_{\rho=2}^k \,
\prod\limits_{\substack{\alpha; ( 1 \neq \alpha\neq \rho) \\ \nu=1,\ldots, k}} \fder{}{\bJ_{\yb^\alpha}}\fder{}{J_{\xb^\nu}}\bigg[
 \frac{ 1 }{E(y^\rho_a,s_a)}
  \frac{\delta^2Z [J,\bJ]}{\delta \bJ_{\sb_{\hat a}y_a^\rho} \delta \bJ_{\mathbf y_{\hat a}^\rho s_a}}
 \bigg]\, . \label{eq:swap_black}
\end{align}
As for the derivatives on $B_a(\sb)$, 
\begin{align} \nonumber 
\mathcal{O}(J)+\prod\limits_{\substack{\alpha >1 \\ \nu=1,\ldots,k} } \fder{}{\bJ_{\yb^\alpha}} \fder{}{J_{\xb^\nu}} B_a (\sb)  & =
 \prod\limits_{\substack{\alpha >1; \nu}} \fder{}{\bJ_{\yb^\alpha}}\fder{}{J_{\xb^\nu}}\bigg[
 \sum_{\mathbf b}
 \frac{1}{E(b_a,s_a)}
  J_{\mathbf b_{\hat a} s_a}
 \fder{^2Z[J,\bJ]}{\bJ_{\mathbf s_{\hat a}b_a} \delta J_\mathbf b} 
\bigg]
\\
& = 
\suml_{\beta=1}^k
\prod\limits_{\substack{\alpha >1; \nu\neq \beta}} \fder{}{\bJ_{\yb^\alpha}}\fder{}{J_{\xb^\nu}}\bigg[
 \sum_{\mathbf b}
 \frac{1}{E(b_a,s_a)}\delta^{s_a }_{x_a^\beta}
 \delta^{ {\mathbf b}_{\hat a}}_{ \xb^\beta_{\hat a}}
 \fder{^2Z[J,\bJ]}{\bJ_{\mathbf s_{\hat a}b_a} \delta J_\mathbf b} 
  \bigg]\, \nonumber
    \\ &=
\nonumber
\suml_{\beta=1}^k
\prod\limits_{\substack{\alpha >1; \nu\neq \beta}} \fder{}{\bJ_{\yb^\alpha}}\fder{}{J_{\xb^\nu}}\bigg[ 
 \suml_{b_a}
 \frac{1}{E(b_a,s_a)}
 \delta^{s_a }_{x_a^\beta}
 \fder{^2Z[J,\bJ]}{\bJ_{\mathbf s_{\hat a}b_a} \delta J_{\xb^\beta_{\hat a}b_a }} 
  \bigg]
  \\ &=
\prod\limits_{\substack{\alpha >1; \nu\neq \gamma}} \fder{}{\bJ_{\yb^\alpha}}\fder{}{J_{\xb^\nu}}\bigg[ 
 \suml_{b_a}
 \frac{1}{E(b_a,x_a^\gamma)}
 \fder{^2Z[J,\bJ]}{\bJ_{\mathbf s_{\hat a}b_a} \delta J_{\xb^\gamma_{\hat a}b_a }} 
  \bigg] \,. \label{eq:runningba}
\end{align}
For the last equality, one uses the fact that $\B$ is regular. Thus,
there exists precisely one white vertex
$\xb^{\gamma}$, $\gamma=\gamma(a)$, such that 
$x_a^\gamma=s_a$. In turn, this means that $\delta_{x_a^\beta}^{s_a}=\delta_{x_a^\beta}^{s_a}\delta_{\beta }^\gamma$.
\begin{itemize}
 \item as evident in eq. \eqref{eq:swap_black}, the derivatives on the $D_a$-term give,
 after setting the sources to zero, all the (coloured) graphs obtained 
 from $\B$ by a swapping 
 of the following form (only $a$-colour and only the four implied vertices visible):
\begin{align} \label{swappingpicture}
\includegraphics[height=3.7cm]{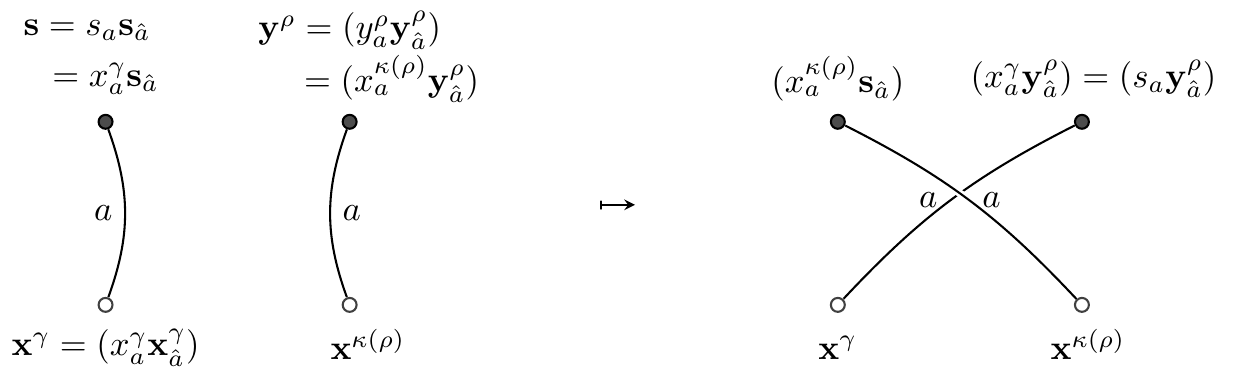}
\end{align}
for $\rho$ running over the black vertices which are not $\bJ_\sb=\bJ_{\yb^1}$. Hence the contribution of this term is
\[\suml_{\rho>1 }  \frac{1}{  E(y^\rho_a,s_a)} Z_0\inv\dervpar{Z[J,J]}{\varsigma_a(\B ;1,\rho)(\Xb) } 
\mbox{ for }\rho=2,\ldots,k.\] Since $\yb^1=\sb$, 
we also write $\varsigma_a(\B;\yb^1,\yb^\rho)=\varsigma_a(\B;1,\rho)$ 
for this new indexing graph ($\rho>1$).
 
 \item concerning the derivatives of $B_a$ above in eq. \eqref{eq:runningba}, the only surviving term is 
 $\delta^2 Z[J,\bJ]/\delta \bJ_{\mathbf s_{\hat a}b_a} \delta J_{\xb^\gamma_{\hat a}b_a } $,
 and is selected by $\delta_{x_a^\beta}^{s_a}$, 
 which is just, after taking into account the rest of the derivatives, 
 the graph derivative $\partial Z/\partial \B(\Xb)|_{x^\gamma_a\to b_a}$, with 
 the single coordinate $x^\gamma_a$ being substituted by (the running) $b_a$.
 If $\B$ is connected (as we assumed), after setting the sources to zero,
 this accounts for the \textit{first}\footnote{One of the authors (R.P.) in this new version  
 detected the incompleteness of eq. \eqref{eq:new_bterm} and 
 found the second line. This corrects the previous 
 (only-preprint-)version \href{https://arxiv.org/abs/1706.07358v1}{arxiv.org/abs/1706.07358v1}, where
 only C.I.P.S. and R.W. appear as authors.} 
 line in the next term:
 \begin{align} \nonumber
 \mathfrak b_a(\Xb;\sb;\B)& =\suml_{b_a}\frac{1}{E(b_a,x^\gamma_a)} 
 G_\B\hp{2k}(\xb^1,\ldots,\xb^{\gamma-1}, 
 (x_1^\gamma,\ldots,x_{a-1}^\gamma,b_a,x_{a+1}^\gamma,\ldots x^\gamma_D), 
 \xb^{\gamma+1},
 \ldots,\xb^k)  \\
 & + \suml_{\rho>1} \frac{1}{E(x_a^{\kappa(\rho)},x^\gamma_a)} 
 \frac{\partial Z[J,\bar J]}{\partial \,\varsigma_a(\B;1,\rho)}  
 (\xb^1,\ldots, (x_1^\gamma,\ldots,x_{a-1}^\gamma,x_a^{\kappa(\rho)},x_{a+1}^\gamma,\ldots x^\gamma_D), 
 \ldots,\xb^k)  \,. \label{eq:new_bterm}
 \end{align}
 \end{itemize}
 The second line is found by noticing that one also gets a contribution from the 
 graph derivative 
 $\partial Z[J,\bar J]/\partial\varsigma_a(\B;1,\rho)$ if this is evaluated at 
 $\Xb|_{x_a^\gamma\to x_a^{\kappa(\rho)}}$, where $\kappa(\rho)$ 
 is determined by \eqref{swappingpicture} (i.e. $x_a^{\kappa(\rho)}=y_a^{\rho}$).
From eqs. \eqref{eq:SintPostWard} one has
\begin{align*}
\dervpar{W\jj}{\B(\Xb)} & = \!\!\prod\limits_{\substack{\alpha >1 \\ 
\nu=1,\ldots,k} } \fder{}{\bJ_{\yb^\alpha}} \fder{}{J_{\xb^\nu}} \bigg(
\frac{-2\lambda  E_\sb\inv}{\zjj} \sum_a(A_a(\sb) +  C_a(\sb)  
+ D_a(\sb) + F_a(\sb) -  B_a(\sb) )\bigg)\!\bigg|_{\substack{\bJ=0 \\ J=0}} \\
&= \frac{(-2\lambda)}{E_{\sb}}
 \sum_a(\mathfrak{a}_a(\Xb;\sb;\B) +  \mathfrak{c}_a(\Xb;\sb;\B)   +
 \mathfrak{d}_a(\Xb;\sb;\B)  + \mathfrak{f}_a(\Xb;\sb;\B)-  \mathfrak{b}_a(\Xb;\sb;\B) ) \,,
\end{align*}
where each summand is now known. 
Because $Y\hp a_{s_a}[0,0]=\Delta_{s_a,1}\GDmelon$ we have proven:
\begin{thm}[Schwinger-Dyson equations] \label{thm:SDEs} Let $D\geq 3$ and 
let $\B$ be a connected boundary graph of the quartic melonic model, 
$\B\in \,\fey_D(\phi^4_{\mtr{m},D})= \dvGrph{D}$. 
Let $2k$ denote the number of vertices of $\B$. Pick a $\bar J$-external line,
that is, $\bar J_{\yb^\alpha}$, for $1\leq \alpha \leq k$, where $\B_*(\Xb)=(\yb^1,\dots,\yb^\alpha,\ldots \yb^k)$
for  $\Xb \in \mathcal{F}_{k(\B),D}$, and set $\sb=\yb^\alpha$ for sake of notation. 
The $(2k)$-point Schwinger-Dyson equation 
corresponding to $\B$ is 
% \leqnomode
\allowdisplaybreaks[0]
\begin{align}
&\!\!\!\bigg(1+\frac{2\lambda}{E_{\sb}} \suml_{a=1}^D \suml_{\mathbf q_{\hat a} }\GDmelon (s_a,\mathbf q_{\hat a})\bigg)
 \label{eq:SDEs}
G_\B\hp{2k}(\Xb) \\
& =\frac{\delta_{1,k}}{E_\sb}+\frac{(-2\lambda)}{E_{\sb}} \suml_{a=1}^D \Bigg\{ 
\suml_{\hat \sigma \in \Autc(\B)} \sigma^* \mathfrak{f}\hp{a}_{\B,s_a}(\Xb) \nonumber
%   & \qquad\qquad \qquad 
%  \,\,\,\,\,
 \\
 &\qquad \quad   +
 \suml_{\rho \neq \alpha }  \frac{Z_0\inv }{  E(y^\rho_a,s_a)} \bigg[\dervpar{Z[J,J]}{\varsigma_a(\B ; \alpha,\rho) }  (\Xb)
- \dervpar{Z[J,J]}{\varsigma_a(\B ; \alpha,\rho) } (\Xb|_{s_a \to y_a^{\rho}})\bigg] \nonumber
 \\
& \nonumber 
\,\,\,\,\qquad \qquad\qquad\qquad - \suml_{b_a}\frac{1}{E(s_a,b_a)} \big[ G_\B\hp{2k}(\Xb) - 
G_{\B}\hp{2k}(\Xb|_{s_a \to b_a}) \big] \Bigg\}   
\end{align}
\allowdisplaybreaks
% \reqnomode
% \begin{align}\label{eq:SDEs}
% \bigg(1+\frac{E_{\sb}}{2\lambda} \suml_{a=1}^D \suml_{\mathbf q_{\hat a} }\GDmelon (s_a,\mathbf q_{\hat a})\bigg)
% G_\B(\Xb) & =\frac{(-2\lambda)}{E_{\sb}} \sum_{a=1}^D \Bigg\{ 
% \suml_{\hat \sigma \in \Autc(\B)} \sigma^* F\hp{a}_\B(\Xb) 
% + \big(\suml_{\mathbf q_{\hat a} }\GDmelon (s_a,\mathbf q_{\hat a})\big) \cdot 
% G_\B(\Xb)
% \\
% & \Bigg\} \nonumber 
% \end{align}
 for all $\Xb \in \mathcal{F}_{D,k(\B)}$. 
%  Here $\gamma\in \{1,\ldots,k\}$ 
% is uniquely determined by $s_a=x^\gamma_a$ and 
Notice that $s_a=x^\gamma_a$ for certain $\gamma$, 
$\Xb = (\xb^1, \ldots, \xb^\gamma, \ldots, \xb^k)$,
and in this sense $\Xb|_{s_a \to b_a}$ means the replacement in $\Xb$
of $x^\gamma_a$ by the summed index $b_a$ (see eq. \eqref{eq:new_bterm}, with 
a similar situation for $\Xb|_{s_a \to y_a^\rho}$). Here $(s_a,\mathbf q_{\hat a})$ 
is abuse of notation for $(q_1,q_2,\ldots,q_{a-1},s_a,q_{a+1},\ldots,q_D)$.  Also 
recall that 
$E(u_a,v_a)=E_{u_a \mathbf q_{\hat{a}}} -E_{v_a \mathbf q_{\hat{a}}}$.
\end{thm}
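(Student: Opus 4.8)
The plan is to obtain the statement by collecting, term by term, the functional computation carried out in the paragraphs preceding it. The starting point is identity \eqref{eq:WnachJ_s}, which writes $\delta W[J,\bar J]/\delta\bar J_{\sb}$ as the inhomogeneous piece $E_\sb^{-1}J_\sb$ minus $E_\sb^{-1}Z^{-1}(\partial\Sint/\partial\bar\phi_\sb)|_{\phi^\flat\to\delta/\delta J^\sharp}Z[J,\bar J]$; this is obtained by differentiating the formally integrated partition function and is independent of the chosen boundary graph $\B$. Inserting the melonic expression \eqref{eq:Sint} for $(\partial\Sint/\partial\bar\phi_\sb)$ and applying the colour-$a$ Ward--Takahashi identity (Theorem \ref{thm:full_Ward}) to its double-derivative part yields the decomposition \eqref{eq:SintPostWard} of $(\partial\Sint/\partial\bar\phi_\sb)Z$ into the five functionals $A_a,B_a,C_a,D_a,F_a$, with the reductions \eqref{eq:swap_black} and \eqref{eq:runningba} of the $D_a$- and $B_a$-derivatives prepared in the text.

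The second step is to apply to both sides the remaining $2k-1$ source derivatives --- all the $\delta/\delta\bar J_{\yb^\beta}$ with $\beta\neq\alpha$ together with all the $\delta/\delta J_{\xb^\nu}$, the factor $\delta/\delta\bar J_{\sb}$ with $\sb=\yb^\alpha$ already having been used --- and to set $J=\bar J=0$. The left-hand side becomes $\partial W/\partial\B(\Xb)=G_\B\hp{2k}(\Xb)$; the inhomogeneous term survives only for $k=1$, contributing $\delta_{1,k}/E_\sb$; and the five functionals become $\mathfrak a_a,\mathfrak b_a,\mathfrak c_a,\mathfrak d_a,\mathfrak f_a$, identified as in the text: (i) $\mathfrak a_a=Y\hp a_{s_a}[0,0]\,G_\B\hp{2k}(\Xb)$ with $Y\hp a_{s_a}[0,0]=\Delta_{s_a,1}\GDmelon$, so that summing over $a$ and transposing this $G_\B\hp{2k}$-proportional contribution to the left produces the prefactor $1+\tfrac{2\lambda}{E_\sb}\sum_a\sum_{\mathbf q_{\hat a}}\GDmelon(s_a,\mathbf q_{\hat a})$; (ii) $\mathfrak c_a=\sum_{b_a}E(b_a,s_a)^{-1}G_\B\hp{2k}(\Xb)$, which together with the first line of \eqref{eq:new_bterm} (entering with the opposite sign) produces the telescoping $G_\B\hp{2k}$-difference in the last line of \eqref{eq:SDEs}; (iii) $\mathfrak f_a=\sum_{\hat\sigma\in\Autc(\B)}\sigma^*\mathfrak{f}\hp{a}_{\B,s_a}(\Xb)$, by Proposition \ref{thm:promedio_grupo} applied to the boundary-graph expansion \eqref{eq:Yterm_expansionOmega} of $Y\hp a_{s_a}$; and (iv) $\mathfrak d_a$ together with the second line of $\mathfrak b_a$ yields the swap-graph difference $\sum_{\rho\neq\alpha}E(y^\rho_a,s_a)^{-1}Z_0^{-1}[\partial Z/\partial\varsigma_a(\B;\alpha,\rho)(\Xb)-\partial Z/\partial\varsigma_a(\B;\alpha,\rho)(\Xb|_{s_a\to y^\rho_a})]$. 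Reassembling everything with the common prefactor $-2\lambda/E_\sb$ and using $E(s_a,b_a)=-E(b_a,s_a)$ to present the differences in the stated form then reproduces \eqref{eq:SDEs}.

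The delicate point, which I expect to absorb most of the work, is the combinatorial bookkeeping inside $\mathfrak b_a$ and $\mathfrak d_a$. One must follow how the ordering of the white vertices of $\B$ is inherited by the (generally disconnected) graphs produced by the swap $\varsigma_a(\B;\alpha,\rho)$ and by the edge-removal $\B\ominus e_a^r$ that enters $Y\hp a_{s_a}$, and insert the reorientation corrections of the type $f\mapsto(12)^*f$ illustrated around \eqref{eq:orderexample} whenever the transposition involved does not lift to a coloured automorphism. In particular, the completeness of \eqref{eq:new_bterm} --- the fact that the graph derivative against $\varsigma_a(\B;\alpha,\rho)$ contributes not only at $\Xb$ but also at the shifted argument $\Xb|_{s_a\to y^\rho_a}$, with the shift prescribed by \eqref{swappingpicture} --- has to be checked by identifying exactly which Wick pairings the surviving derivatives $\delta/\delta J_{\xb^\nu},\delta/\delta\bar J_{\yb^\beta}$ pick out, rather than by a naive enumeration of terms. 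Once this accounting is settled, the remainder is the routine matching of the five blocks against the right-hand side of \eqref{eq:SDEs}.
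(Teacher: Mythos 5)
Your proposal is correct and follows essentially the same route as the paper: the paper's own proof simply points back to the computation preceding the statement (eqs.\ \eqref{eq:WnachJ_s}, \eqref{eq:SintPostWard}, the five blocks $A_a,\dots,F_a$ and their reductions \eqref{eq:swap_black}, \eqref{eq:runningba}, \eqref{eq:new_bterm}), which is exactly what you reassemble, including the correct pairing of $\mathfrak c_a$ with the first line of \eqref{eq:new_bterm} and of $\mathfrak d_a$ with its second line, and the transposition of the $\mathfrak a_a$-term to the left-hand side. The only cosmetic difference is that the paper disposes of the $k=1$ case by citing \cite{fullward} rather than by tracking the surviving $\delta_{1,k}/E_\sb$ term as you do.
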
 

\begin{proof}
The 2-point equation has the same structure, but the propagator is
added (accounting for the delta $\delta_{1,k}$ term. This equation has
been proven in \cite{fullward}. For the higher-point functions, the
proof of this theorem (with $\alpha=1$) precedes the statement.  We
remark that, since $\Xb \in \mathcal{F}_{k(\B),D}$, the denominators
of the form $E(y^\rho_a,s_a)$ are well defined. In the limit $b_a\to
s_a $, $ E(s_a,b_a)$ becomes singular, but also the numerator, and a
derivative term arises.
\end{proof}

A graphical interpretation of this theorem shall be given in 
a future work; therein, in particular, the perturbative expansion of this
equation will be addressed in simple cases. 

We will ease the notation $
\mathfrak{f}\hp{a}_{\B,s_a}=\mathfrak{f}\hp{a}_{\B}$, when no risk of
confusion arises, keeping in mind the dependence of this function on
$s_a$.  Notice that if the graph $\varsigma_a(\B ;1,\rho) $ is
connected, then the respective derivative on $\zjj$ is just
\begin{equation*} % \label{eq:disconnected_Green}
\frac{1}{Z_0}\dervpar{Z[J,J]}{\varsigma_a(\B ;1,\rho)(\Xb)}= 
G_{\varsigma_a(\B ;1,\rho)}\hp{2k} (\Xb)\,;
\end{equation*}
otherwise, the RHS of this expression contains, on top of
$G\hp{2k}_{\varsigma_a(\B ;1,\rho)}$, also a product of correlation
functions indexed by the connected components of $\varsigma_a(\B
;1,\rho)$ with a number of points which add up to $2k$ (see
Sec. \ref{sec:fourpoint}).  Observe that the equation still depends at
this stage on the choice of the vertex $\bJ_\sb$, with respect to
which we first derived.  Thus, one has $k$ independent SDE for
$G_{\B}\hp{2k}$, when $\B$ has no symmetries.

\section{Schwinger-Dyson equations for rank-3 theories}\label{sec:rank3SDE}

% \subsection{The expansion of the free energy to $\mathcal O(J^4,\bJ^4)$}
According to \cite{fullward}, the boundary sector $\im\, \partial$ of the
$\phi^4_3$-theory is all of $ \partial (\feyn ) = \dvGrph{3}  $.
Therefore $W\jj=\log \zjj$ can be expanded in boundary graphs as: \\

% \fontsize{11.5}{14.3}\selectfont  
\begin{math}
\!\!\!\!\!\!\!
W_{D=3}[J,\bar J]=G\hp 2 _{\includegraphics[height=1.9ex]{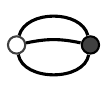}} \star \mathbb{J} (
\raisebox{-.3\height}{\includegraphics[height=3ex]{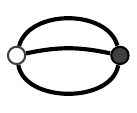}} ) + 
\dfrac1{2!}
G\hp 4 _{|\raisebox{-.33\height}{\includegraphics[height=2ex]{graphs/3/Item2_Melon.pdf}}|
\raisebox{-.33\height}{\includegraphics[height=2ex]{graphs/3/Item2_Melon.pdf}}|} \star \mathbb{J} (
\raisebox{-.3\height}{\includegraphics[height=3ex]{graphs/3/Logo2_Melon.pdf}}
^{\sqcup 2} 
% \raisebox{-.3\height}{\includegraphics[height=3ex]{graphs/3/Logo2_Melon.pdf}} 
) + 
\dfrac1{2}\suml_cG\hp4_{\raisebox{-.4\height}{\includegraphics[width=3ex]{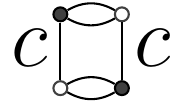}}} \star \mathbb{J}
\big( \,
\raisebox{-.34\height}{\includegraphics[width=0.7cm]{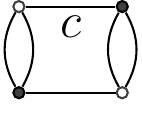}} \,\big)  
% +\dfrac1{2}G\hp4_{\raisebox{-.4\height}{\includegraphics[width=3ex]{graphs/3/Item4_V2v.pdf}}}  
% \star \mathbb{J}
% \big(\, 
% \raisebox{-.34\height}{\includegraphics[width=0.7cm]{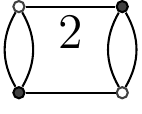}}\, \big)
% +\dfrac1{2}G\hp4_{\raisebox{-.4\height}{\includegraphics[width=3ex]{graphs/3/Item4_V3v.pdf}}} 
% \star \mathbb{J}
% \big( \,
% \raisebox{-.34\height}{\includegraphics[width=0.7cm]{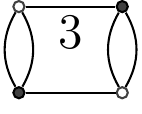}}\, \big)
+ \dfrac1{3}\suml_{c}G\hp 6_{\raisebox{-.4\height}{\includegraphics[height=2.3ex]{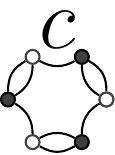}}} 
\star \mathbb{J}
\Big(
\raisebox{-.34\height}{\includegraphics[width=.8cm]{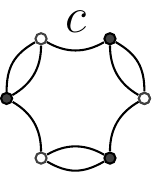}}\Big)
 \nonumber
\,\,\,\,+
\dfrac1{3} G\hp6 _{\raisebox{-.4\height}{\includegraphics[height=1.8ex]{graphs/3/Item6_K33.pdf}}} \star \mathbb{J}
\Big(
\raisebox{-.3\height}{\includegraphics[width=.66cm]{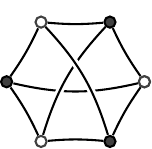}}\Big) 
+ \suml_{i} G\hp 6_{
\raisebox{-.34\height}{\includegraphics[width=0.65cm]{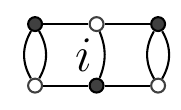}}} \star \mathbb{J}
\Big(
\raisebox{-.34\height}{\includegraphics[width=1.2cm]{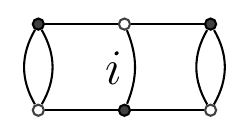}} \Big)
+ \dfrac1{ 3!}
G\hp 6 _{|\raisebox{-.33\height}{\includegraphics[height=2ex]{graphs/3/Item2_Melon.pdf}}|
\raisebox{-.33\height}{\includegraphics[height=2ex]{graphs/3/Item2_Melon.pdf}}|  
\raisebox{-.33\height}{\includegraphics[height=2ex]{graphs/3/Item2_Melon.pdf}}|}  \star 
 \mathbb{J} (
\raisebox{-.3\height}{\includegraphics[height=3ex]{graphs/3/Logo2_Melon.pdf}}
^{\sqcup 3}) 
+\dfrac12\suml_{c}
 G\hp 6 _{|\raisebox{-.33\height}{\includegraphics[height=2ex]{graphs/3/Item2_Melon.pdf}}|
\raisebox{-.4\height}{\includegraphics[height=2ex]{graphs/3/Item4_Vcv.pdf}}|} 
\star  \mathbb{J}
\big(
\raisebox{-.3\height}{\includegraphics[height=3ex]{graphs/3/Logo2_Melon.pdf}}
\sqcup 
\raisebox{-.34\height}{\includegraphics[width=.8cm]{graphs/3/Logo4_Vc.pdf}}\,
\big)   +
\dfrac1{2!\cdot 2^2} 
\suml_{c}
G\hp 8 _{|\raisebox{-.4\height}{\includegraphics[height=2ex]{graphs/3/Item4_Vcv.pdf}}|
\raisebox{-.4\height}{\includegraphics[height=2ex]{graphs/3/Item4_Vcv.pdf}}|} 
\star  \mathbb{J}
\big(\raisebox{-.34\height}{\includegraphics[width=.7cm]{graphs/3/Logo4_Vc.pdf}}
\sqcup 
\raisebox{-.34\height}{\includegraphics[width=.7cm]{graphs/3/Logo4_Vc.pdf}}\,
\big)  
+\dfrac1{2^2}\suml_{c<i} 
 G\hp 8 _{|\raisebox{-.4\height}{\includegraphics[height=2ex]{graphs/3/Item4_Vcv.pdf}}|
\raisebox{-.4\height}{\includegraphics[height=2ex]{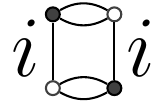}}|} 
\star  \mathbb{J}
\big(\raisebox{-.34\height}{\includegraphics[width=.8cm]{graphs/3/Logo4_Vc.pdf}}
\sqcup 
\raisebox{-.34\height}{\includegraphics[width=.8cm]{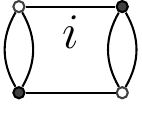}}\,
\big)   
+ \dfrac1{ 4!}
G\hp 8 _{|\raisebox{-.33\height}{\includegraphics[height=2ex]{graphs/3/Item2_Melon.pdf}}|
\raisebox{-.33\height}{\includegraphics[height=2ex]{graphs/3/Item2_Melon.pdf}}|  
\raisebox{-.33\height}{\includegraphics[height=2ex]{graphs/3/Item2_Melon.pdf}}|
\raisebox{-.33\height}{\includegraphics[height=2ex]{graphs/3/Item2_Melon.pdf}}|}  \star 
 \mathbb{J} (
\raisebox{-.3\height}{\includegraphics[height=3ex]{graphs/3/Logo2_Melon.pdf}}
^{\sqcup 4}) 
+
\dfrac1{2\cdot 2!}
\suml_{c} 
 G\hp 8 _{|
\raisebox{-.33\height}{\includegraphics[height=2ex]{graphs/3/Item2_Melon.pdf}}|
\raisebox{-.33\height}{\includegraphics[height=2ex]{graphs/3/Item2_Melon.pdf}}|
\raisebox{-.4\height}{\includegraphics[height=2ex]{graphs/3/Item4_Vcv.pdf}}|} 
\star  \mathbb{J}
\big(\raisebox{-.3\height}{\includegraphics[height=3ex]{graphs/3/Logo2_Melon.pdf}}
\sqcup 
\raisebox{-.3\height}{\includegraphics[height=3ex]{graphs/3/Logo2_Melon.pdf}}
\sqcup 
\raisebox{-.34\height}{\includegraphics[width=.8cm]{graphs/3/Logo4_Vc.pdf}}\,
\big)  
 \nonumber
 +
\dfrac1{3} G\hp8 _{ 
|\raisebox{-.33\height}{\includegraphics[height=2ex]{graphs/3/Item2_Melon.pdf}}|
\raisebox{-.34\height}{\includegraphics[width=.36cm]{graphs/3/Item6_K33.pdf}}|} \star \mathbb{J}
\Big(
\raisebox{-.3\height}{\includegraphics[height=3ex]{graphs/3/Logo2_Melon.pdf}}
\sqcup
\raisebox{-.3\height}{\includegraphics[width=.66cm]{graphs/3/Logo6_K33}}\Big) +
\dfrac1{3}
 \suml_{c}
 G\hp 8_{|\raisebox{-.23\height}{\includegraphics[height=2ex]{graphs/3/Item2_Melon.pdf}}|\,
\raisebox{-.2\height}{\includegraphics[width=.3cm]{graphs/3/Item6_Qc.pdf}}|} 
\star \mathbb{J}
\Big(
\raisebox{-.3\height}{\includegraphics[height=3ex]{graphs/3/Logo2_Melon.pdf}}
\sqcup
\raisebox{-.34\height}{\includegraphics[width=.8cm]{graphs/3/Logo6_Qc}}\Big)
 + \suml_{i} G\hp 8_{
|\raisebox{-.34\height}{\includegraphics[height=1.9ex]{graphs/3/Item2_Melon.pdf}}|
\raisebox{-.4\height}{\includegraphics[width=0.7cm]{graphs/3/Item6_Eipure.pdf}}|} \star \mathbb{J}
\Big(
\raisebox{-.3\height}{\includegraphics[height=3ex]{graphs/3/Logo2_Melon.pdf}}
\sqcup
\raisebox{-.34\height}{\includegraphics[width=1.2cm]{graphs/3/Logo6_Ei.pdf}} \Big)
+ \suml_{\substack{j \,;\, l< i}}  \GoWlji 
\star
\mathbb J
\big(\!\!
\raisebox{-.42\height}{\includegraphics[width=17ex]{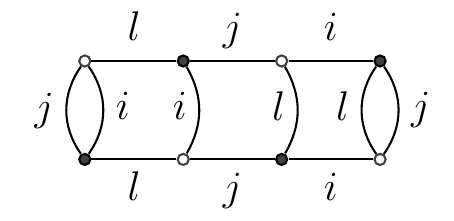}}
\!\big)+
 \suml_{j\neq i}\GoRijl
\star
\mathbb J
\big( \!\!
\raisebox{-.4\height}{\includegraphics[width=17ex]{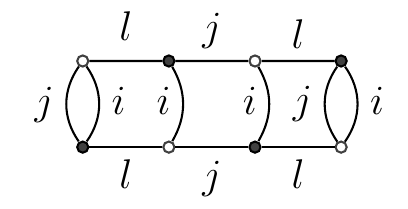}}
\!\big)
 +\dfrac14 \suml_{j} \GoAj
\star
\mathbb J
\big(
\raisebox{-.42\height}{\includegraphics[width=6ex]{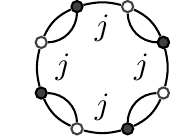}}
\big)
 +
 \suml_{j\neq i}\GoPijs 
\star
\mathbb J
\big(
\raisebox{-.4\height}{\includegraphics[width=10ex]{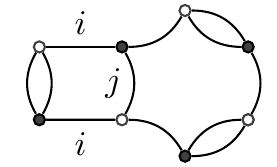}}
\big)+\suml_{i}
\GoXis
\star
\mathbb J
\big(\!\!\!\!\!
\raisebox{-.4\height}{\includegraphics[width=12ex]{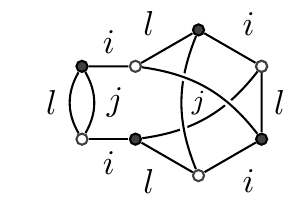}}
\big) 
+\suml_{\substack{l\neq i\neq j}}
\GoYs
\star
\mathbb J
\big(\!\! 
\raisebox{-.4\height}{\includegraphics[width=10ex]{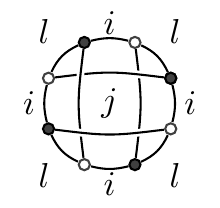}}
\big) +
\GoS
\star
\mathbb J
\big(
\raisebox{-.4\height}{\includegraphics[width=9ex]{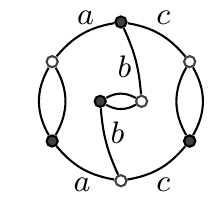}}
\big) 
+\GoCubo \star
\mathbb J
\big( 
\raisebox{-.4\height}{\includegraphics[width=9ex]{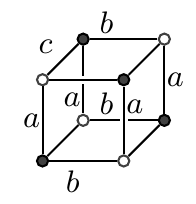}}
\big)   +
\mathcal O(10)\,.
\end{math} \par
% \fontsize{11.5}{14.3}\selectfont  
The WTI will be used for each colour $a=1,2,3$, and it will be convenient to single
out $a$ in this last expression. From here on\footnote{Beware this is only a notation for rank-$3$
theories; for rank $4$ another notation shall be used} $b=b(a)=\min(\{1,2,3\}\setminus\{a\})$ and 
$c=c(a)=\max(\{1,2,3\}\setminus \{a\})$: 
\vspace{-.3cm}
%   { \large 
%  \vspace{-1cm}
% \allowdisplaybreaks[0]
\begin{align*}
% \label{expansion_congraficas}
% correct factors
& \hspace{6.6cm}W_{D=3}[J,\bar J] \,=
\\
&
\quad\,\,\,\, G\hp 2 _{\includegraphics[height=1.9ex]{graphs/3/Item2_Melon.pdf}} \star \mathbb{J} (
\raisebox{-.3\height}{\includegraphics[height=3ex]{graphs/3/Logo2_Melon.pdf}} ) + 
\frac1{2!}
G\hp 4 _{|\raisebox{-.33\height}{\includegraphics[height=2ex]{graphs/3/Item2_Melon.pdf}}|
\raisebox{-.33\height}{\includegraphics[height=2ex]{graphs/3/Item2_Melon.pdf}}|} \star \mathbb{J} (
\raisebox{-.3\height}{\includegraphics[height=3ex]{graphs/3/Logo2_Melon.pdf}}
\sqcup 
\raisebox{-.3\height}{\includegraphics[height=3ex]{graphs/3/Logo2_Melon.pdf}} ) + 
\frac1{2}G\hp4_{\raisebox{-.4\height}{\includegraphics[height=1.7ex]{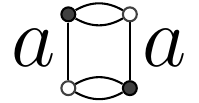}}} \star \mathbb{J}
\big( \,
\raisebox{-.4\height}{\includegraphics[height=4ex]{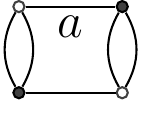}} \,\big) 
+\frac1{2} \suml_{i\neq a }G\hp4_{\raisebox{-.4\height}{\includegraphics[width=.5cm]{graphs/3/Item4_Viv.pdf}}}  
\star \mathbb{J}
\big(\, 
\raisebox{-.4\height}{\includegraphics[height=4ex]{graphs/3/Logo4_Vi.pdf}}\, \big)
\\  
\nonumber 
& \,\, \,\,
 + \frac1{3}G\hp 6_{\raisebox{-.4\height}{\includegraphics[height=3ex]{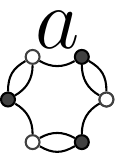}}} 
\star \mathbb{J}
\Big(
\raisebox{-.4\height}{\includegraphics[height=6ex]{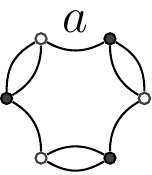}}\Big)
+ \frac1{3}\suml_{i\neq a}G\hp 6_{\raisebox{-.4\height}{\includegraphics[height=3.5ex]{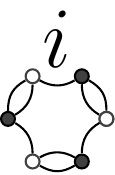}}} 
\star \mathbb{J}
\Big(
\raisebox{-.34\height}{\includegraphics[height=6ex]{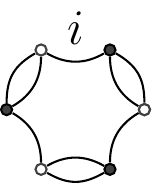}}\Big)
+ \frac1{ 3!}
G\hp 6 _{|\raisebox{-.33\height}{\includegraphics[height=2ex]{graphs/3/Item2_Melon.pdf}}|
\raisebox{-.33\height}{\includegraphics[height=2ex]{graphs/3/Item2_Melon.pdf}}|  
\raisebox{-.33\height}{\includegraphics[height=2ex]{graphs/3/Item2_Melon.pdf}}|}  \star 
 \mathbb{J} (
\raisebox{-.3\height}{\includegraphics[height=3ex]{graphs/3/Logo2_Melon.pdf}}
\sqcup 
\raisebox{-.3\height}{\includegraphics[height=3ex]{graphs/3/Logo2_Melon.pdf}} 
\sqcup 
\raisebox{-.3\height}{\includegraphics[height=3ex]{graphs/3/Logo2_Melon.pdf}})
\\  \nonumber
&\,\,\,\,+
\frac1{3} G\hp6 _{\raisebox{-.4\height}{\includegraphics[height=2ex]{graphs/3/Item6_K33.pdf}}} \star \mathbb{J}
\Big(
\raisebox{-.3\height}{\includegraphics[width=.76cm]{graphs/3/Logo6_K33}}\Big) +
G\hp 6_{
\raisebox{-.3\height}{\includegraphics[width=0.5cm]{graphs/3/Item6_Eabcs.pdf}}} \star \mathbb{J}
\big(
\raisebox{-.3\height}{\includegraphics[width=1.3cm]{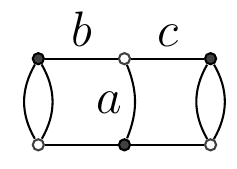}} \big)
+  G\hp 6_{
\raisebox{-.34\height}{\includegraphics[width=0.5cm]{graphs/3/Item6_Ebacs.pdf}}} \star \mathbb{J}
\big(
\raisebox{-.3\height}{\includegraphics[width=1.3cm]{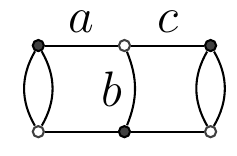}}\, \big)
+  G\hp 6_{
\raisebox{-.34\height}{\includegraphics[width=0.5cm]{graphs/3/Item6_Ecabs.pdf}}} \star \mathbb{J}
\big(
\raisebox{-.3\height}{\includegraphics[width=1.3cm]{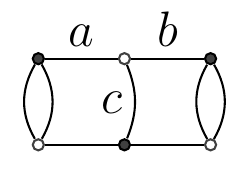}}\, \big)
\\   
\nonumber
& \,\, \,\,
+\frac12 
 G\hp 6 _{|\raisebox{-.33\height}{\includegraphics[height=2ex]{graphs/3/Item2_Melon.pdf}}|
\raisebox{-.34\height}{\includegraphics[height=1.7ex]{graphs/3/Item4_Va.pdf}}|} 
\star  \mathbb{J}
\big(
\raisebox{-.3\height}{\includegraphics[height=3ex]{graphs/3/Logo2_Melon.pdf}}
\sqcup 
\raisebox{-.4\height}{\includegraphics[height=4ex]{graphs/3/Logo4_Va.pdf}}\,
\big) 
+\frac12\suml_{i\neq a}
 G\hp 6 _{|\raisebox{-.33\height}{\includegraphics[height=2ex]{graphs/3/Item2_Melon.pdf}}|
\raisebox{-.34\height}{\includegraphics[height=1.7ex]{graphs/3/Item4_Vi.pdf}}|} 
\star  \mathbb{J}
\big(
\raisebox{-.3\height}{\includegraphics[height=3ex]{graphs/3/Logo2_Melon.pdf}}
\sqcup 
\raisebox{-.4\height}{\includegraphics[height=4ex]{graphs/3/Logo4_Vi.pdf}}\,
\big)   
\\ & \,\,\,\, +
\frac1{2!\cdot 2^2} 
G\hp 8 _{|\raisebox{-.34\height}{\includegraphics[height=1.7ex]{graphs/3/Item4_Va.pdf}}|
\raisebox{-.34\height}{\includegraphics[height=1.7ex]{graphs/3/Item4_Va.pdf}}|} 
\star  \mathbb{J}
\big(\raisebox{-.4\height}{\includegraphics[height=4ex]{graphs/3/Logo4_Va.pdf}}
\sqcup 
\raisebox{-.4\height}{\includegraphics[height=4ex]{graphs/3/Logo4_Va.pdf}}\,
\big)
+
\frac1{2!\cdot 2^2} 
\suml_{i\neq a}
G\hp 8 _{|\raisebox{-.4\height}{\includegraphics[height=1.7ex]{graphs/3/Item4_Vi.pdf}}|
\raisebox{-.4\height}{\includegraphics[height=1.7ex]{graphs/3/Item4_Vi.pdf}}|} 
\star  \mathbb{J}
\big(\raisebox{-.4\height}{\includegraphics[height=4ex]{graphs/3/Logo4_Vi.pdf}}
\sqcup 
\raisebox{-.4\height}{\includegraphics[height=4ex]{graphs/3/Logo4_Vi.pdf}}\,
\big)
\\
&\,\,\,\,
+\frac1{2^2}\suml_{i\neq a} 
 G\hp 8 _{|\raisebox{-.4\height}{\includegraphics[height=1.7ex]{graphs/3/Item4_Vi.pdf}}|
\raisebox{-.4\height}{\includegraphics[height=1.7ex]{graphs/3/Item4_Va.pdf}}|} 
\star  \mathbb{J}
\big(\raisebox{-.4\height}{\includegraphics[height=4ex]{graphs/3/Logo4_Vi.pdf}}
\sqcup 
\raisebox{-.4\height}{\includegraphics[height=4ex]{graphs/3/Logo4_Va.pdf}}\,
\big)  
+
\frac1{2^2} 
 G\hp 8 _{|\raisebox{-.34\height}{\includegraphics[height=1.7ex]{graphs/3/Item4_Vb.pdf}}|
\raisebox{-.34\height}{\includegraphics[height=1.7ex]{graphs/3/Item4_Vc.pdf}}|} 
\star  \mathbb{J}
\big(\raisebox{-.4\height}{\includegraphics[height=4ex]{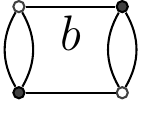}}
\sqcup 
\raisebox{-.4\height}{\includegraphics[height=4ex]{graphs/3/Logo4_Vc.pdf}}\,
\big)
\\  \nonumber
&\,\,\,\,+
\frac1{3} G\hp8 _{ 
|\raisebox{-.33\height}{\includegraphics[height=2ex]{graphs/3/Item2_Melon.pdf}}|
\raisebox{-.34\height}{\includegraphics[width=.36cm]{graphs/3/Item6_K33.pdf}}|} \star \mathbb{J}
\Big(
\raisebox{-.3\height}{\includegraphics[height=3ex]{graphs/3/Logo2_Melon.pdf}}
\sqcup
\raisebox{-.3\height}{\includegraphics[width=.76cm]{graphs/3/Logo6_K33}}\Big) +
\frac1{3}
 \suml_{i\neq a}
 \Gomqi
\star \mathbb{J}
\Big(
\raisebox{-.3\height}{\includegraphics[height=3ex]{graphs/3/Logo2_Melon.pdf}}
\sqcup
\raisebox{-.34\height}{\includegraphics[height=6ex]{graphs/3/Logo6_Qi.pdf}}\Big)
\\ \nonumber
&  \,\,\,\, +
\frac1{3}
\Gomqa
%  G\hp 8_{|\raisebox{-.33\height}{\includegraphics[height=2ex]{graphs/3/Item2_Melon.pdf}}|\,
% \raisebox{-.23\height}{\includegraphics[height=3.5ex]{graphs/3/Item6_Qa}}\,|} 
\star \mathbb{J}
\Big(
\raisebox{-.3\height}{\includegraphics[height=3ex]{graphs/3/Logo2_Melon.pdf}}
\sqcup
\raisebox{-.34\height}{\includegraphics[height=6ex]{graphs/3/Logo6_Qa.pdf}}\Big)
% \\
% &
\,\,\,\,+ \suml_{i\neq a} 
G\hp 8_{
|\raisebox{-.24\height}{\includegraphics[height=1.9ex]{graphs/3/Item2_Melon.pdf}}|
\raisebox{-.25\height}{\includegraphics[height=2.7ex]{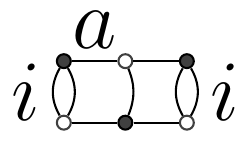}}|} 
\star \mathbb{J}
\Big(
\raisebox{-.3\height}{\includegraphics[height=3ex]{graphs/3/Logo2_Melon.pdf}}
\sqcup
\raisebox{-.34\height}{\includegraphics[height=5.5ex]{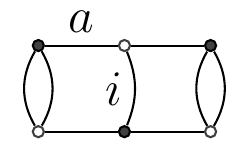}} \Big)
\\ 
&  \,\,\,\,
+
G\hp 8_{
|\raisebox{-.34\height}{\includegraphics[height=1.9ex]{graphs/3/Item2_Melon.pdf}}|
\raisebox{-.24\height}{\includegraphics[height=3ex]{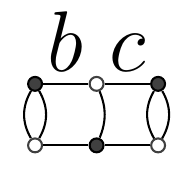}}|} \star \mathbb{J}
\Big(
\raisebox{-.3\height}{\includegraphics[height=3ex]{graphs/3/Logo2_Melon.pdf}}
\sqcup
\raisebox{-.3\height}{\includegraphics[height=5.5ex]{graphs/3/Logo6_Eabc.pdf}} \Big) 
% \\
% & 
% \,\,\,\,
+
\frac1{2\cdot 2!}
\suml_{i\neq a} 
 G\hp 8 _{|
\raisebox{-.33\height}{\includegraphics[height=2ex]{graphs/3/Item2_Melon.pdf}}|
\raisebox{-.33\height}{\includegraphics[height=2ex]{graphs/3/Item2_Melon.pdf}}|
\raisebox{-.34\height}{\includegraphics[height=1.7ex]{graphs/3/Item4_Vi.pdf}}|} 
\star  \mathbb{J}
\big( \raisebox{-.3\height}{\includegraphics[height=3ex]{graphs/3/Logo2_Melon.pdf}}
\sqcup 
\raisebox{-.3\height}{\includegraphics[height=3ex]{graphs/3/Logo2_Melon.pdf}}
\sqcup 
\raisebox{-.4\height}{\includegraphics[height=4ex]{graphs/3/Logo4_Vi.pdf}}\,
\big)  
\nonumber
\\ 
&  \,\,\,\,
+\frac1{2\cdot 2!}
 G\hp 8 _{|
\raisebox{-.33\height}{\includegraphics[height=2ex]{graphs/3/Item2_Melon.pdf}}|
\raisebox{-.33\height}{\includegraphics[height=2ex]{graphs/3/Item2_Melon.pdf}}|
\raisebox{-.34\height}{\includegraphics[height=1.7ex]{graphs/3/Item4_Va.pdf}}|} 
\star  \mathbb{J}
\big(\raisebox{-.3\height}{\includegraphics[height=3ex]{graphs/3/Logo2_Melon.pdf}}
\sqcup 
\raisebox{-.3\height}{\includegraphics[height=3ex]{graphs/3/Logo2_Melon.pdf}}\sqcup 
\raisebox{-.4\height}{\includegraphics[height=4ex]{graphs/3/Logo4_Va.pdf}}\,
\big) 
% \\
% &
% \,\,\,\, 
+ \frac1{ 4!}
G\hp 8 _{|\raisebox{-.33\height}{\includegraphics[height=2ex]{graphs/3/Item2_Melon.pdf}}|
\raisebox{-.33\height}{\includegraphics[height=2ex]{graphs/3/Item2_Melon.pdf}}|  
\raisebox{-.33\height}{\includegraphics[height=2ex]{graphs/3/Item2_Melon.pdf}}|
\raisebox{-.33\height}{\includegraphics[height=2ex]{graphs/3/Item2_Melon.pdf}}|}  \star 
 \mathbb{J} (
\raisebox{-.3\height}{\includegraphics[height=3ex]{graphs/3/Logo2_Melon.pdf}}
\sqcup 
\raisebox{-.3\height}{\includegraphics[height=3ex]{graphs/3/Logo2_Melon.pdf}}
\sqcup 
\raisebox{-.3\height}{\includegraphics[height=3ex]{graphs/3/Logo2_Melon.pdf}}
\sqcup 
\raisebox{-.3\height}{\includegraphics[height=3ex]{graphs/3/Logo2_Melon.pdf}}
) 
\nonumber
\\ 
&  \,\,\,\,
+\GoWa
\star
\mathbb J
\big(\!\!
\raisebox{-.42\height}{\includegraphics[width=17ex]{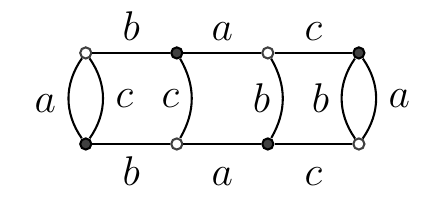}}
\big)
%  \\  
% &\,\, \,\,
% 
 + 
% \suml_{\substack{j, l< i}}  
\GoWb
\star
\mathbb J
\big(\!
\raisebox{-.44\height}{\!\!\includegraphics[width=17ex]{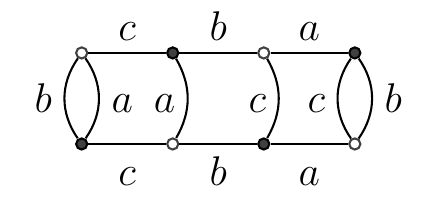}}
\!\!\big)
% \\
% &\,\,\,\,
\nonumber
\\ 
&  \,\,\,\,
+
\GoWc
\star
\mathbb J
\big(\!\!
\raisebox{-.42\height}{\includegraphics[width=17ex]{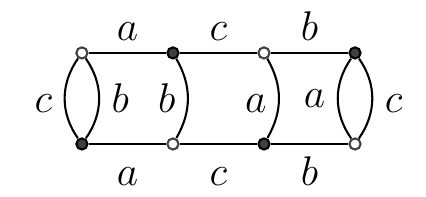}}
\big)
+ \frac12\suml_{\substack{i=b,c \\ (a\neq j\neq i)}}\GoRij
\star
\mathbb J
\Big(\!\!
\raisebox{-.42\height}{\includegraphics[width=17ex]{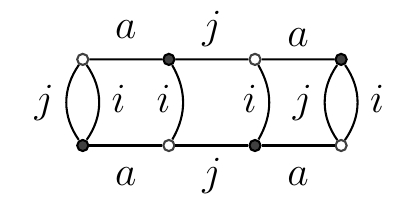}}
\!\!\Big)
\\  
&
\,\,\,\,
+
 \frac12\suml_{i=b,c}\GoRai
\star
\mathbb J
\Big(\!\!
\raisebox{-.42\height}{\includegraphics[width=17ex]{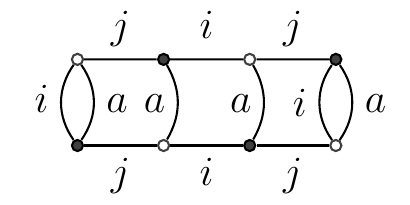}}
\!\!\Big)
+
 \frac12\suml_{i=b,c}\GoRia
\star
\mathbb J
\Big( \!\!
\raisebox{-.42\height}{\includegraphics[width=17ex]{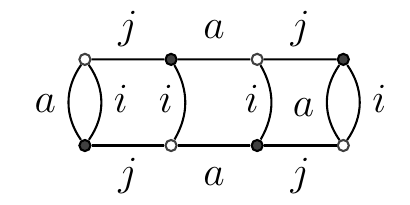}}
\!\!
\Big)
\\
& \,\, \,\,+\frac14 \suml_{j\neq a} \GoAj
\star
\mathbb J
\Big(\!\!
\raisebox{-.4\height}{\includegraphics[width=7ex]{graphs/3/Logo8_Aj.pdf}}
\Big)
+
\frac14
\GoAa
\star
\mathbb J
\Big(\!\! 
\raisebox{-.4\height}{\includegraphics[width=7ex]{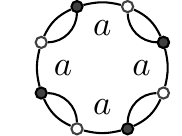}}
\Big)
 +
 \suml_{i\neq a }\GoPai
\star
\mathbb J
\Big(\!\!
\raisebox{-.4\height}{\includegraphics[width=10ex]{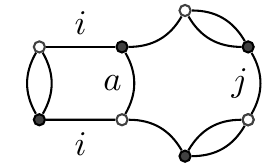}}
\Big)
\\
&  \,\, \,\,+
 \suml_{i\neq a }\GoPia
\star
\mathbb J
\Big( \!
\raisebox{-.4\height}{\includegraphics[width=10ex]{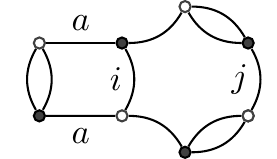}}
\Big)
+
 \suml_{i\neq a\neq j }\GoPij
\star
\mathbb J
\Big(\!
\raisebox{-.4\height}{\includegraphics[width=10ex]{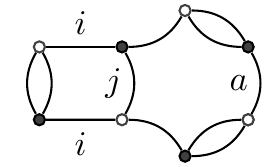}}
\Big)
+\suml_{i\neq a }
\GoXi
\star
\mathbb J
\Big(\!\!\!\!\!
\raisebox{-.42\height}{\includegraphics[width=12ex]{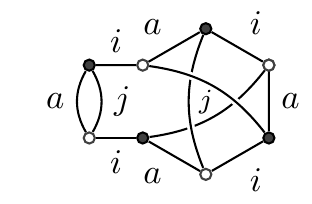}}
\Big) 
\\
&  \,\, \,\,
+
\suml_{i\neq a }
\GoXa
\star
\mathbb J
\Big( \!\!\!\!
\raisebox{-.42\height}{\includegraphics[width=12ex]{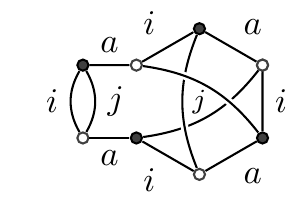}}
\Big) 
+
\GoS
\star
\mathbb J
\Big(\!\!
\raisebox{-.4\height}{\includegraphics[width=9ex]{graphs/3/Logo8_S.pdf}}
\Big) 
+\frac14\GoCubo
\star
\mathbb J
\Big[\!\! 
\raisebox{-.4\height}{\includegraphics[width=9ex]{graphs/3/Logo8_cubo.pdf}}
\Big]  
\\
&\,\,\,\,
+\frac14\suml_{\substack{ i\neq j}}
\GoYi
\star
\mathbb J
\Big(\!\!
\raisebox{-.44\height}{\includegraphics[width=11ex]{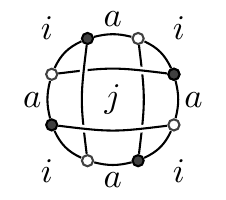}}\!
\!\Big) 
+\frac14 
% \suml_{\substack{ i\neq j}}
\GoYa
\star
\mathbb J
\Big(\!\! 
\raisebox{-.44\height}{\includegraphics[width=11ex]{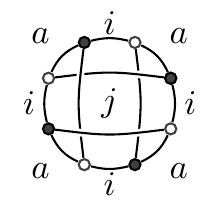}}
\!
\Big)
+ \mathcal O(10)
\end{align*}
% } 
\allowdisplaybreaks

% \subsection{Derivation of the $\ysa$-term for quartic theories}

In \cite{fullward}, the term $\ysa$ for the $\phi^4_3$-theory 
to $\mathcal{O}(4)$ has been found.  
This expansion is enough for deriving any of the 4-point
SDEs. However, since we want the explicit 6-point SDEs, we need
to compute $\ysa$ to $\mathcal{O}(6)$ in the sources, i.e. consider the
free energy to order $\mathcal{O}(J^4,\bar J^4)$, to be precise.  

\begin{lem} \label{thm:ordersix}
To order-$6$, $\ysa$ is given by: 
\begin{subequations}
\label{eq:laangeY}
\begin{align*}  
&\quad \ysa  
\\
& = 
\suml_{q_b,q_c}  \Gmelon(s_a,q_b,q_c) +
 \frac12
\sum\limits_{r=1}^2\big( \Delta_{s_a,r}
% ^{{|\raisebox{-.4\height}{\includegraphics[width=.42cm]{gfx_Ward/ward_icono_melon.pdf}}|
% \raisebox{-.4\height}{\includegraphics[width=.42cm]{gfx_Ward/ward_icono_melon.pdf}}|}}
G\hp 4 _{|\raisebox{-.33\height}{\includegraphics[height=1.8ex]{graphs/3/Item2_Melon}}|
\raisebox{-.33\height}{\includegraphics[height=1.8ex]{graphs/3/Item2_Melon}}|}+
\Delta_{s_a,r}
% ^{{\raisebox{-.4\height}{\includegraphics[width=.6cm]{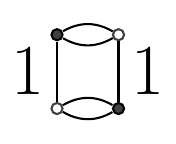}}}}
G\hp4_{\raisebox{-.4\height}{\includegraphics[width=.48cm]{gfx_Ward/ward_logo_vone.pdf}}}
+ 
\Delta_{s_a,r}
% ^{{\raisebox{-.4\height}{\includegraphics[width=.6cm]{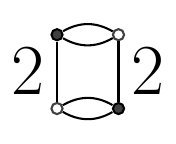}}}}
G\hp4_{\raisebox{-.4\height}{\includegraphics[width=.48cm]{gfx_Ward/ward_logo_vtwo.pdf}}}
+ \nonumber
\Delta_{s_a,r}
% ^{{\raisebox{-.4\height}{\includegraphics[width=.6cm]{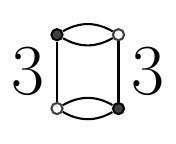}}}}
G\hp4_{\raisebox{-.4\height}{\includegraphics[width=.48cm]{gfx_Ward/ward_logo_vthree.pdf}}}\big)
\star \mathbb{J}  \big(
\raisebox{-.0\height}{\logo{2}{Melon}{3.1}{3}}\big)
\\
& \,\,\,\,+ 
\Big(
\frac13
\sum_{r=1}^3 
( \Dsa{r}\Gsqa
+ \Dsa{r}\Gskthree )
+ \Dsa{3}\Gsebac +
\Dsa{3}\Gsecab +\frac12 \Dsa{1}\Gsma
\Big)
\star  \nonumber 
\mathbb J(
\,\logo{4}{Va}{4}{38} \,)
\\
&
\,\,\,\,
+
\Big(
\frac13
\sum_{r=1}^3 
\Dsa{r}\Gsqb
+
\Dsa{3}\Gseabc
+
\sum_{r=1}^2
\Dsa{r} \Gsecab
+\frac12 \Dsa{1}\Gsmb
\Big) 
\star  \nonumber
\mathbb J(
\,\logo{4}{Vb}{4}{38} \,)
\\
&
\,\,\,\,
+
\Big(
\frac13
\sum_{r=1}^3 
\Dsa{r}\Gsqc
+
\Dsa{2}\Gseabc
+
\sum_{r=1}^2
\Dsa{r} \Gsebac
+\frac12 \Dsa{1}\Gsmc
\Big) 
\star   
\mathbb J(
\,\logo{4}{Vc}{4}{38} \,) \nonumber
\\
& \,\,\,\,+\nonumber
\Big(
\frac{1}{3!} \sum_{r=1}^3
\Dsa{r}\Gsmmm 
+ \Dsa{2} \Gseabc 
+\sum_{r=2,3}\sum_{i=1}^3
\Dsa{r} \Gsmi
\Big)
\star  \nonumber
\mathbb J(
\,\logo{2}{Melon}{3.1}{3}\sqcup 
\logo{2}{Melon}{3.1}{3} \,)
\\
{}
&  \,\,\,\,+																										
\Big\{
\frac14
 \sumtc (\Dsa{p} \Gomma+\Dsa{p}  \Gommi) \\
 &  \,\,\,\,+
\frac{1}{4!}
 \suml_{u=1}^4 (\Dsa{u} \Gommmm )   %\\  &   \qquad
 +  \Dsa{3} \Gomea   \bigg\}\star \J (\melon \sqcup  \melon \sqcup \melon) %\melon\sqcup \melon)
\\
&+
\Big\{ 
\frac18 \big( 
\suml_{r=1,2} \Dsa{r} \Goaa +
\suml_{r=3,4} (123)^*(\Dsa{r} \Goaa) 
\big)  +\frac14 \suml_{i\neq a} 
 (
\sumud \Dsa{r} \Goia )
\\ &
+
\frac13\suml_{q=2,3,4} \Dsa{q} \Gomkthree 
+\frac13\suml_{q=2,3,4}
 \Dsa{q} \Gomqa + \Dsa{4} \Gomec +  
\Dsa{4} \Gomeb 
\\
 &  + \frac14 \sumud \Dsa{r} \Gomma +   \Dsa{2} \GoWb  
 +(123)^*(\Dsa{3} \GoWc) 
\Big\}\star 
 \mathbb J ( \melon \sqcup \,\logo{4}{Va}{3.5}{38} )
  \\ 
&
 +\Big\{
 \frac{1}{8} \big(
\suml_{r=1,2} \Dsa{r} \Gobb
+\suml_{r=3,4} (123)^*\Dsa{r} \Gobb \big) 
+\frac14\sumtc (123)^*\Dsa{p}\Goia  
 \\ & +
 \frac{1}{4}\sumtc (123)^*\Dsa{p}\Gobc
 +\suml_{h=2,3} \Dsa{h} \Gomec +\Dsa{4} \Gomea +
   \frac14 \sumud \Dsa{r} \Gommb  
 \\ &  +\frac12\big( \Dsa{2} \GoRac + 
  (123)^* \big( \Dsa{3} \GoRac) \big)+ \Dsa{2} \GoPac 
 \Big\}\star 
 \mathbb J ( \melon \sqcup \,\logo{4}{Vb}{3.5}{38} )
  \\ & +\Big\{ \frac{1}{8} \big(
\suml_{r=1,2} \Dsa{r} \Gocc
+\suml_{r=3,4} (123)^*\Dsa{r} \Gocc +\frac14\sumtc (123)^*\Dsa{p}\Goca  \big)
  \\ & +\frac14(
\sumud \Dsa{r} \Gobc )
 +  
\suml_{h=2,3} \Dsa{h} \Gomeb + \Dsa{2} \Gomea +  
  \frac14 \sumud \Dsa{r} \Gommc   
 \\ &  +\frac12\big(\Dsa{2} \GoRab +(123)^*(\Dsa{3} \GoRab) \big)
 +\Dsa{2} \GoPab
 \Big\}\star 
 \mathbb J ( \melon \sqcup \,\logo{4}{Vc}{3.5}{38} ) 
\\ \nonumber
 &  
 +
 \Big\{ 
 \frac{1}{3} \Dsa{1} \Gomkthree + \Dsa{1} \GoXb + \Dsa{1} \GoXc \\
 & \qquad+
 \sumtc \Dsa{p} \GoXa +\frac{1}{4} \suml_{u=1}^4 \Dsa{u} \GoYa \!
 \Big\}  \star \J(\logo{6}{K33}{3.6}{3})  
\\
 & +
 \Big\{
 \frac13   
 \Dsa{1} \Gomqa +
  \Dsa{1} \GoPbc + \Dsa{1} \GoPcb +  
  \\
 & \qquad +
  \frac14 \suml_{r=1}^4 \Dsa{r}    \GoAa+ \sumud \Dsa{r} \GoXabc \Big\} \star \J (\logo{6}{Qa}{4.6}{25}) \\ 
 & +\Big\{ 
 \frac13
 \Dsa{1} \Gomqb + \frac12\suml_{h=2,3} 
  \Dsa{h} \GoRca\\
 & \quad +\frac14 \suml_{r=1}^4 \Dsa{r}  \GoAb  +  \Dsa{1} \GoPac + \sumud \Dsa{r} \GoPca\Big\} \star \J (\logo{6}{Qb}{4.6}{25}) \\
  & +\Big\{ 
 \frac13
 \Dsa{1} \Gomqc + \frac12\suml_{h=2,3} 
  \Dsa{h} \GoRba \\
 & \quad +\frac14 \suml_{r=1}^4 \Dsa{r}  \GoAc + \Dsa{1} \GoPab + \sumud \Dsa{r} \GoPba \Big\} \star \J (\logo{6}{Qc}{4.6}{25}) 
%     \end{align*} % \\
%  \begin{align*} 
\\
  &+\Big\{ \Dsa{1} \Gomea + \suml_{h=2,3}\Dsa{h} \GoWa  + (13)^*(\Dsa{3} \GoWb) +
  (13)^*(\Dsa{4} \GoWb) + 
  \\
 & \quad +
  \sumtc \big( \Dsa{p} \GoPab  
  + (13)^* \Dsa{p} \GoPac  \big)+ (13)^*(\Dsa{1} \GoWc) + (13)^*(\Dsa{2} \GoWc)  
  \\
 & \quad +\frac12\Big( 
    \Dsa{1} \GoRab  
     + (13)^* (\Dsa{1} \GoRac) + (13)^*( \Dsa{4} \GoRab) +\Dsa{4} \GoRac \Big)  
     \\
 & \quad +\sumud (13)^* \big (\Dsa{r} \GoS \big) +
  \frac14\big( (23)^*\Dsa{1} \GoCubo 
+ 
\Dsa{2} \GoCubo \\
 & \quad +(13)^* \Dsa{3} \GoCubo 
+(123)^* \Dsa{4} \GoCubo\big)
  \Big\} \star \J (\logo{6}{Eabc}{5}{3})\\ 
  &+ \Big\{ \Dsa{1} \Gomeb + \Dsa{1} \GoWa +  \Dsa{4} \GoWc  +  
  \frac12 \big( \sumud (13)^*(\Dsa{r} \GoRbc) 
  \\
 & \quad +
\sumtc \Dsa{p}\GoRbc  
 +  \Dsa{1}\GoRba +(13)^* \Dsa{4} \GoRba  \big) + \sumtc   \Dsa{p} \GoPba   
 \\
 & \quad 
  + \suml_{q=2,3,4}(13)^* \Dsa{q} \GoPbc   + (13)^*\Dsa{q} \GoXc  +\Dsa{3} \GoS 
  \\ 
 & \quad+ \frac14\big( \Dsa{1} \GoYc + (123)^*\Dsa{2}\GoYc + (23)^*\Dsa{3}\GoYc + (13)^*\Dsa{4}\GoYc \big)\big\}\star \J (\logo{6}{Ebac}{4.6}{25})
%   \end{align*}
%   \begin{align*}
\\   &+ \Big\{  \Dsa{1} \Gomec  +(13)^*( \Dsa{4} \GoWa)  +(13)^*(\Dsa{1} \GoWb) 
  \\
 & \quad 
  +  
  \frac12   \big( \sumud (13)^*(\Dsa{r} \GoRcb )  
 + 
\sumtc \Dsa{p}\GoRcb +  \Dsa{1}\GoRca 
 \\
 & \quad 
  +(13)^* \Dsa{4} \GoRca  \big) +\sumtc  \Dsa{p} \GoPca + \suml_{q=2,3,4}(13)^* \Dsa{q} \GoPcb 
  \\ &\quad+\sum_{q=2,3,4} (13)^*\Dsa{q} \GoXb +\sumtc  (13)^* \big( \Dsa{p} \GoS)\nonumber
  \\
 &\quad+\frac14\big(\Dsa{1} \GoYb 
   +(123)^* \Dsa{2}\GoYb + (23)^*\Dsa{3}\GoYb + (13)^*\Dsa{4}\GoYb \big) \big\} \star \J (\logo{6}{Ecab}{4.8}{25})\,.   \nonumber
\end{align*}  
\end{subequations}
\end{lem}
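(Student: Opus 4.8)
The plan is to unpack the definition of $Y\hp a_{s_a}$ supplied by Theorem \ref{thm:full_Ward} and eq. \eqref{eq:termY}, namely $Y\hp a_{s_a}[J,\bar J]=\sum_{l\ge 1}\sum_{\B:\,k(\B)=l}\frac{1}{|\Autc(\B)|}\langle\!\langle G\hp{2l}_{\B},\B\rangle\!\rangle_{s_a}$, with each boundary graph entering with the same weight $1/|\Autc(\B)|$ as in the free-energy expansion \eqref{expansion_Wgeneral}, and to truncate it at order $6$ in the sources. Since $\langle\!\langle G\hp{2l}_{\B},\B\rangle\!\rangle_{s_a}=\sum_{r=1}^l(\Delta^{\B}_{s_a,r}G\hp{2l}_{\B})\star\mathbb{J}(\B\ominus e_a^r)$ and $\B\ominus e_a^r$ has exactly $2(l-1)$ vertices, this summand is homogeneous of degree $2(l-1)$ in $J,\bar J$; hence only boundary graphs with $k(\B)=l\le 4$, i.e. with at most $8$ vertices, can contribute up to order $6$. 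Because $\im\,\partial_V=\dvGrph{3}$ for the $\phi_3^4$-theory, the relevant $\B$ are precisely the (connected or disconnected) $3$-coloured graphs on $\le 8$ vertices, whose connected constituents — and their $\Autc$-groups — are listed in Figure \ref{fig:graphs3}.

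First I would, for each such $\B$ and each white vertex $v^r$ (ordered left-to-right, Remark \ref{rem:ordering_vertices}), perform the edge removal $\B\ominus e_a^r$ by the stated recipe (delete the endpoints of the $a$-edge at $v^r$, delete the remaining common edges $I(e_a^r)$, reglue the broken edges colourwise) and identify the result up to isomorphism with a graph on $\le 6$ vertices from Figure \ref{fig:graphs3}; simultaneously I read off $\Delta^{\B}_{s_a,r}$ from the definition of that operator (its $\mathbf z^r$ and $\kappa$ data, see eq. \eqref{def:kappa} and Figure \ref{fig:colapso_2}), the dummy summation being over $q_h$ with $h\in I(e_a^r)\setminus\{a\}$ and hence frequently empty. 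The delicate point is that $\B\ominus e_a^r$ inherits a vertex-ordering from $\B$ that need not match the left-to-right convention; whenever it differs by a permutation $\sigma$, the attached function must be replaced by $\sigma^*$ of it, exactly as in the discussion around eq. \eqref{eq:orderexample}. This is the origin of the $(12)^*$, $(13)^*$, $(23)^*$, $(123)^*$ corrections appearing throughout the statement.

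Then I would collect all contributions according to the source cycle $\mathbb{J}(\Cc)$ they carry, with $\Cc$ ranging over $3$-coloured graphs on $0,2,4,6$ vertices (the empty graph, $\melon$, $\melon^{\sqcup 2}$, $\V_1,\V_2,\V_3$, $\melon^{\sqcup 3}$, $\melon\sqcup\V_i$, $\kc$, $Q_i$, and $E_{abc},E_{bac},E_{cab}$), and rewrite the colour sums $\sum_i,\sum_{i\ne a}$ using $b=b(a)=\min(\{1,2,3\}\setminus\{a\})$, $c=c(a)=\max(\{1,2,3\}\setminus\{a\})$. The numerical prefactor in front of each group equals $1/|\Autc(\B)|$ times the number of white vertices $v^r$ that produce (after the $\sigma^*$-correction) the same term — e.g. a $\frac13\sum_{r=1}^3$ from $|\Autc|=3$ graphs such as $Q_a$ or $\kc$, a $\frac12\sum_{r=1,2}$ from $\melon^{\sqcup 2}$ or $\V_a$, a $\frac1{3!}\sum_{r=1}^3$ from $\melon^{\sqcup 3}$, and single-$r$ unprefixed terms from asymmetric graphs such as $E_{abc}$. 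Matching each group to one of the cycles above reproduces the asserted formula; the full (long but routine) computation is relegated to Appendix \ref{sec:proofoflemma}.

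The main obstacle is not conceptual but the sheer volume of bookkeeping on the $8$-vertex boundary graphs (the whole $k=4$ row of Figure \ref{fig:graphs3}): they feed the longest groups — those attached to $\mathbb{J}$ of $6$-vertex graphs — and for many of them $\B\ominus e_a^r$ lands on a graph whose inherited vertex order disagrees with the convention, so the correct permutation correction must be tracked, and one must avoid double-counting across the $r$-sum when $\B$ has non-trivial $\Autc$. Getting every one of these $\sigma^*$-factors and symmetry weights right is the genuinely error-prone part; everything else is a mechanical application of the definitions and of Figure \ref{fig:graphs3}.
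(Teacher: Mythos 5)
Your proposal matches the paper's own proof in Appendix \ref{sec:proofoflemma}: there the authors likewise expand $Y\hp{a}_{s_a}$ as the sum of the weighted pairings $\frac{1}{|\Autc(\B)|}\langle\!\langle G\hp{2l}_\B,\B\rangle\!\rangle_{s_a}$ over all boundary graphs with at most $8$ vertices, perform each edge removal $\B\ominus e_a^r$, insert the reordering corrections $(12)^*,(13)^*,(23)^*,(123)^*$ exactly as you describe, and finally collect terms by the source cycle $\mathbb{J}(\Cc)$. Your truncation argument ($k(\B)\le 4$ since $\B\ominus e_a^r$ has $2(k-1)$ vertices) and your identification of the prefactors with the coloured-automorphism weights are both the ones the paper uses, so the approach is essentially identical.
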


\begin{proof}
 See Appendix \ref{sec:proofoflemma}.
\end{proof}

Since we already derived the 2-point equation in 
\cite{fullward}, we immediately proceed with the 
higher-point functions. 

 \subsection{Four-point function SDEs for the $\phi^4_{3}$-theory} \label{sec:fourpoint}
 
We can use the colour symmetry in order to 
write down the equations for $\Gdos$ and $\Gtres$ 
from that for $\Guno$, which we now compute.
We will obtain, as stated by the theorem of previous section, the SDE for $\Guno$.
\par 
% \section{Six-point function equations for rank-$3$ theories}
We need first, to compute the functions $\mathfrak{f}_{\vunito}\hp a$
for each colour $a$. To this end, Lemma \ref{thm:ordersix} 
is used:
\begin{align*}
\mathfrak{f}_{\vunito}\hp 1 &=\frac13 \sum_{r=1}^3  (\Delta_{x_1,r} \Gsqu)+ 
\frac13 \sum_{r=1}^3  (\Delta_{x_1,r}  \Gskthree)  
+ (\Delta_{x_1,3} \Gsed + \Delta_{x_1,3} \Gset ) +
\Delta_{x_1,1} G\hp 6 _{|\raisebox{-.33\height}{\includegraphics[height=2ex]{graphs/3/Item2_Melon.pdf}}|
\raisebox{-.4\height}{\includegraphics[width=3ex]{graphs/3/Item4_V1v.pdf}}|}  \,,
\\
\mathfrak{f}_{\vunito}\hp 2&= \frac13\sum_{r=1}^3(\Delta_{y_2,r} \Gsqu)  
+\Delta_{y_2,3} \Gsed 
+\sum_{r=1}^2 (\Delta_{y_2,r} \Gset) 
+\frac{1}{2} (\Delta_{y_2,1} 
G\hp 6 _{|\raisebox{-.33\height}{\includegraphics[height=2ex]{graphs/3/Item2_Melon.pdf}}|
\raisebox{-.4\height}{\includegraphics[width=3ex]{graphs/3/Item4_V1v.pdf}}|} )\,,
\\
\mathfrak{f}_{\vunito}\hp 3&=
 \frac13\sum_{r=1}^3(\Delta_{y_3,r} \Gsqu)
 +\Delta_{y_2,3} \Gset 
+\sum_{r=1}^2 (\Delta_{y_3,r} \Gsed) 
+\frac{1}{2} (\Delta_{y_3,1} 
G\hp 6 _{|\raisebox{-.33\height}{\includegraphics[height=2ex]{graphs/3/Item2_Melon.pdf}}|
\raisebox{-.4\height}{\includegraphics[width=3ex]{graphs/3/Item4_V1v.pdf}}|} )\,.
\end{align*}
Also notice that 
\begin{align*}
\varsigma_1(\vuno;1,2)= \meloncito\sqcup \meloncito, \quad  \varsigma_2(\vuno;1,2)= \vtres, \quad \varsigma_3(\vuno;1,2)=\vdos\,.
\end{align*}
The derivatives with respect to these, evaluated in $\Xb=(\xb,\yb)$ are then
\[
\Gmelon(\xb) \cdot \Gmelon(\yb) + \Gcmm (\xb,\yb)\,,\quad
\Gtres  (\xb,\yb) 
\,,\quad \mbox{and } 
\Gdos  (\xb,\yb) 
\,.
\]
respectively. Letting $\sb=(x_1,y_2,y_3)$
and $\tb=(y_1,x_2,x_3)$ and using Theorem \ref{thm:SDEs}, one obtains
\begin{align} \label{eq:fourSDE}
&\bigg(1+\frac{2\lambda}{E_{\sb}}\sum_a\sum_{\mathbf q_{\hat a}} \Gmelon(s_a,\mathbf q_{\hat a})\bigg)\cdot \Guno(\xb,\yb) 
  \\
& =\frac{(-2\lambda)}{E_{\sb}} \suml_{a=1}^3 \Bigg\{  \nonumber
\suml_{\hat \sigma \in \Autc(\vunito)} \sigma^* \mathfrak{f}\hp{a}_{\vunito}(\Xb) 
\\
\nonumber 
& \qquad\qquad \qquad \quad+ 
\suml_{\rho>1 }  \frac{Z_0\inv}{  E(y^\rho_a,s_a)} \dervpar{Z[J,J]}{\varsigma_a(\vunito ;1,\rho) }  (\Xb)-
\suml_{\rho>1 }  \frac{Z_0\inv}{  E(y^\rho_a,s_a)} \dervpar{Z[J,J]}{\varsigma_a(\vunito ;1,\rho) }  (\Xb|_{s_a\to y_a^\rho})
% \label{eq:sixSDEs}
\\
& \nonumber
\qquad\qquad \qquad- \suml_{b_a}\frac{1}{E(s_a,b_a)} \big[ \Guno(\Xb) - \Guno(\Xb|_{s_a\to b_a}) \big] \Bigg\}  
\\
& \nonumber
=\frac{(-2\lambda)}{E_{x_1y_2y_3}} \bigg\{   ( \delta_{\mathbf u}^{\mathbf x}\delta_{\mathbf v}^{\mathbf y}+
\delta_{\mathbf u}^{\mathbf y}\delta_{\mathbf v}^{\mathbf x}) \cdot
\Big(
  \frac13 \suml_{r=1}^3  (\Delta_{x_1,r} \Gsqu)+ 
\frac13 \suml_{r=1}^3  (\Delta_{x_1,r}  \Gskthree)  
\\ &\qquad\qquad\qquad \nonumber 
+ (\Delta_{x_1,3} \Gsed + \Delta_{x_1,3} \Gset )  +
\Delta_{x_1,1} G\hp 6 _{|\raisebox{-.33\height}{\includegraphics[height=2ex]{graphs/3/Item2_Melon.pdf}}|
\raisebox{-.4\height}{\includegraphics[width=3ex]{graphs/3/Item4_V1v.pdf}}|} 
\\
 & \qquad\qquad \qquad+\frac13\suml_{r=1}^3(\Delta_{y_2,r} \Gsqu)   \nonumber 
+\Delta_{y_2,3} \Gsed 
+\suml_{r=1}^2 (\Delta_{y_2,r} \Gset) 
+\frac{1}{2} (\Delta_{y_2,1} 
G\hp 6 _{|\raisebox{-.33\height}{\includegraphics[height=2ex]{graphs/3/Item2_Melon.pdf}}|
\raisebox{-.4\height}{\includegraphics[width=3ex]{graphs/3/Item4_V1v.pdf}}|} ) 
\\
 & \qquad\qquad \qquad   +
 \frac13\suml_{r=1}^3(\Delta_{y_3,r} \Gsqu)
 +\Delta_{y_2,3} \Gset 
+\suml_{r=1}^2 (\Delta_{y_3,r} \Gsed) 
+\frac{1}{2} (\Delta_{y_3,1} 
G\hp 6 _{|\raisebox{-.33\height}{\includegraphics[height=2ex]{graphs/3/Item2_Melon.pdf}}|
\raisebox{-.4\height}{\includegraphics[width=3ex]{graphs/3/Item4_V1v.pdf}}|} \Big) (\mathbf u,\mathbf v)
\nonumber 
\\ &\qquad\qquad \qquad+\frac{1}{E(y_1,x_1)} \big[ (\Gmelon(\xb) - \Gmelon(y_1,x_2,x_3) )\cdot \Gmelon(\yb) \nonumber 
\\ 
& \qquad\qquad \qquad \qquad \qquad \qquad \qquad+ \Gcmm (\xb,\yb) -\Gcmm (y_1,x_2,x_3,\yb) \big] \nonumber \\
&\nonumber 
\qquad\qquad \qquad+\frac{ \Gtres(\xb,\yb)- \Gtres(\xb,y_1,x_2,y_3)}{E(x_2,y_2)} + \frac{ \Gdos(\xb,\yb)-
 \Gdos(\xb,y_1,y_2,x_3)}{E(x_3,y_3)} 
\\
& \qquad\qquad \qquad-\sum_{b_1} \frac{1}{E(x_1,b_1)} \big(\Guno(\xb,\yb)-\Guno(b_1,x_2,x_3,\yb)\big)\nonumber\\
&\qquad\qquad \qquad-\sum_{b_2} \frac{1}{E(y_2,b_2)} \big(\Guno(\xb,\yb)-\Guno(\xb,y_1,b_2,y_3)\big)\nonumber\\
&\qquad\qquad \qquad-\sum_{b_3} \frac{1}{E(y_3,b_3)} \big(\Guno(\xb,\yb)-\Guno(\xb,y_1,y_2,b_3)\big)\bigg\}\nonumber
\end{align}

\subsection[subsec]{\for{toc}{The Schwinger-Dyson equation for $G\hp6_{K_{\mtr c}{(3,3)}}$}\except{toc}{The Schwinger-Dyson equation for $G\hp6_{\protect \kthree}$}}
We now derive the whole set of six-point function equations for the $\phi_3^4$-theory. 
They hold for any model whose boundary sector is the whole of $ \dvGrph{3}$.
From Prop. \ref{thm:ordersix},  one can read off the $\mathfrak{f}\hp a_{\B}$ functions. \\

For the boundary graph  $\kthree$, one has, for each colour $a=1,2,3,$
\[
\mathfrak{f}\hp a_{\kthree}= \frac{1}{3} \Dsa{1} \Gomkthree + \Dsa{1} \GoXb + \Dsa{1} \GoXc + \sumtc \Dsa{p} \GoXa +\frac{1}{4} \sum_{u=1}^4 \Dsa{u} \GoYa 
.\] 
Departing from Theorem \ref{thm:SDEs}, this last very expression 
allows now for an explicit derivation of the 
equation for $\Gskthree$. Namely, for $\Xb=(\xb^1,\xb^2,\xb^3)=(\xb,\yb,\zb)$, and choosing 
$\sb=(x_1,y_2,z_3)$,
\begin{align} \nonumber
&\qquad \bigg(1+\frac{2\lambda}{E_{x_1y_2z_3}} \suml_{a=1}^3 \suml_{\mathbf q_{\hat a} }
\Gmelon (s_a,\mathbf q_{\hat a})\bigg)
\Gskthree(\Xb)
\\
& =\frac{(-2\lambda)}{E_{x_1y_2z_3}} \suml_{a=1}^3 \Bigg\{ 
\suml_{\hat \sigma \in \Autc(\kthree)} \sigma^* \mathfrak{f}\hp{a}_{\kthree}(\Xb) 
+ \suml_{\rho>1 }  \frac{Z_0\inv}{  E(y^\rho_a,s_a)}\bigg[ \dervpar{Z[J,J]}{\varsigma_a(\kthree ;1,\rho) }  (\Xb)
-
\dervpar{Z[J,J]}{\varsigma_a(\kthree ;1,\rho) }  (\Xb|_{x_a\to s_a}) \bigg]
\nonumber
\\
& \label{eq:sixSDEs}
\qquad\qquad \qquad- \suml_{b_a}\frac{1}{E(s_a,b_a)} \big[ \Gskthree(\Xb) - \Gskthree(\Xb|_{s_a\to b_a}) \big] \Bigg\}   
\end{align}
One finds:
\begin{align*}
&\quad \suml_{a=1}^3  
\suml_{\rho>1 }  \frac{Z_0\inv}{  E(y^\rho_a,s_a)} \dervpar{Z[J,J]}{\varsigma_a(\kthree ;1,\rho)(\Xb) }  \nonumber
\\
& = \bigg(
\frac{1}{E(y_1,x_1)}
(23)^*\Gseu+
\frac{1}{E(z_1,x_1)}(13)^*\Gseu+
\frac{1}{E(z_2,y_2)}
(123)^*\Gsed \\ & \,\,+ 
\frac{1}{E(x_2,y_2)}
(132)^*\Gsed + 
\frac{1}{E(x_3,z_3)}
(13)^*\Gset +
\frac{1}{E(y_3,z_3)}
(123)^* \Gset 
\bigg)(\Xb) \,,
 \end{align*}
where we recall that for a function of three arguments and 
$\sigma\in\Sym_3$, $\sigma^* f$ is given by Prop. \ref{thm:promedio_grupo}. \par
The meaning of the $ \mathfrak f\hp a_{\kthree}$ summed over colours $a$ and over the automorphism group
is 
\begin{align*} 
\phantom{=}& \suml_a \suml_{\hat\pi\in \Z_3} \pi^*\Big\{
\frac{1}{3}  
\Gomkthree + \Dsa{1} \GoXb + \Dsa{1} \GoXc 
\\
&\quad \qquad \quad +\sumtc \Dsa{p} \GoXa +\frac{1}{4} \sum_{u=1}^4 \Dsa{u} \GoYa \Big\} \\
=&  \suml_{\hat\pi\in \Z_3} \pi^*\bigg\{
 \frac13\sum_a\Dsa{1}\Gomkthree + \big( \Delta_{x_1,1} \GoXd+  \Delta_{x_1,1} \GoXt  +   \Delta_{y_2,1}  \GoXu +  \Delta_{y_2,1} \GoXt 
 \\& \qquad \qquad + \Delta_{z_3,1} \GoXd 
  +\Delta_{z_3,1} \GoXt +\sumtc \Delta_{x_1,p} \GoXu +  \Delta_{y_2,p} \GoXd  + \Delta_{z_3,p} \GoXt \big) \\
& \qquad\qquad + \frac14\suml_{u=1}^4  \big(\Delta_{x_1,u} \GoYu
 + \Delta_{y_2,u} \GoYd+  \Delta_{z_3,u} \GoYt\big) \bigg\}
\end{align*}
where $\Z_3$ is generated rotation of $\kthree$ by $2\pi/3$, that is $\hat \pi$
is the liftings of the identity, of $(123)$ and $(132)$ in $\Sym_3$. Finally, the 
difference-term is
\begin{align*}
\suml_a
\suml_{b_a}\frac{1}{E(s_a,b_a)} \big[ \Gskthree(\Xb) - \Gskthree(\Xb\big|_{s_a\to b_a}) \big]& = 
\suml_{b_1}\frac{1}{E(x_1,b_1)} \big[ \Gskthree(\Xb) - \Gskthree(b_1,x_2,x_3;\yb;\zb) \big] \\
& + \suml_{b_2}\frac{1}{E(y_2,b_2)} \big[ \Gskthree(\Xb) - \Gskthree(\xb; y_1,b_2,y_3;\zb)  \big]\\
& + \suml_{b_3}\frac{1}{E(z_3,b_3)} \big[ \Gskthree(\Xb) - \Gskthree(\xb;\yb; z_1,z_2,b_3) \big]
\end{align*}
Explicitly,
\begin{align}  \nonumber
\phantom{=}&\,\,-\bigg(\frac{E_{x_1y_2z_3}}{2\lambda}+ \suml_{m,n}\big[\Gmelon (x_1,m,n)+ \Gmelon (m,y_2,n)+\Gmelon(m,n,z_3)\big]\bigg)
\cdot
\Gskthree(\Xb)    \\
& =   \nonumber 
\bigg( 
\frac{1}{E(y_1,x_1)}
\big[\Gseu(\xb,\zb,\yb)-\Gseu(y_1,x_2,x_3,\zb,\yb)  \big]
  \\ & \quad \nonumber
+\frac{1}{E(z_1,x_1)}\big[\Gseu (\zb,\yb,\xb)- \Gseu (\zb,\yb,z_1,x_2,x_3) \big] 
\\ 
& \quad \nonumber +
\frac{1}{E(z_2,y_2)} \big[\Gsed (\zb,\xb,\yb)-\Gsed(\zb,\xb,y_1,z_2,y_3) \big] 
\\ & \,\,\,\,\,\,+ \nonumber 
\frac{1}{E(x_2,y_2)} 
\big[
\Gsed (\yb,\zb,\xb)- \Gsed (y_1,x_2,y_3,\zb,\xb)\big]
\\ & \,\,\,\,\,\,+ 
\frac{1}{E(x_3,z_3)}\nonumber  
\big[ \Gset (\zb,\yb,\xb) - \Gset(z_1,z_2,x_3,\yb,\xb)    \big]
\\ & \,\,\,\,\,\,+ 
\frac{1}{E(y_3,z_3)} \big[\Gset(\yb,\xb,\zb)-\Gset (\yb,\xb,z_1,z_2,y_3) \big]
\bigg) 
\\  & \quad +\suml_{\hat\pi\in \Z_3} \pi^*\Big[
 \frac13 \sum_a\Dsa{1}\Gomkthree    \label{eq:sixSDEexplicit}
\\ \nonumber
&
\quad  + \big( \Delta_{x_1,1} \GoXd+  \Delta_{x_1,1} \GoXt  +   \Delta_{y_2,1}  \GoXu +  \Delta_{y_2,1} \GoXt 
 + \Delta_{z_3,1} \GoXd +\Delta_{z_3,1} \GoXt
  \nonumber\\&  \,\,\,\, \qquad  
   +\sumtc \Delta_{x_1,p} \GoXu +  \Delta_{y_2,p} \GoXd  + \Delta_{z_3,p} \GoXt \big) 
  + \frac14 \suml_{u=1}^4  \big(\Delta_{x_1,u} \GoYu   \nonumber  \\ \nonumber
& \qquad    \,\,\,\, 
 + \Delta_{y_2,u} \GoYd+  \Delta_{z_3,u} \GoYt\big) \Big](\Xb) 
 - \suml_{b_1}\frac{1}{E(x_1,b_1)} \big[ \Gskthree(\Xb) - \Gskthree(b_1,x_2,x_3;\yb;\zb) \big] \\ \nonumber
& \hphantom{(\Xb)}\quad\qquad  \qquad  \qquad\qquad\qquad \qquad \qquad \,\,\,  
- \suml_{b_2}\frac{1}{E(y_2,b_2)} \big[ \Gskthree(\Xb) - \Gskthree(\xb; y_1,b_2,y_3;\zb)  \big]\\  \nonumber
& \hphantom{(\Xb)}\quad\qquad  \qquad \qquad\qquad\qquad \qquad \qquad \,\,\, 
- \suml_{b_3}\frac{1}{E(z_3,b_3)} \big[ \Gskthree(\Xb) - \Gskthree(\xb;\yb; z_1,z_2,b_3) \big] \,. 
\end{align}

\subsection[subsec]{\for{toc}{The Schwinger-Dyson equation for $G\hp6_{\mtc{Q}_a}$}\except{toc}{The Schwinger-Dyson equation for $G\hp 6_{\protect \icono{6}{Qa}{2.5}{2}}$}}

First, we compute $\mathfrak{b}_a$-terms for $\mathcal{Q}_a$, one by one are:
\begin{subequations} \label{eq:switchQ}
\begin{align}
 \frac{1}{Z_0}\dervpar{Z[J,J]}{\varsigma_1(\icono{6}{Q1s}{3}{2} ;1,2)}& =  \Guno (\xb,\zb) \cdot \Gmelon(\yb) + (12)^*\Gsmu (\Xb)\,, & 
  &  \tag{\ref{eq:switchQ}a}   \\ 
 \frac{1}{Z_0}\dervpar{Z[J,J]}{\varsigma_1(\icono{6}{Q1s}{3}{2} ;1,3)}&=  \Guno (\yb,\zb) \cdot \Gmelon(\xb) + \Gsmu(\Xb)  \,,&  \tag{\ref{eq:switchQ}b} \\
 \frac{1}{Z_0}\dervpar{Z[J,J]}{\varsigma_2(\icono{6}{Q1s}{3}{2} ;1,2)}& = (23)^*\Gsed (\Xb)\,, &   \hspace{-1cm}
 \hspace{-4cm} \frac{1}{Z_0}\dervpar{Z[J,J]}{\varsigma_2(\icono{6}{Q1s}{3}{2} ;1,2)} &= (13)^*\Gsed (\Xb) \,, \tag{\ref{eq:switchQ}c,d} \\
 \frac{1}{Z_0}\dervpar{Z[J,J]}{\varsigma_3(\icono{6}{Q1s}{3}{2} ;1,3)}& =  (23)^* \Gset (\Xb)\,,  \hspace{-1cm}
 & \hspace{-4cm}    \frac{1}{Z_0}\dervpar{Z[J,J]}{\varsigma_3(\icono{6}{Q1s}{3}{2} ;1,3)}&= (13)^* \Gset (\Xb)\,.  \tag{\ref{eq:switchQ}e,f}
 \end{align}
\end{subequations}
 
We stepwise collect the $\mathfrak{f}\hp a_{\icono{6}{Q1s}{2}{2}}$-terms from the expansion in Prop. \ref{thm:ordersix}: 
% Notice that $\mathfrak{f}\hp 1_{\icono{6}{Q1s}{2}{2}}$ is the coefficient
% (function) of $\J(\icono{6}{Q1s}{2}{2})$ in the functional $Y_{x_1}\hp 1 [J,\bJ]$; 
% $\mathfrak{f}\hp 2_{\icono{6}{Q1s}{2}{2}}$ and setting $a=2,b=1,c=3$ there
% and thus reading the off the coefficient of $ \mathfrak{f}\hp 1_{\icono{6}{Qbs}{2}{2}}$;
% similarly, $\mathfrak{f}\hp 2_{\icono{6}{Q1s}{2}{2}}$ is read by setting $a=3,b=1,c=2$.

\begin{align}
  \mathfrak{f}\hp 1_{\icono{6}{Q1s}{2}{2}}\mbox{ is the coefficient of} & \mbox{ $\J(\icono{6}{Qa}{3.4}{25})$ in $ Y_{x_1}\hp 1 [J,\bJ] $ with $a=1,b=2,c=2$, } \\
  \mathfrak{f}\hp 2_{\icono{6}{Q1s}{2}{2}}\mbox{ is the coefficient of} & \mbox{ $\J(\icono{6}{Qb}{3.6}{25})$ in $ Y_{y_2}\hp 2 [J,\bJ]$ with $a=2,b=1,c=3$, and } \\
  \mathfrak{f}\hp 3_{\icono{6}{Q1s}{2}{2}}\mbox{ is the coefficient of} & \mbox{ $\J(\icono{6}{Qb}{3.6}{25})$ in $Y_{y_3}\hp 3 [J,\bJ]$ with $a=3,b=1,c=2$,} 
\end{align}
namely 
\begin{align*}
  \mathfrak{f}\hp 1_{\icono{6}{Q1s}{2}{2}} & =  \frac13   
 \Delta_{x_1,1} \Gomqu +
  \Delta_{x_1,1} \GoPdt + \Delta_{x_1,1} \GoPtd +  \frac14 \suml_{r=1}^4 \Delta_{x_1,r}    \GoAu+ \sumud \Delta_{x_1,r} \GoXu  
  \,,
  \\
  \mathfrak{f}\hp 2_{\icono{6}{Q1s}{2}{2}} & =   
 \frac13
 \Delta_{y_2,1} \Gomqu + \frac12\suml_{h=2,3} 
  \Delta_{y_2,h} \GoRtd +\frac14 \suml_{r=1}^4 \Delta_{y_2,r}  \GoAu +\Delta_{y_2,1} \GoPdt  \\
  & \qquad \qquad \hspace{8.9cm}+  \sumud \Delta_{y_2,r} \GoPtd \,,\\
  \mathfrak{f}\hp 3_{\icono{6}{Q1s}{2}{2}} & =  
 \frac13
 \Delta_{y_3,1} \Gomqu + \frac12\suml_{h=2,3} 
  \Delta_{y_3,h} \GoRdt +\frac14 \suml_{r=1}^4 \Delta_{y_3,r}  \GoAu + \Delta_{y_3,1} \GoPtd   \\
   & \qquad \qquad \hspace{8.9cm} +\sumud \Delta_{y_3,r} \GoPdt\,.
 \end{align*}
Explicitly,
\begin{align}  
\phantom{=}&\,\,-\bigg(\frac{E_{x_1y_2y_3}}{2\lambda}+ \suml_{m,n}\big[\Gmelon (x_1,m,n)+ \Gmelon (m,y_2,n)+\Gmelon(m,n,y_3)\big]\bigg)
\cdot
\Gsqu (\Xb)  \nonumber\\
& =  \frac{1}{E(y_1,x_1)} \big[ \Guno (\xb,\zb)-  \Guno (y_1,x_2,x_3,\zb) \big] \cdot \Gmelon(\yb) 
\nonumber 
\\ & 
+   \nonumber
 \frac{1}{E(z_1,x_1)}\big[ \Gmelon(\xb) - \Gmelon(z_1,x_2,x_3)\big] \cdot \Guno(\yb,\zb)
 \\ \nonumber
&  +   \frac{1}{E(y_1,x_1)} \big[ \Gsmu (\yb,\xb,\zb) -  \Gsmu (\yb,y_1,x_2,x_3,\zb)\big] 
  \\ \nonumber
&  
+ \frac{1}{E(z_1,x_1)}\big[ \Gsmu(\xb,\yb,\zb)-\Gsmu(z_1,x_2,x_3,\yb,\zb) \big] 
  \\ \nonumber
&  +  
\frac{1}{E(z_2,y_2)} \big[ \Gsed(\xb,\zb,\yb) - \Gsed(\xb,\zb,y_1,z_2,y_3)  \big]\nonumber \\
&  +\frac{1}{E(x_2,y_2)}\big[\Gsed (\zb,\yb,\xb)-\Gsed (\zb,y_1,x_2,y_3,\xb)\big]   \label{eq:sixSDEQ} 
  \\ \nonumber
&   
+ \frac{1}{E(z_3,y_3)} \big[ \Gset (\xb,\zb,\yb)-\Gset(\xb,\zb,y_1,y_2,z_3)\big]   \\ \nonumber
&  +   \frac{1}{E(x_3,y_3)}\big[  \Gset (\zb,\yb,\xb)-  \Gset (\zb,y_1,y_2,x_3,\xb)
\big]   \nonumber
 \nonumber  \\ \nonumber
&  +\suml_{\hat\pi\in \Z_3} \pi^*\Big[ 
\frac13   
 \Delta_{x_1,1} \Gomqu +
  \Delta_{x_1,1} \GoPdt + \Delta_{x_1,1} \GoPtd 
  \\ \nonumber
& + 
  \frac14 \suml_{r=1}^4 \Delta_{x_1,r}    \GoAu+ \sumud \Delta_{x_1,r} \GoXu  
%   \\ &
+
  \frac13\nonumber
 \Delta_{y_2,1} \Gomqu + \frac12\suml_{h=2,3} 
  \Delta_{y_2,h} \GoRtd  
  \\ & 
  \nonumber +
  \frac14 \suml_{r=1}^4 \Delta_{y_2,r}  \GoAu  +  \Delta_{y_2,1} \GoPdt + \sumud \Delta_{y_2,r} \GoPtd
%  \\ & 
 + 
\frac13
 \Delta_{y_3,1} \Gomqu 
 \\ & 
 +\frac12\suml_{h=2,3} 
  \Delta_{y_3,h} \GoRdt +\frac14 \suml_{r=1}^4 \Delta_{y_3,r}  \GoAu  +  \Delta_{y_3,1} \GoPtd + \sumud \Delta_{y_3,r} \GoPdt
\Big](\Xb) \nonumber
\\ \nonumber 
 & - \suml_{b_1}\frac{1}{E(x_1,b_1)} \big[ \Gsqu(\Xb) - \Gsqu(b_1,x_2,x_3;\yb;\zb) \big] \\ \nonumber
& - \suml_{b_2}\frac{1}{E(y_2,b_2)} \big[ \Gsqu(\Xb) - \Gsqu(\xb; y_1,b_2,y_3;\zb)  \big]\\  \nonumber
&- \suml_{b_3}\frac{1}{E(y_3,b_3)} \big[ \Gsqu(\Xb) - \Gsqu(\xb;y_1,y_2,b_3;\zb) \big] \,. \nonumber
\end{align}

 \subsection[subsec]{\for{toc}{The Schwinger-Dyson equation for $G\hp6_{\mathcal E_a}$} \except{toc}
 {The Schwinger-Dyson equation for $G\hp6_{\protect\icono{6}{Eipure}{2.2}{20}}$}}
Concerning the correlation function $\Gseu$, the terms with swapping 
black vertices are
\begin{align}
 \Gsqt,  \Gskthree , \Gsmt, \Gmelon\cdot \Gtres,  (132)^*\Gset, \Gsed ,(12)^* \Gsed  \,,
 \end{align}
 which need to be divided by differences of propagators. We now 
find the rest of the terms. 
Since $\Autc(\icono{6}{E123}{3.2}{23})$ is trivial,
the contribution of the $\icono{6}{E123}{3.2}{23}$-derivative on $\sum_a\ysa$ is given by 
the sum $\sum_a\mathfrak f \hp a_{\icono{6}{E123}{2}{38}}$
where, for each colour $a$:
% \begin{subequations}
\begin{align*}
\mathfrak f \hp 1_{\icono{6}{E123}{2}{38}} & =  \Delta_{x_1,1} \Gomeu + \suml_{h=2,3}\Delta_{x_1,h} \GoWu  + (13)^*(\Delta_{x_1,3} \GoWd) +
  (13)^*(\Delta_{x_1,4} \GoWd)  \\  & +\sumtc \big( \Delta_{x_1,p} \GoPud  
   + (13)^* \Delta_{x_1,p} \GoPut  \big)+ (13)^*(\Delta_{x_1,1} \GoWt) + (13)^*(\Delta_{x_1,2} \GoWt)  
   \\  &
   + \frac12\Big( 
    \Delta_{x_1,1} \GoRud  
   + (13)^* (\Delta_{x_1,1} \GoRut) + (13)^*( \Delta_{x_1,4} \GoRut) 
  +\Delta_{x_1,4} \GoRut \Big)   
  \\ & +\sumud (13)^* \big (\Delta_{x_1,r} \GoSN \big)
   +  \frac14\Big( (23)^*\Delta_{x_1,1} \GoCuboN 
+ 
\Delta_{x_1,2} \GoCuboN
+(13)^* \Delta_{x_1,3} \GoCuboN  \\
& \hspace{10.1cm}
+(123)^* \Delta_{x_1,4} \GoCuboN \Big)  \\
\mathfrak f \hp 2 _{\icono{6}{E123}{2}{38}} & = 
\Delta_{y_2,1} \Gomeu + \Delta_{y_2,1} \GoWt +  \Delta_{y_2,4} \GoWt  +  
  \frac12 \big( \sumud (13)^*(\Delta_{y_2,r} \GoRut)   
\\  
  &
  +\sumtc \Delta_{y_2,p}\GoRut  
 +  \Delta_{y_2,1}\GoRud +(13)^* \Delta_{y_2,4} \GoRud  \big)   \\
 & + \sumtc   \Delta_{y_2,p} \GoPud
  + \suml_{q=2,3,4}(13)^* \Delta_{y_2,q} \GoPut   + (13)^*\Delta_{y_2,q} \GoXt  
  \\
 &+\Delta_{y_2,3} \GoSN  + \frac14\big( \Delta_{y_2,1} \GoYt + (123)^*\Delta_{y_2,2}\GoYt + (23)^*\Delta_{y_2,3}\GoYt
 + (13)^*\Delta_{y_2,4}\GoYt \big)  \\
 \mathfrak f \hp 3 _{\icono{6}{E132}{2}{38}}&=   (13)^*\Delta_{x_3,1} \Gomeu 
     + \Delta_{x_3,1} \GoWt +  \Delta_{x_3,4} \GoWd  +  
  \frac12 \big( \sumud (13)^*(\Delta_{x_3,r} \GoRud)
 \\  
  &  +
\sumtc \Delta_{x_3,p}\GoRud  +  \Delta_{x_3,1}\GoRut +(13)^* \Delta_{x_3,4} \GoRut  \big) + \sumtc   \Delta_{x_3,p} \GoPut   
  \\ & + \suml_{q=2,3,4}(13)^* \Delta_{x_3,q} \GoPud   + (13)^*\Delta_{x_3,q} \GoXd  
  +\Delta_{x_3,3} \GoSN  
  \\
 &+ \frac14\big( \Delta_{x_3,1} \GoYd + (123)^*\Delta_{x_3,2}\GoYd + (23)^*\Delta_{x_3,3}\GoYd 
 + (13)^*\Delta_{x_3,4}\GoYd \big)  
 \end{align*}
 Here $\sb=(x_1,y_2,x_3)$.  Therefore, the explicit equation is 
\begin{align}  
\phantom{=}&\,-\bigg(\frac{E_{x_1y_2x_3}}{2\lambda}+ 
\suml_{m,n}\big[\Gmelon (x_1,m,n)+ \Gmelon (m,y_2,n)+\Gmelon(m,n,x_3)\big]\bigg)
\cdot
\Gseu (\Xb) \label{eq:sixSDEE}     \\
& =   \nonumber
\frac{1}{E(y_1,x_1)} \big[
\Gsqt(\xb,\yb,\zb)- \Gsqt(y_1,x_2,x_3,\yb,\zb)\big] 
\nonumber   \\ \nonumber
&
 +\frac{1}{E(z_1,x_1)}\big[
 \Gskthree (\xb,\yb,\zb)-\Gskthree (z_1,x_2,x_3,\yb,\zb) \big] 
\nonumber   \\ \nonumber
&
 +
 \frac{1}{E(x_2,y_2)}
  \big[ \Gsmt (\xb,\yb,\zb)  - \Gsmt (\xb,y_1,x_2,y_3,\zb) \big] \nonumber   \\ \nonumber
&
 +
 \frac{1}{E(z_2,y_2)} \big[\Gset  (\yb,\zb,\xb)- \Gset  (y_1,z_2,y_3,\zb,\xb) \big]
 \\ &  + \nonumber
 \frac{1}{E(z_3,x_3)} \big[
 \Gsed (\xb,\yb,\zb)  - \Gsed (x_1,x_2,z_3,\yb,\zb)   \big] \\ &  + \nonumber 
 \frac{1}{E(y_3,x_3)}  \big[\Gsed  (\yb,\xb,\zb)-\Gsed(\yb,x_1,x_2,y_3,\zb)   \big]
 \\ &  + \nonumber 
 \frac{1}{E(x_2,y_2)}
\big(\Gtres (\yb,\mathbf z) -\Gtres (y_1,x_2,y_3,\mathbf z) \big) \cdot \Gmelon (\xb) 
\nonumber   \\ \nonumber
&
 + \bigg\{ 
 \Delta_{x_1,1} \Gomeu + \suml_{h=2,3}\Delta_{x_1,h} \GoWu  + (13)^*(\Delta_{x_1,3} \GoWd) +
  (13)^*(\Delta_{x_1,4} \GoWd)  \\  \nonumber
  & + \sumtc \big( \Delta_{x_1,p} \GoPud  + (13)^* \Delta_{x_1,p} \GoPut  \big)+ (13)^*(\Delta_{x_1,1} \GoWt) + (13)^*(\Delta_{x_1,2} \GoWt)   \\  \nonumber
  & + \frac12\Big( 
    \Delta_{x_1,1} \GoRud  
    + (13)^* (\Delta_{x_1,1} \GoRut) + (13)^*( \Delta_{x_1,4} \GoRut) +\Delta_{x_1,4} \GoRut \Big)   
   \\  \nonumber
  & 
  +\sumud (13)^* \big (\Delta_{x_1,r} \GoSN \big) + \Delta_{y_2,1} \Gomeu + \Delta_{y_2,1} \GoWt +  \Delta_{y_2,4} \GoWt   
  \\ \nonumber& +  \frac14\big( (23)^*\Delta_{x_1,1} \GoCuboN 
+ 
\Delta_{x_1,2} \GoCuboN
+(13)^* \Delta_{x_1,3} \GoCuboN 
+(123)^* \Delta_{x_1,4} \GoCuboN \big)
  \nonumber  \\
  \nonumber  &+   
  \frac12 \big( \sumud (13)^*(\Delta_{y_2,r} \GoRut)  +
\sumtc \Delta_{y_2,p}\GoRut + \Delta_{y_2,1}\GoRud 
\\
 \nonumber 
  &+  (13)^* \Delta_{y_2,4} \GoRud  \big) + \sumtc   \Delta_{y_2,p} \GoPud
  + \suml_{q=2,3,4}(13)^* \Delta_{y_2,q} \GoPut   + (13)^*\Delta_{y_2,q} \GoXt  
\\ \nonumber
 &+\Delta_{y_2,3} \GoSN  + \frac14\big( \Delta_{y_2,1} \GoYt + (123)^*\Delta_{y_2,2}\GoYt + (23)^*\Delta_{y_2,3}\GoYt
 + (13)^*\Delta_{y_2,4}\GoYt \big)  \\ \nonumber
 &   + (13)^* \Big[ (13)^*\Delta_{x_3,1} \Gomeu 
     + \Delta_{x_3,1} \GoWt +  \Delta_{x_3,4} \GoWd  +  
  \frac12 \big( \sumud (13)^*(\Delta_{x_3,r} \GoRud)
%   \end{align} 
%  \begin{align}   
 \\ \nonumber
  &  +
\sumtc \Delta_{x_3,p}\GoRud  +  \Delta_{x_3,1}\GoRut +(13)^* \Delta_{x_3,4} \GoRut  \big)
  \\  \nonumber &  + \sumtc   \Delta_{x_3,p} \GoPut   + \suml_{q=2,3,4}(13)^* \Delta_{x_3,q} \GoPud   + (13)^*\Delta_{x_3,q} \GoXd  
  +\Delta_{x_3,3} \GoSN  
  \\ \nonumber
 &+ \frac14\big( \Delta_{x_3,1} \GoYd + (123)^*\Delta_{x_3,2}\GoYd + (23)^*\Delta_{x_3,3}\GoYd 
 + (13)^*\Delta_{x_3,4}\GoYd \big) \Big] 
\bigg\}(\Xb) \nonumber
\\
 & - \suml_{b_1}\frac{1}{E(x_1,b_1)} \big[ \Gseu(\Xb) - \Gseu(b_1,x_2,x_3;\yb;\zb) \big] \nonumber
 \\
& - \suml_{b_2}\frac{1}{E(y_2,b_2)} \big[ \Gseu(\Xb) - \Gseu(\xb; y_1,b_2,y_3;\zb)  \big] \nonumber
\\  
&
- \suml_{b_3}\frac{1}{E(x_3,b_3)} \big[ \Gseu(\Xb) - \Gseu(x_1,x_2,b_3;\yb; \zb) \big] \,. \nonumber
\end{align}
% \vspace{12cm}

\section{Four-coloured graphs and melonic quartic rank-$4$ theories}\label{sec:rankfour}
We count the graphs with $2k$ vertices for $k<4$ in order to obtain 
free energy expansion until  $\mathcal O(J^4,\bJ^4)$.

 Figure \ref{fig:graphs4} summarizes 
 some properties of these graphs that
 the free energy expansion depends 
 on. Although they had been enumerated,
 neither had they been identified nor
 their symmetry factors (the order of the
 coloured automorphism groups) found.
 We can now expand the free energy 
 until sixth oder, which would in theory allow 
 the computation of 4-point function's equations 
starting from \eqref{expansion_Wgeneral}. 
For the $\phi^4_\mathsf{m}$-theory, the sum is over all
$\partial\,\fey_D(\phi^4_\mathsf{m})=\dvGrph{D}$, as shown in \cite{fullward}.  
For that model (and also for any other model 
containing those interaction vertices and thus the same boundary sector), 
the free energy $W_{D=4}[J,\bar J]$ 
to $\mathcal O(6)$ is then given by the following 
expansion, where $b=b_{a,c}=\min(\{1,2,3,4\}\setminus \{a,c\})$
and $d=d_{a,c}=\max(\{1,2,3,4\}\setminus \{a,c\})$ and
$(i_1(a),i_2(a),i_3(a))$ is the ordered set of $\{1,2,3,4\}
\setminus \{a\}$:  

\begin{align*}
% \label{expansion_congraficas4}
% correct factors
W_{D=4}[J,\bar J] & = 
G\hp 2 _{\includegraphics[height=2ex]{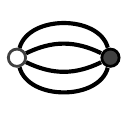}} \star \mathbb{J}\big(
\raisebox{-.34\height}{\includegraphics[height=3.77ex]{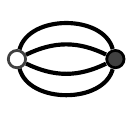}}\big) + 
\frac1{2!}
G\hp 4 _{|\raisebox{-.34\height}{\includegraphics[height=2ex]{graphs/4/Icono_melon4w.pdf}}|
\raisebox{-.34\height}{\includegraphics[height=2ex]{graphs/4/Icono_melon4w.pdf}}|} \star \mathbb{J}\big(
\raisebox{-.34\height}{\includegraphics[height=3.75ex]{graphs/4/Logo_melon4w.pdf}}|
\raisebox{-.34\height}{\includegraphics[height=3.75ex]{graphs/4/Logo_melon4w.pdf}}\big) 
+
\sum\limits_{j=1}^4 \frac1{2}G\hp4_{\raisebox{-.4\height}{\includegraphics[height=2.2ex]{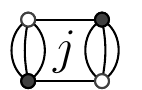}}} \star \mathbb{J}
\big(  
\raisebox{-.34\height}{\includegraphics[height=3.9ex]{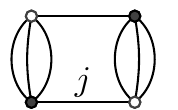}} \! \big)
\\
&\quad
+\sum\limits_{i<j} \frac1{2}G\hp4_{\raisebox{-.4\height}{\includegraphics[height=2.5ex]{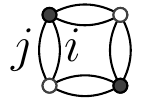}}} 
\! 
\star \mathbb{J}
\bigg(\!\!
\raisebox{-.4\height}{\includegraphics[height=5.7ex]{graphs/4/Logo_NG}}\!\!\!\bigg) \, 
+\frac{1}{3!}
G\hp{6} _{|
\raisebox{-.4\height}{\includegraphics[height=2ex]{graphs/4/Icono_melon4w.pdf}}|
\raisebox{-.4\height}{\includegraphics[height=2ex]{graphs/4/Icono_melon4w.pdf}}|
\raisebox{-.4\height}{\includegraphics[height=2ex]{graphs/4/Icono_melon4w.pdf}}|} 
\star
\mathbb{J}\big(
\raisebox{-.34\height}{\includegraphics[height=3.75ex]{graphs/4/Logo_melon4w.pdf}}|
\raisebox{-.34\height}{\includegraphics[height=3.75ex]{graphs/4/Logo_melon4w.pdf}}|
\raisebox{-.34\height}{\includegraphics[height=3.75ex]{graphs/4/Logo_melon4w.pdf}} 
\big)
\\
&  \quad +
\frac{1}{2} \sum\limits_{j=1}^4
G\hp{6} _{|
\raisebox{-.34\height}{\includegraphics[height=2ex]{graphs/4/Icono_melon4w.pdf}}|
\raisebox{-.24\height}{\includegraphics[height=2ex]{graphs/4/Icono_Vj.pdf}}|} 
\star
\mathbb{J}\Big(
\raisebox{-.34\height}{\includegraphics[height=3.5ex]{graphs/4/Logo_melon4w.pdf}}  \,|\,
\raisebox{-.3\height}{\includegraphics[height=4.25ex]{graphs/4/Logo_Vj.pdf}} 
\Big)
+
\frac{1}{2} \sum\limits_{i<j}
G\hp{6} _{|
\raisebox{-.4\height}{\includegraphics[height=2ex]{graphs/4/Icono_melon4w.pdf}}|
\raisebox{-.4\height}{\includegraphics[height=2ex]{graphs/4/Icono_Nj.pdf}}|} 
\star
\mathbb{J}\Big(
\raisebox{-.34\height}{\includegraphics[height=3.5ex]{graphs/4/Logo_melon4w.pdf}}|
\raisebox{-.4\height}{\includegraphics[height=5.5ex]{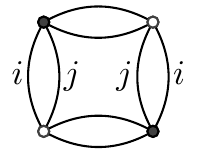}} \!\!
\Big)
 \\
& \quad
+  \sum\limits_{i<j}
G\hp{6}_{\raisebox{-.4\height}{\includegraphics[height=1.42ex]{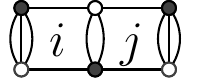}}
}
\star
\mathbb{J}\Big(\raisebox{-.34\height}{\includegraphics[height=3.9ex]{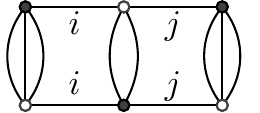}}
\!
\Big)
+  \sum\limits_{ i\neq j}  
G\hp{6}_{\raisebox{-.4\height}{\includegraphics[height=2.6ex]{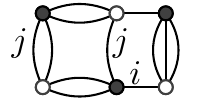}}
}
\star
\mathbb{J}\Big(\raisebox{-.4\height}{\includegraphics[height=6.5ex]{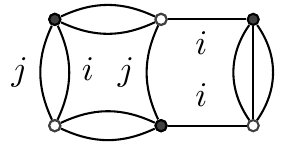}}
\Big)
\\
& \quad
+ \frac13 \sum\limits_{i<j}
G\hp{6}_{\raisebox{-.4\height}{\includegraphics[height=3.3ex]{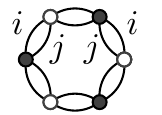}}
}
\star
\mathbb{J}\Big(\raisebox{-.5\height}{\includegraphics[height=8ex]{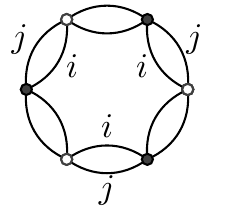}}\Big)
+ 
\frac13
\sum\limits_{i=1}^4
G\hp{6}_{\raisebox{-.4\height}{\includegraphics[height=2.6ex]{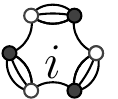}}
}
\star
\mathbb{J}\Big(\raisebox{-.4\height}{\includegraphics[height=6.5ex]{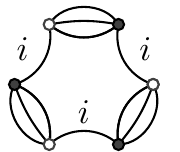}}\Big)
\\
& \quad
+  \sum\limits_{i<j} 
\sum\limits_{\substack{k\neq i \\ k \neq j\hphantom{}}}
G\hp{6}_{\raisebox{-.4\height}{\includegraphics[height=3.5ex]{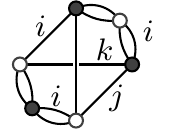}}
}
\star
\mathbb{J}\Big(\raisebox{-.4\height}{\includegraphics[height=7.3ex]{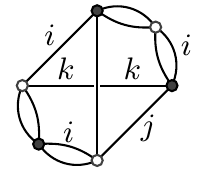}}\! \Big)
% \\
% & \quad
+  \frac13 \sum\limits_{i<j}
G\hp{6}_{\raisebox{-.4\height}{\includegraphics[height=3ex]{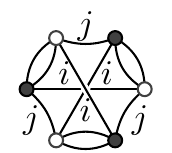}}
}
\star
\mathbb{J}\Big(\raisebox{-.4\height}{\includegraphics[height=7.1ex]{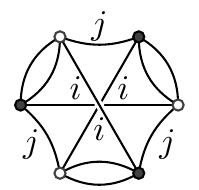}}\Big)
\\
& \quad
% +  \sum\limits_{c\neq a}
+ \sum\limits_{k=1}^4
\bigg\{
G\hp{6}_{\raisebox{-.4\height}{\!\!\includegraphics[height=3.6ex]{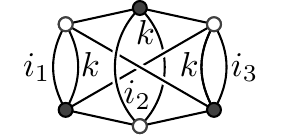}\!\!}
}
\star
\left.
\mathbb{J}\Big(\raisebox{-.43\height}{
\!\!\!\!\!\includegraphics[height=6.9ex]{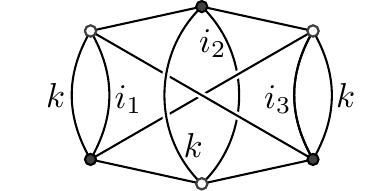}}\!\!\!\Big)
\bigg\}\right|_{\{i_1,i_2,i_3\}=\{k\}^{\mathrm c}} 
+O(J^4,\bar J^4)
\end{align*}  
It is convenient to single a particular colour $a$
we want to use the WTI for. Care has been 
taken in order to colour the graph's edges 
in non-redundant,  but univocal way. In
particular, edges are labeled strictly
by the closest letter next to them. 
\begin{align*}
% \label{expansion_congraficas4}
% correct factors
 & \quad W_{D=4}[J,\bar J]\\ 
 & = G\hp 2 _{\includegraphics[height=2ex]{graphs/4/Icono_melon4w.pdf}} \star \mathbb{J}\big(
\raisebox{-.34\height}{\includegraphics[height=3.77ex]{graphs/4/Logo_melon4w.pdf}}\big) + 
\frac1{2!}
G\hp 4 _{|\raisebox{-.34\height}{\includegraphics[height=2ex]{graphs/4/Icono_melon4w.pdf}}|
\raisebox{-.34\height}{\includegraphics[height=2ex]{graphs/4/Icono_melon4w.pdf}}|} \star \mathbb{J}\big(
\raisebox{-.34\height}{\includegraphics[height=3.75ex]{graphs/4/Logo_melon4w.pdf}}|
\raisebox{-.34\height}{\includegraphics[height=3.75ex]{graphs/4/Logo_melon4w.pdf}}\big) 
% \\
% \\  
% \nonumber
% & \,\,
%  \quad  
+
\sum\limits_{c\neq a} \frac1{2}G\hp4_{\raisebox{-.4\height}{\includegraphics[height=2.5ex]{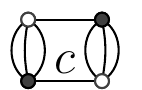}}} \star \mathbb{J}
\big(  
\raisebox{-.34\height}{\includegraphics[height=3ex]{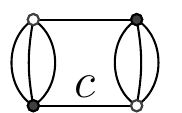}} \! \big)
+
\frac1{2}G\hp4_{\raisebox{-.4\height}{\includegraphics[height=2.5ex]{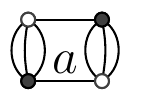}}} \star \mathbb{J}
\big(  
\raisebox{-.34\height}{\includegraphics[height=3ex]{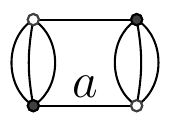}} \! \big)
\\
&\quad
+\sum\limits_{c\neq a} \frac1{2}G\hp4_{\raisebox{-.4\height}{\includegraphics[height=2.5ex]{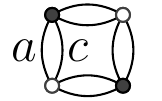}}} 
\! 
\star \mathbb{J}
\bigg(\!\!
\raisebox{-.4\height}{\includegraphics[height=5.7ex]{graphs/4/Logo_N}}\!\!\!\bigg) \, 
+\frac{1}{3!}
G\hp{6} _{|
\raisebox{-.4\height}{\includegraphics[height=2ex]{graphs/4/Icono_melon4w.pdf}}|
\raisebox{-.4\height}{\includegraphics[height=2ex]{graphs/4/Icono_melon4w.pdf}}|
\raisebox{-.4\height}{\includegraphics[height=2ex]{graphs/4/Icono_melon4w.pdf}}|} 
\star
\mathbb{J}\big(
\raisebox{-.34\height}{\includegraphics[height=3.75ex]{graphs/4/Logo_melon4w.pdf}}|
\raisebox{-.34\height}{\includegraphics[height=3.75ex]{graphs/4/Logo_melon4w.pdf}}|
\raisebox{-.34\height}{\includegraphics[height=3.75ex]{graphs/4/Logo_melon4w.pdf}} 
\big)
\\
& \quad + \frac12 
G\hp{6} _{|
\raisebox{-.34\height}{\includegraphics[height=2ex]{graphs/4/Icono_melon4w.pdf}}|
\raisebox{-.24\height}{\includegraphics[height=2ex]{graphs/4/Icono_Va.pdf}}|} 
\star
\mathbb{J}\Big(
\raisebox{-.34\height}{\includegraphics[height=3.75ex]{graphs/4/Logo_melon4w.pdf}} \,|\,
\raisebox{-.3\height}{\includegraphics[height=4.25ex]{graphs/4/Logo_Va.pdf}}  
\Big)+
\frac{1}{2} \sum\limits_{c\neq a}
G\hp{6} _{|
\raisebox{-.34\height}{\includegraphics[height=2ex]{graphs/4/Icono_melon4w.pdf}}|
\raisebox{-.24\height}{\includegraphics[height=2ex]{graphs/4/Icono_V.pdf}}|} 
\star
\mathbb{J}\Big(
\raisebox{-.34\height}{\includegraphics[height=3.5ex]{graphs/4/Logo_melon4w.pdf}}  \,|\,
\raisebox{-.3\height}{\includegraphics[height=4.25ex]{graphs/4/Logo_V.pdf}} 
\Big)
\\
& 
\quad +
\frac{1}{2} \sum\limits_{c\neq a}
G\hp{6} _{|
\raisebox{-.4\height}{\includegraphics[height=2ex]{graphs/4/Icono_melon4w.pdf}}|
\raisebox{-.4\height}{\includegraphics[height=2ex]{graphs/4/Icono_N.pdf}}|} 
\star
\mathbb{J}\Big(
\raisebox{-.34\height}{\includegraphics[height=3.5ex]{graphs/4/Logo_melon4w.pdf}}|
\raisebox{-.4\height}{\includegraphics[height=5.5ex]{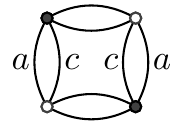}} \!\!
\Big)
+ \sum\limits_{c\neq a}
G\hp{6}_{\raisebox{-.4\height}{\includegraphics[height=2.3ex]{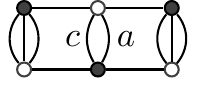}}
}\star
\mathbb{J}\Big(\raisebox{-.43\height}{\includegraphics[height=5ex]{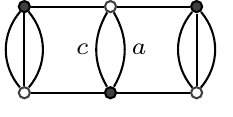}}\Big) \\
& \quad
+  \sum\limits_{c\neq a}
G\hp{6}_{\raisebox{-.4\height}{\includegraphics[height=2.3ex]{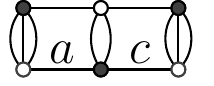}}
}
\star
\mathbb{J}\Big(\raisebox{-.37\height}{\includegraphics[height=4.15ex]{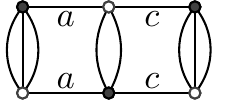}}
\!
\Big)
+  \sum\limits_{c\neq a}
G\hp{6}_{\raisebox{-.4\height}{\includegraphics[height=2.6ex]{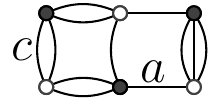}}
}
\star
\mathbb{J}\Big(\raisebox{-.4\height}{\includegraphics[height=6.5ex]{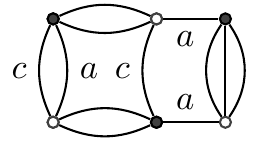}}
\Big)
\\
& \quad
+  \sum\limits_{c\neq a}
G\hp{6}_{\raisebox{-.4\height}{\includegraphics[height=3ex]{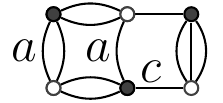}}
}
\star
\mathbb{J}\Big(\raisebox{-.4\height}{\includegraphics[height=6.5ex]{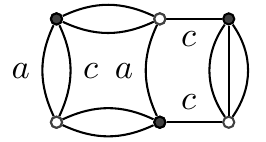}}\Big)
+  \sum\limits_{c\neq a}
 \sum\limits_{f=b,d}
G\hp{6}_{\raisebox{-.4\height}{\includegraphics[height=2.6ex]{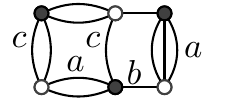}}
}
\star
\mathbb{J}\Big(\raisebox{-.4\height}{\includegraphics[height=6.3ex]{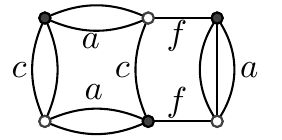}}\Big)
\\
& \quad
+ \frac13 \sum\limits_{c\neq a}
G\hp{6}_{\raisebox{-.4\height}{\includegraphics[height=3.3ex]{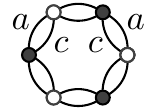}}
}
\star
\mathbb{J}\Big(\raisebox{-.5\height}{\includegraphics[height=8ex]{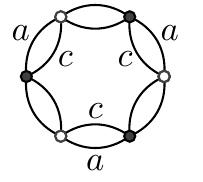}}\Big)
+ 
\frac13
G\hp{6}_{\raisebox{-.4\height}{\includegraphics[height=2.6ex]{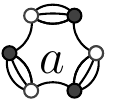}}
}
\star
\mathbb{J}\Big(\raisebox{-.4\height}{\includegraphics[height=6.5ex]{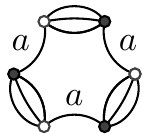}}\Big)+
\frac13
\sum\limits_{c\neq a}
G\hp{6}_{\raisebox{-.4\height}{\includegraphics[height=2.6ex]{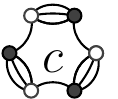}}
}
\star
\mathbb{J}\Big(\raisebox{-.4\height}{\includegraphics[height=6.5ex]{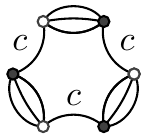}}\Big)
\\
& \quad
+  \sum\limits_{c\neq a} \sum\limits_{f=b,d}	
G\hp{6}_{\raisebox{-.4\height}{\includegraphics[height=3.5ex]{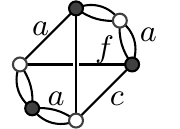}}
}
\star
\mathbb{J}\Big(\raisebox{-.4\height}{\includegraphics[height=7.3ex]{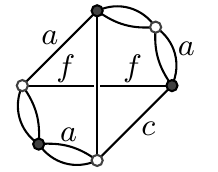}}\! \Big)
% \\
% & \quad
+  \frac13\sum\limits_{c\neq a}	
G\hp{6}_{\raisebox{-.4\height}{\includegraphics[height=3ex]{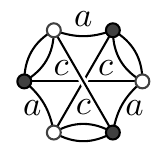}}
}
\star
\mathbb{J}\Big(\raisebox{-.4\height}{\includegraphics[height=7.1ex]{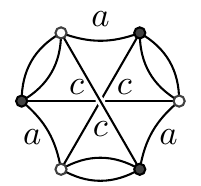}}\Big)
+ \frac13\sum\limits_{c\neq a}
G\hp{6}_{\raisebox{-.4\height}{\includegraphics[height=3.4ex]{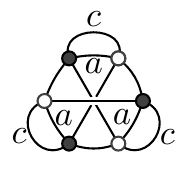}}
}
\star
\mathbb{J}\Big(\raisebox{-.5\height}{\includegraphics[height=7.1ex]{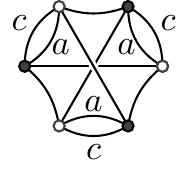}}\Big)
\\
& \quad
% +  \sum\limits_{c\neq a}
+ \sum\limits_{c\neq a}
\bigg\{
G\hp{6}_{\raisebox{-.4\height}{\!\!\includegraphics[height=3.6ex]{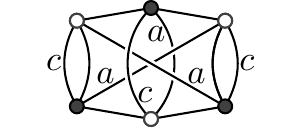}\!\!}
}
\star
\mathbb{J}\Big(\raisebox{-.43\height}{
\!\!\!\!\!\includegraphics[height=6.9ex]{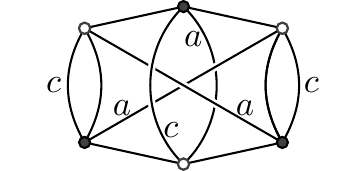}}\!\!\!\Big)
\bigg\}
+
G\hp{6}_{\!\!\!\raisebox{.31\height}{\includegraphics[height=3.7ex]{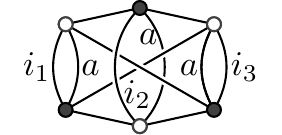}}} 
\star
\mathbb{J}\Big(\!\!\!\raisebox{-.4\height}{\includegraphics[height=6.9ex]{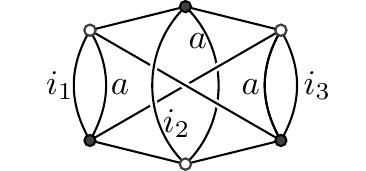}}\!\!\Big)
+O(J^4,\bar J^4)
\end{align*}  
 
For the sequel, we adopt the notation of writing entries of $\Z^D$ as unordered sets,
even though we mean them having a colour-ordering (by the subindices). Hence, the $D$-tuple
$(q_{p_1},q_{p_2},\ldots,q_{p_{D-1}},q_{p_D})$ actually means $(q_{r},q_{s},\ldots, q_{t},q_{u})$
where $r<s<\ldots<t<u$, being  $\{r,s,\ldots,t,u\}=\{p_i\}_{i=1}^D$ as sets. 

The simplified  $Y\hp a_{m_a}$-term 
given by \eqref{eq:Yterm_expansionOmega}
and by the Ward Takahashi identity after taking the $(m_an_a)$-entry 
of a generator of the $a$-th summand of 
$\mathsf{Lie}(\mathrm{U}(N)^D)$,
% for the melonic $\phi^4_4$-theory 
reads then: {
\begin{align}
%  Y\hp a_{m_a}[J,\bar J]= 
& 
\sum_{q_{i_1},q_{i_2},q_{i_3}} \nonumber
G\hp 2 _{\includegraphics[height=2ex]{graphs/4/Icono_melon4w.pdf}} (m_a,q_{i_1},q_{i_2},q_{i_3})
\\ 
& +   \label{eq:Yrank4}
\frac{1}{2}
\bigg\{
\sum\limits_{s=1,2}
\Dma{s}
G\hp 4 _{|\raisebox{-.34\height}{\includegraphics[height=2ex]{graphs/4/Icono_melon4w.pdf}}|
\raisebox{-.34\height}{\includegraphics[height=2ex]{graphs/4/Icono_melon4w.pdf}}|}+
\sum\limits_{i=1}^4 \sum\limits_{s=1,2}
\Dma{s}G\hp4_{\raisebox{-.4\height}{\includegraphics[height=2.5ex]{graphs/4/Icono_Vj.pdf}}} 
+
\sum\limits_{c\neq a}
\sum\limits_{s=1,2}
\Dma{s}
G\hp4_{\raisebox{-.4\height}{\includegraphics[height=2.5ex]{graphs/4/Icono_N.pdf}}} 
\bigg\} \star \mathbb{J}(\raisebox{-.3\height}{\includegraphics[height=3ex]{graphs/4/Logo_melon4w.pdf}})
\\ 
& +
\bigg\{
\frac{1}{3!}
\sum\limits_{r=1}^3
\Dma{r}
G\hp{6} _{|
\raisebox{-.4\height}{\includegraphics[height=2ex]{graphs/4/Icono_melon4w.pdf}}|
\raisebox{-.4\height}{\includegraphics[height=2ex]{graphs/4/Icono_melon4w.pdf}}|
\raisebox{-.4\height}{\includegraphics[height=2ex]{graphs/4/Icono_melon4w.pdf}}|} 
+
\frac{1}{2}
\sum\limits_{i=1}^4
\sum\limits_{s=2,3} 
\Dma{s}
G\hp{6}_{|
\raisebox{-.34\height}{\includegraphics[height=2ex]{graphs/4/Icono_melon4w.pdf}}|
\raisebox{-.24\height}{\includegraphics[height=2ex]{graphs/4/Icono_Vj.pdf}}|}
+\frac12 
\sum\limits_{c\neq a}\Big[
\sum\limits_{s=2,3}
\Dma{s}
G\hp{6} _{|
\raisebox{-.4\height}{\includegraphics[height=2ex]{graphs/4/Icono_melon4w.pdf}}|
\raisebox{-.4\height}{\includegraphics[height=2ex]{graphs/4/Icono_N.pdf}}|} 
\\ \nonumber
& 
\qquad +
\Dma{2}
G\hp{6}_{\raisebox{-.4\height}{\includegraphics[height=3ex]{graphs/4/Icono_E.pdf}}
}
+
\Dma{2}
G\hp{6}_{\raisebox{-.4\height}{\includegraphics[height=3ex]{graphs/4/Icono_Qp.pdf}}}
\Big]
\bigg\}
\star 
\mathbb{J}\big(
\raisebox{-.34\height}{\includegraphics[height=3.75ex]{graphs/4/Logo_melon4w.pdf}}|
\raisebox{-.34\height}{\includegraphics[height=3.75ex]{graphs/4/Logo_melon4w.pdf}}\big) 
\\ \nonumber
&
+
\bigg\{
\frac12 
\Dma{1}
G\hp{6} _{|
\raisebox{-.34\height}{\includegraphics[height=2ex]{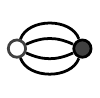}}|
\raisebox{-.24\height}{\includegraphics[height=2ex]{graphs/4/Icono_Va.pdf}}|} 
+ 
\frac{1}{3}
\sum\limits_{r=1}^3
\Dma{r }G\hp{6}_{\raisebox{-.4\height}{\includegraphics[height=2.6ex]{graphs/4/Icono_Ca.pdf}}}
+
\sum\limits_{c\neq a}
\bigg[
\Dma{1}
G\hp{6}_{\raisebox{-.4\height}{\includegraphics[height=2.6ex]{graphs/4/Icono_Q.pdf}}}
+
\frac13
\sum\limits_{r=1}^3
\Dma{r}
G\hp{6}_{\raisebox{-.4\height}{\includegraphics[height=3ex]{graphs/4/Icono_M.pdf}}}
\\ \nonumber
&
\qquad
+
\Dma{2}
G\hp{6}_{\raisebox{-.4\height}{\includegraphics[height=3.6ex]{graphs/4/Icono_Pp.pdf}}}
+
\Dma{3}
G\hp{6}_{\raisebox{-.4\height}{\includegraphics[height=2.3ex]{graphs/4/Icono_Ep.pdf}}}
\bigg]
\bigg\}
\star \mathbb{J}
\big(  
\raisebox{-.34\height}{\includegraphics[height=3ex]{graphs/4/Logo_Va.pdf}} \! \big)
\\ \nonumber
& 
+
\sum\limits_{\alpha=1}^3
\Big\{ 
\frac12 
\Dma{1}
G\hp{6} _{|
\raisebox{-.34\height}{\includegraphics[height=2ex]{graphs/4/Icono_melon4.pdf}}|
\raisebox{-.17\height}{\includegraphics[height=3ex]{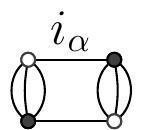}}\!|}
+
\Dma{\alpha}
G\hp{6}_{\!\!\!\raisebox{.31\height}{\includegraphics[height=3.7ex]{graphs/4/Icono_Pnew.pdf}}} 
\Big\}
\star \J 
\big(  
\raisebox{-.34\height}{\includegraphics[height=3ex]{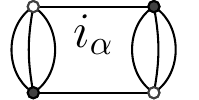}} \! 
\big)
\\ \nonumber
&
+
\sum\limits_{c\neq a}
\bigg\{ \Big[
\sum\limits_{s=1,2}
\Dma{s}
G\hp{6}_{\raisebox{-.4\height}{\includegraphics[height=2.3ex]{graphs/4/Icono_Ep.pdf}}}
+
\Dma{1}
G\hp{6}_{\raisebox{-.4\height}{\includegraphics[height=3ex]{graphs/4/Icono_Qp.pdf}}
}
+
\frac13
\sum\limits_{r=1}^3
\Dma{r}
G\hp{6}_{\raisebox{-.4\height}{\includegraphics[height=2.6ex]{graphs/4/Icono_C.pdf}}}
\\ \nonumber
&
\qquad \quad+
\Dma{1}
G\hp{6}_{\raisebox{-.4\height}{\includegraphics[height=3.5ex]{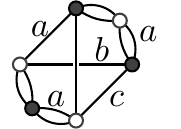}}
}
+
\Dma{1}
G\hp{6}_{\raisebox{-.4\height}{\includegraphics[height=3.5ex]{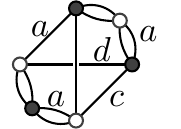}}
}
\Big]
\star
\mathbb{J}
\big(  
\raisebox{-.34\height}{\includegraphics[height=3ex]{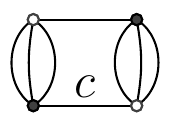}} \! 
\big)
\\ \nonumber
&
\qquad \quad 
+\Big[
\Dma{3} 
G\hp{6}_{\raisebox{-.4\height}{\includegraphics[height=2.3ex]{graphs/4/Icono_E.pdf}}}
+
\sum\limits_{s=1,2}
\Dma{s}
G\hp{6}_{\raisebox{-.4\height}{\includegraphics[height=3ex]{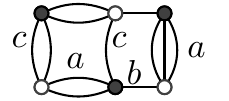}}}
\Big]
\star
\mathbb{J}
\big(  
\raisebox{-.34\height}{\includegraphics[height=3ex]{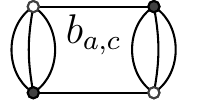}} \! 
\big)
\\ \nonumber
&
\qquad \quad 
+\Big[
\Dma{1} 
G\hp{6}_{\raisebox{-.4\height}{\includegraphics[height=2.3ex]{graphs/4/Icono_E.pdf}}}
+
\sum\limits_{s=1,2}
\Dma{s}
G\hp{6}_{\raisebox{-.4\height}{\includegraphics[height=3ex]{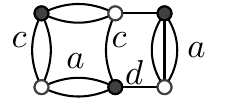}}}
\Big]
\star
\mathbb{J}
\big(  
\raisebox{-.34\height}{\includegraphics[height=4.0ex]{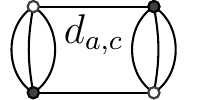}} \! 
\big)
\bigg\}
\\ \nonumber
&
+
\sum\limits_{c\neq a}
\bigg\{ \Big[
\frac12
\Dma{1}
G\hp{6} _{|
\raisebox{-.4\height}{\includegraphics[height=2ex]{graphs/4/Icono_melon4.pdf}}|
\raisebox{-.4\height}{\includegraphics[height=2ex]{graphs/4/Icono_N.pdf}}|} 
+
\sum\limits_{s=2,3}
\Dma{s}
G\hp{6}_{\raisebox{-.4\height}{\includegraphics[height=2.6ex]{graphs/4/Icono_Q.pdf}}}
+
\Dma{3}
G\hp{6}_{\raisebox{-.4\height}{\includegraphics[height=3ex]{graphs/4/Icono_Qp.pdf}}
}
\\ \nonumber
& \qquad\qquad 
+
\frac13
\sum\limits_{r=1}^3
\Dma{r}
G\hp{6}_{\raisebox{-.4\height}{\includegraphics[height=3ex]{graphs/4/Icono_L.pdf}}
}
+
\frac13
\sum\limits_{r=1}^3
\Dma{r}
G\hp{6}_{\raisebox{-.4\height}{\includegraphics[height=3.4ex]{graphs/4/Icono_Mp.pdf}}
}
+
\sum\limits_{\ell=1,3}
\Dma{\ell}
G\hp{6}_{\raisebox{-.4\height}{\includegraphics[height=3.6ex]{graphs/4/Icono_Pp.pdf}}
}
\Big]
\star
\mathbb{J}
\big(  
\raisebox{-.4\height}{\includegraphics[height=4.7ex]{graphs/4/Logo_N.pdf}} \! 
\big)
\\ \nonumber
&
\qquad \quad 
+\Big[
\Dma{3}
G\hp{6}_{\raisebox{-.4\height}{\includegraphics[height=3ex]{graphs/4/Icono_Qppd.pdf}}}
+
\Dma{2}
G\hp{6}_{\raisebox{-.4\height}{\includegraphics[height=3.5ex]{graphs/4/Icono_Dd.pdf}}
}
+
\Dma{3}
G\hp{6}_{\raisebox{-.4\height}{\includegraphics[height=3.5ex]{graphs/4/Icono_Db.pdf}}
}
\Big]
\star
\mathbb{J}
\big(  
\raisebox{-.4\height}{\includegraphics[height=5ex]{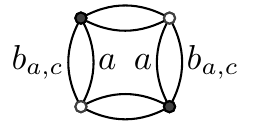}} \! 
\big)
\\ \nonumber
&
\qquad \quad 
+\Big[ 
\Dma{3}
G\hp{6}_{\raisebox{-.4\height}{\includegraphics[height=3ex]{graphs/4/Icono_Qppb.pdf}}}
+
\Dma{3}
G\hp{6}_{\raisebox{-.4\height}{\includegraphics[height=3.5ex]{graphs/4/Icono_Dd.pdf}}
}
+
\Dma{2}
G\hp{6}_{\raisebox{-.4\height}{\includegraphics[height=3.5ex]{graphs/4/Icono_Db.pdf}}
}
\Big]
\star
\mathbb{J}
\big(  
\raisebox{-.4\height}{\includegraphics[height=5.0ex]{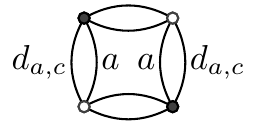}} \! 
\big)
\bigg\}+O(J^3,\bar J^3)\,\,.
\end{align}
}

 \subsection{Two-point equation for rank-$4$ theories}
 Also rank-$4$ theories are also an active topic \cite{LionniJohannes}.
We think it is instructive to derive directly, without using
 the theorem the SDE for the 2-point function:
\begin{align} 
G\hp 2_{\includegraphics[height=2.5ex]{graphs/4/Icono_melon4w.pdf}}(\mathbf a)
& 
% =\frac{1}{Z_0}\left\{\fder{^2 }{ \nonumber\bar 
% J_{\mathbf a} \delta J_{\mathbf a}} 
% \exp\left(-\Sint(\delta/\delta \bar J,\delta/\delta J)\right)
% \ee^{\sum_{\mathbf{q}} \bar J_{\mathbf q}
% E _{\mathbf q}\inv J_{\mathbf q}}\right\}_{J=\bJ=0}\\
% & 
=  \frac{1}{Z_0}\left\{\fder{ }{ \nonumber
J_{\mathbf a} } \left[
\exp\left(-\Sint(\delta/\delta \bar J,\delta/\delta J)\right)
\frac{1}{E_{\mathbf a}} J_{\mathbf a}\ee^{\sum_{\mathbf{q}} 
\bar J_{\mathbf q}
E _{\mathbf q}\inv J_{\mathbf q}}\right]\right\}_{J=\bJ=0}\\  
&=\frac{1}{Z_0E_{\mathbf a}}\left[ 
\exp\left(-\Sint(\delta/\delta \bar J,\delta/\delta J)\right)
\ee^{\sum_{\mathbf{q}} 
\bar J_{\mathbf q}
E _{\mathbf q}\inv J_{\mathbf q}}
\right]_{J=\bJ=0} \\
  & \hphantom{=} + \frac{1}{Z_0E_{\mathbf a}}\left(
\exp\left(-\Sint(\delta/\delta \bar J,\delta/\delta J)\right) 
J_{\mathbf a}
\fder{ }{J_{\mathbf a}}
\ee^{\sum_{\mathbf{q}} 
\bar J_{\mathbf q}
E _{\mathbf q}\inv J_{\mathbf q}}\nonumber
\right)_{J=\bJ=0} \\
& = \frac{1}{E_{\mathbf a}} + \frac{1}{Z_0} \frac{1}{E_{\mathbf a}} 
\left.\left(
\bar\phi_{\mathbf a} \dervpar{}{\bar \phi_{\mathbf a} 
} \left( -
\Sint(\phi,\bar\phi) 
\right)
\right)_{\phi^\flat\to\, \delta/\delta{J^\sharp}} Z[J,\bar J ]  \right|_{J=\bJ=0}\,, \nonumber
\end{align}

\begin{align*}
&\,\,\quad\left.\bar\phi_{\mathbf x}\dervpar{(- \Sint(\phi,\bar\phi))}{\bar\phi_{\mathbf x}}\right|_{{\phi^\flat\to \,  \delta/\delta J^\sharp }} & &
\\ 
 & =-2\lambda\Bigg\{    
\fder{ }{ J_{x_1x_2x_3x_4}} \sum\limits_{y_1}\fder{ }{\bJ_{y_1x_2x_3x_4}}
\sum\limits_{y_2,y_3,y_4}\fder{}{J_{y_1y_2y_3y_4}}\fder{ }{\bJ_{x_1y_2y_3y_4}}  \qquad ( \mathbf x\in\Z^4)
\\ &  \qquad \,\,\,\,+ \fder{ }{ J_{x_1x_2x_3x_4}} \sum\limits_{y_2}\fder{ }{\bJ_{x_1y_2x_3x_4}}
\sum\limits_{y_1,y_3,y_4}\fder{}{J_{y_1y_2y_3y_4}}\fder{ }{\bJ_{y_1x_2y_3y_4}}  
\\
&    \qquad \,\,\,\, +\fder{ }{ J_{x_1x_2x_3x_4}} \sum\limits_{y_3}\fder{ }{\bJ_{x_1x_2y_3x_4}}
\sum\limits_{y_1,y_2,y_4}\fder{}{J_{y_1y_2y_3y_4}}\fder{ }{\bJ_{y_1y_2x_3y_4}}    
\\
 &  \qquad \,\,\,\,+ \fder{ }{ J_{x_1x_2x_3x_4}} \sum\limits_{y_4}\fder{ }{\bJ_{x_1x_2x_3y_4}}
\sum\limits_{y_1,y_2,y_3}\fder{}{J_{y_1y_2y_3y_4}}\fder{ }{\bJ_{y_1y_2y_3x_4}}  
   \!\!      \left. \vphantom{\fder{ }{ J_{x_1x_2x_3x_4yy}} } \!  \Bigg\}Z[J,\bJ] \right|_{J=\bJ=0}\,. 
\end{align*}
One uses the WTI for the double derivatives of the form 
\[
\sum\limits_{y_2,y_3,y_4}\fder{ ^2Z[J,\bar J]}{J_{y_1y_2y_3y_4} \delta \bJ_{x_1y_2y_3y_4}} \,,\,\,\ldots\,\,,
 \sum\limits_{y_1,y_2,y_3}\fder{ ^2 Z[J,\bar J]}{J_{y_1y_2y_3y_4} \delta \bJ_{y_1y_2y_3x_4}}\,. 
\]
Then
\begin{align}
&\nonumber
\qquad\left.\bar\phi_{\mathbf x}\dervpar{(- \Sint(\phi,\bar\phi))}{\bar\phi_{\mathbf x}}\right|_{{\phi^\flat\to \,  \delta/\delta J^\sharp }}
 \\
 &=-2\lambda Z_0
 \Bigg\{
 \sum\limits_{a=1}^4 \bigg[
 \left.
 \fder{^2 Y\hp{a}_{x_a}[J,\bar J]}{J_{\mathbf x} \delta \bJ_{\mathbf x}}\right|_{J=\bJ=0}
  +Y\hp{a}_{x_a}[0,0] \cdot
  G\hp 2_{\includegraphics[height=2.5ex]{graphs/4/Icono_melon4w.pdf}}(\mathbf x)
  \\ \nonumber 
  & \quad\qquad\quad - \sum\limits_{y_a} \frac{1}{|x_a|^2-|y_a|^2} \big( 
   G\hp 2_{\includegraphics[height=2.5ex]{graphs/4/Icono_melon4w.pdf}}(\mathbf x)
   -
    G\hp 2_{\includegraphics[height=2.5ex]{graphs/4/Icono_melon4w.pdf}}(y_a, x_{i_1(a)},x_{i_2(a)},x_{i_3(a)})
  \big)
 \bigg]
 \Bigg\}\,\,.
\end{align}
Recall that $(q_i, q_j,q_{k},q_{l})$ implies an ordering of the entries,
that is, reordering so that 
$ q_s$ appears to the left of $q_r$
if and only if $s<r$, $s,r\in\{i,j,k,l\}=\{1,2,3,4\}$. 
Twice the double derivative appearing there, 
$ 2\delta^2 Y\hp a_{x_a}[J,\bJ]/\delta{J_{\mathbf x} \delta \bJ_{\mathbf x}}$, 
is given by

\begin{align*}
&
\sum\limits_{q_{i_1(a)},q_{i_2(a)},q_{i_3(a)}}  
\Big(
G\hp 4 _{|\raisebox{-.34\height}{\includegraphics[height=2ex]{graphs/4/Icono_melon4w.pdf}}|
\raisebox{-.34\height}{\includegraphics[height=2ex]{graphs/4/Icono_melon4w.pdf}}|}
(x_a,q_{i_1(a)},q_{i_2(a)},q_{i_3(a)}; \mathbf{x})
+
G\hp 4 _{|\raisebox{-.34\height}{\includegraphics[height=2ex]{graphs/4/Icono_melon4w.pdf}}|
\raisebox{-.34\height}{\includegraphics[height=2ex]{graphs/4/Icono_melon4w.pdf}}|}
( \mathbf{x};x_a,q_{i_1(a)},q_{i_2(a)},q_{i_3(a)}) 
\Big)
\\ &+
\sum\limits_{c\neq a}
\sum\limits_{q_{b(a,c)},q_{d(a,c)}}
\Big(
G\hp4_{\raisebox{-.4\height}{\includegraphics[height=2.5ex]{graphs/4/Icono_V.pdf}}}(x_a,x_c,q_b,q_d;\mathbf{x})
+
G\hp4_{\raisebox{-.4\height}{\includegraphics[height=2.5ex]{graphs/4/Icono_V.pdf}}}(\mathbf{x};x_a,x_c,q_b,q_d)
\Big)
\\ &+
2 G\hp4_{\raisebox{-.4\height}{\includegraphics[height=2.5ex]{graphs/4/Icono_Va.pdf}}}(\mathbf{x};\mathbf{x})+
\sum\limits_{c\neq a}
\sum\limits_{q_c}
\Big(
G\hp4_{\raisebox{-.4\height}{\includegraphics[height=2.5ex]{graphs/4/Icono_N.pdf}}} (x_a,x_b,q_c,x_d;\mathbf x)
+
G\hp4_{\raisebox{-.4\height}{\includegraphics[height=2.5ex]{graphs/4/Icono_N.pdf}}} (\mathbf x;x_a,x_b,q_c,x_d)
\Big)\,\,.
\end{align*}
Thus, since
% \[
$Y\hp a_{m_a}[0,\bar 0]
=  \sum_{q_{i_1},q_{i_2},q_{i_3}}
G\hp 2 _{\includegraphics[height=2ex]{graphs/4/Icono_melon4w.pdf}} (m_a,q_{i_1},q_{i_2},q_{i_3}), $
% \] 
one has
\begin{align*}
G\hp 2_{\includegraphics[height=2.5ex]{graphs/4/Icono_melon4w.pdf}}(\mathbf x)
&=\frac{1}{E_{\mathbf x}} + \frac{1}{Z_0} \frac{1}{E_{\mathbf x}} 
\left.\left(
\bar\phi_{\mathbf x} \dervpar{}{\bar \phi_{\mathbf x} 
} \left( -
\Sint(\phi,\bar\phi) 
\right)
\right)_{\phi^\flat\to\, \delta/\delta{J^\sharp}} Z[J,\bar J ]  \right|_{J=\bJ=0}  \,  \nonumber
\\
% & = 
% \frac{1}{E_{\mathbf x}} + \frac{1}{Z_0} \frac{1}{E_{\mathbf x}} 
% (-2\lambda Z_0	)
%  \Bigg\{
%   \sum\limits_{a=1}^4 \bigg[
%  \left.
%  Y\hp{a}_{x_a}[0,0] \cdot
%   G\hp 2_{\includegraphics[height=2.5ex]{graphs/4/Icono_melon4w.pdf}}(\mathbf x)
%   +
%  \fder{^2 Y\hp{a}_{x_a}[J,\bar J]}{J_{\mathbf x} \delta \bJ_{\mathbf x}}\right|_{J=\bJ=0}
%   \\ \nonumber 
%   & \!\quad\qquad+ \sum\limits_{y_a} \frac{1}{|x_a|^2-|y_a|^2} \big( 
%    G\hp 2_{\includegraphics[height=2.5ex]{graphs/4/Icono_melon4w.pdf}}(\mathbf x)
%    -
%     G\hp 2_{\includegraphics[height=2.5ex]{graphs/4/Icono_melon4w.pdf}}( y_a, x_{i_1(a)},x_{i_2(a)},x_{i_3(a)})
%   \big)
%  \bigg]
%  \Bigg\} 
%  \\
& = 
\frac{1}{E_{\mathbf x}} +  \frac{(-\lambda  )}{E_{\mathbf x}} 
 \Bigg\{
  \sum\limits_{a=1}^4 \bigg[
  2 \cdot G\hp 2_{\includegraphics[height=2.5ex]{graphs/4/Icono_melon4w.pdf}}(\mathbf x)
  \cdot
  \big(\sum\limits_{q_{i_1(a)} }
  \sum\limits_{q_{i_2(a)} }
  \sum\limits_{q_{i_3(a)}}
G\hp 2 _{\includegraphics[height=2ex]{graphs/4/Icono_melon4w.pdf}} (x_a,q_{i_1(a)},q_{i_2(a)},q_{i_3(a)})\big)
  \\
  &
  \quad
+\sum\limits_{q_{i_1(a)}} 
\sum\limits_{q_{i_2(a)}}
\sum\limits_{q_{i_3(a)}} 
\Big(
G\hp 4 _{|\raisebox{-.34\height}{\includegraphics[height=2ex]{graphs/4/Icono_melon4w.pdf}}|
\raisebox{-.34\height}{\includegraphics[height=2ex]{graphs/4/Icono_melon4w.pdf}}|}
(x_a,q_{i_1(a)},q_{i_2(a)},q_{i_3(a)}; \mathbf{x})
\\ &\qquad\qquad\qquad \qquad +
G\hp 4 _{|\raisebox{-.34\height}{\includegraphics[height=2ex]{graphs/4/Icono_melon4w.pdf}}|
\raisebox{-.34\height}{\includegraphics[height=2ex]{graphs/4/Icono_melon4w.pdf}}|}
( \mathbf{x};x_a,q_{i_1(a)},q_{i_2(a)},q_{i_3(a)}) 
\Big)
\\& 
\quad 
+
\sum\limits_{c\neq a}
\sum\limits_{q_{b(a,c)} }
\sum\limits_{q_{d(a,c)}}
\Big(
G\hp4_{\raisebox{-.4\height}{\includegraphics[height=2.5ex]{graphs/4/Icono_V.pdf}}}(x_a,x_c,q_b,q_d;\mathbf{x})
+
G\hp4_{\raisebox{-.4\height}{\includegraphics[height=2.5ex]{graphs/4/Icono_V.pdf}}}(\mathbf{x};x_a,x_c,q_b,q_d)
\Big)
\\&
\quad+
\sum\limits_{c\neq a}
\sum\limits_{q_c}
\Big(
G\hp4_{\raisebox{-.4\height}{\includegraphics[height=2.5ex]{graphs/4/Icono_N.pdf}}} (x_a,x_b,q_c,x_d;\mathbf x)
+
G\hp4_{\raisebox{-.4\height}{\includegraphics[height=2.5ex]{graphs/4/Icono_N.pdf}}} (\mathbf x;x_a,x_b,q_c,x_d)
\Big)+
2 G\hp4_{\raisebox{-.4\height}{\includegraphics[height=2.5ex]{graphs/4/Icono_Va.pdf}}}(\mathbf{x};\mathbf{x})
  \\ \nonumber 
  & \quad - \sum\limits_{y_a} \frac{2}{|x_a|^2-|y_a|^2} \big( 
   G\hp 2_{\includegraphics[height=2.5ex]{graphs/4/Icono_melon4w.pdf}}(\mathbf x)
   -
    G\hp 2_{\includegraphics[height=2.5ex]{graphs/4/Icono_melon4w.pdf}}(y_a, x_{i_1(a)},x_{i_2(a)},x_{i_3(a)})
  \big)
 \bigg]
 \Bigg\} 
\end{align*}  

 \subsection[subsec]{\for{toc}{Four-point equation for $G\hp4_{\V_i}$ in rank-$4$ theories}\except{toc}{Four-point equation for $G\hp4_{\protect \Icono{4}{V1v}{2.6}{53}}$ in rank-$4$ theories}}

Since $\V_1$ has $\Z_2$ as automorphism
group, according to Theorem \ref{thm:SDEs}, the equation 
satisfied by $\GGuno$ is the following: 
\begin{align}\label{eq:SDErank4melonic4pt}
&\qquad
\bigg(1+\frac{2\lambda}{E_{x_1,y_2,y_3,y_4}} \suml_{a=1}^4 \suml_{\mathbf q_{\hat a} }
\GGmelon (s_a,\mathbf q_{\hat a})\bigg)
\GGuno (\Xb) 
\\ \nonumber
& =\frac{(-2\lambda)}{E_{\sb}} \suml_{a=1}^4 \Bigg\{ 
\suml_{\hat\sigma\in\Z_2} \sigma^* \mathfrak{f}\hp{a}_{\Icono{4}{V1v}{2.6}{53}}(\Xb)+
\suml_{\rho>1 }  \frac{Z_0\inv}{ E(y^\rho_a,s_a)}
\bigg[
\dervpar{Z[J,J]}{\varsigma_a(\Vuno ;1,\rho) }(\Xb) -\dervpar{Z[J,J]}{\varsigma_a(\Vuno ;1,\rho) }(\Xb|_{s_a\to y^\rho_a})  \bigg]   
 \\ &\qquad\qquad \qquad
- \suml_{b_a}\frac{1}{E(s_a,b_a)} 
\big[ \GGuno (\Xb) - \GGuno (\Xb|_{s_a \to b_a}) \big] \Bigg\}   \nonumber
\end{align}
for $\xb,\yb\in\Z^4$, $\Xb=(\xb,\yb)$, and $\sb=(x_1,y_2,y_3,y_4)$. 
We write down first the term in square brackets in the RHS, which one finds trivially: 
\begin{align}
\suml_{a=1}^4 
\frac{Z_0\inv}{E(y^2_a,s_a)}\dervpar{Z[J,J]}{\varsigma_a(\Vuno ;1,2)(\Xb) } & =
\frac{1}{E(y_1,x_1)}\big(\GGcmm(\Xb)+ \GGmelon (\xb) \cdot \GGmelon (\yb) \big) \nonumber 
  \\ & \qquad
+\suml_{c\neq 1}\frac{1}{E(x_c,y_c)} G\hp4_{\Icono{4}{N1c}{2.5}{28}}(\Xb) \nonumber
\end{align}
Less so is to find $\sum_a \mathfrak{f}\hp{a}_{\Icono{4}{V1v}{2.6}{53}}$.
The contributions to $\mathfrak{f}\hp{a}_{\Icono{4}{V1v}{2.6}{53}}$, for 
fixed colour $a$, are all functions occurring in front of a $\J(\Logo{4}{V1}{2.9}{22})$-source
term. These functions come from coefficients of the following source terms in  \eqref{eq:Yrank4} 
 $\J 
\big(\raisebox{-.28\height}{\includegraphics[height=2.6ex]{graphs/4/Logo_Va.pdf}} \! \big)
$ and $\J \big(\raisebox{-.28\height}{\includegraphics[height=2.6ex]{graphs/4/Logo_Vc.pdf}} \! \big)$ 
for the values\footnote{Recall that $c$ ($c\neq a$) in that expansion \eqref{eq:Yrank4} is seen as running variable,
while $b<d$ are defined in terms of $a$ and $c$ by $\{a,b,c,d\}=\{1,2,3,4\}$. Also $i_1(a)<i_2(a)<i_3(a)$, and $\{i_1,i_2,i_3,a\}=\{1,2,3,4\}$.} 
when $a=1$ or $c=1$ (in the sum over $c$), but also form the following values:
\begin{align*} 
 \J 
\big(  
\raisebox{-.3 \height}{\includegraphics[height=3ex]{graphs/4/Logo_Vii.pdf}} \! 
\big), \mbox{ for }a \in \{2,3,4\}; \,\,
\mathbb{J}
\big(  
\raisebox{-.28\height}{\includegraphics[height=3ex]{graphs/4/Logo_Vb.pdf}} \! 
\big), \mbox{ for }(a,c)\in\{(2,3),(2,4),(3,2),(3,4),(4,2),(4,3)\}    
;\\
\mbox{none from }\mathbb{J}
\big(  
\raisebox{-.28\height}{\includegraphics[height=3ex]{graphs/4/Logo_Vd.pdf}} \! 
\big)\,.\vspace{-.4cm}
\end{align*} 
Hence 
\begin{align}
 \mathfrak{f}\hp{1}_{\Icono{4}{V1v}{2.6}{53}}    & =
 \bigg\{
\frac12 
\Delta_{x_1,1}
G\hp{6} _{|
\raisebox{-.34\height}{\includegraphics[height=2ex]{graphs/4/Icono_melon4.pdf}}|
\raisebox{-.24\height}{\includegraphics[height=2ex]{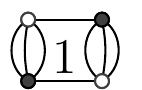}}|} 
+ 
\frac{1}{3}
\sum\limits_{r=1}^3
\Delta_{x_1,r }G\hp{6}_{\raisebox{-.4\height}{\includegraphics[height=2.6ex]{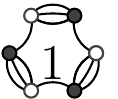}}}
+
\sum\limits_{c\neq a}
\bigg[
\Delta_{x_1,1}
G\hp{6}_{\raisebox{-.4\height}{\includegraphics[height=2.6ex]{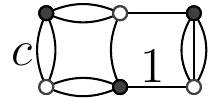}}}
+
\frac13
\sum\limits_{r=1}^3
\Delta_{x_1,r}
G\hp{6}_{\raisebox{-.4\height}{\includegraphics[height=3ex]{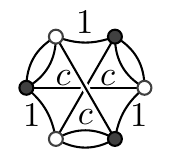}}}
\\ \nonumber
&
\qquad
+
\Delta_{x_1,2}
G\hp{6}_{\raisebox{-.4\height}{\includegraphics[height=3.6ex]{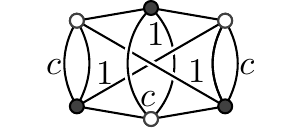}}}
+
\Delta_{x_1,3}
G\hp{6}_{\raisebox{-.4\height}{\includegraphics[height=2.22ex]{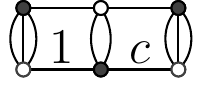}}}
\bigg]
\bigg\} \,, 
 % % % % %% % % % % % % %
\nonumber \\
 \mathfrak{f}\hp{2}_{\Icono{4}{V1v}{2.6}{53}}  & =  
\sum\limits_{s=1,2}
\Delta_{y_2,s}
G\hp{6}_{\raisebox{-.4\height}{\includegraphics[height=2.22ex]{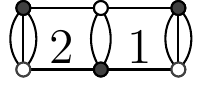}}}
+
\Delta_{y_2,1}
G\hp{6}_{\raisebox{-.4\height}{\includegraphics[height=3ex]{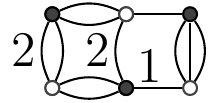}}
}
+
\frac13
\sum\limits_{r=1}^3
\Delta_{y_2,r}
G\hp{6}_{\raisebox{-.4\height}{\includegraphics[height=2.6ex]{graphs/4/Icono6_C1.pdf}}}
+
\Delta_{y_2,1}
G\hp{6}_{\raisebox{-.4\height}{\includegraphics[height=3.5ex]{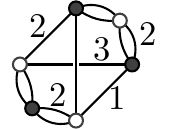}}
}
+
\Delta_{y_2,1}
G\hp{6}_{\raisebox{-.4\height}{\includegraphics[height=3.5ex]{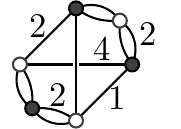}}
}  \\
&\quad +  
\frac12 
\Delta_{y_2,1}
G\hp{6} _{|
\raisebox{-.34\height}{\includegraphics[height=2.2ex]{graphs/4/Icono_melon4.pdf}}|
\raisebox{-.17\height}{\includegraphics[height=2.2ex]{graphs/4/Icono_V1.pdf}}\!|}
+
\Delta_{y_2,\alpha}
G\hp{6}_{\!\!\!\raisebox{.31\height}{\includegraphics[height=3.7ex]{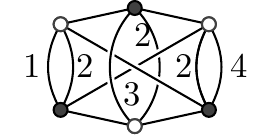}}}  
+\suml_{c=3,4}\Big[
\Delta_{y_2,3} 
G\hp{6}_{\raisebox{-.4\height}{\includegraphics[height=3ex]{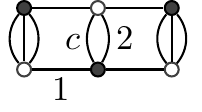}}}
+
\sum\limits_{s=1,2}
\Delta_{y_2,s}
G\hp{6}_{\raisebox{-.4\height}{\includegraphics[height=3ex]{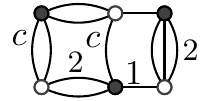}}}
\Big]\,, \nonumber 
% \\
% %% %%%%%%%%%%%%%%%%%%%%%%%%%55
\\
 \mathfrak{f}\hp{3}_{\Icono{4}{V1v}{2.6}{53}}  & =  
\sum\limits_{s=1,2}
\Delta_{y_3,s}
G\hp{6}_{\raisebox{-.4\height}{\includegraphics[height=2.22ex]{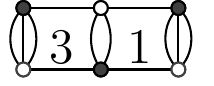}}}
+
\Delta_{y_3,1}
G\hp{6}_{\raisebox{-.4\height}{\includegraphics[height=3ex]{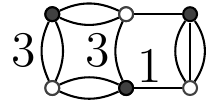}}
}
+
\frac13
\sum\limits_{r=1}^3
\Delta_{y_3,r}
G\hp{6}_{\raisebox{-.4\height}{\includegraphics[height=2.6ex]{graphs/4/Icono6_C1.pdf}}}
+
\Delta_{y_3,1}
G\hp{6}_{\raisebox{-.4\height}{\includegraphics[height=3.5ex]{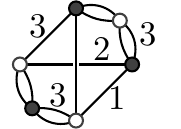}}
}
+
\Delta_{y_3,1}
G\hp{6}_{\raisebox{-.4\height}{\includegraphics[height=3.5ex]{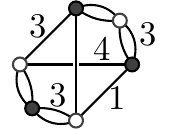}}
}  \\
&\quad + 
\frac12 
\Delta_{y_3,1}
G\hp{6} _{|
\raisebox{-.34\height}{\includegraphics[height=2.2ex]{graphs/4/Icono_melon4.pdf}}|
\raisebox{-.17\height}{\includegraphics[height=2.2ex]{graphs/4/Icono_V1.pdf}}\!|}
+
\Delta_{y_3,\alpha}
G\hp{6}_{\!\!\!\raisebox{.31\height}{\includegraphics[height=3.7ex]{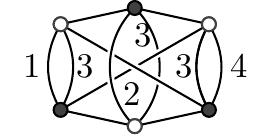}}}  
+\suml_{c=2,4}\Big[
\Delta_{y_3,3} 
G\hp{6}_{\raisebox{-.4\height}{\includegraphics[height=3ex]{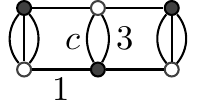}}}
+
\sum\limits_{s=1,2}
\Delta_{y_3,s}
G\hp{6}_{\raisebox{-.4\height}{\includegraphics[height=3ex]{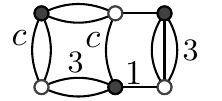}}}
\Big]\,,  \nonumber
\\
% %%%%%%%%%%%%%%%%%%%%
 \mathfrak{f}\hp{4}_{\Icono{4}{V1v}{2.6}{53}}  & =  
\sum\limits_{s=1,2}
\Delta_{y_4,s}
G\hp{6}_{\raisebox{-.4\height}{\includegraphics[height=2.22ex]{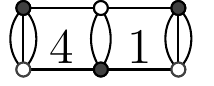}}}
+
\Delta_{y_4,1}
G\hp{6}_{\raisebox{-.4\height}{\includegraphics[height=3ex]{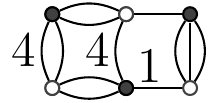}}
}
+
\frac13
\sum\limits_{r=1}^3
\Delta_{y_4,r}
G\hp{6}_{\raisebox{-.4\height}{\includegraphics[height=2.6ex]{graphs/4/Icono6_C1.pdf}}}
+
\Delta_{y_4,1}
G\hp{6}_{\raisebox{-.4\height}{\includegraphics[height=3.5ex]{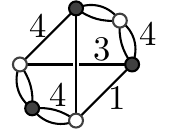}}
}
+
\Delta_{y_4,1}
G\hp{6}_{\raisebox{-.4\height}{\includegraphics[height=3.5ex]{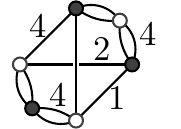}}
}  \\
&\quad +  
\frac12 
\Delta_{y_4,1}
G\hp{6} _{|
\raisebox{-.34\height}{\includegraphics[height=2.2ex]{graphs/4/Icono_melon4.pdf}}|
\raisebox{-.17\height}{\includegraphics[height=2.2ex]{graphs/4/Icono_V1.pdf}}\!|}
+
\Delta_{y_4,\alpha}
G\hp{6}_{\!\!\!\raisebox{.31\height}{\includegraphics[height=3.7ex]{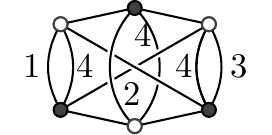}}} 
+\suml_{c=2,3}\Big[
\Delta_{y_4,3} 
G\hp{6}_{\raisebox{-.4\height}{\includegraphics[height=3ex]{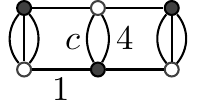}}}
+
\sum\limits_{s=1,2}
\Delta_{y_4,s}
G\hp{6}_{\raisebox{-.4\height}{\includegraphics[height=3ex]{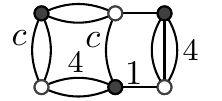}}}
\Big] 
\,. \nonumber
% \\ 
% & \quad
% +\Big[
% \Dma{3} 
% G\hp{6}_{\raisebox{-.4\height}{\includegraphics[height=3ex]{graphs/4/Icono_E.pdf}}}
% +
% \sum\limits_{s=1,2}
% \Dma{s}
% G\hp{6}_{\raisebox{-.4\height}{\includegraphics[height=3ex]{graphs/4/Icono_Qppb.pdf}}}
% \Big]
% \nonumber 
 \end{align}
One inserts the sum of these four terms in equation \eqref{eq:SDErank4melonic4pt}. 
\subsection[subsec]{\for{toc}{Four-point equation for $G\hp4_{\mtc N_{ij}}$ in rank-$4$ theories}\except{toc}
{Four-point equation for $G\hp4_{\protect \Icono{4}{N12}{2.2}{53}}$ in rank-$4$ theories}}

In order to get the equation for $\GGcnud$, we calculate
first 
$\sum_a \mtf{f}\hp{a}_{ \raisebox{-.33\height}{\includegraphics[height=2.2ex]{graphs/4/Icono4_N12}} }$.
\begin{subequations} \label{eq:f_rank4_nonmelonic}
\begin{align} 
 \mtf{f}\hp{1}_{ \raisebox{-.33\height}{\includegraphics[height=2.2ex]{graphs/4/Icono4_N12}} } & =   
 % % % % %
 \Big[
\frac12
\Delta_{x_1,1}
G\hp{6} _{|
\raisebox{-.4\height}{\includegraphics[height=2ex]{graphs/4/Icono_melon4.pdf}}|
\raisebox{-.4\height}{\includegraphics[height=2ex]{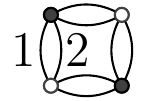}}|} 
+
\sum\limits_{s=2,3}
\Delta_{x_1,s}
G\hp{6}_{\raisebox{-.4\height}{\includegraphics[height=2.6ex]{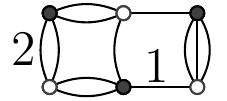}}}
+
\Delta_{x_1,3}
G\hp{6}_{\raisebox{-.4\height}{\includegraphics[height=2.6ex]{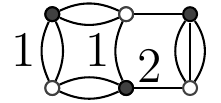}}
}
\\ \nonumber
& \qquad\qquad 
+
\frac13
\sum\limits_{r=1}^3
\Delta_{x_1,r}
G\hp{6}_{\raisebox{-.4\height}{\includegraphics[height=3ex]{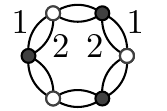}}
}
+
\frac13
\sum\limits_{r=1}^3
\Delta_{x_1,r}
G\hp{6}_{\raisebox{-.4\height}{\includegraphics[height=3.4ex]{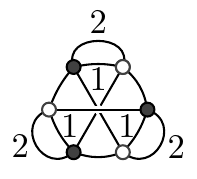}}
}
+
\sum\limits_{\ell=1,3}
\Delta_{x_1,\ell}
G\hp{6}_{\raisebox{-.4\height}{\!\!\!\includegraphics[height=3.2ex]{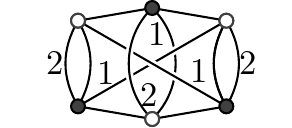}}
}
\Big]
 \\
 & \quad+   \sum_{c=3,4}
 \Big[
\Delta_{x_1,3}
G\hp{6}_{\raisebox{-.4\height}{\includegraphics[height=3ex]{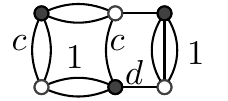}}}
+
\Delta_{x_1,2}
G\hp{6}_{\raisebox{-.4\height}{\includegraphics[height=3.5ex]{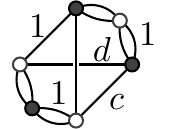}}
}
+
\Delta_{x_1,3}
G\hp{6}_{\raisebox{-.4\height}{\includegraphics[height=3.5ex]{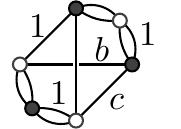}}
}
\Big] \nonumber 
 \,,
% % % %
\\
 \mtf{f}\hp{2}_{ \raisebox{-.33\height}{\includegraphics[height=2.2ex]{graphs/4/Icono4_N12}} }& =  
 % % % %
 \Big[
\frac12
\Delta_{x_2,1}
G\hp{6} _{|
\raisebox{-.4\height}{\includegraphics[height=2ex]{graphs/4/Icono_melon4.pdf}}|
\raisebox{-.4\height}{\includegraphics[height=2ex]{graphs/4/Icono4_N12.pdf}}|} 
+
\sum\limits_{s=2,3}
\Delta_{x_2,s}
G\hp{6}_{\raisebox{-.4\height}{\includegraphics[height=2.6ex]{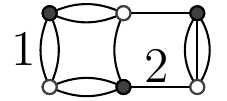}}}
+
\Delta_{x_2,3}
G\hp{6}_{\raisebox{-.4\height}{\includegraphics[height=3ex]{graphs/4/Icono6_Qp21.pdf}}
}
\\ \nonumber
& \qquad\qquad 
+
\frac13
\sum\limits_{r=1}^3
\Delta_{x_2,r}
G\hp{6}_{\raisebox{-.4\height}{\includegraphics[height=3ex]{graphs/4/Icono6_L12.pdf}}
}
+
\frac13
\sum\limits_{r=1}^3
\Delta_{x_2,r}
G\hp{6}_{\raisebox{-.4\height}{\includegraphics[height=3.4ex]{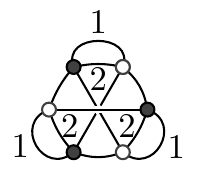}}
}
+
\sum\limits_{\ell=1,3}
\Delta_{x_2,\ell}
G\hp{6}_{\raisebox{-.4\height}{\!\!\!\includegraphics[height=3.2ex]{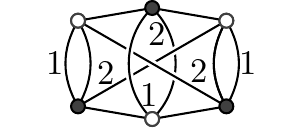}}
}
\Big] \\
&\quad + \sum_{c=3,4}
\Big[
\Delta_{x_2,3}
G\hp{6}_{\raisebox{-.4\height}{\includegraphics[height=3ex]{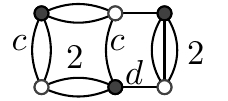}}}
+
\Delta_{x_2,2}
G\hp{6}_{\raisebox{-.4\height}{\includegraphics[height=3.5ex]{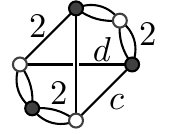}}
}
+
\Delta_{x_2,3}
G\hp{6}_{\raisebox{-.4\height}{\includegraphics[height=3.5ex]{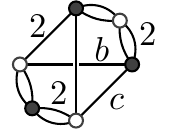}}
}
\Big] \,,
\nonumber     \\
 \mtf{f}\hp{3}_{ \raisebox{-.33\height}{\includegraphics[height=2.2ex]{graphs/4/Icono4_N12}} }& =   
  % % % %
  \Big[
\frac12
\Delta_{y_3, 1}
G\hp{6} _{|
\raisebox{-.4\height}{\includegraphics[height=2ex]{graphs/4/Icono_melon4.pdf}}|
\raisebox{-.4\height}{\includegraphics[height=2ex]{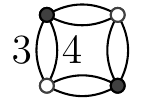}}|} 
+
\sum\limits_{s=2,3}
\Delta_{y_3, s}
G\hp{6}_{\raisebox{-.4\height}{\includegraphics[height=2.6ex]{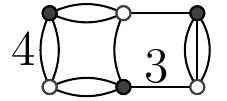}}}
+
\Delta_{y_3, 3}
G\hp{6}_{\raisebox{-.4\height}{\includegraphics[height=3ex]{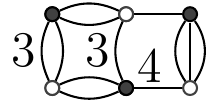}}
}
\\ \nonumber
& \qquad\qquad 
+
\frac13
\sum\limits_{r=1}^3
\Delta_{y_3, r}
G\hp{6}_{\raisebox{-.4\height}{\includegraphics[height=3ex]{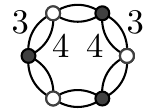}}
}
+
\frac13
\sum\limits_{r=1}^3
\Delta_{y_3, r}
G\hp{6}_{\raisebox{-.4\height}{\includegraphics[height=3.4ex]{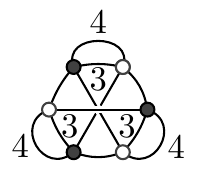}}
}
+
\sum\limits_{\ell=1,3}
\Delta_{y_3, \ell}
G\hp{6}_{\!\!\!\raisebox{-.4\height}{\includegraphics[height=3.3ex]{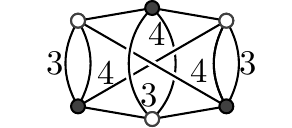}}
}
\Big] \nonumber
\\ 
\nonumber 
& \quad + \sum_{c=2,4}
\Big[ 
\Delta_{y_3, 3}
G\hp{6}_{\raisebox{-.4\height}{\includegraphics[height=3ex]{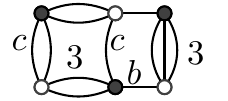}}}
+
\Delta_{y_3, 3}
G\hp{6}_{\raisebox{-.4\height}{\includegraphics[height=3.5ex]{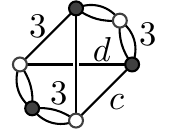}}
}
+
\Delta_{y_3, 2}
G\hp{6}_{\raisebox{-.4\height}{\includegraphics[height=3.5ex]{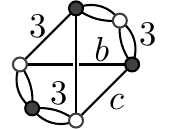}}
}
\Big]
\,,
  % % % %
\nonumber       \\
  \mtf{f}\hp{4}_{ \raisebox{-.33\height}{\includegraphics[height=2.2ex]{graphs/4/Icono4_N12}} }& =   
   % % % %
   \Big[
\frac12
\Delta_{y_4, 1}
G\hp{6} _{|
\raisebox{-.4\height}{\includegraphics[height=2ex]{graphs/4/Icono_melon4.pdf}}|
\raisebox{-.4\height}{\includegraphics[height=2ex]{graphs/4/Icono4_N34.pdf}}|} 
+
\sum\limits_{s=2,3}
\Delta_{y_4, s}
G\hp{6}_{\raisebox{-.4\height}{\includegraphics[height=2.6ex]{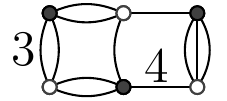}}}
+
\Delta_{y_4, 3}
G\hp{6}_{\raisebox{-.4\height}{\includegraphics[height=3ex]{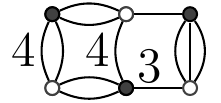}}
}
\\ \nonumber
& \qquad\qquad 
+
\frac13
\sum\limits_{r=1}^3
\Delta_{y_4, r}
G\hp{6}_{\raisebox{-.4\height}{\includegraphics[height=3ex]{graphs/4/Icono6_L34.pdf}}
}
+
\frac13
\sum\limits_{r=1}^3
\Delta_{y_4, r}
G\hp{6}_{\raisebox{-.4\height}{\includegraphics[height=3.4ex]{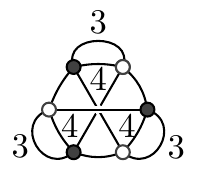}}
}
+
\sum\limits_{\ell=1,3}
\Delta_{y_4, \ell}
G\hp{6}_{\!\!\!\raisebox{-.4\height}{\includegraphics[height=3.3ex]{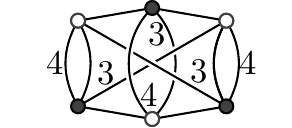}}
}
\Big]\,.
\nonumber 
\\ 
\nonumber 
 & \quad + \sum_{c=2,3}
\Big[ 
\Delta_{y_4, 3}
G\hp{6}_{\raisebox{-.4\height}{\includegraphics[height=3ex]{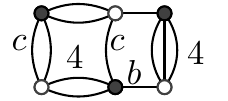}}}
+
\Delta_{y_4, 3}
G\hp{6}_{\raisebox{-.4\height}{\includegraphics[height=3.5ex]{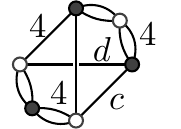}}
}
+
\Delta_{y_4, 2}
G\hp{6}_{\raisebox{-.4\height}{\includegraphics[height=3.5ex]{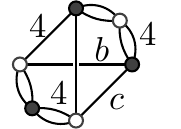}}
}
\Big]
 \end{align}
\end{subequations}
 The remaining terms from swapping edges are:
\begin{align*}
\suml_{a=1}^4 \frac{Z_0\inv}{E(y^2_a,s_a)}\dervpar{Z[J,J]}{\varsigma_a(\Icono{4}{N12}{3.0}{28};1,2) } (\Xb)& =
\frac{1}{E(y_1,x_1)} \GGdos(\xb,\yb) + \frac{1}{E(y_2,x_2)} \GGuno(\xb,\yb) \\
& +
\frac{1}{E(x_3,y_3)}\GGcuatro (\xb,\yb) +\frac{1}{E(x_4,y_4)} \GGtres (\xb,\yb)   \,, 
\end{align*}
whence the SDE for $G\hp4_{\Icono{4}{N12}{2.3}{28}}$ is 
\begin{align}
& \qquad
\bigg(1+\frac{2\lambda}{E_{\mathbf{s}}} \suml_{a=1}^4 \suml_{\mathbf q_{\hat a} } \nonumber
\GGmelon (s_a,\mathbf q_{\hat a})\bigg)
\cdot G\hp4_{\Icono{4}{N12}{2.3}{28}} (\Xb) \nonumber 
\\  \label{eq:SDErank4NONmelonic4pt}
& =\frac{(-2\lambda)}{E_{\sb}} \suml_{a=1}^4 \Bigg\{ 
\suml_{\hat\sigma\in\Z_2} \sigma^* \mathfrak{f}\hp{a}_{\Icono{4}{N12}{2.3}{28}}(\Xb) + 
\frac{1}{E(y_1,x_1)}\big[\GGdos(\xb,\yb)-\GGdos(y_1,x_2,x_3,x_4,\yb) \big]+
\nonumber 
\\
&\qquad +\frac{1}{E(y_2,x_2)}\big[\GGuno(\xb,\yb) - \nonumber 
\GGuno(x_1,y_2,x_3,x_4,\yb)\big] \\ \nonumber 
& \qquad +
\frac{1}{E(x_3,y_3)} \big[ \GGcuatro (\xb,\yb)- \GGcuatro(\xb,y_1,y_2,x_3,y_4) \big]  \\ \nonumber 
& \qquad + \frac{1}{E(x_4,y_4)} 
\big[\GGtres (\xb,\yb) -\GGtres ( \xb , y_1,y_2,y_3,x_4) \big]
 \nonumber  
 \\ \nonumber 
& \qquad
 - \suml_{b_a}\frac{1}{E(s_a,b_a)} 
\big[ G\hp4_{\Icono{4}{N12}{2.3}{28}}  (\Xb) -G\hp4_{\Icono{4}{N12}{2.3}{28}} (\Xb|_{s_a \to b_a}) \big] \Bigg\}   
\end{align}
with $\Xb=(\xb,\yb)$ and $\mathbf{s}=(x_1,x_2,y_3,y_4)$ with the functions $\mathfrak{f}\hp a_{\Icono{4}{N12}{2.3}{28}}$ 
given by eqs. \eqref{eq:f_rank4_nonmelonic}.

\section{A simple quartic model}\label{sec:simple}

In order to obtain a simpler set of SDE, we consider a model which has
less correlation functions. Its probability theory is expected to
ponder only geometries with spherical boundaries. Nevertheless, it is
interesting because its equations are particularly simple.  We
consider the rank-$3$ tensor model with action $S_{\mtr{
}}[\phi,\bar\phi ]= S_0[\phi,\bar\phi]+ \Sint[\phi,\bar\phi] $ where
\begin{equation} \label{eq:toy}
 S_0[\phi,\bar\phi]=\Tr_2(\bar\phi,E \phi)
 =\sum_{\xb\in \Z^3} \bar\phi_{\xb}(m^2+|\xb|^2) \phi_{\xb} \quad\mbox{ and }\quad
  \Sint[\phi,\bar\phi]= %\lambda \cdot \V_1(\phi,\bar\phi)  =
  \lambda \cdot \vuno  \,.
\end{equation} 
Here $|\xb|^2=x_1^2+x_2^2+x_3^2$, $\xb=(x_1,x_2,x_3)\in \Z^3$.  In
particular all the bordisms that this theory triangulates are
null-bordisms and bordisms between spheres.  Notice that the boundary
graphs are all graphs having the following property: two edges are
connected by a $2$-coloured edge, if and only if they are connected by
a $3$-coloured edge. We denote by $\Theta$ ($\Theta \subset \Grph{3}$)
the set of \textit{connected} graphs with this property. Thus
\[
\fey_3(\vuno) = \{ \B \in\dGrph{3} : \B \mbox{ has connected components in }\Theta\}\,,
\]
being  \vspace{-.35cm}
\[\Theta=\big\{\meloncito,~\vuno\,,~\logo{6}{Q1}{4.2}{23}\,,~\icono{8}{Q1}{4.2}{23}\,,~\icono{10}{Q1}{5.2}{23}
\,,~\icono{12}{Q1}{6.5}{23}\,,\ldots\big\}\,.\] Let $\mathcal X_{2k}$
be the graph in $\Theta$ with $2k$ vertices.  That is to say, the set
of (connected) correlation functions with connected boundary is
precisely indexed by $\Theta$ and we set $G\hp{2k}:=G\hp{2k}_{\mtc
  X_{2k}}$, i.e.
\[
G\hp2 =\Gmelon,~G\hp4 =\Guno,~ G\hp6 =\Gsqu\,,~ 
G\hp{8} =G\hp8\raisebox{-.85\height}{ 
\negthickspace\negthickspace\negthickspace\negthickspace\negthickspace \includegraphics[height=3ex]{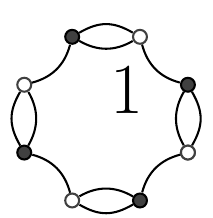}} \,, 
~G\hp{10}=
G\hp{10} \raisebox{-.85\height}{ \negthickspace\negthickspace
\negthickspace\negthickspace\negthickspace \includegraphics[height=4ex]{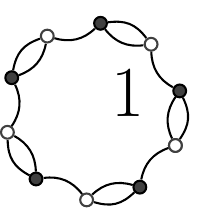}}\,.
\]
Any ($2k$)-point function with disconnected 
components can be labeled by integer partitions 
$(n_1,\ldots,n_{\ell})$ such that 
\begin{equation}
\B=\mtc{X}^{n_1}_2\sqcup\mtc{X}^{n_1}_4 \sqcup \ldots 
\sqcup \mtc X_{2\ell }^{n_\ell}\,, \label{eq:Bindisconnected}
\end{equation}
being $\ell$ the maximum number of vertices 
that a connected component of $\B$ has. 
These numbers $n_i$ satisfy 
\begin{equation}
\label{eq:Toyrestrictions}
k = \sum_{i=1}^\ell i \cdot n_i  
 ~~~~\mbox{and}~~~~ 
 B= \sum_{i=1}^\ell n_i\,,
\end{equation}
where $B$ is the number of connected components of
$\B$. Then, the free energy boils down to the 
expression
\begin{align}
 W\jj & =\suml_{l=1}^\infty ~ \sum\limits_{\substack{\B \in \partial (\fey_3(\vunito) ) \\ 
k(\B)=l}}
 \!\!\!\!\!\!\!\!\!\!\!\!\!  {\vphantom{\sum}}'\,\,\,
\,\,\,\,\,\, G\hp{2l}_\B\star \J(\B) \,,\end{align}
where the prime in the sum means that it is performed 
with the restrictions \eqref{eq:Toyrestrictions}. 
More concretely, writing any graph $\B$ as in eq. \eqref{eq:Bindisconnected},
one can rephrase the sum rather over $\ell$,
the largest number of black (or white) vertices found in 
a connected component of $\B$. This modification readily yields 
\begin{align*}
W\jj & = \suml_{\ell=1}^\infty
 \bigg(\prod\limits_{j=1}^{\ell} \frac{1}{j^{n_j} \cdot n_j!}\bigg)
G\hp{2k}_{|\mathcal X_{2}^{\sqcup n_1}|\ldots| \mathcal X_{2i}^{\sqcup n_i} \ldots | \mathcal X_{2\ell}^{\sqcup n_\ell}|}
\star \J (\mtc{X}^{n_1}_2\sqcup\mtc{X}^{n_1}_4 \sqcup \ldots 
\sqcup \mtc X_{2\ell }^{n_\ell})\,.
\end{align*}
To obtain the last line one observes that $\Autc(\mtc X_{2k})=\langle
\mbox{rotation by $2\pi/k$} \rangle=\Z_k$, and $
|\Autc(\B)|=n_1!\ldots n_\ell! \cdot |\Autc(\mtc X_{2})|^{n_1} \cdots
|\Autc(\mtc X_{2\ell})|^{n_\ell} $.  It should be noticed that this
form has already been found in the free energy expansion of (real)
matrix models, here with twice the number of sources of each monomial
with respect to that \cite[Sec. 2.3]{GW12}.  It is also noteworthy
that the Grosse-Wulkenhar model ($\phi^{\star4}_4$ self-dual theory)
\cite{GW12} was shown to be solvable by using matrix techniques.  Here
we have shown that the $\vuno$-model obeys the very same expansion of
the free energy and that the number of ($2k$)-point functions of both
theories is the same for any $k$.

The growth,
as function of the number of vertices, of the number
 of correlation functions of this model is milder than 
 that of the models with full boundary sector. 
 We further simplify the notation and set 
$\mathfrak f_{2k,s_1}= \mathfrak f_{\mtc X_{2k},s_1}\hp 1$. 
With this notation, the Schwinger-Dyson equations in Section \ref{sec:SDE}
can be derived  
for the connected boundary graphs of the $\vuno$-model.

\begin{prop}[Schwinger-Dyson equations for the $\vuno$-model] \label{thm:SDEToy}  
Let $\B$ be a connected boundary graph of the quartic model with $2k$ vertices ($k\geq 1$), 
$\B\in \,\fey_3(\vuno)$. Let $\sb=\yb^1$, where $(\mtc X_{2k})_*(\Xb)=(\yb^1,\dots,\yb^k)$
for any $\Xb \in \mathcal{F}_{3,k}$.  The $(2k)$-point Schwinger-Dyson equation 
corresponding to $\B$ is 
% \leqnomode
% \allowdisplaybreaks[0]
\begin{align}& \hspace{-.4cm}
\bigg(1+\frac{2\lambda}{m^2+|\mathbf{s}|^2}  \suml_{q,p\,\in\Z }
G\hp2  (s_1,q,p)\bigg) \cdot 
G \hp{2k}(\Xb)  \label{eq:SDEtoy}  
\\ 
&=  \frac{2\lambda}{m^2+|\mathbf{s}|^2}  \Bigg\{ \frac{\delta_{1,k}}{2\lambda} -
\suml_{\hat \sigma \in \Z_	k} \sigma^* \mathfrak{f}_{2k,s_1}(\Xb) \nonumber
 \\
\nonumber
   & \qquad\qquad \qquad 
 -  
 \suml_{\rho>1}  \frac{Z_0\inv}{  [(y_1^\rho)^2-s_1^2]} \cdot \bigg[\dervpar{Z[J,J]}{\varsigma_1(\mathcal{X}_{2k} ;1,\rho) }
 (\Xb)
 -\dervpar{Z[J,J]}{\varsigma_1(\mathcal{X}_{2k} ;1,\rho) } (\Xb|_{s_1\to y_1^\rho})
 \bigg]
\\
& \nonumber 
\,\,\,\,\qquad \qquad\,\,\,\,\,\,\,+ \suml_{q\in \Z}\frac{1}{s_1^2-q^2} \big[ G \hp{2k}(\Xb) - G \hp{2k}(\Xb|_{s_1\to q}) \big] \Bigg\} \,.  
\end{align}
\end{prop}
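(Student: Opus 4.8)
The claimed equation is the specialization of Theorem~\ref{thm:SDEs} to the rank-$3$ model whose sole interaction vertex is $\vuno=\Tr_{\V_1}$ and whose propagator is $E_\xb=m^2+|\xb|^2$, combined with the description of the boundary sector $\fey_3(\vuno)$ recorded above. The plan is to re-run the derivation of Section~\ref{sec:SDE} verbatim—it is written for the general quartic interaction $\Sint=\lambda\sum_a\Tr_{\V_a}$ and the restriction to a \emph{connected} boundary graph $\B$—now with $\Sint=\lambda\Tr_{\V_1}$ and $\B=\mtc X_{2k}$, and then to match, term by term, against \eqref{eq:SDEs}.

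\textbf{Step 1: the propagator fits the Ward identity.} For $E_\xb=m^2+|\xb|^2$ one has $E(t_a,s_a)=E_{p_1\ldots t_a\ldots p_D}-E_{p_1\ldots s_a\ldots p_D}=t_a^2-s_a^2$, which is independent of $\mathbf p_{\hat a}$; hence the full Ward--Takahashi identity (Theorem~\ref{thm:full_Ward}) and therefore the derivation of Theorem~\ref{thm:SDEs} apply to this model. In particular $E(s_1,q)=s_1^2-q^2$ and $E(y_1^\rho,s_1)=(y_1^\rho)^2-s_1^2$, producing exactly the denominators appearing in \eqref{eq:SDEtoy}, and the limit $q\to s_1$ in the last sum is regular because numerator and denominator vanish together on $\mtc F_{3,k}$.

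\textbf{Step 2: collapse of the colour sum and identification of terms.} Since $\Sint=\lambda\Tr_{\V_1}$ involves only the melonic vertex of colour $1$, the expression $\partial\Sint(\phi,\bar\phi)/\partial\bar\phi_\sb$ in \eqref{eq:Sint} reduces to its $a=1$ summand, so every $\sum_{a=1}^D$ occurring in \eqref{eq:SintPostWard} and in \eqref{eq:SDEs} degenerates to the single term $a=1$; the only surviving Ward functional is $Y^{(1)}_{s_1}$, with $Y^{(1)}_{s_1}[0,0]=\Delta_{s_1,1}\GDmelon=\sum_{q,p\in\Z}G^{(2)}(s_1,q,p)$, which yields the left-hand prefactor $1+\tfrac{2\lambda}{m^2+|\sb|^2}\sum_{q,p}G^{(2)}(s_1,q,p)$ of \eqref{eq:SDEtoy}. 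On the right-hand side of \eqref{eq:SDEs}: the $\mathfrak f$-term is $\mathfrak f_1(\Xb;\sb;\mtc X_{2k})=\sum_{\hat\sigma\in\Autc(\mtc X_{2k})}\sigma^*\mathfrak f^{(1)}_{\mtc X_{2k},s_1}(\Xb)$ by Proposition~\ref{thm:promedio_grupo}, and since $\Autc(\mtc X_{2k})=\Z_k$ is generated by the rotation of $\mtc X_{2k}$ by $2\pi/k$, this is $\sum_{\hat\sigma\in\Z_k}\sigma^*\mathfrak f_{2k,s_1}(\Xb)$; the colour-$1$ exchange term gives the graphs $\varsigma_1(\mtc X_{2k};1,\rho)$, $\rho>1$, whose connected components again lie in $\Theta$—so $\varsigma_1(\mtc X_{2k};1,\rho)\in\partial\,\fey_3(\vuno)$ and $Z_0^{-1}\partial Z/\partial\varsigma_1(\mtc X_{2k};1,\rho)$ is meaningful (splitting into products of lower $G^{(2j)}$ when the swap disconnects); and the $\mathfrak b$-type difference term gives $\sum_{q}\tfrac{1}{s_1^2-q^2}\bigl[G^{(2k)}(\Xb)-G^{(2k)}(\Xb|_{s_1\to q})\bigr]$. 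Factoring $+\tfrac{2\lambda}{m^2+|\sb|^2}$ out of the right-hand side of \eqref{eq:SDEs}—which turns the $\delta_{1,k}/E_\sb$ into $\tfrac{2\lambda}{E_\sb}\cdot\tfrac{\delta_{1,k}}{2\lambda}$ and flips the sign of the $\mathfrak f$- and exchange-contributions while leaving the difference term with a $+$ sign—reproduces \eqref{eq:SDEtoy} verbatim.

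\textbf{Main obstacle.} There is no deep difficulty here; the work is the combinatorial bookkeeping of checking that the passage from $\phi^4_{\mathrm m,3}$ to the one-vertex model $\vuno$ collapses all colour sums consistently (and precisely to $a=1$), that restricting the boundary sector to $\fey_3(\vuno)$ leaves the graph-by-graph derivation of the Schwinger--Dyson equation intact, and that the $\Z_k$-symmetry of $\mtc X_{2k}$ together with the $\Theta$-classification of the boundary graphs is exactly compatible with the reindexing $G^{(2k)}=G^{(2k)}_{\mtc X_{2k}}$ underlying the statement.
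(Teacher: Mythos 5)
Your proposal is correct and follows essentially the same route as the paper's own proof: specialize Theorem \ref{thm:SDEs} to $D=3$, collapse the colour sums to $a=1$ because only $\V_1$ survives in the action, use $\Autc(\mtc X_{2k})=\Z_k$, and insert the explicit propagator differences $E(u_1,v_1)=u_1^2-v_1^2$; your sign bookkeeping when factoring out $+2\lambda/E_\sb$ is also consistent with the stated equation. The paper additionally spells out that for $k=1$ the sum over $\rho$ is empty and the $\delta_{1,k}$ propagator term appears, which your treatment of the $\delta_{1,k}/E_\sb$ term covers implicitly.
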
 

\begin{proof}
 For $k>1$, it is immediate by setting $D=3$ and by cutting the sums
 over the number of colours to only $a=1$, since one does no longer
 have the vertices $\vdos$ and $\vtres$ in the action.  After using
 $\Autc(\mtc X_{2k})=\Z_k$, and after inserting the form of the
 difference of propagators, as given by \eqref{eq:toy}, the result
 follows. If $k=1$, one additionally obtains the pure propagator term
 (the $\delta_{1,k}$-term) that would be otherwise annihilated by
 fourth or higher derivatives. For $k=1$, the sum over $\rho$ is empty
 (thus equal to zero).
 \end{proof}

One can still work out the functions $\mathfrak f_{2k}$ and 
give the correlation functions implied in the $\varsigma_1(\mathcal{X}_{2k} ;1,\rho)$-derivatives
in eq. \eqref{eq:SDEtoy}.
Notice that the expansion of the term $Y\hp1_{s_1}\jj$ is  
\begin{equation}
\label{eq:ToyY} 
Y\hp1_{s_1}\jj=
\suml_{k=0}^\infty \mathfrak{f}_{2k,s_1} \star \J(\mathcal X_{2k}) + \suml_{\mathcal C \,\,\mathrm{disconnected}} 
\mathfrak{f}_{\mathcal C,s_1}\hp{1}  \star \J(\mtc C)
\end{equation}
In order to determine $\mathfrak{f}_{2k,s_1}$ 
we find the graphs 
$\B$ such that $ \B \ominus e_1^r=\mathcal X_{2k}$ 
for certain (say, the $r$-th) vertex of $\B$. The restrictions \eqref{eq:Toyrestrictions}
with $B\geq 2$ and the connectedness of $ \B $ after edge-removal
imply that either 
\[n_1=n_k=1 ~~~~\mbox{and} ~~~~~n_i=0,~~ \mbox{ if } i\neq1,k\,,\]
or
\[n_{k+1}=1 ~~~~~\mbox{and}~~~~~n_i=0 ~~~~\mbox{ if }i\neq k+1\,.\]
That is to say, any such $\B$
has $2(k+1)$ vertices and, concretely, they might only be either $\logo{2}{Melon}{2.63}{3}\sqcup \mtc X_{2k}$
or $\mtc X_{2k+2}$, when $k\geq 2$. Adding the obvious case when $k=1$, one has:
\begin{subequations} \label{eq:Toy_f_explicit}
\begin{align}
\mathfrak f_{2,s_1} &=
\frac12\sum_{r=1}^2 \big(\Delta_{s_1,r} \Gcmm  + \Delta_{s_1,r} G\hp4 \big)   \\
\mathfrak f_{2k,s_1}&=
\frac{1}{k}\Delta_{s_1,1} G\hp{2k+2}_{|\icono{2}{Melon}{2.05}{3}| \mathcal X_{2k}|} 
+\frac{1}{k+1}\suml_{r=1}^k \Delta_{s_1,r} G\hp{2k+2},    \mbox{ for } k\geq 2  \,.
\end{align}
\end{subequations}
Notice that $\varsigma_1(\mathcal{X}_{2k} ;1,\rho)=\mathcal X_{2\rho-2} \sqcup \mathcal X_{2k-2\rho+2}$,
whence (see Fig. \ref{fig:simplemodel})
\begin{align} \nonumber
 % \frac{1}{Z_0}\suml_{\rho>1} 
 \frac{1}{Z_0}\dervpar{Z[J,J]}{\varsigma_1(\mathcal{X}_{2k} ;1,\rho)(\Xb) } & = 
%  \suml_{\rho>1} 
  G\hp{2\rho-2}(\xb^1,\ldots,\xb^{\rho-1}) \cdot G\hp{2 k-2\rho+2}(\xb^\rho,\ldots,\xb^k) 
  \\ & \,  
  +G\hp{2k}_{|\mtc X_{2(\rho-1)}|\mtc X_{2k-2(\rho-1)}|}(\Xb) \,.
\label{eq:Toy_varsigmas}
\end{align}
\begin{figure}
\raisebox{-2.3cm}{\includegraphics[height=5cm]{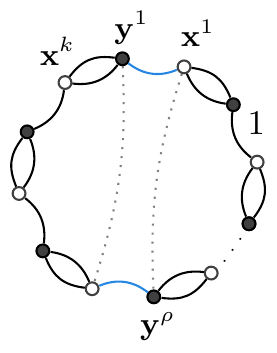}} $\qquad\mapsto \qquad$
\raisebox{-2.3cm}{\includegraphics[height=5cm]{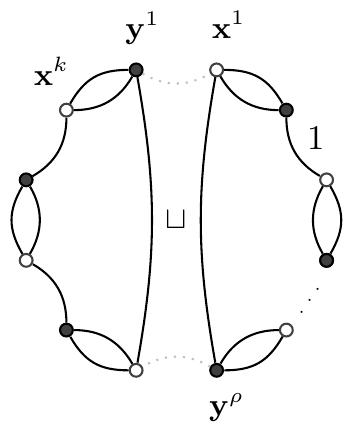}}
\caption{Shows the splitting of $\mathcal X_{2k}$ into the two components of $\varsigma_1(\mathcal X_{2k};1,\rho)$, $\rho>0$\label{fig:simplemodel}}
\end{figure}

Using the last four equations one can easily prove
\begin{cor}
 The exact 2-point equation for the $\vuno$-model is given, for any $\xb=(x_1,x_2,x_3)\in\Z^3$,  by
\begin{align}& \,\,\,\,\,\,
\bigg(1+\frac{2\lambda}{m^2+|\mathbf{x}|^2}  \suml_{q,p\,\in\Z }
G\hp2  (x_1,q,p)\bigg) \cdot 
G \hp{2}(\xb)  \label{eq:SDEtoy2pt}  
\\ 
&=  \frac{ 1}{m^2+|\mathbf{x}|^2} + \frac{(- 2\lambda) }{m^2+|\mathbf{x}|^2} \Bigg\{
\suml_{p,q\in\Z }  \Gcmm (x_1,q,p,\xb) +  G\hp4 (\xb,\xb) \nonumber
% \\
\nonumber
% 
%   & \qquad\qquad \qquad 
%  -  \suml_{\rho>1}  \frac{Z_0\inv}{  [(y_1^\rho)^2-s_1^2]} \cdot\dervpar{Z[J,J]}{\varsigma_1(\mathcal{X}_{2k} ;1,\rho)(\Xb) }  
  \\
 & \nonumber \,\,\,\,\qquad \qquad\qquad \qquad\qquad
- \suml_{q\in \Z}\frac{1}{x_1^2-q^2} \big[ G \hp{2}(x_1,x_2,x_3) - G \hp{2}(q,x_2,x_3) \big] \Bigg\} \,.  
\end{align}
 For $k\geq 2$, the multi-point equation for 
 $G\hp{2k}$, the single correlation function 
 of connected boundary graph, is given by 
 
\begin{align}& \,\,\,\,\,\, \nonumber
\bigg(1+\frac{2\lambda}{m^2+|\mathbf{s}|^2}  \suml_{q,p\,\in\Z }
G\hp2  (x_1^1,q,p)\bigg) \cdot 
G \hp{2k}(\xb^1,\ldots,\xb^k)  \label{eq:SDEtoyMultipoint}  
\\ 
&=  \frac{(-2\lambda)}{m^2+|\mathbf{s}|^2}  \Bigg\{   
\suml_{l=1 }^k  
\bigg[
\frac{1}{k} \suml_{p,q\in\Z}   G\hp{2k+2}_
{|\icono{2}{Melon}{2.05}{3}| \mathcal X_{2k}|} 
 (x_1^1,q,p;\xb^{1+l},\ldots,\xb^{k+l})
\\
&\qquad\qquad 
+\frac{1}{k+1}\suml_{r=1}^k   G\hp{2k+2}(\xb^{1+l},\xb^{2+l},\ldots,\xb^{r+l-1},x^1_1,x_2^{r+l-1},x_2^{r+l-1},\xb^{r+l},\ldots,\xb^{k+l})
\bigg]
\nonumber
\\
\nonumber
   & \qquad\qquad 
 + \suml_{\rho=2}^k \bigg[ \frac{G\hp{2\rho-2}(\xb^1,\ldots,\xb^{\rho-1})
 -G\hp{2\rho-2}(x_1^\rho, x_2^1,x_3^1,\ldots,\xb^{\rho-1})
}{  [(x_1^\rho)^2-(x^1_1)^2]} \cdot
G\hp{2k-2\rho+2}(\xb^\rho,\ldots,\xb^{k})
\\
& \nonumber  \qquad\qquad 
+  \frac{ 
G\hp{2k}_{|\mtc X_{2(\rho-1)}|\mtc X_{2k-2(\rho-1)}|}(\Xb) 
 -
G\hp{2k}_{|\mtc X_{2(\rho-1)}|\mtc X_{2k-2(\rho-1)}|}(x_1^\rho, x_2^1,x_3^1,\xb^2,\ldots,\xb^k)  
}{  [(x_1^\rho)^2-(x^1_1)^2]}\bigg]
\\
& \nonumber 
 \qquad \qquad - \suml_{q\in \Z}\frac{G \hp{2k}(x_1^1,x_2^1,x_3^1,\xb^2,\ldots,\xb^k) 
 - G \hp{2k}(q,x_2^1,x_3^1,\xb^2,\ldots,\xb^k)}{(x_1^1)^2-q^2}\Bigg\} \,.  
\end{align}
for $\Xb=( \xb^1,\ldots,\xb^k)\in \mathcal F_{3,k}$, $\mathbf{s}:=(x_1^1,x_2^r,x_3^r)$,
and $\xb^i=(x_1^i,x_2^i,x_3^i)$ for all $i\in\{1,\ldots,k\} $. Moreover
$\xb^j=\xb^i \,\mod \,k$, for and $j\in \N$ with $i\in\{1,\ldots,k\} $. 
\end{cor}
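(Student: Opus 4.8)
The plan is to obtain both equations as specializations of the already-proven Proposition~\ref{thm:SDEToy}, which in turn is Theorem~\ref{thm:SDEs} restricted to $D=3$ with only the colour $a=1$ surviving. So the proof is essentially a matter of substituting the explicit data of the $\vuno$-model into \eqref{eq:SDEtoy} and simplifying. First I would recall that by \eqref{eq:toy} the kinetic form is $E_\xb = m^2 + |\xb|^2$, so that $E(u_1,v_1) = u_1^2 - v_1^2$ is indeed independent of $x_2, x_3$ (this is the hypothesis of Theorem~\ref{thm:full_Ward} and Theorem~\ref{thm:SDEs}), and $\mathbf s = (s_1, s_2, s_3)$ with $s_1 = x_1^1$ fixed and $s_2, s_3$ inherited from the appropriate vertex; I would note that in the connected-boundary case $\mathcal X_{2k}$, momentum transmission via the $10$-coloured paths gives $(y_1^\rho, y_2^\rho, y_3^\rho)$ in terms of the $\xb$'s as depicted in Fig.~\ref{fig:simplemodel}, so $s_1 = x_1^1$ while the remaining components of $\sb$ are $x_2^r, x_3^r$ for the relevant index $r$, matching the statement's $\mathbf s = (x_1^1, x_2^r, x_3^r)$.

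For the 2-point equation I would take $k=1$ in \eqref{eq:SDEtoy}. Then $\delta_{1,k} = 1$ produces the free propagator term $1/(m^2+|\xb|^2)$; the sum over $\rho > 1$ is empty and drops out; and the function $\mathfrak f_{2,s_1}$ is given by \eqref{eq:Toy_f_explicit}a as $\tfrac12\sum_{r=1}^2(\Delta_{s_1,r}\Gcmm + \Delta_{s_1,r}G\hp4)$. I would compute $\sum_{\hat\sigma \in \Z_1}\sigma^*\mathfrak f_{2,s_1}$ — here $\Z_1$ is trivial, so this is just $\mathfrak f_{2,s_1}$ itself — and evaluate the two $\Delta_{s_1,r}$-operators. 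Because $\Gcmm$ and $G\hp4$ have the symmetry making the $r=1$ and $r=2$ contributions equal after pairing with $\J$, the factor $\tfrac12\sum_r$ collapses, and the definition of $\Delta_{s_a,r}^\B$ (an insertion of $s_1$ into one white vertex together with a sum over the free momenta $q_h$, $h \in I(e_1^r)\setminus\{1\}$, which here means summing over two momenta $p, q$) yields exactly $\sum_{p,q}\Gcmm(x_1,q,p,\xb) + G\hp4(\xb,\xb)$. Finally the last line of \eqref{eq:SDEtoy} becomes $\sum_q \frac{1}{x_1^2 - q^2}[G\hp2(\xb) - G\hp2(q,x_2,x_3)]$ directly. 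Collecting signs gives \eqref{eq:SDEtoy2pt}.

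For the multi-point equation with $k \geq 2$ I would take $\B = \mathcal X_{2k}$ in \eqref{eq:SDEtoy}, with $\Autc(\mathcal X_{2k}) = \Z_k$ (rotation by $2\pi/k$). Three terms must be unpacked. First, $\sum_{\hat\sigma \in \Z_k}\sigma^*\mathfrak f_{2k,s_1}$: substituting \eqref{eq:Toy_f_explicit}b, the piece $\tfrac1k \Delta_{s_1,1} G\hp{2k+2}_{|\meloncito|\mathcal X_{2k}|}$ gives, after the sum over the rotation group relabels which vertex carries the insertion, the term $\tfrac1k\sum_{p,q}\sum_{l=1}^k G\hp{2k+2}_{|\meloncito|\mathcal X_{2k}|}(x_1^1,q,p;\xb^{1+l},\dots,\xb^{k+l})$ (indices mod $k$), and the piece $\tfrac1{k+1}\sum_{r=1}^k \Delta_{s_1,r} G\hp{2k+2}$ gives the stated second sum over $l$ and $r$ with the coordinate replacement $x_1^1$ in the $r$-th slot. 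Second, the $\varsigma_1(\mathcal X_{2k};1,\rho)$-derivatives: by \eqref{eq:Toy_varsigmas} each equals $G\hp{2\rho-2}(\cdots)\cdot G\hp{2k-2\rho+2}(\cdots) + G\hp{2k}_{|\mathcal X_{2(\rho-1)}|\mathcal X_{2k-2(\rho-1)}|}(\cdots)$, and the bracketed difference in \eqref{eq:SDEtoy} — evaluated at $\Xb$ minus at $\Xb|_{s_1\to y_1^\rho}$, with $y_1^\rho = x_1^\rho$ — produces the two difference quotients over $(x_1^\rho)^2 - (x_1^1)^2$ in the statement. Third, the final $\sum_q$ term transcribes verbatim. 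The sign bookkeeping ($-\mathfrak f$ and the overall $-2\lambda$ versus the $(-2\lambda)$ in \eqref{eq:SDEtoyMultipoint}) matches. The main obstacle I anticipate is purely notational: correctly tracking the cyclic relabelling induced by summing over $\Z_k$ and over $\rho$, so that the two $\sum_l$'s and the $\sum_\rho$ in \eqref{eq:SDEtoyMultipoint} come out with the indices mod $k$ as claimed, and verifying that $\sb$'s second and third components are consistently $x_2^r, x_3^r$; everything else is a direct substitution into Proposition~\ref{thm:SDEToy}.
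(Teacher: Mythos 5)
Your proposal is correct and follows essentially the same route as the paper: both specialize Proposition~\ref{thm:SDEToy} using \eqref{eq:Toy_f_explicit} and \eqref{eq:Toy_varsigmas}, invoke the triviality of $\Autc(\meloncito)$ and the symmetry $\Gcmm(\zb,\yb)=\Gcmm(\yb,\zb)$, $G\hp4(\zb,\yb)=G\hp4(\yb,\zb)$ to collapse the $\tfrac12\sum_{r=1}^2$ for $k=1$, and for $k\geq 2$ write out the $\Z_k$-rotation action $\sigma^*f(\Xb)=f(\xb^{1+l},\ldots,\xb^{k+l})$ with indices mod $k$. No gaps.
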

It is pertinent to stress that 
$\mathbf s=\yb^1$ is a `chosen' black vertex, and this equation
holds for any other choice $\mathbf s=\yb^i$, $i\neq 1$, $(\mathcal X_{2k})_*(\Xb)=(\yb^1,\ldots,\yb^k)$,
after the pertinent changes (e.g. the sum over $\rho$ excludes not $1$ 
but $i$).
\begin{proof}
 One uses the equations \eqref{eq:Toy_f_explicit} and \eqref{eq:Toy_varsigmas},
 the triviality of the automorphisms group $\Autc(\meloncito)$, and 
 the invariance of $\Gcmm$ and $G\hp4$:
 \[
 \Gcmm(\zb,\yb)=\Gcmm(\yb,\zb) \qquad \mbox{and} \qquad G\hp4(\zb,\yb)=G\hp4 (\yb,\zb)\,.
 \]
  This is enough to obtain the $2$-point equation. For $k\geq2$, 
  on top of using \eqref{eq:Toy_f_explicit} and \eqref{eq:Toy_varsigmas}
  one explicitly writes the action of $\sigma\in\Z_k$. This 
  is rotation by $2\pi l/k$, $1\leq l\leq k$, so 
 \[\sigma ^*f(\Xb)=f(\xb^{\sigma \inv(1)},\ldots,\xb^{\sigma \inv (k)})=f(\xb^{1+l},\ldots,\xb^{k+l} )\]
 where $\xb^j=\xb^i$ mod $k$, for $i\in\{1,\ldots,k\} $ and $j\in \N$ and $f$ any
 (appropriate) function.  \qedhere 
 \end{proof}
 
\begin{rem} \label{rem:melonicSDEs}
An analysis on the divergence degree as function of Gur\u au's degree,
and the boundary components was done in \cite{wgauge,4renorm} for
group field theories.  It turns out that graphs with a disconnected
boundary are suppressed and therefore any graph contributing to
$\Gcmm$ is expected to be suppressed at least by $N\inv$, with respect
to those summed in $G\hp4$.  Nevertheless these correlation functions
with disconnected boundary can back react as do their analogous in
matrix models \cite{EynardTopologicalRecursion} in the topological
recursion \cite{BorotNotes}.  Also, by results of matrix theory
\cite{GW12}, the term $G\hp4 (\xb,\xb)$ is expected to be analogously
suppressed.  Hence, conjecturally, for the $\vuno$-model, the leading
order $G_{\mathrm{mel}}\hp2 $ of the two-point function
\eqref{eq:SDEtoy2pt} should satisfy the clearly more simple closed
equation \allowdisplaybreaks[0]
 \begin{align}& \,\,\,\,\,\,
\bigg(m^2+|\mathbf{x}|^2+2\lambda \suml_{q,p\,\in\Z }
G_{\mathrm{mel}}\hp2  (x_1,q,p)\bigg) \cdot 
G _{\mathrm{mel}}\hp{2}(\xb)  \label{eq:SDEtoy2ptmel}  
 \\ 
&=   1+ 2\lambda  
% \suml_{p,q\in\Z }  \Gcmm (x_1,q,p,\xb) +  G\hp4 (\xb,\xb) 
\nonumber
 \suml_{q\in \Z}\frac{1}{x_1^2-q^2} \big[ G_{\mathrm{mel}} \hp{2}(x_1,x_2,x_3) - G_{\mathrm{mel}}\hp{2}(q,x_2,x_3) \big]  \,.  
\end{align}
and by the same token, one could truncate the 
equation for the $2k$-point function
\eqref{eq:SDEtoyMultipoint} to the following one,
where the equally suppressed terms $\mathfrak f_{2k,s_1}$
also are neglected:
\begin{align}& \,\,\,\,\,\,
\bigg(1+\frac{2\lambda}{m^2+|\mathbf{s}|^2}  \suml_{q,p\,\in\Z }
G_{\mathrm{mel}}\hp2  (x_1^1,q,p)\bigg) \cdot 
G_{\mathrm{mel}}\hp{2k}(\xb^1,\ldots,\xb^k)  \label{eq:melonicSDEtoyMultipoint}  
\\ 
&=  \frac{(-2\lambda)}{m^2+|\mathbf{s}|^2}  \bigg[    \nonumber
% \suml_{l=1 }^k  
% \bigg[
% \frac{1}{k} \suml_{p,q\in\Z}   G\hp{2k+2}_
% {|\icono{2}{Melon}{2.05}{3}| \mathcal X_{2k}|} 
%  (x_1^1,q,p;\xb^{1+l},\ldots,\xb^{k+l})
% \\
% &\qquad\qquad\qquad
% +\frac{1}{k+1}\suml_{r=1}^k   G\hp{2k+2}(\xb^{1+l},\xb^{2+l},\ldots,\xb^{r+l-1},x^1_1,x_2^{r+l-1},x_2^{r+l-1},\xb^{r+l},\ldots,\xb^{k+l})
% \bigg]
% \nonumber
% \\
\nonumber
 \suml_{\rho=2}^k  \frac{G\hp{2\rho-2}_{\mathrm{mel}}(\xb^1,\ldots,\xb^{\rho-1})-G\hp{2\rho-2}_{\mathrm{mel}}(x_1^\rho,x_2^1,x_3^1,\xb^2,\ldots,\xb^{\rho-1})}{  [(x_1^\rho)^2-(x^1_1)^2]} 
\cdot
G\hp{2k-2\rho+2}_{\mathrm{mel}}(\xb^\rho,\ldots,\xb^{k})
\\
& \nonumber 
 \qquad \qquad \quad- \suml_{q\in \Z}\frac{G_{\mathrm{mel}} \hp{2k}(x_1^1,x_2^1,x_3^1,\xb^2,\ldots,\xb^k) 
 - G_{\mathrm{mel}} \hp{2k}(q,x_2^1,x_3^1,\xb^2,\ldots,\xb^k)}{(x_1^1)^2-q^2}\bigg] \,.  
\end{align}
We warn the reader that these relations ---what we could call the 
\textit{melonic limit} and corresponds to the planar limit in matrix models \cite{GW12}--- still 
must be carefully proven (see Sec. \ref{sec:conclusions}).   It is 
very encouraging to see, though, that after 
determining $G_{\mathrm{mel}}\hp2$, 
the `melonic $2k$-point SDE'  \eqref{eq:melonicSDEtoyMultipoint} for 
any $k>1$ can now be entirely expressed 
in terms of already known functions $G_{\mathrm{mel}}\hp2, G_{\mathrm{mel}}\hp4,\ldots,G_{\mathrm{mel}}\hp{2k-2}$
and constitutes an equation only for $G_{\mathrm{mel}}\hp{2k}$,
which would decouple the tower.
\allowdisplaybreaks
\end{rem}

\section{Outlook: Gurau-Witten SYK-like model} \label{sec:GW}
We believe that some of the present methods can be extended to the 
so-called Gur\u au-Witten model(s) based on work of Sachdev, Ye and Kitaev. 
We sketch here how.
\par  
Gur\u{a}u-Witten model consists of fermions $\psi^a$ that are tensorial
of rank $3$, transforming in the trifundamental 
representation of $ G_{ab} \times G_{ac}\times G_{ad} $,
where $\{a,b,c,d\}=\{0,1,2,3\}$ and each $G_{ij}=G_{ji}$ is a copy of a 
Lie group, e.g. $\mtr O(n)$. That is, in $(\psi^a)_{bcd}$, each subindex $e$,
independently transforms under the fundamental representation of $G_{ae}$, $e\neq a$.
The new integer $n$ is related to the 
old number of sites $N$ (fermions in the original SYK-model \cite{SachdevYe,Kitaev}) by $N=4n^3$. 
The quartic monomial in Witten's model is  %\correction{$N \neq n$ }
given by the $\mtr O(n)$-\textit{invariant}
\begin{equation}
S_{\mtr{int.}}[\,\{\psi^b\}_{b=0}^3\,]= -
% \ii^{q/2}
\eta(n)   \suml_{\mu_i^j=1}^n 
 \tensor{(\psi^0)}{_{\mu_1^0\mu_2^0\mu_3^0}} (\psi^1)_{\mu_0^1\mu_2^1\mu_3^1}
(\psi^2)_{\mu_0^2\mu_1^2\mu_3^2} (\psi^3)_{\mu_0^3\mu_1^3\mu_2^3}
\prod\limits_{i\neq j} \delta_{\mu^i_j\;\!,\,\mu_j^i} \,,
\label{eq:GuWi_invariant}
%\qquad (\eta \in \re)
\end{equation}
which is also abbreviated as $- %\ii^{q/2}
\eta(n)  
\psi^0\psi^1\psi^2\psi^3$, being $\eta(n)=\eta_0 n^{-3/2}, \eta_0\in \re$. 
With his action, the partition function is 
\[ 
Z^\re_{\mtr{Gu.Wi.}}[\{J^a \}]=Z_0\inv \int \Df{\psi}\exp\bigg(-\int \dif{\tau} \sum_b 
\frac12 \psi^b \frac{\dif{\psi^b}}{\dif \tau} + \naranja{
% \ii^{q/2}
\eta(n)  }
\psi^0\psi^1\psi^2\psi^3 + \sum_b \psi^bJ^b  \bigg)\,.
 \]
 Here the propagator is the sum of the four 
 quadratic $\mtr O(n)$-invariants, thus, for each $b$, 
 $\psi^b \dif\psi^b/
 \dif \tau$ stands actually for $\sum_{\rho, \mu,\sigma=1}^n 
 \tensor{(\psi^b)}{_{\alpha \gamma \sigma}} \tensor{ (\dif\psi^b/\dif\tau)}{_{\alpha\gamma\sigma}} $.  
We use the complex version of the model, whose partition 
function $Z^\C_{\mtr{Gu.Wi.}}[\{J^a,\bar J^a\}]$ reads 
\[
\int  \Df{\bar\psi}\Df{\psi}\exp\bigg(-\int \dif{\tau} \sum_b 
  \bar\psi_b \frac{\dif{\psi^b}}{\dif \tau} + \naranja{ %\ii^{q/2}
  \eta(n)  
  } (
\psi^0\psi^1\psi^2\psi^3 + \bar\psi^3\bar\psi^2\bar\psi^1\bar\psi^0 )+\sum_b \bar\psi^bJ^b + \bar J^b \psi^b \bigg)\,.
 \]
The interaction-vertices of the last theory are graphically represented by 
\[
\psi^0\psi^1\psi^2\psi^3 =\,\,\raisebox{-1.52cm}{
\includegraphics[width=3.2cm]{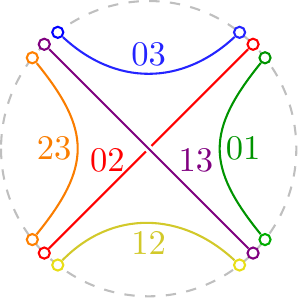}} \qquad \mbox{and} \qquad
\bar\psi^3\bar\psi^2\bar\psi^1\bar\psi^0=\,\, \raisebox{-1.52cm}{\includegraphics[width=3.2cm]{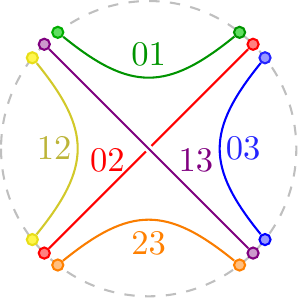}}
\] 
Triples of empty (resp. filled) dots 
marked by edges having the colour $a$
in the bicolouration represent a the field $\psi^a$ (resp. $\bar \psi^a$). These edges
are the deltas in eq. \eqref{eq:GuWi_invariant}. 
Also, there are four terms in the propagator
(the four summands in the quadratic part; see 
the Feynman diagram shown in Fig. \ref{fig:8pt})
\begin{figure}
\includegraphics[width=.57\textwidth]{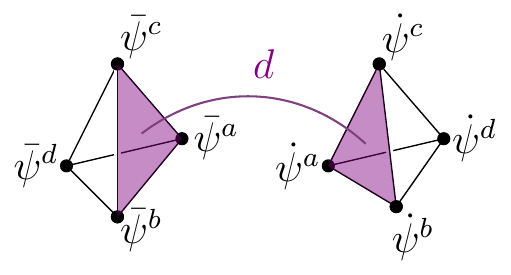}
 \caption{Gluing of tetrahedra caused by the Wick contraction $\contraction{}{\bar\psi^d}{}{\psi^d}\bar\psi^d\dot\psi^d$, $\dot \psi=\dif \bar\psi/\dif\tau$ \label{fig:tetrahedra}}
\end{figure}

Each quartic interaction vertex can be seen as a 
tetrahedron with fields $\psi^{d}$ at their vertices
and with marked (coloured) faces, being $\psi^d$
opposite to the face with colour $d$ for each $d=0,\ldots,3$.
Thus, for the complex Gur\u au-Witten model, 
the Wick's contraction of $\dot \psi^d$ with $\bar \psi^d$ is, 
as in Figure \ref{fig:tetrahedra}, gluing the face coloured $d$ 
(opposite to the vertex $\psi^d$ in that figure), and any Feynman 
diagram results in certain simplicial complex, which, in case of 
having external legs, has a boundary consisting of coloured triangles.
If the number of triangles of 
colour $A\in\{0,\ldots,3,\bar 0,\ldots,\bar 3\}$ 
at the boundary is denoted by $\kappa_A$,
a boundary graph consists then of $2k$ triangles
and is specified not only by $(\kappa_0,\ldots, \kappa_3,\kappa_{\bar 0}\ldots,\kappa_{\bar 3})\in \Z_{\geq 0}^8$
such that $2k=\sum_A \kappa_A$, but also by momentum transmission. 
To wit, 
\textit{connected} boundary graphs with $2k$ external legs have the following properties:
\begin{itemize}
 \item the vertex set is octo-partite in colours in the set 
 $\mtb O=\{0,1,2,3,\bar 0,\bar 1, \bar 2, \bar 3\}$. 
 One writes $p: \mathbb O\to \{0,1,2,3\}$, $p(A)=a$ if either $A=a$ or $A=\bar a$. 
 We impose on the `bar' operation $\bar{\cdot}:\mathbb O\to\mathbb O$ the property 
 $\bar {\bar A}=A$ for each $A\in\mtb{O}$.
 \item there are $\kappa_a$ vertices of colour $a$ and these
 numbers satisfy $  2k=\sum_{A\in \mtb  O} \kappa_A $ 
 \item the edge set is bicoloured, that is every 
 edge is labeled by one of the six elements in $S$,
 the set of unordered pairs of different colours in $\{0,1,2,3\}$
%  \[
% S= (\{0,1,2,3\}\times \{0,1,2,3\} \setminus \Delta )/ (ij) \sim (ji)    
% \quad (\Delta =\mbox{diagonal of } \{0,\ldots,3\}^2 )
% \]  
\item Let $A,B \in \mathbb O$ label two vertices of a graph. Either there is 
no edge joining them or they are connected by an
edge which, according to previous point, bears a bicolouration $\ell\in S$ and one
constrained to the following possibilities:
\begin{itemize}
 \item[(i)]  $p(A)=p(B)=:a$ and $A=\bar B$, an in this case $\ell\in S$ can be any of 
 $\{(ai) \,|\, i\neq a\}$ 
 \[\raisebox{4.6ex}{\tikz[scale=.4,
    baseline=4ex,shorten >=.1pt,node distance=28mm,%on grid,
    semithick,auto,
    every state/.style={draw=black,inner sep=.54mm,text=black,minimum size=0}]{ 
 \node[state,circle,draw=white,minimum size=0,fill=gray,text=white] (11) at (0,0) {\footnotesize $\bar a \vphantom j $} ;
 \node[state,circle,draw=white,minimum size=0,fill=gray,text=white] (12) at (3,0) {\footnotesize $a \vphantom j$} ;
 \path (11) edge node {\small $ia$} (12);
 }}\]
 \item[(ii)] $a:=p(A)\neq p(B)=:b$. In this case, $\ell=(ab)$, that is
 \[
 \raisebox{4.5ex}{\tikz[scale=.4,
    baseline=4ex,shorten >=.1pt,node distance=28mm,%on grid,
    semithick,auto,
    every state/.style={draw=black,inner sep=.54mm,text=black,minimum size=0}]{ 
 \node[state,circle,draw=white,minimum size=0,fill=gray,text=white] (11) at (0,0) {\footnotesize $b \vphantom j$} ;
 \node[state,circle,draw=white,minimum size=0,fill=gray,text=white] (12) at (3,0) {\footnotesize $\bar a \vphantom j$} ;
 \path (11) edge node {\footnotesize $ab$} (12);
 }} \mbox{ or }
 \raisebox{4.5ex}{\tikz[scale=.4,
    baseline=4ex,shorten >=.1pt,node distance=28mm,%on grid,
    semithick,auto,
    every state/.style={draw=black,inner sep=.54mm,text=black,minimum size=0}]{ 
 \node[state,circle,draw=white,minimum size=0,fill=gray,text=white] (11) at (0,0) {\footnotesize $b \vphantom j$} ;
 \node[state,circle,draw=white,minimum size=0,fill=gray,text=white] (12) at (3,0) {\footnotesize $a \vphantom j$} ;
 \path (11) edge node {\footnotesize $ab$} (12);
 }} 
 \] 
\end{itemize}
  \end{itemize}

Disconnected boundary graphs have components that enjoy from 
all these properties. 
 
This gives the following classification of 
correlation functions. 
\begin{itemize}
 \item two-point functions: for $2k=2$, the only possibilities is 
 $\kappa_a=\kappa_{\bar{a}}=1$ and $\kappa_B=0$ for $B\in\mtb O\setminus\{a,\bar a\}$. Thus there are 
 four 2-point functions:
 \[
 \raisebox{4ex}{\tikz[scale=.4,
    baseline=4ex,shorten >=.1pt,node distance=28mm,%on grid,
    semithick,auto,
    every state/.style={draw=black,inner sep=.54mm,text=black,minimum size=0}]{ 
 \node[state,circle,draw=white,minimum size=0,fill=gray,text=white] (11) at (0,0) {\footnotesize $a \vphantom j$} ;
 \node[state,circle,draw=white,minimum size=0,fill=gray,text=white] (12) at (3,0) {\footnotesize $\bar a \vphantom j$} ;
 \path (11) edge[bend left=40] node {\footnotesize $ab$} (12);
  \path (11) edge[bend left=-10] node {\footnotesize $ac$} (12);
  \path (11) edge[bend left=-40] node[swap] {\footnotesize  $ad$} (12);
 }}
 \]
 \item four-point functions. One can have:
 \begin{itemize} 
 \item $\kappa_a=\kappa_{\bar a}=2$ for 
 certain $a$, otherwise $\kappa_B=0$, $B\in \mtb O\setminus p\inv(a)$; 
 \[ 
 \raisebox{4ex}{\tikz[scale=.4,
    baseline=4ex,shorten >=.1pt,node distance=28mm,%on grid,
    semithick,auto,
    every state/.style={draw=black,inner sep=.54mm,text=black,minimum size=0}]{ 
 \node[state,circle,draw=white,minimum size=0,fill=gray,text=white] (11) at (0,0) {\footnotesize$a \vphantom j$} ;
 \node[state,circle,draw=white,minimum size=0,fill=gray,text=white] (12) at (3,0) {\footnotesize$\bar a \vphantom j$} ;
 \node at (4.5,0)
  {$\sqcup$};
 \node[state,circle,draw=white,minimum size=0,fill=gray,text=white] (13) at (6,0) {\footnotesize$a \vphantom j$} ;
 \node[state,circle,draw=white,minimum size=0,fill=gray,text=white] (14) at (9,0) {\footnotesize$\bar a \vphantom j$} ;
 \path (11) edge[bend left=40] node {\footnotesize $ab$} (12);
  \path (11) edge[bend left=-10] node {\footnotesize $ac$} (12);
  \path (11) edge[bend left=-40] node[swap] {\footnotesize $ad$} (12);
   \path (13) edge[bend left=40] node {\footnotesize $ab$} (14);
  \path (13) edge[bend left=-10] node {\footnotesize $ac$} (14);
  \path (13) edge[bend left=-40] node[swap] {\footnotesize $ad$} (14);
 }}, \qquad
  \raisebox{1ex}{\tikz[scale=.4,
    baseline=4ex,shorten >=.1pt,node distance=28mm,%on grid,
    semithick,auto,
    every state/.style={draw=black,inner sep=.54mm,text=black,minimum size=0}]{ 
 \node[state,circle,draw=white,minimum size=0,fill=gray,text=white] (11) at (0,0) {\footnotesize$a \vphantom j$} ;
 \node[state,circle,draw=white,minimum size=0,fill=gray,text=white] (12) at (3,0) {\footnotesize $\bar a \vphantom j$} ;
%  \node at (3.75,0)
%   {$\sqcup$}
;
 \node[state,circle,draw=white,minimum size=0,fill=gray,text=white] (13) at (0,3.24) {\footnotesize $a \vphantom j$} ;
 \node[state,circle,draw=white,minimum size=0,fill=gray,text=white] (14) at (3,3.24) {\footnotesize $\bar a \vphantom j$} ;
 \path (11) edge[bend left=30] node {\footnotesize$ab$} (12);
  \path (11) edge[bend left=-30] node[swap]  {\footnotesize$ac$} (12);
  \path (11) edge[bend left=-10] node[] {\footnotesize$ad$} (13);
   \path (13) edge[bend left=30] node {\footnotesize$ab$} (14);
  \path (13) edge[bend left=-30] node[swap] {\footnotesize$ac$} (14);
  \path (14) edge[bend left=-10] node[] {\footnotesize$ad$} (12);
 }} \quad (d\neq a), 
 \]
 which have been called `broken' and `unbroken' in \cite{GurauSYKinvN}.
 \item or   $\kappa_a=\kappa_c=\kappa_{\bar a}=\kappa_{\bar c}=1$
 and $\kappa_B=0$ $B\in (\mtb O \setminus p\inv\{a,c\})$,
  \[ 
 \raisebox{4ex}{\tikz[scale=.4,
    baseline=4ex,shorten >=.1pt,node distance=28mm,%on grid,
    semithick,auto,
    every state/.style={draw=black,inner sep=.54mm,text=black,minimum size=0}]{ 
 \node[state,circle,draw=white,minimum size=0,fill=gray,text=white] (11) at (0,0) {\footnotesize$a \vphantom j$} ;
 \node[state,circle,draw=white,minimum size=0,fill=gray,text=white] (12) at (3,0) {\footnotesize$\bar a \vphantom j$} ;
 \node at (4.5,0)
  {$\sqcup$};
 \node[state,circle,draw=white,minimum size=0,fill=gray,text=white] (13) at (6,0) {\footnotesize$a \vphantom j$} ;
 \node[state,circle,draw=white,minimum size=0,fill=gray,text=white] (14) at (9,0) {\footnotesize$\bar a \vphantom j$} ;
 \path (11) edge[bend left=40] node {\footnotesize $ab$} (12);
  \path (11) edge[bend left=-10] node {\footnotesize $ac$} (12);
  \path (11) edge[bend left=-40] node[swap] {\footnotesize $ad$} (12);
   \path (13) edge[bend left=40] node {\footnotesize $ab$} (14);
  \path (13) edge[bend left=-10] node {\footnotesize $ac$} (14);
  \path (13) edge[bend left=-40] node[swap] {\footnotesize $ad$} (14);
 }}, \qquad
  \raisebox{1ex}{\tikz[scale=.4,
    baseline=4ex,shorten >=.1pt,node distance=28mm,%on grid,
    semithick,auto,
    every state/.style={draw=black,inner sep=.54mm,text=black,minimum size=0}]{ 
 \node[state,circle,draw=white,minimum size=0,fill=gray,text=white] (11) at (0,0) {\footnotesize$a \vphantom j$} ;
 \node[state,circle,draw=white,minimum size=0,fill=gray,text=white] (12) at (3,0) {\footnotesize $\bar a \vphantom j$} ;
%  \node at (3.75,0)
%   {$\sqcup$}
;
 \node[state,circle,draw=white,minimum size=0,fill=gray,text=white] (13) at (0,3.24) {\footnotesize $c \vphantom j$} ;
 \node[state,circle,draw=white,minimum size=0,fill=gray,text=white] (14) at (3,3.24) {\footnotesize $\bar c \vphantom j$} ;
 \path (11) edge[bend left=30] node {\footnotesize$ab$} (12);
  \path (11) edge[bend left=-30] node[swap] {\footnotesize$ad$} (12);
  \path (11) edge[bend left=-10] node[] {\footnotesize$ac$} (13);
   \path (13) edge[bend left=30] node {\footnotesize$ab$} (14);
  \path (13) edge[bend left=-30] node[swap] {\footnotesize$ad$} (14);
  \path (14) edge[bend left=-10] node[] {\footnotesize$ac$} (12);
 }}
 \]
 
 \item $\kappa_0=\ldots=\kappa_3=1$   or  $\kappa_{\bar 0}=\ldots=\kappa_{\bar 3}=1$, 
 \[  \raisebox{1ex}{\tikz[scale=.34,
    baseline=4ex,shorten >=.1pt,node distance=28mm,%on grid,
    semithick,auto,
    every state/.style={draw=black,inner sep=.54mm,text=black,minimum size=0}]{ 
 \node[state,circle,draw=white,minimum size=0,fill=gray,text=white] (11) at (0,0) {\footnotesize$1 \vphantom j$} ;
 \node[state,circle,draw=white,minimum size=0,fill=gray,text=white] (12) at (3,0) {\footnotesize $  2 \vphantom j$} ;
%  \node at (3.75,0)
%   {$\sqcup$}
;
 \node[state,circle,draw=white,minimum size=0,fill=gray,text=white] (13) at (0,3.24) {\footnotesize $3 \vphantom j \vphantom{\bar 1} $} ;
 \node[state,circle,draw=white,minimum size=0,fill=gray,text=white] (14) at (3,3.24) {\footnotesize $ 0 \vphantom j\vphantom{\bar 1}$} ;
 \path[-]
  (14) edge[bend left=0,draw=white,double=black,double distance=\pgflinewidth, thick] (11)
  (14) edge[bend left=10] (12)
  (14) edge[bend left=-10] (13)
  (12) edge[bend left=10] (11)
  (13) edge[bend left=-10] (11)
  (13) edge[bend left=0,draw=white,double=black,double distance=\pgflinewidth, thick] (12)
; 
 }}
\mbox{ or }
 \raisebox{1ex}{\tikz[scale=.34,
    baseline=4ex,shorten >=.1pt,node distance=28mm,%on grid,
    semithick,auto,
    every state/.style={draw=black,inner sep=.54mm,text=black,minimum size=0}]{ 
 \node[state,circle,draw=white,minimum size=0,fill=gray,text=white] (11) at (0,0) {\footnotesize$ \bar 1 \vphantom j$} ;
 \node[state,circle,draw=white,minimum size=0,fill=gray,text=white] (12) at (3,0) {\footnotesize $ \bar 2 \vphantom j$} ;
%  \node at (3.75,0)
%   {$\sqcup$}
;
 \node[state,circle,draw=white,minimum size=0,fill=gray,text=white] (13) at (0,3.24) {\footnotesize $\bar 3 \vphantom j$} ;
 \node[state,circle,draw=white,minimum size=0,fill=gray,text=white] (14) at (3,3.24) {\footnotesize $\bar 0 \vphantom j$} ;
 \path[-]
  (14) edge[ draw=white,double=black,double distance=\pgflinewidth, thick] (11)
  (14) edge[bend left=10] (12)
  (14) edge[bend left=-10] (13)
  (12) edge[bend left=10] (11)
  (13) edge[bend left=-10] (11)
  (13) edge[ draw=white,double=black,double distance=\pgflinewidth, thick] (12)
; 
 }},  \mbox{ respectively, called `exceptional' in \cite{GurauSYKinvN}} 
 \]
 \end{itemize}
 
This matches for $k=1,2$ the description given in \cite{GurauSYKinvN} for the 
real case, in a somehow more different notation than that of Fig. 3 there. 
\item six-point functions will be either classified by a disconnected 
boundary with components in the graphs that classify the 2-point and/or 4-point functions
adding up to $6$-vertices or they will be connected. This latter case needs a more
complicated analysis to be fully classified. Examples of connected boundary
graphs are the known ones for complex tensor models with all the possible 
vertex-octo-colourations of the \textit{melonic} graphs in six edges, but 
also many new graphs are possible, e.g.
\[
% \logo{6}{QGW}{7}{3} \naranja{\leftthreetimes --forces\,a=b ?} \quad 
\logo{6}{NewGW}{19}{3}\] 	
\end{itemize}
A program to extend the present methods to the Gur\u au-Witten model
begins
\begin{itemize}
\item  with the association of a cycle of sources $J^c, \bar J^c$
for each boundary graph  
\item to expand the free energy of the model, $\log (Z^\C_{\mtr{Gu.Wi.}} [\{J^b,\bar J^b\}])$
in cycles of sources for each graph  $\B\in \im \,\partial_{\mtr{Gu.Wi.}}^\C$
in the boundary sector of this model. Developing a graph-calculus and classify, modulo colour-orbits,
multi-point functions 
\item the triviality of the Ward-Identity might be overcome by choosing a
propagator that does depend on the group-variables,  since this is 
a first order derivative, making `time' $\tau$ related to momenta $\boldsymbol \mu$ (e.g. 
replacing the propagator by a momentum-dependent propagator $\int \dif \tau
\sum_a\sum_{\boldsymbol{\mu}}\bar\psi^a(\tau)\dot\psi(\tau)\delta(\tau-f(\boldsymbol \mu)) $,
for which the WTI would be non-trivial)
\item extend the SDE-techniques to interaction vertices 
that are beyond the class treated here. This class allows 
only interaction vertices whose (graph-)vertices lie on 
a subgraph of the form
\[
\raisebox{-.4\height}{\includegraphics[width=4cm]{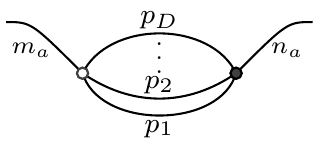}} \mbox{ for some colour $a$.}
\]
This condition was essential for the use of the WTI
when deriving the SDEs.
\end{itemize}

\begin{figure}
\raisebox{-2cm}{
\includegraphics[width=12.9cm]{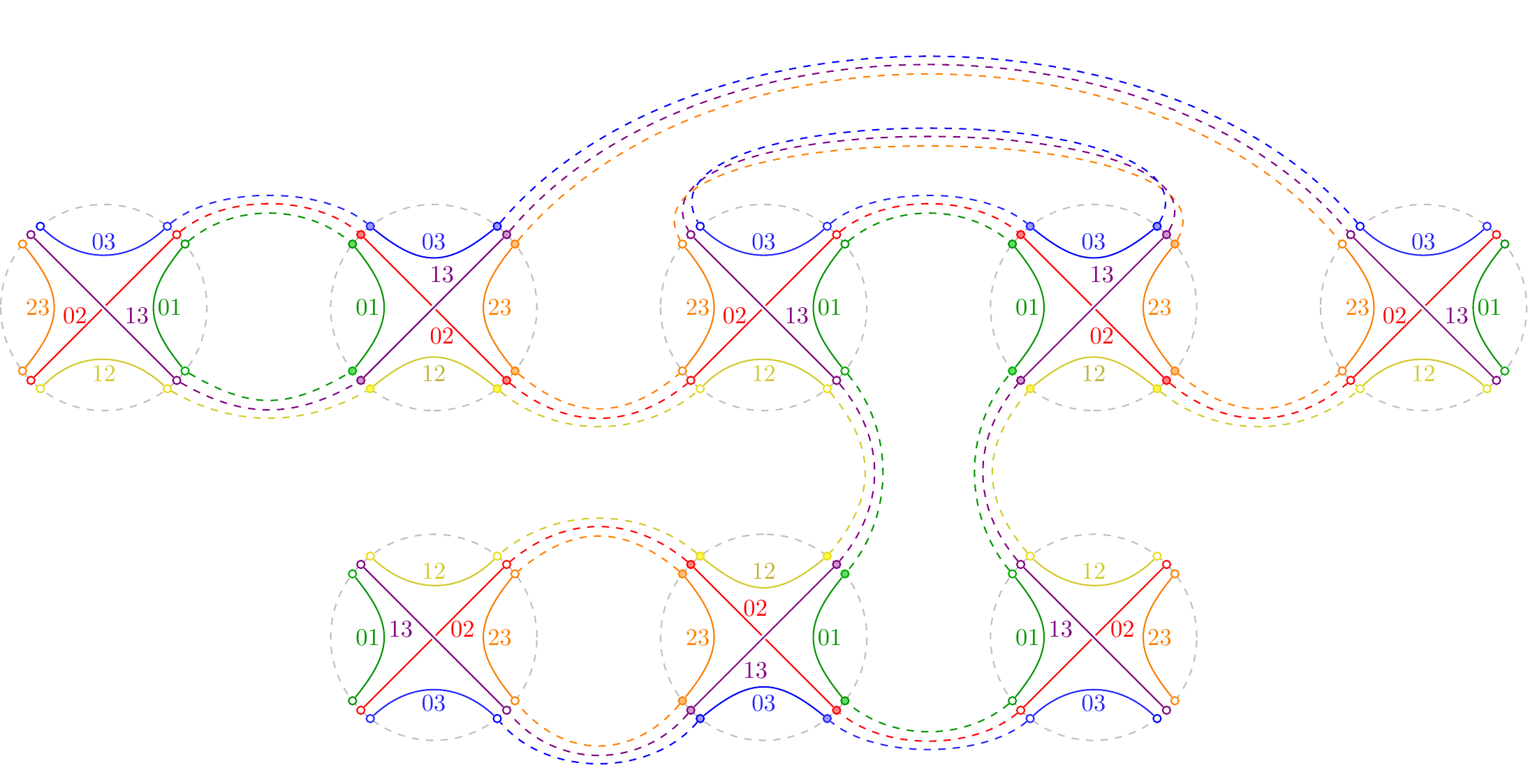}}\hspace{-1cm}
$\mapsto $ \raisebox{-1cm}{
\includegraphics[width=2cm]{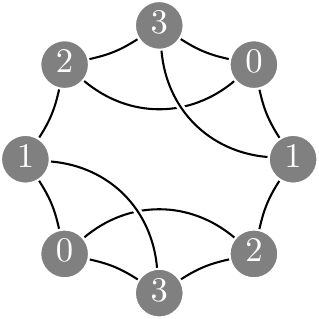}
}
 \caption{\label{fig:8pt}
 Example of a Feynman graph of the (complex) Gur\u au-Witten model that contributes to the correlation function 
 indexed by the graph in the right, where any edge from $a$ to $b$ coloured vertices is 
 $ab$-bicoloured (not always the case, as e.g. for $a=\bar b$ in another graphs). 
 }
\end{figure}

\allowdisplaybreaks 
\section{Conclusions} \label{sec:conclusions}
We studied the correlation functions of complex tensor models and, mainly, presented a collection of
generating functionals that allowed to derive the exact Schwinger-Dyson equations 
for CTMs of rank $3$ and $4$  
(and in Appendix \ref{sec:rankfive}, rank-$5$ theories). 
The symmetry of the 
colours should be exploited in order to obtain a simplified 
version of them, which shall lead to a solution. 
The path towards closed equations, i.e. equations where 
a single unknown correlation function appears, is the analysis
of Gur\u au degree $\omega$ sectors:
\[
G\hp{2k}_\B=  G\hp{2k}_{\B,\mtr{mel}}+ \sum_{\omega=1}^\infty G\hp{2k,\omega}_\B\,, \qquad\qquad (G\hp{2k}_{\B,\mtr{mel}}=G\hp{2k,0}_{\B})\,.
\]
The idea is proceed recursively: knowing the solution at leading order of 
a certain correlation function, it can be inserted 
in the equation at subleading order (called `cherry trees' 
in \cite{DartoisGurauRivasseau}), and so on. The pivotal equation 
should imply a closed equation for the 2-point function alone in the melonic
approximation, similar to the one obtained in \cite{us}. 
This requires a rather deep combinatorial and topological analysis 
of the contractions of the external lines for tensor 
models, similar to the one undertaken in \cite{GWpowercounting},
and condensed in \cite[Prop 3.3]{GW12} for matrix models. This 
would prove the claims in remark \ref{rem:melonicSDEs} and will allow 
to find the analogous equations for sectors of
any higher value of Gur\u{a}u's degree $\omega>1$. That result
would provide insight on the solvability of the $\vuno$-model. 
We address
both problems in a next paper. 
\par
Additionally, in Section \ref{sec:GW}, the 
boundary sector of the complex Gur\u au-Witten model has been characterized.
This allowed us to write down there a plan that, we hope, would be 
useful to implement non-perturbative techniques for holographic tensor models. 
\par 
It would be interesting to extend the graph calculus developed in \cite{fullward} 
(and further elaborated in this article) to the recently introduced 
2PI formalism of tensor models \cite{2PI}. 
In particular, the tensor-models-compatible connected sum defined in \cite{cips}
imply that the 2PI functional of rank-$3$ models is the generating functional of 
prime $3$-manifolds.

  \let\cleardoublepage\clearpage 
   \appendix
%   \nopartblankpage
% \appendixpage 
 
%   \let\cleardoublepage\clearpage 
\section{Proof of Lemma \ref{thm:ordersix}} \label{sec:proofoflemma}
\begin{proof}%[of expansion $\ysa$]
The proof is long but straightforward. We compute some terms as a matter of example
\begin{subequations}  \label{eq:laaaaange}
\begin{align}
 \frac1{2!\cdot 2^2} 
\langle\!\langle G\hp 8 _{|\raisebox{-.34\height}{\includegraphics[height=1.7ex]{graphs/3/Item4_Va.pdf}}|
\raisebox{-.34\height}{\includegraphics[height=1.7ex]{graphs/3/Item4_Va.pdf}}|} 
\,, & \,
\raisebox{-.4\height}{\includegraphics[height=3.5ex]{graphs/3/Logo4_Va.pdf}}
\sqcup 
\raisebox{-.4\height}{\includegraphics[height=3.5ex]{graphs/3/Logo4_Va.pdf}}\,
) \rangle\!\rangle_{s_a} \\ \nonumber 
{}={} &  \frac18 \bigg( 
\suml_{r=1,2} \Dsa{r} \Goaa +
\suml_{r=3,4} (123)^*(\Dsa{r} \Goaa) 
\bigg) \star \mathbb J ( \melon \sqcup \,\logo{4}{Va}{3.5}{38} \, ) 
\\
 \frac1{2!\cdot 2^2} 
\suml_{i\neq a}
\edgebracket{
G\hp 8 _{|\raisebox{-.4\height}{\includegraphics[height=1.7ex]{graphs/3/Item4_Vi.pdf}}|
\raisebox{-.4\height}{\includegraphics[height=1.7ex]{graphs/3/Item4_Vi.pdf}}|} 
\,, & \,  
\raisebox{-.4\height}{\includegraphics[height=4ex]{graphs/3/Logo4_Vi.pdf}}
\sqcup 
\raisebox{-.4\height}{\includegraphics[height=4ex]{graphs/3/Logo4_Vi.pdf}}\,
} 
\\ \nonumber  {}={} &
\frac{1}{8}\suml_{i\neq a}\big(
\suml_{r=1,2} \Dsa{r} \Goii
+\suml_{r=3,4} (123)^*\Dsa{r} \Goii \big)
\star \J ( \melon \sqcup \,\logo{4}{Vi}{3.5}{38} )
% & after the comma; and 
\\
\frac1{2^2}\suml_{i\neq a} \edgebracket{
 G\hp 8 _{|\raisebox{-.4\height}{\includegraphics[height=1.7ex]{graphs/3/Item4_Vi.pdf}}|
\raisebox{-.4\height}{\includegraphics[height=1.7ex]{graphs/3/Item4_Va.pdf}}|} 
\,, & \,
\raisebox{-.4\height}{\includegraphics[height=4ex]{graphs/3/Logo4_Vi.pdf}}
\sqcup 
\raisebox{-.4\height}{\includegraphics[height=4ex]{graphs/3/Logo4_Va.pdf}}\,}
\\ \nonumber   {}={} &
\frac14 \suml_{i\neq a} 
\Big((
\sumud \Dsa{r} \Goia )
\star \J (\melon \sqcup \logo{4}{Va}{3.5}{38} )
\\ & +
(\sumtc (123)^*\Dsa{p}\Goia )
\star \J (\melon \sqcup \logo{4}{Vi}{3.5}{38} )
\Big) \nonumber
\\ 
\frac1{2^2}  \edgebracket{
 G\hp 8 _{|\raisebox{-.4\height}{\includegraphics[height=1.7ex]{graphs/3/Item4_Vb.pdf}}|
\raisebox{-.4\height}{\includegraphics[height=1.7ex]{graphs/3/Item4_Vc.pdf}}|} 
 \, , & \, 
\raisebox{-.4\height}{\includegraphics[height=4ex]{graphs/3/Logo4_Vb.pdf}}
\sqcup 
\raisebox{-.4\height}{\includegraphics[height=4ex]{graphs/3/Logo4_Vc.pdf}}\,}
\\ \nonumber   {}={} &
\frac14  
\Big((
\sumud \Dsa{r} \Gobc )
\star \J (\melon \sqcup \logo{4}{Vc}{3.5}{38} )
\\ & +
(\sumtc (123)^*\Dsa{p}\Gobc )
\star \J (\melon \sqcup \logo{4}{Vb}{3.5}{38} )
\Big) \nonumber 
\\
\frac1{3}  \edgebrackett{ G\hp8 _{ 
|\raisebox{-.33\height}{\includegraphics[height=2ex]{graphs/3/Item2_Melon.pdf}}|
\raisebox{-.34\height}{\includegraphics[width=.36cm]{graphs/3/Item6_K33.pdf}}|} 
 \, &, \,
\raisebox{-.3\height}{\includegraphics[height=3ex]{graphs/3/Logo2_Melon.pdf}}
\sqcup
\raisebox{-.3\height}{\includegraphics[width=.7cm]{graphs/3/Logo6_K33}} }
\\
\nonumber 
{} = {}  &  
\frac13 \Dsa{1} \Gomkthree \star \J ( \logo{6}{K33}{3.4}{3}) 
+ \frac13\suml_{q=2,3,4} \Dsa{q} \Gomkthree  \star \J (\melon \sqcup \logo{4}{Va}{3.5}{38} )
\\
\frac1{3}
 \suml_{i\neq a}
 \Edgebracket{ \Gomqi
\,  &, \,
\raisebox{-.3\height}{\includegraphics[height=3ex]{graphs/3/Logo2_Melon.pdf}}
\sqcup
\raisebox{-.34\height}{\includegraphics[height=6ex]{graphs/3/Logo6_Qi.pdf}} }
\\ 
\nonumber 
{} = {}  & 
\frac13 
 \suml_{i\neq a}
 \Big\{ 
 \Dsa{1} \Gomqi \star \J ( \logo{6}{Qi}{4.6}{253}) 
 +\suml_{q=2,3,4}
 \Dsa{q} \Gomqi \star \J (\melon \sqcup  \logo{4}{Va}{3.5}{38} )
  \Big\}
  \\
\frac1{3} 
 \Edgebracket{ \Gomqa 
\,  &, \,
\raisebox{-.3\height}{\includegraphics[height=3ex]{graphs/3/Logo2_Melon.pdf}}
\sqcup
\raisebox{-.34\height}{\includegraphics[height=6ex]{graphs/3/Logo6_Qa.pdf}} }
\\ 
\nonumber 
{} = {}  & 
\frac13   \Big\{
 \Dsa{1} \Gomqa \star \J ( \logo{6}{Qa}{4.6}{253}) 
 +\suml_{q=2,3,4}
 \Dsa{q} \Gomqa\Big\}\star \J (\melon \sqcup  \logo{4}{Va}{3.5}{38} )
  \\  \suml_{i\neq a} 
 \Edgebracket{ 
G\hp 8_{
|\raisebox{-.24\height}{\includegraphics[height=1.9ex]{graphs/3/Item2_Melon.pdf}}|
\raisebox{-.25\height}{\includegraphics[height=2.7ex]{graphs/3/Item6_Ei.pdf}}|} 
\, & , \,
\raisebox{-.3\height}{\includegraphics[height=3ex]{graphs/3/Logo2_Melon.pdf}}
\sqcup
\raisebox{-.34\height}{\includegraphics[height=5.5ex]{graphs/3/Logo6_Eis.pdf}}  }
\\ 
\nonumber 
{} = {}  & 
\Dsa{1} \Gomec \star \J ( \logo{6}{Ecab}{5}{28}) +
\suml_{h=2,3} \Dsa{h} \Gomec \star \J (\melon\sqcup  \logo{4}{Vb}{3.5}{38}  ) 
\\ 
\nonumber 
{} + {}  & 
\Dsa{1} \Gomeb \star \J ( \logo{6}{Ebac}{5}{28}) +
\suml_{h=2,3} \Dsa{h} \Gomeb \star \J (\melon\sqcup  \logo{4}{Vc}{3.5}{38}  ) 
\\ 
\nonumber 
{} + {}  & 
(\Dsa{4} \Gomec +  
\Dsa{4} \Gomeb ) \star \J (\melon\sqcup  \logo{4}{Va}{3.5}{38}  ) 
%  \end{align}
% \begin{align}
\\
\edgebracket{ 
\Gomea
\,, & \,
\raisebox{-.3\height}{\includegraphics[height=3ex]{graphs/3/Logo2_Melon.pdf}}
\sqcup
\raisebox{-.3\height}{\includegraphics[height=5.5ex]{graphs/3/Logo6_Eabc.pdf}}  
}
\\
 \nonumber
 {} = {}  &  
\Dsa{1} \Gomea \star \J (\logo{6}{Eabc}{5}{3})
 +
 \Dsa{2} \Gomea \star \J ( \melon \sqcup \logo{4}{Vc}{3.5}{38} )
 \\
 {}  + {} & \Dsa{3} \Gomea \star \J ( \melon \sqcup \melon\sqcup \melon)
 + \Dsa{4} \Gomea \star \J ( \melon \sqcup \logo{4}{Vb}{3.5}{38} ) \nonumber \\
 \frac1{2\cdot 2!}
\suml_{i\neq a} \edgebracket{
 G\hp 8 _{|
\raisebox{-.33\height}{\includegraphics[height=2ex]{graphs/3/Item2_Melon.pdf}}|
\raisebox{-.33\height}{\includegraphics[height=2ex]{graphs/3/Item2_Melon.pdf}}|
\raisebox{-.34\height}{\includegraphics[height=1.7ex]{graphs/3/Item4_Vi.pdf}}|} 
\, & , \, 
 \raisebox{-.3\height}{\includegraphics[height=3ex]{graphs/3/Logo2_Melon.pdf}}
\sqcup 
\raisebox{-.3\height}{\includegraphics[height=3ex]{graphs/3/Logo2_Melon.pdf}}
\sqcup 
\raisebox{-.4\height}{\includegraphics[height=4ex]{graphs/3/Logo4_Vi.pdf}}}
\\
\nonumber
  {} = {}  & 
  \frac14 \big\{ \sum_{i\neq a}
  \sumud \Dsa{r} \Gommi 
\star \J (\melon\sqcup \logo{4}{Vi}{3.5}{38}) \\
\nonumber
  {} \phantom = {}  & 
+ \sumtc \Dsa{p} \Gommi \star \J (\melon^{\sqcup 3} )
\big\}
\\
 \frac1{2\cdot 2!}
\suml_{i\neq a} \edgebracket{
 G\hp 8 _{|
\raisebox{-.33\height}{\includegraphics[height=2ex]{graphs/3/Item2_Melon.pdf}}|
\raisebox{-.33\height}{\includegraphics[height=2ex]{graphs/3/Item2_Melon.pdf}}|
\raisebox{-.34\height}{\includegraphics[height=1.7ex]{graphs/3/Item4_Va.pdf}}|} 
\, & , \, 
 \raisebox{-.3\height}{\includegraphics[height=3ex]{graphs/3/Logo2_Melon.pdf}}
\sqcup 
\raisebox{-.3\height}{\includegraphics[height=3ex]{graphs/3/Logo2_Melon.pdf}}
\sqcup 
\raisebox{-.4\height}{\includegraphics[height=4ex]{graphs/3/Logo4_Va.pdf}}}
\\
\nonumber
  {} = {}  & 
  \frac14  \big\{
  \sumud \Dsa{r} \Gomma 
\star \J (\melon\sqcup \logo{4}{Va}{3.5}{38}) 
\\
\nonumber
  {} \phantom{=} {}  & 
\,\,+ \sumtc \Dsa{p} \Gomma \star \J (\melon \sqcup \melon\sqcup \melon) \big\}
\\
\frac{1}{4!} 
\edgebracket{
\Gommmm
\, ,  & \,
\melon\sqcup\melon\sqcup\melon\sqcup\melon
}\\
{ } = { } &  \nonumber
\frac{1}{4!}
 \suml_{u=1}^4 (\Dsa{u} \Gommmm) \star \J (\melon\sqcup\melon\sqcup\melon) 
 \\
 \edgebracket{ 
\GoWa
  & , \!\!
\raisebox{-.42\height}{\includegraphics[width=17ex]{graphs/3/Logo8_Wa.pdf}}  
} 
 \\  \nonumber  {} = {} & 
\Dsa{1} \GoWa \star \J (\logo{6}{Ebac}{5}{28}) +\Dsa{2} \GoWa \star \J (\logo{6}{Eabc}{5}{28})
  \\  \nonumber  {} + {} &   
  \Dsa{3} \GoWa \star \J (\logo{6}{Eabc}{5}{28}) +  \Dsa{4} \GoWa \star \J (\logo{6}{Ecba}{5}{28})
 \\  \nonumber  {} =  {} & 
\Dsa{1} \GoWa \star \J (\logo{6}{Ebac}{5}{28}) + \Dsa{2} \GoWa \star \J (\logo{6}{Eabc}{5}{28})
  \\  \nonumber  {} + {} &   
  \Dsa{3} \GoWa \star \J (\logo{6}{Eabc}{5}{28}) +(13)^*( \Dsa{4} \GoWa) \star \J (\logo{6}{Ecab}{5}{28})
   \\
 \edgebracket{ 
\GoWb
  & , \!\!
\raisebox{-.42\height}{\includegraphics[width=17ex]{graphs/3/Logo8_Wb.pdf}}  
} 
 \\  \nonumber  {} = {} & 
(13)^*(\Dsa{1} \GoWb) \star \J (\logo{6}{Ecab}{5}{28}) 
+ \Dsa{2} \GoWb  \star \J (\melon \sqcup \logo{4}{Va}{3.5}{38})
  \\  \nonumber  {} + {} &   
  [(13)^*(\Dsa{3} \GoWb) +
  (13)^*(\Dsa{4} \GoWb)] \star \J (\logo{6}{Eabc}{5}{28})
     \\ \nonumber
 \edgebracket{ 
\GoWc
  & , \!\!
\raisebox{-.42\height}{\includegraphics[width=17ex]{graphs/3/Logo8_Wc.pdf}}  
} 
 \\  \nonumber  {} = {} & 
[(13)^*(\Dsa{1} \GoWc) + (13)^*(\Dsa{2} \GoWc)] \star \J (\logo{6}{Eabc}{5}{28}) 
  \\  \nonumber  {} + {} &   
(123)^*(\Dsa{3} \GoWc) \star \J (\melon \sqcup \logo{4}{Va}{3.5}{38})
+
 \Dsa{4} \GoWc \star \J (\logo{6}{Ebac}{5}{28})
  \\ \frac12 \! \suml_{\substack{i=b,c \\ (a\neq j\neq i)}}\edgebracket{\GoRij
 \, & ,  
\!\!
\raisebox{-.42\height}{\includegraphics[width=17ex]{graphs/3/Logo8_Rij.pdf}} }
\\  \nonumber  {} = {} & 
\frac12 \!\suml_{\substack{i=b,c \\ (a\neq j\neq i)}}  \Big[ \sumud \Dsa{r} \GoRij \star \J (\logo{6}{Eija}{5}{26}) +
\sumtc \Dsa{p}\GoRij \star \J (\logo{6}{Eiaj}{5}{26})
\Big]
  \\  \nonumber  {} = {} & 
  \frac12 \!\suml_{\substack{i=b,c \\ (a\neq j\neq i)}}  \Big[ \big(\sumud (13)^*(\Dsa{r} \GoRij)  +
\sumtc \Dsa{p}\GoRij  \big)\star \J (\logo{6}{Eiaj}{5}{26}) \Big]
\\
 \nonumber  {} = {} &  
 \frac12  \Big[ \big( \sumud (13)^*(\Dsa{r} \GoRbc)  +
\sumtc \Dsa{p}\GoRbc  \big)\star \J (\logo{6}{Ebac}{5}{26}) 
\\
\nonumber  {}  {} & \,+
\big( \sumud (13)^*(\Dsa{r} \GoRcb)  +
\sumtc \Dsa{p}\GoRcb  \big)\star \J (\logo{6}{Ecab}{5}{26}) 
\Big]  
% \\ \edgebracket{ }
%  \\  \nonumber  {} = {} &  
%   \\  \nonumber  {} + {} &  
% \end{align}
% \begin{align} 
\\
\frac12\suml_{\substack{i=b,c \\ (a\neq j\neq i)}}
\edgebracket{ 
\GoRai
\,, & \!\!
\raisebox{-.42\height}{\includegraphics[width=17ex]{graphs/3/Logo8_Rai.pdf}}
\!\!
}
\\  \nonumber  {} = {} & 
  \frac12 \!\suml_{\substack{i=b,c \\ (a\neq j\neq i)}}  \Big[ 
  \big(  \Dsa{1} \GoRai \star \J (\logo{6}{Eaij}{5}{26})  
  + \Dsa{2} \GoRai \star \J (\melon \sqcup \logo{4}{Vj}{3.5}{38})
  \\  \nonumber  &  \qquad \quad +
   \Dsa{3} \GoRai \star \J (\logo{4}{Vj}{3.5}{38} \sqcup \melon)  
  + \Dsa{4} \GoRai \star \J ( \logo{6}{Eaji}{5}{26}) \Big] 
  \\  \nonumber  {} = {} &   
  \frac12  \Big[ 
  \big(  \Dsa{1} \GoRab \star \J (\logo{6}{Eabc}{5}{26})  
  + \Dsa{2} \GoRab \star \J (\melon \sqcup \logo{4}{Vc}{3.5}{38})
  \\  \nonumber  &  \qquad \quad +
   \Dsa{3} \GoRab \star \J (\logo{4}{Vc}{3.5}{38} \sqcup \melon)  
  + \Dsa{4} \GoRab \star \J ( \logo{6}{Eacb}{5}{26}) \Big]  
   \\ \nonumber  {} + {} &   
  \frac12  \Big[ 
  \big(  \Dsa{1} \GoRac \star \J (\logo{6}{Eacb}{5}{26})  
  + \Dsa{2} \GoRac \star \J (\melon \sqcup \logo{4}{Vb}{3.5}{38}) 
  \\  \nonumber  &   \quad +
   \Dsa{3} \GoRac \star \J (\logo{4}{Vb}{3.5}{38} \sqcup \melon)  
  + \Dsa{4} \GoRac \star \J ( \logo{6}{Eabc}{5}{26}) \Big]  
   \\  \nonumber  {} = {} &     
  \frac12  \bigg[ 
  \Big\{  \big(\Dsa{1} \GoRab + (13)^* (\Dsa{1} \GoRac) 
  \\ &  \nonumber
   \qquad \qquad  +
  (13)^*( \Dsa{4} \GoRab) +\Dsa{4} \GoRac  \big\}  \star \J (\logo{6}{Eabc}{5}{26})  
  \\   \nonumber    
    & \qquad + \big(\Dsa{2} \GoRab +(123)^*(\Dsa{3} \GoRab) \big)\star \J (\melon \sqcup \logo{4}{Vc}{3.5}{38})
   \\ \nonumber  {}   {} &   
   \qquad  +\big( \Dsa{2} \GoRac  + 
  (123)^* \big( \Dsa{3} \GoRac) \big) \star \J (\melon \sqcup \logo{4}{Vb}{3.5}{38}) \bigg] 
  \\
   \frac12\suml_{\substack{i=b,c \\ (a\neq j\neq i)}}   
\edgebracket{ 
\GoRia
\,, & \!\!
\raisebox{-.42\height}{\includegraphics[width=17ex]{graphs/3/Logo8_Ria.pdf}}
\!\!
} 
 \\  \nonumber  {} = {} &  
\frac{1}{2} 
\bigg[
\big\{
\Dsa{1}\GoRba +(13)^* \Dsa{4} \GoRba 
\big\} \star \J (\logo{6}{Ebac}{5}{28})
 \\  \nonumber  {} + {} & 
 \big\{
\Dsa{1}\GoRca +(13)^* \Dsa{4} \GoRca 
\big\} \star \J (\logo{6}{Ecab}{5}{28})
\\  \nonumber  {} + {} &  
\suml_{h=2,3} 
\big( \Dsa{h} \GoRba \star \J (\logo{6}{Qc}{4.6}{25})
+ \Dsa{h} \GoRca \star \J (\logo{6}{Qb}{4.6}{25})\big) 
\bigg]
\\
\frac14 \suml_{i\neq a} 
 \edgebracket{\GoAi
\, & ,\! 
\raisebox{-.4\height}{\includegraphics[width=7ex]{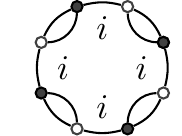}} } +
\frac14  
 \edgebracket{\GoAa \,,\,
\raisebox{-.4\height}{\includegraphics[width=7ex]{graphs/3/Logo8_Aa.pdf}} }
  \\  \nonumber  {} = {} &  \frac14\suml_{i\neq a}\suml_{r=1}^4 \Dsa{r}  \GoAi \star \J (\logo{6}{Qi}{4.6}{25})
+   \frac14 \suml_{r=1}^4 \Dsa{r}    \GoAa \star \J (\logo{6}{Qa}{4.6}{25}) 
\\ 
\suml_{i\neq a } \Edgebracket{  \GoPai
\,  , & \,
\raisebox{-.4\height}{\includegraphics[width=10ex]{graphs/3/Logo8_Pai.pdf}}  }
 \\  \nonumber  {} = {} &  
 \Dsa{1} \GoPab \star \J (\logo{6}{Qc}{4.6}{25}) +\Dsa{2} \GoPab \star \J (\melon \sqcup \logo{4}{Vc}{3.5}{38}) 
  \\  \nonumber  {} + {} & 
  \Dsa{1} \GoPac \star \J (\logo{6}{Qb}{4.6}{25}) +\Dsa{2} \GoPac \star \J (\melon \sqcup \logo{4}{Vb}{3.5}{38}) 
  \\  \nonumber  {} + {} & 
  \sumtc \big( \Dsa{p} \GoPab + (13)^* \Dsa{p} \GoPac  \big) \star \J ( \logo{6}{Eabc}{4.6}{25} )
%  \end{align}
%  \end{subequations}
%  \begin{subequations}
%   \begin{align}
% \tag{\ref{eq:laaaaange}w}\label{eq:laaaaangeaa}
\\
\suml_{i\neq a } \Edgebracket{  \GoPia
\,  , & \,
\raisebox{-.4\height}{\includegraphics[width=10ex]{graphs/3/Logo8_Pia.pdf}}  }
 \\  \nonumber  {} = {} &  
 \sumud \Dsa{r} \GoPba \star \J (\logo{6}{Qc}{4.6}{25}) +  \sumtc   \Dsa{p} \GoPba   \star \J ( \logo{6}{Ebac}{4.6}{25} ) 
  \\  \nonumber  {} + {} & 
 \sumud \Dsa{r} \GoPca \star \J (\logo{6}{Qb}{4.6}{25}) +  \sumtc  \Dsa{p} \GoPca   \star \J ( \logo{6}{Ecab}{4.6}{25} ) 
 \\ 
%  \tag{\ref{eq:laaaaange}x}\label{eq:laaaaangeaa}
 \sum_{\substack{i,j\neq a ; i\neq j}} \Edgebracket{    
 \GoPij
 \, &, \,
\raisebox{-.4\height}{\includegraphics[width=10ex]{graphs/3/Logo8_Pij.pdf}}
 }
 \\  \nonumber  {} = {} &  
 ( \Dsa{1} \GoPbc + \Dsa{1} \GoPcb ) \star \J (\logo{6}{Qa}{4.6}{25}) 
 \\  \nonumber  {} + {} &  
  \suml_{q=2,3,4}  \big\{  (13)^* \Dsa{q} \GoPbc   \star \J (\logo{6}{Ebac}{4.6}{25} ) 
 + (13)^* \Dsa{q} \GoPcb   \star \J (\logo{6}{Ecab}{4.6}{25} )  \big\}
 \\ 
%  \tag{\ref{eq:laaaaange}y}\label{eq:laaaaangeaa}
 \suml_{i\neq a }
 \Edgebracket{ 
 \GoXi
 \, & ,
\!\!\!\!\!
\raisebox{-.42\height}{\includegraphics[width=12ex]{graphs/3/Logo8_Xi.pdf}}
 }
 \\  \nonumber  {} = {} & 
 (\Dsa{1} \GoXb +  \Dsa{1} \GoXc ) \star \J(\logo{6}{K33}{3.6}{3})
   \\  \nonumber  {} + {} &  \sum_{q=2,3,4} (13)^*\Dsa{q} \GoXb \star \J (\logo{6}{Ecab}{4.6}{25}) 
   + (13)^*\Dsa{q} \GoXc  \star \J (\logo{6}{Ebac}{4.6}{25})
    \\  
%     \tag{\ref{eq:laaaaange}z}\label{eq:laaaaangeaa}
 \Edgebracket{ 
 \GoXabc
 \, & ,
\!\!\!\!\!
\raisebox{-.42\height}{\includegraphics[width=12ex]{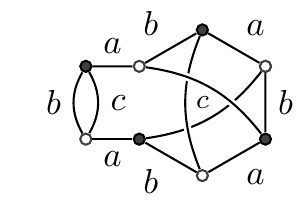}} 
 }\qquad \qquad \qquad \qquad \qquad \qquad \qquad \quad\mbox{ (symmetric in $b,c$)} \nonumber 
 \\  {} = {} & 
  \sumud \Dsa{r} \GoXabc \star \J(\logo{6}{K33}{3.6}{3}) + \sumtc \Dsa{p} \GoXabc  \star \J ( \logo{6}{Qa}{4.6}{25} )   
\\
%  \tag{\ref{eq:laaaaange}aa}\label{eq:laaaaangeaa}
\edgebrackett{  
\GoS
 \, & ,\,
\!\!
\raisebox{-.4\height}{\includegraphics[width=9ex]{graphs/3/Logo8_S.pdf}}
}
\\ 
\nonumber  {} = {} &  \sumud \Dsa{r} \GoS \star \J  \Big(\!\!\!\!\!\!\raisebox{-3.30ex}{\includegraphics[height=8ex]{graphs/3/Logo4S} }\!\!\Big)
\\ 
\nonumber  {} \phantom{=} {} & 
+
\Dsa{3} \GoS \star \J (\logo{6}{Ebac}{4.6}{25})
+
 \Dsa{4} \GoS \star \J  \Big(\!\!\! \raisebox{-3.30ex}{\includegraphics[height=8ex]{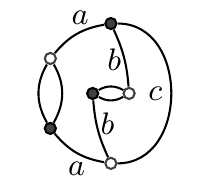} }\!\!\Big)
\\  \nonumber  {} = {} &  \sumud (13)^* \big (\Dsa{r} \GoS \big)
\star \J  ( \logo{6}{Eabc}{4.6}{25} ) 
+
\Dsa{3} \GoS \star \J (\logo{6}{Ebac}{4.6}{25})
\\ & \nonumber
+
\sumtc  (13)^* \big( \Dsa{p} \GoS \big) \star \J (\logo{6}{Ecab}{4.6}{25})
% \\  \nonumber  {} + {} & 
%\\ \edgebracket{   }
%  \\  \nonumber  {} = {} &  
%   \\  \nonumber  {} + {} &  
\\ 
\frac14
\edgebracket{ 
\GoCubo
 \, & ,\,
\raisebox{-.4\height}{\includegraphics[width=9ex]{graphs/3/Logo8_cubo.pdf}}
} 
% \tag{\ref{eq:laaaaange}ab}\label{eq:laaaaangeab}
\\  \nonumber  {} = {} &  
\frac14
\big\{(23)^*\Dsa{1} \GoCubo 
+ 
\Dsa{2} \GoCubo 
+(13)^* \Dsa{3} \GoCubo \\ 
\nonumber  {} \phantom{=} {} & 
+(123)^* \Dsa{4} \GoCubo \big\}
\star \J(\logo{6}{Eabc}{4.6}{25})
\\ 
%  \tag{\ref{eq:laaaaange}ac}\label{eq:laaaaangeac}
\frac14\suml_{\substack{ i=b,c; \\ (i\neq j\neq a)}} \edgebracket{ 
\GoYi
 \, & ,\,
\! 
\raisebox{-.44\height}{\includegraphics[width=11ex]{graphs/3/Logo8_Yi.pdf}}
}
  \\  \nonumber  = {}  {}  
  \frac14 \Big[
  \big\{  &
   \Dsa{1} \GoYb 
   +(123)^* \Dsa{2}\GoYb + (23)^*\Dsa{3}\GoYb  
   \\ 
\nonumber  {} \phantom{=} {} &\qquad\qquad \,\,\,+ (13)^*\Dsa{4}\GoYb 
  \big\} \star \J (\logo{6}{Ecab}{4.6}{25})
  \\  \nonumber  {}    {}   & \hspace{-1cm}\,\,+\big\{  \Dsa{1} \GoYc + (123)^*\Dsa{2}\GoYc + (23)^*\Dsa{3}\GoYc   
  \\ 
\nonumber  {} \phantom{=} {} & \qquad\qquad \,\,+ (13)^*\Dsa{4}\GoYc\big\}  \star \J (\logo{6}{Ebac}{4.6}{25}) \Big]
   \\
    \tag{\ref{eq:laaaaange}aa} %\label{eq:laaaaangeaa}
  \frac14  \edgebracket{ 
\GoYabc
 \, & ,\,
\!  
\raisebox{-.44\height}{\includegraphics[width=11ex]{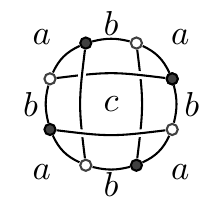}}} = 
\frac14 \suml_{r=1}^4 \GoYabc \star \J (\logo{6}{K33}{3.6}{3})\,.
   %   \\  \nonumber  {} + {} &  
 \end{align}
 \end{subequations}
One adds all the previous equations and associates by $\J(\B)$, for $\B$ one of the 11 graphs with $6$ vertices. 
\end{proof}

\section{Rank-five quartic theories}\label{sec:rankfive}
The generating function that enumerates the rank-$5$ connected 
boundary graphs (and interaction vertices) is the 
\textsf{OEIS} \textbf{A057007}:
\[
Z_{\mathrm{conn., 5}}(x)=  x + 15x^2+ 235x^3 + 14120x^4 + 1712845x^5+ 371515454x^6+\ldots
\]
We will not classify the 235 connected graphs with six 
vertices, but, aiming 
only at obtaining the $2$-point function's equation,
we will compute the free energy up to $\mathcal O(J^3,\bJ^3)$:

\begin{align*}
% \label{expansion_congraficas4}
% correct factors
W_{{}_{D=5}}[J,\bar J] & = 
G\hp 2 _{\includegraphics[height=2ex]{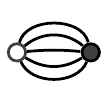}} \star \mathbb{J}\big(
\raisebox{-.34\height}{\includegraphics[height=3.77ex]{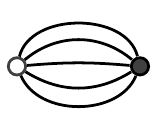}}\big) + 
\frac1{2!}
G\hp 4 _{|\raisebox{-.34\height}{\includegraphics[height=2ex]{graphs/5/Icono5_melon.pdf}}|
\raisebox{-.34\height}{\includegraphics[height=2ex]{graphs/5/Icono5_melon.pdf}}|} \star \mathbb{J}\big(
\raisebox{-.34\height}{\includegraphics[height=3.75ex]{graphs/5/Logo5_melon.pdf}}|
\raisebox{-.34\height}{\includegraphics[height=3.75ex]{graphs/5/Logo5_melon.pdf}}\big)
\\
&\quad
+
\sum\limits_{j=1}^5 \frac1{2}
G\hp4_{\raisebox{-.4\height}{\includegraphics[height=2.5ex]{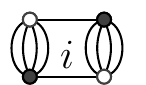}}} \star \mathbb{J}
\big(  
\raisebox{-.34\height}{\includegraphics[height=3ex]{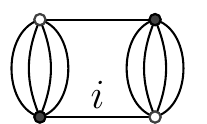}} \! \big)
+\sum\limits_{i<j} \frac1{2}G\hp4_{\raisebox{-.4\height}{\includegraphics[height=3.5ex]{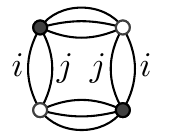}}} 
\! 
\star \mathbb{J}
\bigg(\!\!
\raisebox{-.4\height}{\includegraphics[height=6.9ex]{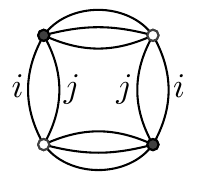}}\!\!\!\bigg)+
O(J^3,\bar J^3) \\
 & =  
G\hp 2 _{\includegraphics[height=2ex]{graphs/5/Icono5_melon.pdf}} \star \mathbb{J}\big(
\raisebox{-.34\height}{\includegraphics[height=3.77ex]{graphs/5/Logo5_melon.pdf}}\big) + 
\frac1{2!}
G\hp 4 _{|\raisebox{-.34\height}{\includegraphics[height=2ex]{graphs/5/Icono5_melon.pdf}}|
\raisebox{-.34\height}{\includegraphics[height=2ex]{graphs/5/Icono5_melon.pdf}}|} \star \mathbb{J}\big(
\raisebox{-.34\height}{\includegraphics[height=3.75ex]{graphs/5/Logo5_melon.pdf}}|
\raisebox{-.34\height}{\includegraphics[height=3.75ex]{graphs/5/Logo5_melon.pdf}}\big) 
+
 \frac1{2}G\hp4_{\raisebox{-.4\height}{\includegraphics[height=2.5ex]{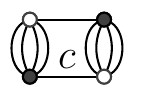}}} \star \mathbb{J}
\big(  
\raisebox{-.34\height}{\includegraphics[height=3ex]{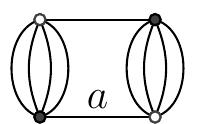}} \! \big)
\\
&\quad+
\sum\limits_{c\neq a} \frac1{2}G\hp4_{\raisebox{-.4\height}{\includegraphics[height=2.5ex]{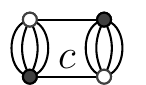}}} \star \mathbb{J}
\big(  
\raisebox{-.34\height}{\includegraphics[height=3ex]{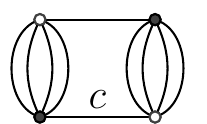}} \! \big)
+\sum\limits_{c\neq a} \frac1{2}G\hp4_{\raisebox{-.4\height}{\includegraphics[height=3.5ex]{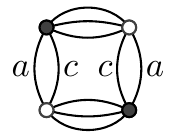}}} 
\! 
\star \mathbb{J}
\bigg(\!\!
\raisebox{-.4\height}{\includegraphics[height=6.9ex]{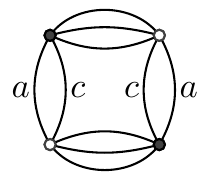}}\!\!\!\bigg)
\\
& 
\quad +\sum\limits_{\substack{c,d\neq a \\ c<d}} 
\frac1{2}G\hp4_{\raisebox{-.4\height}{\includegraphics[height=3.5ex]{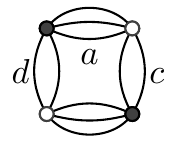}}} 
\! 
\star \mathbb{J}
\bigg(\!\!
\raisebox{-.4\height}{\includegraphics[height=6.9ex]{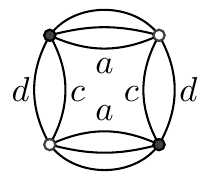}}\!\!\!\bigg)
+
O(J^3,\bar J^3)\,.
\end{align*}
For an arbitrary model in rank $5$ one has,
up to $\mathcal O(J^2,\bar J^2)$-terms,

\begin{align*}
Y\hp a_{m_a}[J,\bar J]\,\,
\sim  & \! \sum\limits_{q_{i_1},\ldots,q_{i_4}}
G\hp 2 _{\includegraphics[height=2ex]{graphs/5/Icono5_melon.pdf}} (m_a,q_{i_1},q_{i_2},q_{i_3},q_{i_4})
+
\frac{1}{2}
\bigg\{
\sum\limits_{s=1,2}
\Dma{s}
G\hp 4 _{|\raisebox{-.34\height}{\includegraphics[height=2ex]{graphs/5/Icono5_melon.pdf}}|
\raisebox{-.34\height}{\includegraphics[height=2ex]{graphs/5/Icono5_melon.pdf}}|}
+
\sum\limits_{c\neq a} \sum\limits_{s=1,2}
\Dma{s}G\hp4_{\raisebox{-.4\height}{\includegraphics[height=2.5ex]{graphs/5/Icono5_Vc.pdf}}} 
\\
& 
+\sum\limits_{\substack{c,d\neq a \\ c<d}} 
\sum\limits_{s=1,2}
\Dma{s}
G\hp4_{\raisebox{-.4\height}{\includegraphics[height=3.6ex]{graphs/5/Icono5_Np.pdf}}} 
+
\sum\limits_{c\neq a}
\sum\limits_{s=1,2}
\Dma{s}
G\hp4_{\raisebox{-.4\height}{\includegraphics[height=3.6ex]{graphs/5/Icono5_Nac.pdf}}}
+
\sum\limits_{s=1,2}
\Dma{s}
G\hp4_{\raisebox{-.4\height}{\includegraphics[height=2.5ex]{graphs/5/Icono5_Va.pdf}}} 
\bigg\} \star \mathbb{J}(\raisebox{-.3\height}{\includegraphics[height=3ex]{graphs/5/Logo5_melon.pdf}})\,.
\end{align*}
One straightforwardly gets

\begin{align*}
G\hp 2_{\includegraphics[height=2.5ex]{graphs/5/Icono5_melon.pdf}}(\mathbf x)
& = 
\frac{1}{E_{\mathbf x}} +  \frac{(-\lambda  )}{E_{\mathbf x}} 
 \Bigg\{
  \sum\limits_{a=1}^5 \bigg[
  2 \cdot G\hp 2_{\includegraphics[height=2.5ex]{graphs/5/Icono5_melon.pdf}}(\mathbf x)
  \cdot
  \big(\sum\limits_{q_{i_1(a)} }
  \sum\limits_{q_{i_2(a)} }
  \sum\limits_{q_{i_3(a)}} 
  \sum\limits_{q_{i_4(a)}} 
G\hp 2 _{\includegraphics[height=2ex]{graphs/5/Icono5_melon.pdf}} (x_a,q_{i_1(a)},q_{i_2(a)},q_{i_3(a)})\big)
  \\
  &
  \quad
+\sum\limits_{q_{i_1(a)}} 
\sum\limits_{q_{i_2(a)}}
\sum\limits_{q_{i_3(a)}}
\sum\limits_{q_{i_4(a)}} 
\Big(
G\hp 4 _{|\raisebox{-.34\height}{\includegraphics[height=2ex]{graphs/5/Icono5_melon.pdf}}|
\raisebox{-.34\height}{\includegraphics[height=2ex]{graphs/5/Icono5_melon.pdf}}|}
(x_a,q_{i_1(a)},q_{i_2(a)},q_{i_3(a)},q_{i_4(a)}; \mathbf{x}) \\
& \qquad\qquad\qquad\qquad\qquad\,\,
+
G\hp 4 _{|\raisebox{-.34\height}{\includegraphics[height=2ex]{graphs/5/Icono5_melon.pdf}}|
\raisebox{-.34\height}{\includegraphics[height=2ex]{graphs/5/Icono5_melon.pdf}}|}
( \mathbf{x};x_a,q_{i_1(a)},q_{i_2(a)},q_{i_3(a)},q_{i_4(a)}) 
\Big)
\\& 
\quad 
+
\sum\limits_{c\neq a}
\sum\limits_{q_{b(a,c)} }
\sum\limits_{q_{d(a,c)}}
\sum\limits_{q_{e(a,c)}}
\Big(
G\hp4_{\raisebox{-.4\height}{\includegraphics[height=2.5ex]{graphs/5/Icono5_Vc.pdf}}}(x_a,x_c,q_b,q_d,q_e;\mathbf{x})
+
G\hp4_{\raisebox{-.4\height}{\includegraphics[height=2.5ex]{graphs/5/Icono5_Vc.pdf}}}(\mathbf{x};x_a,x_c,q_b,q_d,q_e)
\Big)
\\&
\quad+
\sum\limits_{\substack{d,e\neq a \\ d<e}}
\sum\limits_{q_c}
\sum\limits_{q_b}
\Big(
G\hp4_{\raisebox{-.4\height}{\includegraphics[height=3.7ex]{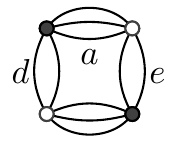}}} (x_a,x_b,x_c,q_d,q_e;\mathbf x)
+
G\hp4_{\raisebox{-.4\height}{\includegraphics[height=3.7ex]{graphs/5/Icono5_Npe.pdf}}} (\mathbf x;x_a,x_b,x_c,q_d,q_e)
\Big)\\&
\quad+
\sum\limits_{c\neq a}
\sum\limits_{q_c}
\Big(
G\hp4_{\raisebox{-.4\height}{\includegraphics[height=3.7ex]{graphs/5/Icono5_Nac.pdf}}} (x_a,x_b,q_c,x_d,x_e;\mathbf x)
+
G\hp4_{\raisebox{-.4\height}{\includegraphics[height=3.7ex]{graphs/5/Icono5_Nac.pdf}}} (\mathbf x;x_a,x_b,q_c,x_d,x_e)
\Big)+
2 G\hp4_{\raisebox{-.4\height}{\includegraphics[height=2.5ex]{graphs/5/Icono5_Va.pdf}}}(\mathbf{x};\mathbf{x})
  \\ \nonumber 
  & \quad + \sum\limits_{y_a} \frac{2}{|x_a|^2-|y_a|^2} \big( 
   G\hp 2_{\includegraphics[height=2.5ex]{graphs/5/Icono5_melon.pdf}}(\mathbf x)
   -
    G\hp 2_{\includegraphics[height=2.5ex]{graphs/5/Icono5_melon.pdf}}(y_a, 
    x_{i_1(a)},x_{i_2(a)},x_{i_3(a)},x_{i_4(a)})
  \big)
 \bigg]
 \Bigg\} \,\,.
\end{align*}

%  
%  \bibliography{paperSDE}
 \bibliographystyle{plain}

\end{document}